\newcommand{\Tr}{\text{Tr}}
\newcommand{\p}{\partial}
\newcommand{\lan}{\langle}
\newcommand{\ran}{\rangle}
 \newcommand{\unit}{\mathbf{1}}
\newcommand{\doa}{\downarrow}
\newcommand{\upa}{\uparrow}
\newcommand{\ra}{\rightarrow}
\newcommand{\lra}{\leftrightarrow}
\newcommand{\<}{\langle}
\renewcommand{\>}{\rangle}
\renewcommand{\(}{\left(}
\renewcommand{\)}{\right)}
\newcommand{\tp}{\otimes}
\newcommand\bpm            {\begin{pmatrix}}
	\newcommand\epm           {\end{pmatrix}}
\def\app#1#2{%
	\mathrel{%
		\setbox0=\hbox{$#1\sim$}%
		\setbox2=\hbox{%
			\rlap{\hbox{$#1\propto$}}%
			\lower1.1\ht0\box0%
		}%
		\raise0.25\ht2\box2%
	}%
}
\newcommand{\tw}{\textwidth}
\newcommand{\vp}{\varphi}
\renewcommand{\cp}{\Phi}
\newcommand{\ct}{\Theta}
\newcommand{\inv}{^{-1}}
\newcommand{\ope}\odot
\newcommand{\bi}{\begin{itemize}}
	\newcommand{\ei}{\end{itemize}}
\newtheorem{theorem}{Theorem}
\newtheorem{proposition}{Proposition}
\newtheorem{fact}{Fact}
\newtheorem{lemma}{Lemma}
\newtheorem{conjecture}{Conjecture}
\theoremstyle{definition}
\newcommand\bpro		  {\begin{proposition}}
\newcommand\epro 		  {\end{proposition}}
\newcommand\bproof			  {\begin{proof}}
\newcommand\eproof 		  {\end{proof}}
	\newcommand\ed            {\end{definition}}
\newcommand\be            {\begin{equation}}
	\newcommand\ee            {\end{equation}}
\newcommand\ba            {\begin{aligned}}
	\newcommand\ea            {\end{aligned}}
\newcommand\bea{\begin{equation}\begin{aligned}}
		\newcommand\eea{\end{aligned}\end{equation}}
\newcommand{\sss}{\subsubsection}
\renewcommand{\ss}{\subsection}
\renewcommand{\a}{\alpha}
\renewcommand{\b}{\beta}
\renewcommand{\d}{\delta}
\newcommand{\s}{\sigma}
\newcommand{\ep}{\varepsilon} %
\renewcommand{\l}{\lambda}
\renewcommand{\O}{\Omega}
\renewcommand{\r}{\rho}
\newcommand{\nn}{\mathbb{N}}
\newcommand{\mcc}{\mathcal{C}}
\newcommand{\mck}{\mathcal{K}}
\newcommand{\mcb}{\mathcal{B}}
\newcommand{\mco}{\mathcal{O}}
\newcommand{\mcu}{\mathcal{U}}
\newcommand{\mcd}{\mathcal{D}}
\newcommand{\mcg}{\mathcal{G}}
\newcommand{\mch}{\mathcal{H}}
\newcommand{\mca}{\mathcal{A}}
\newcommand{\mcm}{\mathcal{M}}
\newcommand{\mcn}{\mathcal{N}}
\newcommand{\tta}{\mathtt{a}}
\newcommand{\ttb}{\mathtt{b}}
\newcommand{\tte}{\mathtt{e}}
\newcommand{\ttg}{\mathtt{g}}
\newcommand{\tth}{\mathtt{h}}
\newcommand{\kb}[2]{|{#1}\rangle\langle{#2}|}
\renewcommand{\k}[1]{|#1\rangle}
\newcommand{\ket}[1]{|#1\rangle}
\newcommand{\proj}[1]{|#1\rangle\langle#1|}
\newcommand{\bra}[1]{{\langle #1}|}
\DeclarePairedDelimiter\ceil{\lceil}{\rceil}
\DeclarePairedDelimiterX{\infdivx}[2]{(}{)}{%
	#1\;\delimsize\|\;#2%
}
\newtheorem*{theorem*}{Theorem}
\newtheorem{definition}{Definition}
\renewcommand\cp{\Phi}
\newcommand{\oEE}{\mathop{\mathbb{E}}}
\newcommand{\dyn}{{\sf Dyn}}
\renewcommand{\tth}{t_{\rm th}}
\newcommand{\poly}{{\rm poly}}
\begin{document}

        \title{Exponentially slow thermalization in 1D fragmented dynamics}
	
	\author{Cheng Wang}
	\affiliation{School of Physics, Peking University, Beijing 100871, China}

    \author{Shankar Balasubramanian}
    \email{sbalasub@mit.edu}
        \affiliation{Center for Theoretical Physics, Massachusetts Institute of Technology, Cambridge, Massachusetts 02139, USA}

        \author{Yiqiu Han}
	\affiliation{Department of Physics, Boston College, Chestnut Hill, MA 02467, USA}
        \affiliation{Department of Physics and Center for Theory of Quantum Matter, University of Colorado Boulder, Boulder, Colorado 80309 USA}

        \author{Ethan Lake}
        \email{elake@berkeley.edu}
        \affiliation{Department of Physics, University of California Berkeley, Berkeley, CA 94720, USA}

            \author{Xiao Chen}
	\affiliation{Department of Physics, Boston College, Chestnut Hill, MA 02467, USA}

        \author{Zhi-Cheng Yang}
        \email{zcyang19@pku.edu.cn}
        \affiliation{School of Physics, Peking University, Beijing 100871, China}
        \affiliation{Center for High Energy Physics, Peking University, Beijing 100871, China}

	\begin{abstract}
         We investigate the thermalization dynamics of 1D systems with local constraints coupled to an infinite temperature bath at one boundary. The coupling to the bath eventually erases the effects of the constraints, causing the system to tend towards a maximally mixed state at long times. We show that for a large class of local constraints, the time at which thermalization occurs can be extremely long. In particular, we present evidence for the following conjecture: when the constrained dynamics displays strong Hilbert space fragmentation, the thermalization time diverges exponentially with system size. We show that this conjecture holds for a wide range of dynamical constraints, including dipole-conserving dynamics, the $tJ_z$ model,  and a large class of group-based dynamics, and relate a general proof of our conjecture to a different conjecture about the existence of certain expander graphs. 
	\end{abstract}
	
	\maketitle
	\tableofcontents

	\section{Introduction and overview}
	Most physical systems thermalize: when prepared in a generic initial state, they relax to a universal equilibrium state determined by a small number of thermodynamic variables. It is of great interest to characterize systems where thermalization takes a ``long'' time, or even fails to occur altogether \cite{deutsch1991quantum, srednicki1994chaos, rigol2008thermalization, nandkishore2015many-body, dalessio2016from}. Two well-studied examples of such systems are strongly disordered systems with complex energy landscapes (MBL, spin glasses, etc.~\cite{alet2018many-body, abanin2019colloquium}), and integrable systems.  A third class of systems are those with {\it dynamical constraints}, where hard restrictions are placed on the allowed local transition rules governing the dynamics. In these systems the presence of constraints shatters the configuration space into disconnected regions, a phenomenon known as {\it Hilbert space fragmentation} (HSF) \cite{khemani2020localization,sala2020ergodicity}.  By now, a diverse array of systems displaying HSF are known, both for classical and quantum systems \cite{motrunich22hilbert,pancotti2020quantum,van2015dynamics,mukherjee2021minimal,yang2020hilbert,brighi2023hilbert,lan2018quantum,garrahan2018aspects, detomasi2019dynamics,moudgalya2022thermalization}. These systems remain non-ergodic for all times, and---like in integrable systems---require an extensive number of quantities to label the equilibrium state. 
    
    More formally, Hilbert space fragmentation is identified when the many-body Hilbert space $\mch$ can be decomposed as 
	\be \label{splitting} \mch = \bigoplus_\a \mck_\a,\ee 
	where the different sectors $\mck_\a$ are subspaces which admit a basis of weakly entangled states, and the constraints prevent any states from different sectors from mixing for all times. One example where this occurs is when a system has a global symmetry; however, the more interesting cases are when not all of the sectors can be enumerated by the quantum numbers of standard global symmetries\footnote{We add the qualifier ``standard'' here because there exist certain kinds of modulated symmetries \cite{sala2022dynamics} which split $\mch$ in a similar way to Hilbert space fragmented systems (see Sec.~\ref{ss:breakdown}). } 
	, as otherwise such systems typically fall under conventional thermalization paradigms.
	
	While constraints provide an interesting way to arrest thermalization, they are undoubtedly fine-tuned in the strictest sense, as violating the constraints even weakly will generically render the system's dynamics ergodic at infinite times.  It is thus an interesting question to ask whether certain signatures of fragmentation remain even in these perturbed systems, as manifested e.g. by anomalously long thermalization times. 
    
    In this paper, we will study a model where a 1D fragmented system of length $L$ is subject to maximally depolarizing noise on an $O(1)$ number of sites (which for a large portion of the paper will be located at one end of an open chain), which can be regarded as a way of studying the system's thermalization dynamics in the presence of a local coupling to an infinite temperature bath, or as a way of studying the thermalization of subsystems in unperturbed fragmented models.\footnote{We note that similar types of setups have been used to study the stability of MBL against a thermal bubble embedded in the system (see e.g.~\cite{nandkishore2017many}) as well as to investigate how spatially isolated noise influences entanglement dynamics in random unitary circuits \cite{lovas2023quantum} .}
    
    Adding depolarizing noise  to {\it all} the sites in the system would cause it to decohere to a maximally mixed state after $O(\log L)$ time by standard quantum information theory arguments \cite{aharonov1996limitations} (keeping the noise local but removing the constraints on the dynamics also results in rapid thermalization \cite{lovas2023quantum}).  However, as we will see, only subjecting an $O(1)$ number of sites to noise can dramatically increase the thermalization time, due to the presence of dynamical constraints that arrest how the effects of the noise can spread across the system.

    For fragmented systems, this setup was first studied in Ref.~\cite{hsf_impurities} in the context of 1D random unitary dynamics exhibiting a ``pair flip'' constraint \cite{caha2018pair}. It was shown that, while the coupling to the bath eventually heats the system to a maximally mixed state, this process takes an {\it exponentially} long time (in $L$) to occur. This was identified as being due to the connectivity of the configuration space, which in the pair flip model features structural bottlenecks that result in in the slow diffusion of initial product states across Hilbert space, yielding a way of arresting thermalization qualitatively distinct from other
    mechanisms such as integrability and disorder-induced localization.
	
	\begin{table}\label{table:results}
		\begin{center}
			\begin{tabular}{c c c c c}
				\toprule
				Constraint & Class & Fragmentation & Group? & $L_{\rm erg}$ \\ [0.5ex]
				\midrule
				PXP & 0 & Exp & N & $\infty$  \\
				Spin-1 breakdown (\ref{ss:breakdown}) & I & Sym & N  & $\infty$ \\
				Pair flip (\cite{hsf_impurities}) & II & Exp & Y & $O(L)$ \\
				$tJ_z$ (\ref{ss:tjz}) & II & Exp & N & $O(2^L)$ \\
				Dipole (\ref{ss:dipole}) & II & Exp & N & $\infty$ \\
				Hyperbolic groups (\ref{ss:hyperbolic}) & II & Exp & Y & $O(L)$ \\
			\end{tabular}
		\end{center}
		\caption{The main examples of HSF systems studied in this paper. For each type of constrained dynamics we indicate the model's ``class'' (class 0 systems are non-ergodic even after adding boundary noise, while classes I and II display slow thermalization for distinct reasons), the type of fragmentation (`Exp' denotes exponentially strong fragmentation and `Sym' denotes a global symmetry), whether the constraints arise from a group structure, and the ergodicity length.}
	\end{table}

	In this paper, we make progress towards obtaining a general understanding of which kinds of constrained systems exhibit 
    exponentially  slow thermalization dynamics when coupled to boundary noise.  
    We conjecture a simple sufficient condition for exponentially slow thermalization to occur, based only on the size $|\mck_{\max}|$ of the largest Krylov sector. 
    We will say that the dynamics is {\it exponentially fragmented} if $|\mck_{\max}| / |\mch| = O(\exp(-L))$, while we say it is {\it polynomially fragmented} if $|\mck_{\max}| / |\mch| = 1/\poly(L)$.  Our main contribution is to formulate and provide evidence for the following conjecture: {\it for exponentially fragmented dynamics with depolarizing noise acting on an $O(1)$ sized subregion near the boundary, the thermalization time of typical initial computational-basis product states is either i) infinite, or ii) exponential in $L$}.\footnote{In this definition, the thermalization time is defined as the smallest time at which the system's density matrix is $\ep$-close (in 1-norm distance) to the steady state, for a fixed constant $\ep <1$. }  A similar conjecture can be made for thermalization times of subsystems, with system size replaced by the subsystem size.  If true, this conjecture would imply that dynamical phenomena like thermalization times can be universally deduced purely from structural properties of the constraints.
	
	There are three distinct mechanisms behind slow thermalization in exponentially fragmented systems, and we accordingly divide such systems into three classes. In class 0, fragmentation persists even under the presence of boundary depolarizing noise, which is unable to completely restore ergodicity. In class I, depolarizing noise renders the dynamics ergodic, but exponentially many (in $L$) steps of the dynamics are required to move between different Krylov sectors. Consider forming a graph---referred to hereafter as the ``connectivity graph''---whose vertices are product states and whose edges represent allowed transitions between states induced by the dynamics. Models in class I produce connectivity graphs with exponentially large diameters. Finally, systems in class II have ergodic dynamics and a connectivity graph with a sub-exponential diameter; the slowness of thermalization in these systems is instead due to {\it bottlenecks} which occur in the connectivity graph.  
    
    We show that this conjecture is true in many exponentially fragmented systems,\footnote{Our conjecture does {\it not} necessarily hold when the fragmented system is subjected to noise at {\it both} ends of the system, or is subjected to periodic boundary conditions. In these cases, we find exponentially fragmented examples where $\tth = \ct(\poly(L))$.}
	such as dipole conserving models \cite{khemani2020localization,rakovszky2020statistical}, the $tJ_z$ model \cite{zhang1997tjz,Batista2000tjz,sala2020ergodicity,moudgalya2022quantum}, the pair-flip model~\cite{caha2018pair, hsf_impurities}, and the colored Motzkin chain \cite{movassagh}.  Table~\ref{table:results} illustrates the particular models we study in this work, which class they belong to, and their thermalization times.  
	
	A major step that we take towards proving our conjecture is to relate it to a different conjecture in the theory of expander graphs. In Sec.~\ref{sec:groups}, we show how our conjecture can be formulated in terms of the conductance of expander graphs weighted by heat kernels (i.e. distributions of random walks on these graphs).  Exponentially long thermalization times of systems with dynamical constraints can be translated into showing that the conductance of these graphs scales like $O(\exp(-L))$.  A recent result by Fraczyk and van Limbeek~\cite{fraczyk2019heat} proves a version of a conjecture of Benjamini~\cite{benjamini1998expanders} and shows that the conductance must vanish in the thermodynamic limit, therefore implying long (but not necessarily {\it exponentially} long) thermalization times; our conjecture therefore amounts to a stronger version of Benjamini's conjecture.  We prove that the conductance vanishes like $O(\exp(-L))$ for a large class of dynamical constraints arising from multiplication laws in hyperbolic groups (see Ref.~\cite{balasubramanian2023glassy}).  The groups used to construct these models are quite generic in that {\it any} randomly chosen constraint which derives from a group multiplication law (of which the pair flip model is an example) is either trivial (polynomially fragmented) or hyperbolic with high probability.  We leave a full proof or disproof of our conjecture to future work.

  Another motivation for the boundary depolarizing model studied in this paper is that it could serve as a crude model for subsystem dynamics. 
   Indeed, one might be tempted to regard depolarizing boundary noise as a way of mimicking the dynamics experienced by a subsystem $A$ when coupled to a sufficiently large reservoir $A^c$, with the entire system $A\cup A^c$ undergoing constraint-preserving unitary dynamics. Nonetheless, in systems displaying HSF, these two scenarios can be quite different, with the subsystem dynamics sometimes never exhibiting thermalization, or requiring a bath of anomalously large size in order for thermalization to occur.  This scenario was first pointed out in Ref.~\cite{balasubramanian2023glassy}.  In particular, we define the \textit{ergodicity length} $L_{\rm erg}(|A|)$ as the minimal size system in which the subsystem $A$ must be embedded so that the dynamics on $A$ is ergodic and qualitatively similar to the dynamics induced by maximally depolarizing boundary noise. In generic chaotic systems, one typically expects $L_{\rm erg}/|A| \sim O(1)$. We compute the ergodicity length for the various models studied in this paper, finding examples where $L_{\rm erg}$ scales as either ${\rm poly}(L)$ or $\exp(L)$, as well as ones where $L_{\rm erg}=\infty$ (Table~\ref{table:results}).
    
	An outline of the remainder of the paper is as follows. We begin in Sec.~\ref{sec:setup} by establishing basic definitions, introducing the concept of a Krylov graph, and precisely describing the class of dynamics we will study in the remainder of the work. Our main conjecture regarding exponential fragmentation and slow dynamics is then formulated in Sec.~\ref{sec:graphs}.
    In Sec.~\ref{sec:ex} we prove the conjecture in a variety of models across all different classes, and with different scaling behaviors of the ergodicity length $L_{\rm erg}$. 
    In Sec.~\ref{sec:groups} we prove of the conjecture for a large class of dynamics based on group multiplication laws, and provide a discussion of the mathematical results needed to prove the conjecture in full generality.
    Sec.~\ref{sec:fragtrans} contains a discussion of models that exhibit a strong to weak fragmentation transition as a function of the expectation value of a global symmetry charge, for which we show thermalization is fast. Sec.~\ref{sec:disc} concludes with a discussion of open problems and future research directions.

	\section{General setup}\label{sec:setup}
	
	\ss{Polynomial versus exponential fragmentation}
	
	Throughout this work, we will refer to a particular model of dynamics using the symbol $\dyn$, with $\dyn(t)$ denoting the quantum channel implementing evolution under $\dyn$ for time $t$. Unless stated otherwise, $\dyn$ will be taken to act on a Hilbert space $\mch$ associated with an $L$-site qudit chain with open boundary conditions. We will focus throughout on discrete-time dynamics generated by random unitary circuits subjected to a particular type of constraint, since it allows us to readily make analytic progress; we expect many of our results to also hold for constrained Hamiltonian dynamics, Floquet dynamics, and classical reversible Markov chain dynamics. 
	
	The constraints present in $\dyn$ ``fragment'' $\mch$ via \eqref{splitting}, where the Krylov sectors $\mck_\a$ denote the irreducible subspaces preserved by $\dyn(t)$. The dimension of these spaces will be denoted by $|\mck_\a|$, the number of Krylov sectors by $N_\mck$, and the sector with the largest dimension by $\mck_{\rm max}$. When each of the $\mck_\a$ admit an orthonormal basis of product states, $\dyn$ is said to be {\it classically fragmented}; when this is not the case $\dyn$ is said to be {\it quantum fragmented}, following~\cite{moudgalya2022quantum}. 

	In this work we will find it useful to distinguish between the cases when $\mck_{\rm max}$ constitutes a polynomially small fraction of $\mch$, or an exponentially small fraction.\footnote{If $\mck_{\rm max}$ is an exponentially small fraction of $\mch$ then the number of Krylov sectors is exponentially large. Having $N_\mck = \O(\exp(L))$ is of course however still possible even if $\mck_{\rm max}$ is a polynomially large fraction of $\mch$. }
    We will say that (following terminology first appearing in \cite{balasubramanian2023glassy}) $\dyn$ is
	\begin{enumerate}
		\item {\it polynomially fragmented} if 
		\be |\mck_{\max}| / |\mch| = \O(1/\poly(L)),\ee 
		and 
		\item {\it exponentially fragmented} if 
		\be |\mck_{\max}| / |\mch| = O(\exp(-L)).\ee 
	\end{enumerate}
	There may be examples which are neither polynomially nor exponentially fragmented according to this definition (with $|\mck_{\rm max}| / |\mch|$ scaling as $\exp(-L^{\beta < 1})$) Ref.~\cite{balasubramanian2023glassy}, but we will not explicitly address such examples in this work.   
	
	Note that when the dynamics possesses global symmetries we do not restrict ourselves to a symmetry sector to define fragmentation, as in Refs.~\cite{moudgalya2022quantum,balasubramanian2023glassy} . This is because when $\dyn$ has symmetries, the ratio of $|\mck_{\max}|$ to $\mch$ (and not to the size of the symmetry sector to which $\mck_{\max}$ belongs) is what determines thermalization timescales for the dynamics considered in most this paper.  This distinction is particularly important in the exponentially-fragmented ``breakdown model'' of Sec.~\ref{ss:breakdown}, in which each fragment is uniquely identified with a global symmetry sector. However, in some settings focusing on a single symmetry sector is meaningful; examples will be discussed in Sec.~\ref{sec:fragtrans}. Until then, we will stick with the above definition. 
	
	The second remark is that our definition above does not distinguish between systems where $|\mck_{\max}| / |\mch|$ is asymptotically constant and those where it vanishes as $1/\poly(L)$---both are treated as ``polynomially fragmented''. In the literature, situations where $|\mck_{\max}| / |\mch|$ is asymptotically constant (or, more often, where the size of $\mck_{\max}$ is a constant fraction of the global symmetry sector to which it belongs) are referred to as ``weakly fragmented'', while situations where $|\mck_{\max}| / |\mch|$ vanishes as $L\ra\infty$ (regardless of how quickly) are referred to as ``strongly fragmented''. For us, the distinction between $1/\poly(L)$ and $\exp(-L)$ scaling will be more important.

	\ss{Model of dynamics} \label{ss:model}

	In this section we introduce and motivate the particular models of dynamics we will study in subsequent sections (see Fig.~\ref{fig:schematic} for a schematic). We aim to derive lower bounds on thermalization times (defined below) which are as general as possible, taking as input only the structure of the constraints, and not (for our purposes extraneous) details such as the exact form of some Hamiltonian, a particular representation of a set of Kraus operators, etc. For this reason we will take $\dyn$ to be generated by an appropriate form of discrete-time constrained random unitary dynamics (whose gates will be averaged over in the channel implemented by $\dyn$) and an appropriate type of depolarizing boundary noise. Since we are interested in placing lower bounds on thermalization times, we will choose the structure of the unitary circuits to be as rapidly thermalizing as possible, given the constraints.

	For each choice of dynamics we will be interested in studying the thermalization time $\tth$, defined as 
	\be \label{tth} \tth = \max_{\r} \min\{ t \, : \, || \r(t) - \unit / |\mch|||_1 < 1/2\},\ee 
	where $\r(t) \equiv \dyn^t(\r)$, and the maximum is over all initial density matrices contained in the support of a {\it single} Krylov sector $\mck_\a$ (i.e. $\r$ such that $\Tr[\r \Pi_\b] = \d_{\b\a}$ for some $\a$, where $\Pi_\b$ is the projector onto $\mck_\b$). 
    This definition assumes that thermalization always occurs to the maximally mixed state, i.e. that $\dyn$ thermalizes if $\dyn^\infty(\r) = \unit / |\mch|$ for all $\r$. This assumption is correct in the minimally-structured models of dynamics considered below but would need to be modified if $\dyn$ possesses additional conserved quantities (e.g. energy). In general however we expect a bound on $\tth$ in this setting to also provide bounds on $\tth$ in situations where $\dyn$ has a more structured fixed point (see below). 
	
	We note in passing that the definition \eqref{tth} means that if $t<\tth$, there exists {\it some} observable $\mco$ that distinguishes $\r(t)$ from $\unit/|\mch|$ with probability greater than $\ep$. While a priori there need not exist a {\it local} observable which does this, a local $\mco$ can in fact be found in all of the examples studied in Sec.~\ref{sec:ex}. 
	
	\sss{Local constraint breaking} \label{sss:local_constr_breaking}
	
	As mentioned above, our primary focus will be on situations where the constraints are violated in a local connected region of space. When studying this situation we will mostly work with open boundary conditions, and for simplicity will take the region in which the constraints are broken to be located at one boundary of the system. The choice of open boundary conditions is important in general, as there exist some models for which the imposition of periodic boundary conditions speeds up thermalization by an amount exponential in $L$ (such as the $tJ_z$ model, see Sec.~\ref{ss:tjz}). 
    
	We will take each time step of $\dyn$ to act as a constraint-preserving quantum channel $\mcc_{\rm bulk}$ acting on the whole system, followed by a constraint-breaking channel $\mcc_{\rm bound}$ acting nontrivially only near the boundary:
	\be \dyn = \mcc_{\rm boundary} \circ \mcc_{\rm bulk}.\ee 
	On physical grounds, we expect that the quickest way to thermalize by breaking the constraints is to apply maximally depolarizing noise in the constraint-breaking region, which effectively replaces the constraint-breaking region by an infinite-temperature thermal bath.\footnote{Indeed, one can show show that among all possible choices of $\mcc_{\rm boundary}$, this choice brings the system's density matrix closer (in trace distance) to the maximally mixed state than any other channel.} We will thus set $\mcc_{\rm boundary} = \mcd_1 \tp \unit^{L-1}$, where $\unit$ denotes the identity channel on a single site, and $\mcd_1$ applies maximally depolarizing noise to the first site (under this choice of $\mcc_{\rm boundary}$, we may take the noise to act only on the first site without loss of generality). 	
	
	Similarly, we expect that thermalization will be fastest when $\mcc_{\rm bulk}$ globally scrambles all states within a given Krylov sector, i.e. when 
	\be \label{mccbulk} \mcc_{\rm bulk} = \bigoplus_\a \mcd_\a,\ee 
	where $\mcd_\a$ depolarizes states in $\mck_\a$; explicitly, $\mcd_\a(\proj\psi) = \Pi_\a / |\mck_\a|$ for all $\k\psi\in \mck_\a$. Physically, this choice of $\mcc_{\rm bulk}$ corresponds to acting with a deep constrainted RU circuit at each time step, and then averaging over circuit realizations. We will employ this choice of $\mcc_{\rm bulk}$ when making analytic statements, but for the numerics to follow we will instead take $\mcc_{\rm bulk}$ to be a local channel, each gate of which is drawn from an appropriately constrained Haar ensemble (as in Refs.~\cite{Singh,balasubramanian2023glassy,hsf_impurities}). Intuitively, we expect that imposing locality will only slow the dynamics more because the system will not instantaneously thermalize within a sector~\cite{hsf_impurities}.  In Appendix~\ref{app:local}, we prove this in generality; thus, proving slow thermalization in the model with instantaneous intrasector thermalization also implies slow thermalization under generic local dynamics.
	
	The choice \eqref{mccbulk} means that the internal structure of each $\mck_\a$ is not important---at the end of every time step, $\r(t)$ is always decohered and spread out uniformly across any given sector, and all of the information in $\r(t)$ is contained in the quantities 
	\be \label{pongk} p_\a(t) \equiv \Tr[\r(t) \Pi_\a], \ee 
	which define a probability distribution over the set of Krylov sectors. As we will see in Sec.~\ref{sec:graphs}, this means that the thermalization dynamics of $\dyn$ can be computed by studying the thermalization of a certain Markov process (see also Refs.~\cite{Singh,balasubramanian2023glassy,hsf_impurities}), whose mixing time can be bounded using standard graph theory techniques. 
	
	\begin{figure}
		\includegraphics[width=.43\tw]{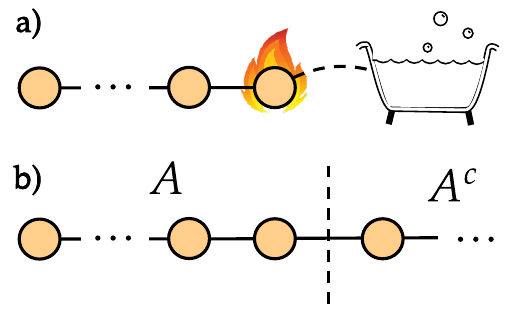}
		\caption{\label{fig:schematic} A schematic of the general setup. In {\bf a)}, an open 1d chain is coupled to a thermal bath at one of its boundaries, which induces depolarizing noise on one of its boundary sites. We conjecture that in all exponentially fragmented dynamics, the thermalization time $\tth$ is either infinite, or scales exponentially with system size. In {\bf b)}, a finite chain is bipartitioned into $A$ and $A^c$, with the state on $A^c$ now playing the role of the bath. This dynamics generically thermalizes at least as slowly as the former type, although thermalization may only be possible when $|A^c| / |A|$ diverges sufficiently quickly as $|A|\ra\infty$. }
	\end{figure}

	\sss{Subsystem dynamics and ergodicity lengths}
	
	The depolarizing noise model above corresponds to locally coupling the system to an infinitely-large infinite temperature heat bath. We can also consider the dynamics of the reduced density matrix $\r_A$ on a subsystem $A$ induced by constraint-{\it preserving} dynamics applied to the full system $A \cup A^c$, where $\r_A(t) = \Tr_{A^c}[\dyn^t(\r)]$ with $\dyn$ now given by fully constraint-preserving random unitary dynamics (with the circuit average incorporated into $\dyn$ as above, and with $A$ taken to be e.g. the left half of the system). For generic chaotic dynamics, the degrees of freedom in $A^c$ act as a thermal bath for the degrees of freedom in $A$ as long as $A^c$ is large enough, and so we expect the reduced density matrix $\r_A$---or more precisely, the diagonal matrix elements thereof \footnote{Off-diagonal matrix elements can behave differently; for example in the present situation $\Pi_\a \r_A(t) \Pi_\b = 0$ when $\r_A(t)$ is obtained by tracing out $A^c$ in constraint-preserving dynamics, i.e., $\rho_{A}(t)$ is block-diagonal. For the choices of $\dyn$ we consider the off-diagonal elements will however rapidly dephase.}---to evolve at long times in qualitatively the same way as the density matrix $\r$ in the model where the system undergoes depolarizing noise at its boundary. 
	
	One natural question concerns the amount of spatial resources required for subsystems to thermalize, i.e. how large $L_{\rm tot} \equiv |A| + |A^c|$ needs to be before the dynamics of $\r_A$ is qualitatively similar to the dynamics in the maximally depolarizing model (this comparison being made only for initial states contained within a single Krylov sector, or else confined to a small number of nearby sectors). 
    To quantify this, we define the {\it ergodicity length} $L_{\rm erg}(|A|)$ as the minimial size system in which the subsystem $A$ must be embedded in order that the dynamics on $A$ is ergodic, meaning that at long times, the reduced density matrix $\r_A$ has support on all Krylov sectors associated to a system of size $|A|$. 
    
    In generic chaotic dynamics, $L_{\rm erg}/|A|$ is usually a (perhaps large) $O(1)$ number, which is independent of $|A|$ in the $|A|\ra\infty$ limit. 
    In exponentially fragmented models, the story is rather different: in Sec.~\ref{sec:ex} we will see that the scaling of $L_{\rm erg}(|A|)$ with $|A|$ can vary quite dramatically, ranging from $L_{\rm erg} = \ct(|A|)$ to $L_{\rm erg} = \infty$.
    For models with infinite ergodicity length, subsystem entanglement entropies can {\it never} reach the Page value after undergoing a quench from a state in a definite Krylov sector, {\it even when they are embedded in an infinite system}. In these models, even the most generic possible constraint-preserving dynamics is unable to fully entangle subsystems with their complements. See Ref.~\cite{rakovszky2020statistical} and the ``fragile fragmentation'' phenomenon identified in Ref.~\cite{balasubramanian2023glassy} for further discussion.

	\section{The exponential fragmentation conjecture} \label{sec:graphs}
	
	\begin{figure}
		\includegraphics[width=.45\tw]{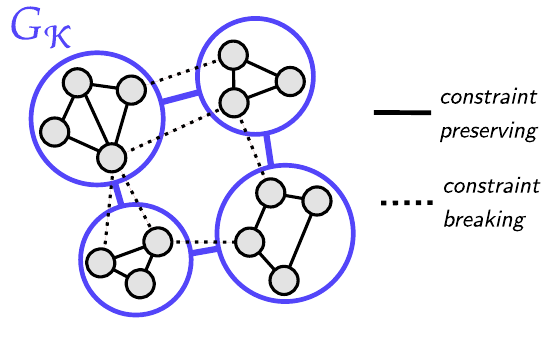} 
		\caption{\label{fig:GK} Coarse-graining and the Krylov graph. Each vertex represents a basis state in $\mch$, with a solid line drawn between two vertices if the corresponding states are connected by the constraint-preserving part of the dynamics, and a dashed line drawn if they are connected by the constraint-breaking part; the resulting graph is $\mcg_\mch$. Each Krylov sector (purple circles) defines a vertex of the Krylov graph $\mcg_\mch$, with an edge drawn between two sectors if states in each sector are connected to one another under the constraint-breaking part. }
	\end{figure}
	
	\ss{Krylov graphs and expansion}\label{sec: graph expansion}
	Most of the results in this paper are derived from understanding how the constraint-breaking part of $\dyn$ connects different states in Hilbert space. To this end, we will define a graph $\mathcal{G}_\mch$ by associating to each basis state $\k{\psi}$ of $\mch$ a vertex $v_\psi$ of $\mathcal{G}_\mch$, and drawing an edge $(v_\psi,v_{\psi'})$ between $v_\psi, v_{\psi'}$ if $\k\psi$, $\k{\psi'}$ are connected under a single step\footnote{In the case where $\dyn$ is generated by continuous time Hamiltonian evolution, an edge is present if $\lan \psi' | H | \psi\ran \neq 0$ (and likewise for Lindbladian evolution).} of the dynamics, i.e. if $\langle \psi'|\dyn(\proj\psi)|\psi'\rangle\neq0$ (we will always choose a basis of $\mch$ where each basis state is a product state belonging to a single $\mck_\a$). For dynamics where the constraints are everywhere unbroken, $\mcg_\mch$ contains one disconnected piece for each $\mck_\a$. When the constraints are broken, vertices in different sectors become connected, and this additional connectivity determines how fast the resulting dynamics can thermalize. 
	
    From here on, we will restrict ourselves to a particular kind of dynamics corresponding to random unitary circuit dynamics described in the Sec.~\ref{sss:local_constr_breaking}. In this model, once we take the Haar average (which will always be done at each time step), we can map the dynamics onto $\dyn = \mathcal{C}_{\mathrm{boundary}} \circ \mathcal{C}_{\mathrm{bulk}}$ for $\mathcal{C}_{\mathrm{boundary}}$ maximally depolarizing on the first site and $\mathcal{C}_{\mathrm{bulk}}$ maximally depolarizing within each Krylov sector, as described above. For the maximally depolarizing noise model we focus on, letting $p_\psi(t) \equiv \lan \psi | \r(t) | \psi\ran$ denote a natural probability distribution over $\mathcal{G}_\mch$ (with $
    \sum_\psi p_\psi(t) = 1$), $\dyn$ induces a classical stochastic process describing the time evolution of this distribution. In particular, the vector of probabilities $p(t)$ evolves according to $p(t+t') = \mcm^{t'} p(t)$, where $\mcm$ is the transition matrix of a Markov chain with matrix elements 
	\be \mcm_{\psi,\psi'} = \lan \psi'| \dyn(\proj \psi) | \psi'\ran.\ee 
    We will denote the stationary distribution of this chain by $\mu(\cdot)$, with $\mu(R) = \sum_{\psi \in R} \mu(\psi)$ for a subspace $R\subset \mch$; for all the examples we will be interested in, $\mu$ will simply be the uniform distribution over $\mch$.  
	
	To link the connectivity of $\mathcal{G}_\mch$ to the thermalization time of $\dyn$ (or equivalently, to the mixing time of the Markov process $\mcm$), we define the {\it expansion} or {\it conductance} of a subset of states $R \subset \mathcal{G}_\mch$ as 
	\be \label{cprdef} \cp(R) \triangleq \frac{\sum_{\psi \in R, \psi' \in R^c } \mu(\psi) \mathcal{M}_{\psi, \psi'}}{\mu(R)}.\ee 
	If $\mu$ is the uniform distribution over all nodes, we can replace the numerator with $\sum_{\psi \in R, \psi' \in R^c } \mathcal{M}_{\psi, \psi'}$ and the denominator with $|R|$.  The graph expansion (or graph conductance) is defined as
    \begin{equation}
     \cp(\mathcal{G}_\mch) = \min_{R\subset \mathcal{G}_\mch \, : \, |R| \leq |\mch|/2} \cp(R).
    \end{equation}
	$\cp(\mcg_\mch)$ measures how ``well-connected'' the dynamics is, and directly places a lower bound bound on the thermalization time due to (one side of) Cheeger's inequality \cite{lyons2017probability, hsf_impurities}:\footnote{The side of Cheeger's inequality which lower bounds $t_{\rm th}$ is conventionally formulated in terms of the second-largest eigenvalue $\l_2$ of the Markov process as $\frac1{1-\l_2} \geq (2\Phi(\mcg_\mch))\inv$; the form written here follows from this and a standard bound between $\l_2$ and $t_{\rm th}$ \cite{lyons2017probability}.}
	\be 
    \label{cheeger} 
    t_{\rm th} \geq \frac{C}{\Phi(\mcg_\mch)}-1
    \ee 
    with the constant $C = \ln(2)/2$.
	When $\mathcal{G}_\mch$ is disconnected, $\cp(\mathcal{G}_\mch) = 0$, and the system never thermalizes. On the other extreme, when $\cp(\mathcal{G}_\mch)$ is $O(1)$, then $\mathcal{G}_\mch$ is an expander graph, and the system thermalizes rapidly. 
	One of the central messages of this work will be to link the severity of fragmentation to the scaling of $\Phi(\mcg_\mch)$, by showing that systems with strong fragmentation have small $\cp(\mathcal{G}_\mch)$ and long thermalization times. 
	
	Unfortunately the expansion of $\mcg_\mch$ is usually difficult to calculate exactly, and we will thus mostly be interested finding upper bounds for its scaling with $L$ (which by \eqref{cheeger} then give lower bounds on $\tth$). For this purpose we will find it convenient to define a ``coarse-grained'' version $\mcg_\mck$ of $\mcg_\mch$ called the {\it Krylov graph}, which ignores the intra-sector aspects of $\dyn$ and focuses only on how $\dyn$ connects the different $\mck_\a$.  This is natural as under our model of dynamics, we assume that intra-sector dynamics in each $\mck_\a$ thermalizes instantaneously to the maximally mixed distribution over $\mck_\a$. To define the Krylov graph, we group all nodes in $\mcg_\mch$ belonging to a given sector into a single ``supervertex'', and all edges connecting two sectors into a single ``superedge''. We can define an ``effective'' expansion or conductance corresponding to this coarse-grained graph.  In particular, 
    upon coarse-graining to form the Krylov graph, the steady state distribution has a probability weight of $|\mck_\alpha|/|\mch|$ on the node labeled by $\alpha$, and the effective expansion of a region $R_\mck \subset \mcg_\mck$ is given by
    \begin{equation}
        \Phi(\mcg_{\mck}) = \min_{\substack{R_{\mck} \subset \mcg_{\mck} \\  \mu(R_{\mck})\leq 1/2}} \Phi(R_{\mck}),
    \end{equation}
    which by definition satisfies $\Phi(\mcg_{\mch}) \leq \Phi(\mcg_{\mck})$.

    We chose to work with the maximally depolarizing model as we expect the instantaneous intra-sector depolarization to thermalize the system in a strictly faster way than any local dynamics. To formalize this, we can instead consider local random unitary circuit dynamics, which can be mapped to a Markov process where the transition matrix has a local tensor product structure.  Granted that the stationary distribution is uniform over all nodes in  $\mathcal{G}_{\mathcal{H}}$, we thus expect the expansion of this Markov chain to be upper bounded by $\cp(\mathcal{G}_\mck)$. We prove this in App.~\ref{app:local}, and thus in what follows we will thus mostly focus on upper bounding $\cp(\mathcal{G}_\mck)$. At this point, in order to make further progress, we will need to make additional assumptions on the structure of $\mathcal{G}_\mck$. 
	In particular, configuration graphs obtained from exponentially fragmented dynamics appear to always yield exponentially small $\cp(\mcg_\mck)$, as we now discuss.

	\ss{Main statement } \label{ss:conj}
	
	\begin{figure}
		\includegraphics[width=.5\tw]{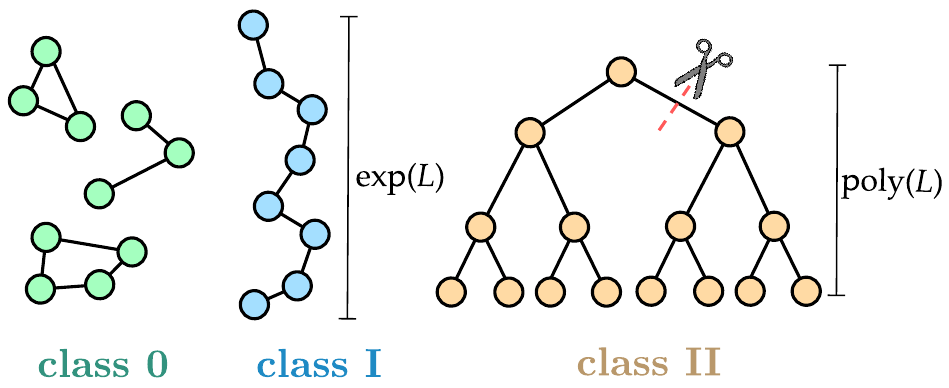}
		\caption{\label{fig:classes} Different types of Hilbert space connectivity that give rise to exponentially slow thermalization. In class 0, Hilbert space remains disconnected even in the presence of the bath, and $\tth = \infty$. In class I, the diameter of the Krylov graph $\mcg_\mck$ is $\O(\exp(L))$, meaning that it takes the bath at least $\sim\exp(L)$ time steps to move the system across Hilbert space. In class II, the bath can connect any two states in only $\poly(L)$ time steps, but Hilbert space possesses strong bottlenecks that render the thermalization dynamics exponentially slow. In this class, $\mcg_\mck$ typically has a tree-like structure as in the figure, where $\mcg_\mck$ can become disconnected into two thermodynamically large pieces after only a small number of edges are cut.   }
	\end{figure}
	
	Much of the remainder of this work will be devoted to addressing the following conjecture: if $\dyn$ is exponentially fragmented, then $\cp(\mcg_\mck) = O(\exp(-L))$. 
	This can be restated as: 
	
	\vspace{0.25cm}
	\fbox{\begin{minipage}{22.5em}
			\textbf{Conjecture: } If $\dyn$ is exponentially fragmented, the thermalization time is exponentially long in system size: $\tth = \O(\exp(L))$. 
	\end{minipage}}
	\vspace{0.25cm}
	
	This conjecture suggest that one can bound $\tth$, a (usually) hard-to-compute quantity determined by the global structure of $\mcg_\mck$, based on the size $|\mck_{\max}|$ of a single Krylov sector.
    The motivation for proposing this relation comes from the fact that for a large subclass of constraints---those obtainable within the group dynamics framework of Ref.~\cite{balasubramanian2023glassy}---it can be reformulated as a statement about the expansion of heat kernels of random walks on Cayley graphs; in this setting, our conjecture can be mapped to a refinement of a different conjecture by Benjamini related to the existence of certain kinds of ``robust'' expander graphs. This perspective is discussed further in Sec.~\ref{sec:groups}, where we prove our conjecture for dynamics whose constraints derive from the action of any hyperbolic group.  Though this may seem to be a restricted class of systems, a randomly generated group-based constraint corresponds to hyperbolic group dynamics with high probability; thus, our results are quite generic.  While a general proof for {\it all} exponentially fragmented dynamics (including those that do not exhibit a group structure) will seem to require new mathematical ideas and techniques (see Sec.~\ref{ss:math}), it holds for all examples known to the authors, some of which will be studied in detail in Sec.~\ref{sec:ex}.  
	
	To develop intuition for this conjecture, it is helpful to group models of exponentially fragmented dynamics into three classes, corresponding to the three qualitatively distinct ways in which a graph can have small expansion. We refer the reader to Fig.~\ref{fig:classes} for an illustration: 
	\begin{itemize}
		\item Class 0 ({\it persistent ergodicity breaking}): $\mcg_\mck$ is disconnected. Here $\dyn$ remains non-ergodic even after the constraint-breaking terms are added, and the system fails to thermalize even at infinite time, $\tth = \infty$. 
		\item Class I ({\it large diameters}): $\mcg_\mck$ is connected, but has an exponentially large diameter, ${\sf diam}(\mcg_\mck) = \O(\exp(L))$. $\tth = \O(\exp(L))$ simply because it takes exponentially long time to traverse Hilbert space. 
		\item Class II ({\it bottlenecks}): ${\sf diam}(\mcg_\mck) = O(\poly(L))$, but $\mcg_\mck$ has exponentially small expansion, meaning that it possesses severe bottlenecks that produce an exponentially long thermalization time (for example, some models in this class have ``tree-like'' Krylov graphs). Sometimes these bottlenecks manifest as localized motifs in real space which restrict the dynamics, while other times they are associated with non-local degrees of freedom.
	\end{itemize}
	The conjecture thus claims that if $\dyn$ is exponentially fragmented, then $\mcg_\mck$ is either disconnected, or becomes disconnected into several pieces, each of which contains $\O(\exp(L))$ states, after only a small $(\sim o(\poly(L))$) number of edges are cut. 
	
	A well-known example of a model in class 0 is the PXP model \cite{turner2018quantum} (a less well-known example is provided by a colored version of the Fredkin chain \cite{salberger2018fredkin}). Examples of systems in class I that the authors are aware of have exponentially modulated symmetries \cite{sala2022dynamics}, and are variants on the quantum breakdown models of Refs.~\cite{lian2023quantum,liu20232d,hu2024bosonic,chen2024quantum} (an example of which is treated in Sec.~\ref{ss:breakdown}).
    Ref.~\cite{hsf_impurities} showed that the pair-flip model belongs to class II; other well-known models belonging to this class include the $tJ_z$ model \cite{rakovszky2020statistical} (Sec.~\ref{ss:tjz}) and the exponentially fragmented $S^z=1$ dipole-conserving model \cite{khemani2020localization,sala2020ergodicity} (Sec.~\ref{ss:dipole}). Only the last of these examples has bottlenecks visible as localized motifs in real-space. 

    Our conjecture applies equally well to systems with classical and quantum HSF. However, the exponentially- and quantum-fragmented models the authors are aware of all have the property that they reduce to classically- and exponentially-fragmented models upon adding certain operators to the dynamics. Adding additional operators to the dynamics in this way should not parametrically increase thermalization times, and hence in what follows we will simplify the discussion by focusing on classically fragmented examples. 
	
	\section{Examples} \label{sec:ex}
	
	In this section we verify the conjecture for several explicit examples, which illustrate the range of mechanisms by which slow thermalization can occur. We will focus on examples in classes I and II only, as those in class 0 have $\tth=\infty$ for trivial reasons.

	\ss{Class I: The spin-1 breakdown model} \label{ss:breakdown}
	
	\sss{Boundary depolarizing noise}
	
	An illustrative model in class I is what we will refer to as the {\it spin-1 breakdown model}, following the fermionic and bosonic breakdown models studied in Refs.~\cite{lian2023quantum,liu20232d,hu2024bosonic,chen2024quantum}. The constraint in breakdown models comes from an {\it exponentially modulated} $U(1)$ symmetry \cite{sala2022dynamics}, which in the spin-1 context is defined as 
	\be Q = \sum_{i=1}^L 2^{i-1} n_i,\label{eq:charge_breakdown} \ee 
	where $n_i \equiv S^z_i + 1$ is to be thought of as counting the number of ``particles'' on site $i$---we will accordingly write the basis for the onsite Hilbert space as $\{\k0,\k1,\k2\}$. If we were to specialize to Hamiltonian dynamics, $\dyn$ could be generated by a Hamiltonian of the form 
	\be H = J \sum_i ((S^+_i)^2 S^-_{i+1} + h.c.) + H_z,\ee 
	where $H_z$ contains only $S^z$ operators. As usual however, we will find $Q$-preserving RU dynamics to be more convenient. 
	
	The maximum value of $Q$ is $Q_{\max} = 2\sum_{i=1}^L 2^{i-1} = 2^{L+1}-2$, immediately implying an exponentially large number of sectors. It is easy to convince oneself that $\dyn$ is fully ergodic within each charge sector,\footnote{The easiest way to see this is by noting the close resemblance of the definition of conserved charge in Eq.~(\ref{eq:charge_breakdown}) to the binary representation of real numbers in reverse order, with the least significant bit on the left. Hence different configurations in $\mathcal{K}_Q$ correspond to all possible ways of representing $Q$, with the carries either transferred to the more significant bits or not.} giving exactly $N_\mck = 2^{L+1}-2$ Krylov sectors in total; we will accordingly write a given Krylov sector with charge $Q$ as $\mck_Q$. 
	We remark that although this model is not fragmented according to a previously adopted definition of HSF in literature---since the dynamics is fully ergodic within each charge sector---it {\it is} exponentially fragmented according to our definition in Sec.~\ref{sec:setup} (see below), which is the more relevant notion when considering the dynamics induced by boundary depolarizing noise. 
	
	\begin{figure}[!t]
		\centering
		\includegraphics[width=0.5\textwidth]{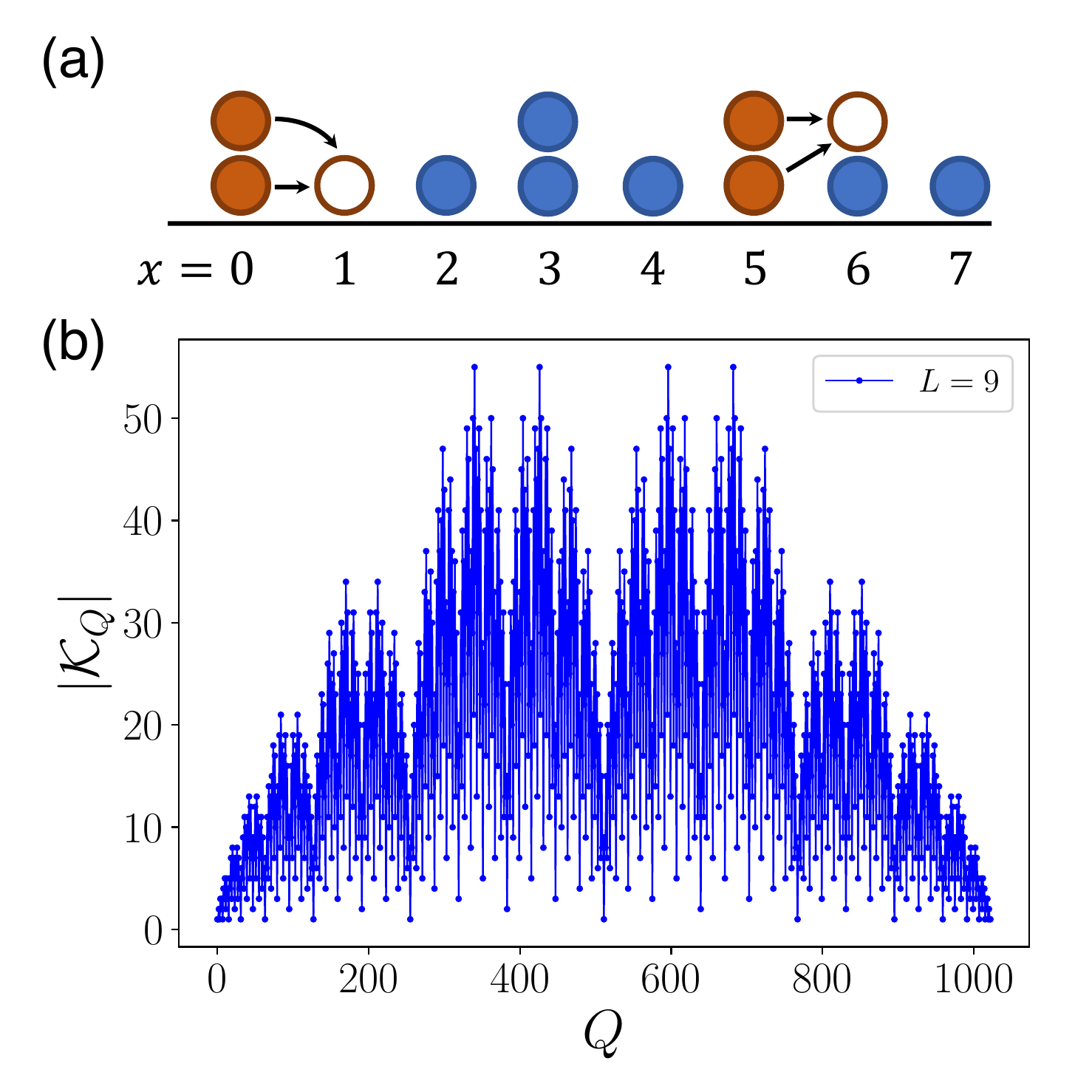}
		\caption{(a)  Allowed dynamical moves of the spin-1 breakdown model. (b) Sizes of each Krylov sector $\mathcal{K}_Q$ with charge $Q$. Notice that the sector sizes are symmetric about $Q_{\rm max}/2$ (particle-hole symmetric). The sizes also exhibit a self-similar structure, which can be understood from an underlying recursion relation of $|\mathcal{K}_Q|$ (see App.~\ref{app:breakdown})}
		\label{fig:breakdown_Krylov_dimension}
	\end{figure}
	In Fig.~\ref{fig:breakdown_Krylov_dimension}, we plot the sizes of each Krylov sector $\mathcal{K}_Q$ with charge $Q$ ranging from 0 to $Q_{\rm max}$. Notice that the Krylov sector sizes exhibit a remarkable self-similar structure, and is symmetric about the middle point $Q_{\rm max}/2$.
	For now, we simply list the following useful facts regarding the structure of $\mck_Q$ in this model, while leaving a detailed justification in App.~\ref{app:breakdown}:
	\begin{enumerate}
		\item The sector sizes are particle-hole symmetric:\footnote{The particle-hole symmetry amounts to taking $n_i\rightarrow 2-n_i$ on each site, while the total charge transforms as $Q \rightarrow Q_{\rm max}-Q$. One can check that the two types of allowed moves in this model are exactly related by a particle-hole transformation.} $|\mck_Q| = |\mck_{Q_{\max}-Q}|$;
		\item The sector of charge $Q_{\max}/2$ has dimension $|\mck_{Q_{\max}/2}| = 1$, being spanned by the state $\k{1}^{\tp L}$;
		\item The sector sizes $|\mathcal{K}_Q|$ satisfy a simple recurrence relation, which allows one to obtain the sizes of sectors with larger $Q$ (and larger system sizes) from those of smaller $Q$ (and system sizes);
		\item The sectors with largest dimensions are those with charge $Q_{\max}/3, 5Q_{\max}/12$, and the particle-hole conjugates thereof. These sectors have dimension that grows as
		\be |\mck_{\rm max}| \propto \phi^L \ee \label{golden}
		in the thermodynamic limit, where $\phi = (1+\sqrt5)/2$ is the golden ratio.
	\end{enumerate}
	Fact 4 shows that the spin-1 breakdown model is exponentially fragmented, with the Krylov graph $\mcg_\mck$ being a line segment of length $N_\mck = Q_{\max}+1$.\footnote{The bath located at the left boundary can change the charge by $\pm1$ when acting on a state with odd charge, and by $\pm1$ or $\pm2$ when acting on a state with even charge. Thus $\mcg_\mck$ is more accurately described as a line segment where each odd-numbered vertex is connected to its nearest neighbors, and each even-numbered vertex is connected to both its nearest and next-nearest neighbors. } Thus
	\be {\sf diam}(\mcg_\mck) = \ct(2^L),\ee 
	immediately implying an exponentially long thermalization time.

	\begin{figure}[!t]
		\centering
		\includegraphics[width=0.5\textwidth]{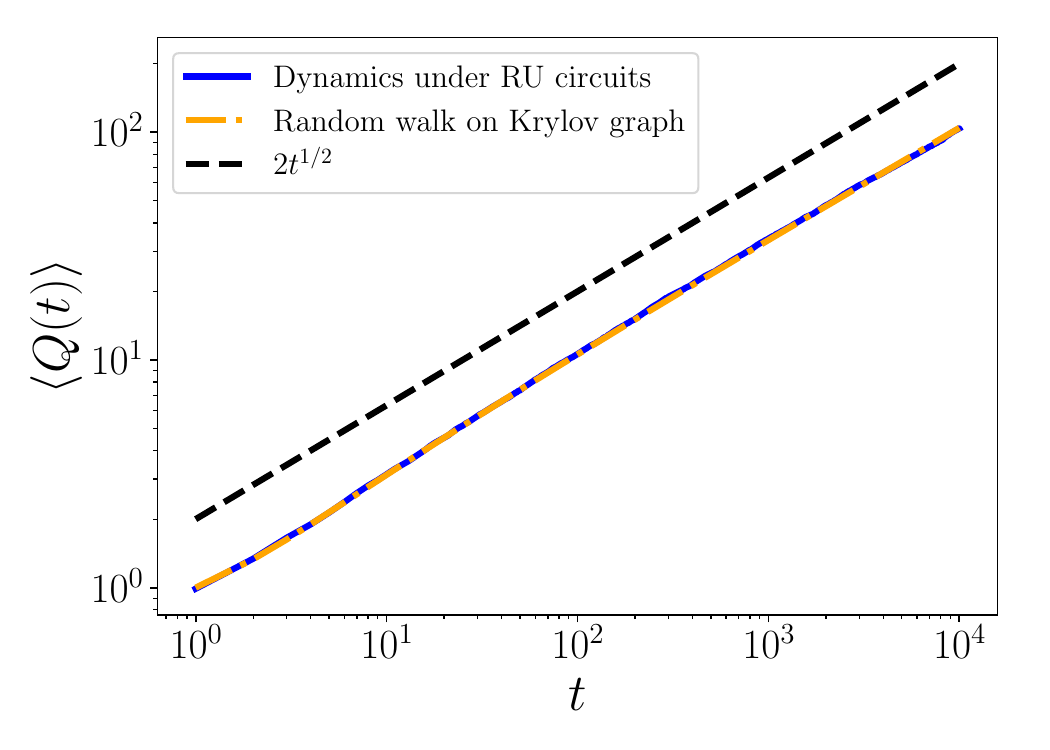}
		\caption{Numerical results for the total charge relaxation dynamics in the spin-1 breakdown model with boundary depolarizing noise. The blue line shows results obtained from a direct simulation of the stochastic dynamics of the spin-1 chain under deep RU circuits in the bulk + depolarizing channel at the endpoint. The orange line is obtained by simulating the effective random walk process on the corresponding Krylov graph. Both results are consistent with a diffusive charge relaxation.}
		\label{fig:Breakdown_Dynamics_numerics}
	\end{figure}
	
	For example, consider what happens when one starts from the $Q=0$ state $\k{0}^{\tp L}$, and tracks the charge expectation value $\lan Q(t)\ran$ as a function of time $t$. In this setting we may define the charge thermalization time $t_{\rm th}^Q$ as the first time for which $\lan Q(t)\ran$ approaches within $\epsilon$ of its value in the $T=\infty$ late time steady state: 
	\be \tth^Q = \min\{ t \, : \, |\lan Q(t)\ran - \lan Q(\infty)\ran | < \ep \lan Q(\infty)\ran\},\ee  
	where fact 1 above implies $\lan Q(\infty)\ran = Q_{\max}/2 = 2^L-1$. 
        We numerically simulate the charge relaxation dynamics, as shown in Fig.~\ref{fig:Breakdown_Dynamics_numerics}. We perform two different simulations: (1) a direct simulation of the stochastic dynamics of the spin-1 chain under deep RU circuits in the bulk and depolarizing channel at the boundary; (2) the effective random walk process on the corresponding Krylov graph (a chain in this case). In (1), we evolve the system for sufficiently long in between two consecutive actions of the depolarizing channel at each time step, so that the intra-sector dynamics is fully mixing instantaneously. For (2), the transition probability from vertex/sector $\mathcal{K}_{q}$ to $\mathcal{K}_{q'}$ is chosen according to 
	\begin{equation}
		p(\mathcal{K}_q \rightarrow \mathcal{K}_{q'}) = \frac{1}{|\mathcal{K}_q|} \sum_{\psi \in \mathcal{K}_q, \psi' \in \mathcal{K}_{q'}} \mathcal{M}_{\psi, \psi'},
	\end{equation}
	where $q'=q\pm 1$ for $q$ odd and $q'=q\pm 1$ or $\pm 2$ for $q$ even. We find that, although for a particular sector on the Krylov graph, the transition rates to its left and right are not symmetric, the total charge relaxes in a diffusive manner, suggesting that on average the dynamics correspond to a random walk on the Krylov graph with no bias. Nonetheless, a rigorous proof of this statement so far has remained elusive.
	
	\sss{Subsystem dynamics: infinite ergodicity length}
        \label{sec:breakdown_ergodicity}
	
	While adding depolarizing noise at a single site renders $\dyn$ ergodic, this model has $L_{\rm erg} = \infty$, meaning that even in an {\it infinite} system,  the reduced density matrices $\r_A$ of any contiguous subsystem with size $|A|>1$ will never have full rank. This in turn implies that the constraint-preserving dynamics is very poor at generating entanglement, and is unable to maximally entangle any finite region with its complement during evolution from a product state, even if given infinite temporal and spatial resources. From the perspective of thermalization in isolated quantum systems, this means that under unitary dynamics, the system cannot act as its own bath and bring its subsystems to thermal equilibrium, however large the size of the reservoir compared to the subsystem.
	
	Consider a subsystem $A$ of a larger system. The total conserved charge can be split into 
	\begin{equation}
		Q = Q_{A^c_L} + 2^{i_{l,A}-1} Q_A + Q_{A^c_R},
	\end{equation}
	where
	\be Q_A = \sum_{i \in A}2^{i-i_{l,A}} n_i,\ee 
	$i_{l,A}$ is the site at the leftmost end of $A$, and $Q_{A^c_{L/R}}$ denotes charge in the region to the left/right of $A$.
	Suppose at $t=0$ the system is initialized in a product state with a particular initial value $Q_A(0)$ of $Q_A$. Since the total charge of the full system $A\cup A^c$ is conserved, change of $Q_A$ must come from charge transferred in and out of $A^c$. Let us first consider region $A^c_L$ to the left of $A$. A particle entering region $A$ from its left end will increase $Q_A$ by $2^{i_{l,A}-1}$. However, since the maximal amount of charge in region $A^c_L$ is $2^{i_{l,A}}-2$, it can only pump or absorb at most {\it one} particle into or from region $A$, no matter how big its size is. A similar reasoning holds for $A^c_{R}$ to the right of $A$. We find that, quite remarkably, the imposition of a global symmetry---albeit a rather unconventional one---renders the system being an extremely poor particle reservoir for its subsystems.
	
	Thus, after evolving the full system $A\cup A^c$ under constraint-preserving dynamics, the value $Q_A(t)$ of $Q_A$ at time $t$ must be expressible as 
	\be Q_A(t) = Q_A(0) + a + b2^{|A|},\ee 
	where $a,b \in \{-1,0,1\}$ express the distinct ways that particles can be transferred between $A$ and $A^c$. For large $|A|$, we may use $|\mck_{\max}| = \phi^L + \cdots$ to conclude that for all $t$, 
	\be {\rm rank}(\r_A(t)) \leq 9 \phi^{|A|},\ee 
	where the factor of $9$ comes from the number of ways of choosing $a,b$. The entanglement entropy of $\r_A$ is accordingly upper bounded as 
	\be \label{breakdown_S_bound} S(\r_A(t)) \leq |A| \ln(\phi) + {\rm const},\ee 
	which since $\ln(\phi) < \ln(3)$ means that the coefficient of the volume law can never be made to match the scaling of a random state, so that $\dyn$ is unable to fully entangle subsystems with their complements, even when given infinite spatial and temporal resources. 
	
	\ss{Class II: $tJ_z$ Model} \label{ss:tjz}
	\sss{Boundary depolarizing noise}
	Perhaps the simplest model in class II is the $tJ_z$ model, which was originally formulated as a hard-core Fermi-Hubbard chain with only $S^z_i S^z_{i+1}$ spin interactions~\cite{zhang1997tjz,Batista2000tjz}.\footnote{The thermalization dynamics of this model is somewhat similar to that of the colored Motzkin chain \cite{movassagh}, another well-known exponentially fragmented model; since the analysis of $tJ_z$ is simpler, we will not discuss the Motzkin chain explicitly. } Writing the onsite Hilbert space as $\k{0}, \k{\upa}, \k{\doa}$, the dynamics is such that the only allowed local matrix elements are of the form $\kb{0\s}{\s0}_{i,i+1}, \kb{\s0}{0\s}_{i,i+1}$ for $\s \in \{\upa,\doa\}$. For convenience, we will consider the corresponding RU dynamics from now on. Under such dynamics, the Hilbert space is fragmented into exponentially many Krylov sectors, with each sector characterized by a spin pattern of $\uparrow$ and $\downarrow$'s. For example, the spin pattern of the product state $|\uparrow 0\downarrow\uparrow0\uparrow\rangle$ is $\uparrow\downarrow\uparrow\uparrow$, which labels its Krylov sector that is dynamically disconnected from other sectors of different spin patterns. 
 
	\begin{figure}
		\centering
		\includegraphics[width=0.5\tw]{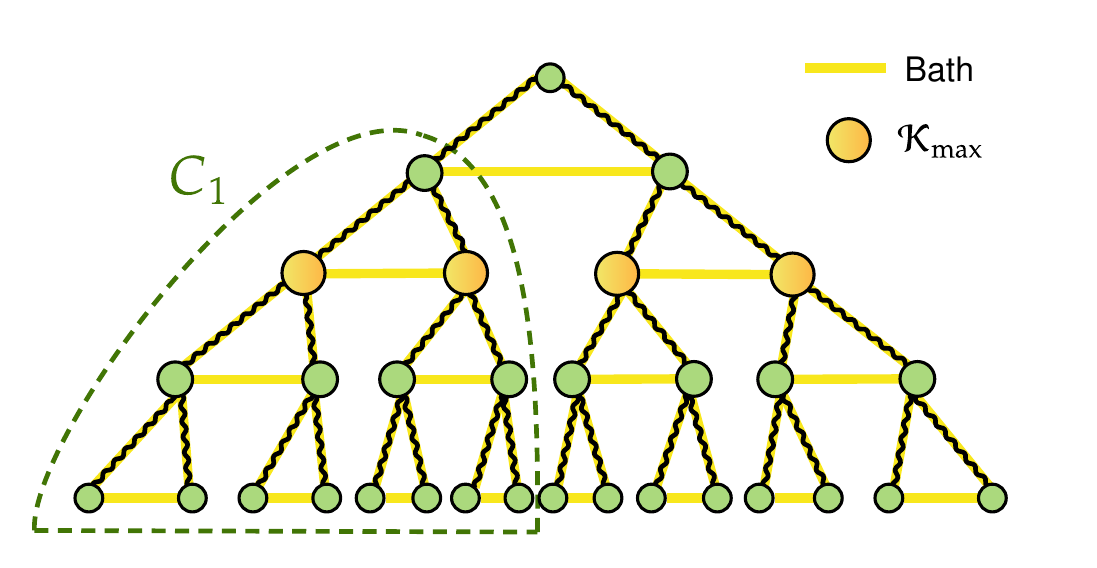}
		\caption{Krylov graph of the $tJ_z$ model for $L=4$ coupled to a heat bath at one end of the chain. Each product state defines a walk on a binary tree (highlighted in wavy lines), where the walk stops for an empty site and takes the direction determined by the spin for an occupied site. The endpoint of the walk is invariant under pure $tJ_z$ dynamics and labels the corresponding Krylov sector. A given sector of particle number $Q$ lies at depth $d=Q$ defined as the distance from the top of the tree. The largest Krylov sectors (yellow circles) lie at  $d=L/2$, while a typical sector lies at $d=2L/3$. Breaking the constraints at one end of the chain changes the last step of the walk and connects Krylov sectors in the way indicated by the yellow line. A random initial state in $C_1$ (the region demarcated by the green dashed line) takes an exponentially long time to escape due to the  bottleneck at the tree apex. }
		\label{fig:tJz}
	\end{figure}

	The corresponding Krylov graph $\mcg_\mck$ is simply a binary tree, as illustrated in Fig.~\ref{fig:tJz}. A product state $|a_1\dots a_L\rangle$ with $a_i\in\{0,\uparrow,\downarrow\}$ is associated with a length-$L$ lazy walk on $\mcg_\mck$ which starts from the top of the tree (the top being associated with the single state $\k{0}^L$). The sequence of the walk is read from left to right, with the $i$-th step of the walk taking the direction determined by $a_i$: the walk moves left if $a_i=\upa$, moves right if $a_i=\doa$, and remains where it is if $a_i=0$. Since the spin pattern is invariant under the dynamics, the endpoint of the walk is conserved, with the endpoint labelling the Krylov sector of the state. There are $2^{L+1}-1$ Krylov sectors in total. We define the depth $d$ of a sector as the distance from the top vertex of the tree. A given sector of particle number $Q = \sum_i (\proj{\upa}_i + \proj{\doa}_i)$ is located at depth $d=Q$, with dimension $|\mck_Q|={L\choose Q}$. The largest Krylov sectors are located at depth $d=L/2$ and have dimension $|\mck_{\max}|={L\choose L/2}=O(\frac{2^L}{\sqrt{L}})$, indicating that the $tJ_z$ model is exponentially fragmented. 
	
	Now we consider the RU dynamics $\dyn$ coupled to the depolarizing noise at site $L$. Breaking the constraints at the end of the chain restores ergodicity by mapping the last step of the walk $a_L$ to a random direction, and connecting the Krylov sectors as indicated in Fig.~\ref{fig:tJz}. The thermalization time is lower bounded by the inverse of the coarse-grained expansion $\Phi(\mcg_\mck)$ from Eq.~(\ref{cheeger}). $\Phi(\mcg_\mck)$, defined as the minimum expansion of any subset $R\in \mcg_\mck$, is found to be the expansion of $C_1$, a full branch of the binary tree cut from depth $d=1$ (shown in Fig.~\ref{fig:tJz}). While the number of edges connecting $C_1$ and $C_1^c$ is $O(1)$, the dimension of the branch is $|C_1|=(|\mch|-1)/2=(3^L-1)/2$, resulting in a severe bottleneck with exponentially small $\Phi(\mcg_\mck)$. Therefore, $\tth$ satisfies 
	\be \tth\geq\frac{1}{2\Phi(\mcg_\mch)}\geq\frac{1}{2\Phi(\mcg_\mck)}=\frac{3}{8}(3^L-1), \ee
	and is thus exponentially long. For more details, see App.\ref{app:tJz}. 
	
	Since the distribution of charges is anisotropic across $\mcg_\mck$, the expectation value of the magnetization 
	\be m\equiv\frac{1}{L}\sum_{i=1}^L(\kb{\uparrow}{\uparrow}_i-\kb{\downarrow}{\downarrow}_i) \ee
	serves as a reliable indicator of thermalization. We find that it takes an exponentially long time for a state initialized in a subset of states with nonzero expectation value of $m$ to leave that region, indicating an exponentially long thermalization time. More precisely, defining the magnetization relaxation time $t_m(\gamma)$ as the time needed for $\lan m(t)\ran$ to drop below $\gamma$, as computed for the initial state with maximum charge $|\psi_{\max}\ran = |\uparrow\ran^{\otimes L}$. Following an analysis similar to that of \cite{hsf_impurities}, we find 
	\be \label{eq:tJz_t_m} t_m(\gamma;\psi_{\max})\gtrapprox 3^{(1-2\gamma\log_3[e/\gamma])L}\times\frac{3e}{4\gamma(1+2\gamma)} \ee
	under the limit $L\to\infty$ and $\gamma\ll 1$ (See App.\ref{app:tJz}). This lower bound is numerically verified in the left panel of Fig.\ref{fig:tJz_t_Q}.
	
	\begin{figure}[!t]
		\centering
		\includegraphics[width=0.5\textwidth]{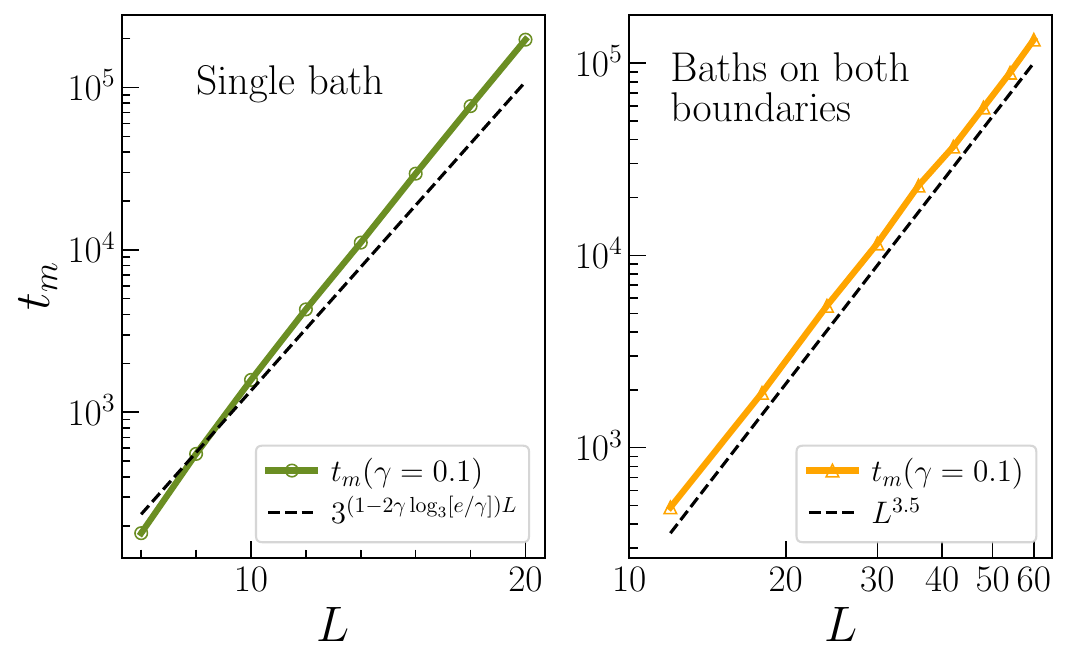}
		\caption{Magnetization relaxation time $t_m(\gamma,\psi_{\max})$ for the maximal-charge initial state $|\psi_{\max}\ran=|\uparrow\ran^{\otimes L}$. We take $\gamma=0.1$. {\it Left:} $t_m$ for the $tJ_z$ model coupled to a thermal bath at {\it one} boundary. The result shows good agreement with the lower bounded in Eq.~(\ref{eq:tJz_t_m}). {\it Right:} $t_m$ for the $tJ_z$ model coupled to thermal baths at {\it both} boundaries. Numerically $t_m\sim L^{3.5}$ and charge relaxation is exponentially faster than the case when only a single boundary is coupled to the bath.}
		\label{fig:tJz_t_Q}
	\end{figure}

    We remark that in this model (as well as the pair-flip model studied in Ref.~\cite{hsf_impurities}), thermalization becomes exponentially faster either when the boundary condition is changed from open to periodic, or when coupling both boundaries to a thermal bath. This is numerically verified for the $tJ_z$ model in Fig.~\ref{fig:tJz_t_Q}, from which we find
	\be t_m(\gamma=0.1)|_{\text{baths at both ends}}\sim L^{3.5}.\ee  Heuristically, with depolarizing noise applied to both ends, one can think of the original system as being embedded in a one-dimensional dictionary containing all possible configurations, and the stochastic process can be visualized as the original system sliding over the dictionary like a window, with the region enclosed in the window denoting the current configuration to which the initial state is mapped. This facilitates rapid thermalization compared to the case with only one end coupled to the noise. Another way of understanding this is that the additional thermal bath at site 1 generates nonlocal moves on the Krylov graph, mapping states within one branch of the tree $C_1$ to states in the other branch, resulting in a highly connected Krylov graph with an expansion that is no longer exponentially small.
	
	\sss{Ergodicity length}
	To find the ergodicity length of the $tJ_z$ model, let us consider a subsystem of size $|A|$ embedded in a larger system. Apparently, for the $tJ_z$ model, subsystem $A$ must be coupled to its complement at both ends for the subsystem dynamics to be ergodic: $A^c \cup A \cup A^c$. Since the total spin pattern cannot change for the full system, the only way for subsystem $A$ to explore all its possible spin patterns is to have all of them stored in $A^c$ and transported to $A$ under the constrained dynamics. We thus consider the following embedding: $\k{0}^{\tp L_{\rm erg}} \tp |\psi\rangle_A \tp \k{{\sf dict}}$, where $\k{{\sf dict}}$ is the de Bruijn sequence which is the shortest string that contains all possible orderings of $|A|$ spins \cite{de1946combinatorial}. Under open boundary condition, the length of $\k{{\sf dict}}$ is
    \be L_{\rm erg}(|A|)=2^{|A|}+|A|-1. \ee
	Thus, the ergodicity length of the $tJ_z$ model scales exponentially with $|A|$.

	\ss{Range-three dipole-conserving model (class II): real-space bottlenecks} 
        \label{ss:dipole}

    \sss{Boundary depolarizing noise}

	\begin{figure}
		\centering
		\includegraphics[width=0.5\textwidth]{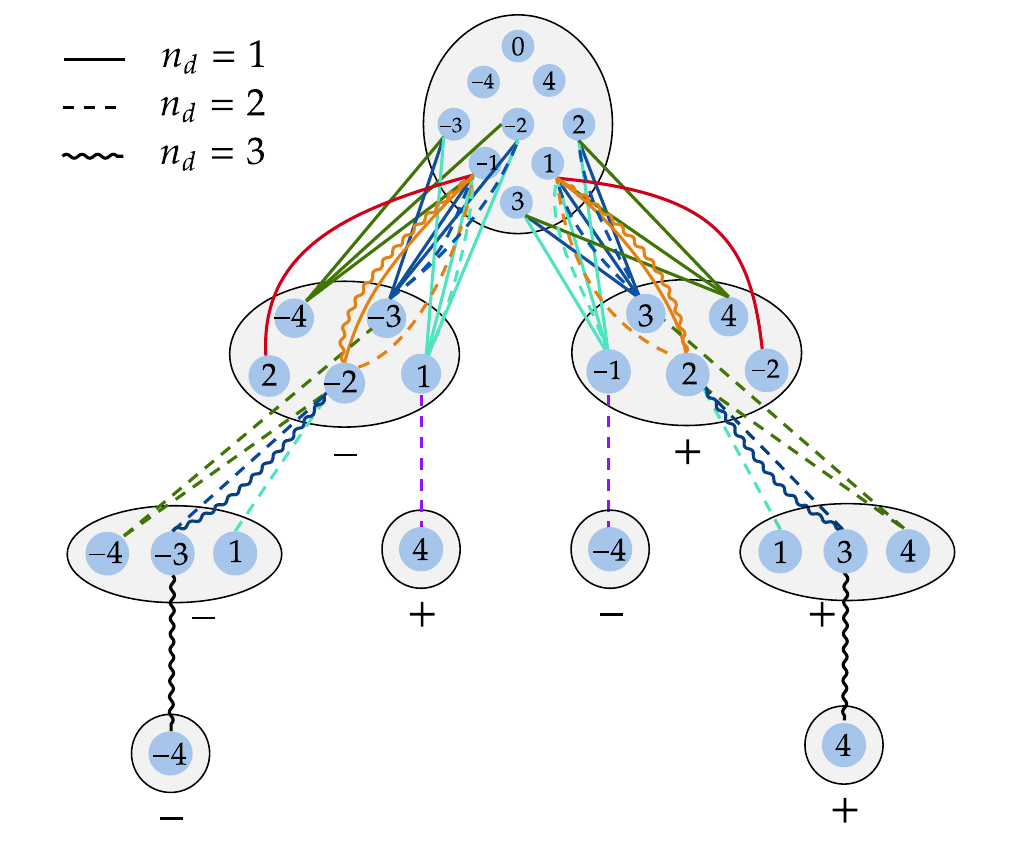}
		\caption{The Krylov graph of range-three dipole-conserving model for $L=4$. There are two layers of fragmentation in this model. The first layer is the conserved defect pattern represented by the nodes on the binary tree. The spin of the $k$-th defect is labeled by $+/-$ under each sector at depth $k$. Within each sector at depth $k$, the Hilbert space is further fragmented into subsectors (blue circles) labeled by the dipole moment $P_k$ between the $k$-th and $(k+1)$-th defects. The connected sectors give the allowed combinations of $\{P_k\}$ under the constraint, with different patterns of lines connecting them denoting different numbers of defects $n_d$ in the state.
		}
		\label{fig:dipole_krylov}
	\end{figure}
	
	We move on to a prototypical example of HSF, namely the dipole-conserving ``fracton'' model first studied in Refs. \cite{pai2019localization,khemani2020localization,sala2020ergodicity}. The model consists of a spin-1 degree of freedom on each site, which in the $S^z$ eigenbasis can be interpreted as positive/negative charge ($S^z=\pm$) and vacuum ($S^z=0$). Local dynamics are subject to two global conserved quantities: the total charge $Q=\sum_i S^z_i$ and dipole moment $P=\sum_i i S_i^z$. Alternatively, one can think of a one-dimensional system of particles with occupation number on each site restricted to $n_i=0, 1$ or $2$, and particle hoppings are constrained by total particle number and center-of-mass conservation. The addition of dipole moment or center of mass conservation has a nontrivial effect on the dynamics of the system. For example, isolated charges or particles are immobile due to this additional conservation law. Dynamical moves that are compatible with the above two conserved quantities must therefore involve $r \geq 3$ consecutive sites. In this subsection, we restrict ourselves to $r=3$ (hence the name range-three dipole-conserving model). It turns out that the dynamics both with and without a bath are drastically distinct for $r=3$ and $r>3$, and a detailed discussion of the latter case will be deferred till Sec.~\ref{ss:dipole_four}.
	
	The dynamics of the range-three dipole-conserving model can be generated by the following Hamiltonian using the spin-1 representation:
	\begin{equation}
		H_3 = \sum_j S^+_{j-1} (S_j^-)^2 S^+_{j+1} + {\rm H.c.},
	\end{equation}
	but we consider more generally $\dyn$ generated by symmetry-preserving three-site random unitary gates. It has been shown that the Hilbert space under the above $\dyn$ is strongly fragmented, and also exponentially fragmented~\cite{khemani2020localization, sala2020ergodicity}. We will show that this model belongs to Class II and possesses severe bottlenecks towards thermalization. However, unlike the $p$-flip and $tJ_z$ model, here the bottleneck manifests itself as localized ``motifs" in real space, and this special feature leads to a higher level of robustness against baths coupled to \textit{both} ends of the system.
	
	We start from the structure of the Krylov graph $G_{\mathcal{K}}$. The Krylov sectors of this model share certain similarities with the $tJ_z$ model, in that the dynamics preserve a certain pattern of ``defects" which can be used to label each Krylov sector. Following Ref.~\cite{rakovszky2020statistical}, such defects are defined as spins that are identical to their nearest-neighbor on the left, ignoring the empty sites. The defect pattern plays the same role as the conserved spin pattern in the $tJ_z$ model. However, configurations sharing the same defect pattern further fractures into even smaller subsectors, as the dipole moment $P_k$ in between the $k$-th and $(k+1)$-th defects must also be conserved~\cite{rakovszky2020statistical}. Thus, at a coarse-grained level, $G_{\mathcal{K}}$ of the range-three fracton model is a binary tree just like the $tJ_z$ model when only the conserved defect pattern is resolved. Zooming in, however, each node of the binary tree now exhibits a second layer of fragmentation, according to the collection of conserved dipole moments $\{ P_k \}$ in between the defects. Fig.~\ref{fig:dipole_krylov} illustrates the Krylov graph $\mcg_\mck$ for $L=4$. The internal structure within each node on the binary tree further reduces the connectivity of the Hilbert space. When coupled to constraint-breaking perturbations at the boundary, not all states sharing the same defect pattern can be connected. To fully scramble the subsectors within the same sector of a defect pattern close to the leaves of the tree, one needs to walk deeply into the bulk of the tree and then backtrack. 
	
	While an analytical bound on $\Phi(G_{\mathcal{K}})$ and hence $t_{\rm th}$ has eluded us so far, the above analysis suggests that thermalization in this model is slower than that of the $tJ_z$ model, and thus also exponentially slow. In fact, the slow dynamics in fracton systems can be directly understood from a real-space perspective. First of all, notice that since the dipole moment $P_k$ between adjacent defects must be conserved, the mobility of the individual defects is severely constrained.~\footnote{Consider, for example, two adjacent positively charged defects at positions $i$ and $j$, respectively. If the dipole moment in between the two defects is $p$, it is easy to see that because of the conservation of $P_k=i+p$, the first defect can move at most $p$ steps to the right, and the position of the second defect must satisfy $j>i+p$.} An initial configuration close to the leaves of the tree will contain a high density of defects, i.e. contiguous regions of $++\cdots +$ or $--\cdots -$. Such regions are completely frozen under the dynamics. Therefore, a typical such configuration will contain little active puddles separated by frozen regions in real space. This is in contrast to the $tJ_z$ or the pair-flip model where a single hole or flippable pair is able to move across the entire system, and hence there is no inert region in real space. In order to melt these frozen regions, the bath must provide dipoles from the boundary. Consider for definiteness a region of $++\cdots+$ at a distance $l$ from the bath at the left boundary. Dipoles of the type $+-$ are able to melt this region from the left. However, since the bath generates $+-$ and $-+$ dipoles at random, it must first absorb all $-+$ dipoles in order to transport a $+-$ dipole. As an estimate for the timescale of absorption, consider the rightmost $-+$ dipole, which experiences a biased random walk with a drift velocity to the right. The bias comes from the presence of a finite number of dipoles to its left, which prevents it from diffusing to the left. The timescale for this dipole to overcome the bias and diffuse to the left is $O({\rm exp}(l))$. Therefore, the bottleneck due to inert regions in real space makes thermalization an exponentially slow process in this model.
	
	The real-space picture also makes it clear that the same bottleneck remains even in the presence of baths at \textit{both} endpoints. The range-three dipole-conserving model is thus more robust against contraint-violating perturbations that try to restore full ergodicity. We numerically test the above picture by coupling the system to a bath either at one boundary or both boundaries. To allow for the addition of dipoles, the bath now couples to the leftmost and/or rightmost two sites, and creates a mixture of all possible charge configurations on the two sites with equal probability. In Fig. \ref{fig:Dipole_conserving_r=3_Dynamics}, we plot the average total charge $\langle Q(t) \rangle$ as a function of time, which clearly shows a slow logarithmic-in-time growth behavior. This slow dynamics remains robust when baths are coupled to both boundaries, as expected.

	\begin{figure}
		\centering
		\includegraphics[width=0.5\textwidth]{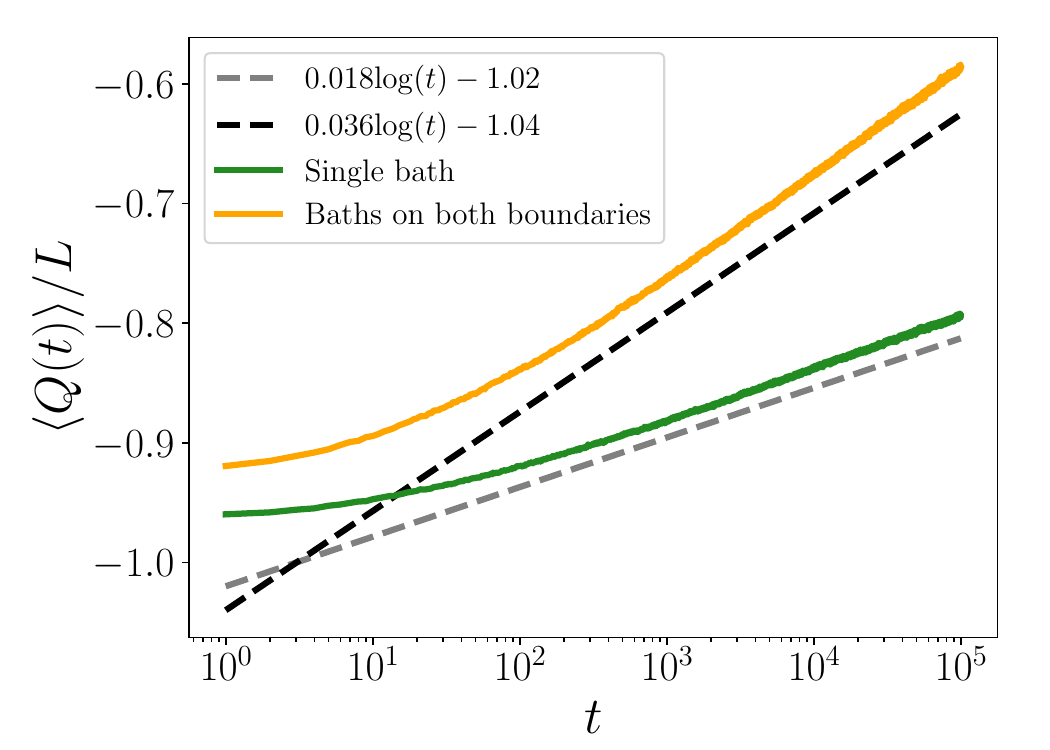}
		\caption{Numerical results for the range-three dipole-conserving model. The average total charge $\langle Q(t) \rangle$ relaxes logarithmically slowly under depolarizing bath coupled to one or both boundaries, which implies exponentially slow thermalization. Due to the real-space bottleneck, the slow dynamics remains robust even when baths are coupled to both boundaries, in contrast to the $p$-flip and $tJ_z$ model. }
		\label{fig:Dipole_conserving_r=3_Dynamics}
	\end{figure}

    \sss{Subsystem dynamics: infinite ergodicity length}

    While the range-three dipole-conserving model is able to restore ergodicity (albeit logarithmically slowly) when coupled to a stochastic bath, we will show that the subsystem dynamics in fact has an infinite ergodicity length, in a way similar to the breakdown model discussed in Sec.~\ref{sec:breakdown_ergodicity}. More specifically, we will show below that when a subsystem $A$ is embedded in a larger system, the total charge and dipole moment in $A$ can only change by an $O(1)$ amount under the constraint-preserving dynamics, irrespective of the length of the entire chain. 

    We start by pointing out the following simple fact about the range-three dipole-conserving dynamics: the boundary site of any finite system can only transition between $0\leftrightarrow +$ or $0 \leftrightarrow -$. In other words, the boundary site cannot explore all three possible charge configurations under \dyn. We establish this via induction. The statement is clearly true for $L=3$, as the only allowed moves are $0-0\leftrightarrow-+-$, $0+0 \leftrightarrow +-+$, $0-+ \leftrightarrow -+0$, and $0+-\leftrightarrow +-0$. Now suppose the statement is true for a system of size $L$, and consider taking $L \rightarrow L+1$ by adding one site to the right. Since the only way for the added site to transition between $-\leftrightarrow +$ is such that the $L$-th site is able to change from $-$ to $+$ under \dyn~restricted within system size $L$, which contradicts our assumption, we conclude that the statement is true for arbitrary system sizes.

    With the above fact established, we can show that the amount of charges transferred across a cut of the system is at most 2 under \dyn, regardless of the total system size. Consider the two sites on both sides of the cut; there are three possibilities for a charge to move from left to right: $\cdots + | 0 \cdots$, $\cdots 0 | - \cdots$, and $\cdots + | - \cdots$. For the first case, $\cdots + | 0 \cdots$, the configuration evolves to $\cdots - | + \cdots$. As previously discussed, the site on the right can only be $0$ or $+$, and the left site can be $0$ or $-$ under \dyn~restricted within the subsystem to the left and right of the cut, preventing further charge transport across the cut. In this case, the maximum amount of charge transferred from left to right is thus $\Delta q = 1$. The same argument applies to the $\cdots 0 | - \cdots$ case, which also evolves to $\cdots - | + \cdots$ with $\Delta q =1$. In the third case, $\cdots + | - \cdots$ can evolve to either $\cdots - | 0 \cdots$ or $\cdots 0 | + \cdots$. The amount of charge transferred is $\Delta q = 2$ and 1, respectively. In either case, the left region is unable to transfer more positive charge to the right. Hence, $\Delta q_{\rm max}=2$ across any cut of the system under range-three dipole-conserving dynamics, no matter how big the size of the reservoir is. We thus conclude that the subsystem dynamics cannot be ergodic, even when the size of the reservoir becomes infinite. That is $L_{\rm erg}= \infty$.

	\section{Towards a proof}\label{sec:groups}
	
	In the previous sections, we showed the conjecture to be true for a diverse array of dynamical constrains, such as those of the $tJ_z$ model, the breakdown model, and dipole conserving dynamics.  In this section we will present a more general framework for addressing the main conjecture, by relating it to a conjecture in the study of expander graphs called {\it Benjamini's conjecture}.  For the setting we study in this paper, a weaker form of the conjecture has been proven, which shows the existence of states with long (at least polynomial in $L$) relaxation times.  However, this result is not strong enough to prove exponential relaxation times.  We therefore propose a refinement of Benjamini's conjecture whose proof would imply the existence of initial states that relax in exponential time in system size, and show that this refinement holds for a large class of dynamics whose constraints are determined by multiplication laws of hyperbolic groups (in a sense to be made precise below).  While this condition may seem rather abstract, a randomly chosen constraint subject to a group property is hyperbolic with high probability, and hence our result implies that the conjecture is true for dynamics with {\it generic group constraints}.  We also discuss challenges with proving the refined conjecture in settings beyond imposing a group constraint. 
	
	\ss{Group dynamics and hyperbolic groups: review}
	
	    We now briefly review {\it group dynamics}, a class of constrained dynamics introduced in Ref.~\cite{balasubramanian2023glassy}.  These dynamics are obtained from some input group $G$, which we will always assume has a finite generating set and is finitely presented.  This means that the group can be presented as
	\begin{equation}
		G = \langle \ttg_{\a_1}, \ttg_{\a_2}, \cdots, \ttg_{\a_n} | R \rangle 
	\end{equation}
	Here the elements $\ttg_{\a_1}, \ttg_{\a_2}, \cdots, \ttg_{\a_n}$, together with their inverses, generate $G$, with the structure of the group determined by the relations $R$ relations satisfied by the generators.  We will only be interested in groups which are both finitely generated and finitely presented, with $n$ and $|R|$ both being $O(1)$ constants.  We will also abuse notation by using $\ttg_i$ to denote either an arbitrary group generator, the inverse of an arbitrary generator, or the identity element $\tte$. 
    
    We will need additional terminology before describing the dynamical constraint determined by $G$.  Given $G$, we define a {\it word} to be a sequence of generators, written as 
     $w = \ttg_{1} \ttg_{2} \cdots \ttg_{L}$ (where again each $\ttg_i$ is either a generator, a generator's inverse, or the identity).
    Call $W_L$ the set of all words of length $L$. There exists a natural homomorphism $\varphi: W_L \to G$ whose action performs group multiplication of the generators corresponding to the characters in the word: $\varphi(w) = \ttg_1 \cdot \ttg_2 \cdot \ldots \cdot \ttg_L \in G$, where `$\cdot$' denotes group multiplication.  Two words $w_1$ and $w_2$ evaluate to the same group element under $\varphi$ if they are related by applying a sequence of relations of $G$.  By applying a relation, we mean that if a word has a length $k$ subword of the form $w_k = \ttg_1 \ttg_2 \cdots \ttg_{k}$, and a relation in $R$ sets $\ttg_1 \ttg_2 \cdots \ttg_{k} = \ttg_1' \ttg_2' \cdots \ttg_{k}'$, then we replace $w_k$ with the right hand side of the relation.  Under a sequence of these `rewriting' moves, any two words $w_1$ and $w_2$ can be connected to one another if and only if $\varphi(w_1) = \varphi(w_2)$.
	
	We can associate this kind of rewriting dynamics with a kind of quantum dynamics by associating each word with a basis element of a many-body Hilbert space of a 1D chain. The local Hilbert space at each site of the chain is taken to be $2n+1$-dimensional, with  basis vectors we label as $\{\k{\ttg_i}\}$ (with, as mentioned above, the $\ttg_i$ denoting either generators, their inverses, or the identity). A word $w = \ttg_1 \ttg_2 \cdots \ttg_L$ then corresponds to the product state $\ket{w} = \ket{\ttg_1} \otimes \ket{\ttg_2} \otimes \cdots \otimes \ket{\ttg_L}$.  The rewriting rules are represented as non-trivial matrix elements between basis states.  
  
	Let $\gamma \in R$ denote an arbitrary relation in $R$, which has the form $\gamma_{\ell} = \gamma_r$ for some words $\gamma_{\ell / r}$. We can write down a natural Hamiltonian capturing the rewriting dynamics of group $G$ as 
	\begin{equation}
		H_G = \sum_i c_\gamma \sum_{\gamma \in R} \ketbra{\gamma_\ell}{\gamma_r}_i + \text{h.c.}
	\end{equation}
	where the subscript $i$ denotes the location on the spin chain where the rewriting rule is performed and the $c_\gamma$ are arbitrary constants. Equivalently, we can also define random unitary circuit dynamics using the group constraint by constructing gates that impose the group constraint.  First define $l_R = \max_{\gamma_{\ell/r}\in R}|\gamma_{\ell/r}|$, which yields the maximum number of characters involved in the left or right hand side of any relation in $R$.  Also define $G_{l_R}$ to be the group restricted to group elements with word representations of length $\leq l_R$.  The elementary unitary gates entering the circuit can then be expressed as
	\begin{equation} \label{groupru}
		U = \bigoplus_{g \in G_{l_R}} U_g, \hspace{0.3cm} G_{l_R} = \{g \, | \, \exists \, \k w \, : \, \vp(w) = g, \,  |w| \leq l_R\},
	\end{equation}
	where $U_g$ is drawn from a Haar distribution of dimension equal to the number of words with $\vp(w) = g$ and $|w| \leq l_R$.
	
	We now describe the nature of fragmentation in these models.  First, $\varphi$ is defined such that two words have the same image if and only if they are related by a sequence of rewriting relations.  This means that if basis elements $\ket{w}$ and $\ket{w'}$ are in the same sector then they have the same image under $\varphi$, suggesting that the Krylov sectors are labelled by $\mck_g$ for $g \in G$.  The converse of the previous statement however is not quite true.  Suppose two words $w$ and $w'$ have the same image under $\varphi$.  If both of these words have length $L$ (here, equal to the system size), then the sequence of relations sending $w$ to $w'$ might require $w$'s length to increase beyond the system size before it can shrink back to $w'$.  As this is not possible, $\ket{w}$ and $\ket{w'}$ cannot belong to the same Krylov sector, a phenomenon dubbed {\it fragile fragmentation} in Ref.~\cite{balasubramanian2023glassy}.  So, the nature of the fragmentation in these group models is twofold: there is an {\it intrinsic fragmentation} into sectors $\mck_g$ and potentially further {\it fragile fragmentation} within these sectors depending on various geometric properties of the group.  We will not discuss the latter phenomenon in depth.
	
	In this section we will be interested in a large class of groups known as {\it Gromov hyperbolic groups}, which exhibit exponentially fragmented dynamics, and have certain geometric properties which allow us to place rigorous bounds on thermalization times. These groups are those whose Cayley graphs ``resemble'' hyperbolic space (but strictly speaking they cannot be isometrically embedded in such spaces).  In order to define this condition, note that the Cayley graph of any group admits a natural {\it word metric}, where the distance $d(\cdot, \cdot)$ between $g$ and $g'$ is the shortest graph distance between these nodes in the Cayley graph.  We also define the notion of a geodesic $[g,g']$ which corresponds to a path of shortest distance between $g$ and $g'$.  Finally, we define $d(h, [g, g'])$ to be the shortest distance from $h$ and the geodesic $[g, g']$.
	\begin{definition}[Hyperbolic groups]
		A group $G$ is hyperbolic if and only if its Cayley graph $\mathcal{C}_G$ satisfies the $\delta$-thin triangle property (also called the Rips thin-triangle property):  for all $u, v, w \in G$, there exists a $\delta > 0$ such that for any $h \in [u,v]$, either $d(h, [v,w]) \leq \delta$ or $d(h, [u,w]) \leq \delta$.
	\end{definition}
Intuitively, this property means that for any geodesic triangle in the Cayley graph, a ball of radius $\geq \delta$ cannot fit within it.  This is similar to what happens in hyperbolic space. 
    
	Examples of hyperbolic groups include:
	\begin{itemize}
		\item Free groups $F_n$ on $n$ generators and free products of discrete groups
		\item The modular group $SL(2;\mathbb{Z})$
		\item Random groups, with high probability
	\end{itemize}
	Regarding the third point, the following seminal result due to Gromov~\cite{gromov1987hyperbolic} provides a rigorous definition of what is meant by a random group:
    \begin{theorem}[Gromov]\label{thm:random_hyperbolic}
    Call $\mathcal{P}(n, k, \ell)$ the set of all group presentations with $n$ generators with $|R| = k$ relations, and with the property that each relation $r \in R$ satisfies $|r| \leq \ell$.  Then, with $m \geq 2$ and $k \geq 1$, 
    \begin{equation}
    \lim_{\ell \to \infty} \frac{\sum_{m \leq \ell}|\{P \in \mathcal{P}(n, k, m) \, | \, G_P \text{ is hyperbolic}\}|}{\sum_{m \leq \ell} |\mathcal{P}(n, k, m)|} = 1
    \end{equation}
    where $G_P$ denotes the group with presentation $P$.
    \end{theorem}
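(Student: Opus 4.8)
\emph{Proof proposal.} The standard route is via small-cancellation theory: show that a random presentation is metrically $C'(1/6)$ with probability tending to $1$, and then invoke the classical fact that $C'(1/6)$ presentations define hyperbolic groups. Recall that, given a presentation with cyclically reduced relators, one forms the \emph{symmetrized} set $R^*$ of all cyclic permutations of the relators together with their inverses, and calls a word $p$ a \emph{piece} if it occurs as a common prefix of two distinct elements of $R^*$; the presentation is $C'(\lambda)$ if every piece $p$ that is a subword of some $r\in R^*$ obeys $|p|<\lambda|r|$. By Greendlinger's lemma a $C'(1/6)$ presentation admits Dehn's algorithm, hence satisfies a linear isoperimetric inequality, hence --- by Gromov's equivalence of the linear isoperimetric inequality with the $\delta$-thin-triangles condition used above to define hyperbolicity --- presents a hyperbolic group; I would cite this chain as classical input rather than reprove it. It therefore suffices to prove that $\Pr_{P\in\mathcal{P}(n,k,m)}[\,G_P\ \text{is}\ C'(1/6)\,]\to 1$ as $m\to\infty$, and then to transfer this to the $\ell\to\infty$ average appearing in the statement.

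For the probabilistic core I would argue as follows. First reduce to the event that all $k$ relators are cyclically reduced and of length $\ge m/2$: the number of reduced words of length $\le m$ on the $2n$ letters is $(1+o(1))\tfrac{n}{n-1}(2n-1)^m$, so a uniformly random relator has length in $(m/2,m]$ and is cyclically reduced except with probability $O((2n-1)^{-m/2})$, and this passes to the $k$-tuple by a union bound. Conditioned on this, $C'(1/6)$ is implied by the absence of any piece of length $\ge s:=\lceil m/12\rceil$. A piece of length $s$ is witnessed by an ordered pair of starting positions --- one in some $r_i^{\pm1}$, one in some $r_j^{\pm1}$, with the two occurrences distinct as cyclic subwords --- together with a common block of $s$ letters. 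When the witnesses lie in independent relators ($i\neq j$), the letters of one block form a reduced word each of whose successive letters agrees with the prescribed letter of the other block with conditional probability at most $1/(2n-1)$, so the block matches with probability at most $(2n-1)^{-(s-1)}$. When the two witnesses lie in the same relator (or a relator and its inverse), a common block of length $s$ forces that relator to be periodic with period equal to the offset on an arc of length $\mathrm{offset}+s$; counting the forced letters gives, for each offset, probability at most $(2n-1)^{-(s-O(1))}$. Since there are only $O(k^2m^2)$ choices of witness pair, a union bound yields
\[
\Pr[\,\text{not}\ C'(1/6)\,]\ \le\ O(k^2m^2)\,(2n-1)^{-m/12+O(1)}\ +\ O((2n-1)^{-m/2}),
\]
which tends to $0$ precisely because $n\ge 2$ forces $2n-1\ge 3>1$; this is exactly where the hypothesis $n\ge2$ enters (for $n=1$ the bound is vacuous).

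To obtain the averaged statement, write $f(m)$ for the fraction of $P\in\mathcal{P}(n,k,m)$ with $G_P$ hyperbolic, so $f(m)\to1$ by the above. Because $|\mathcal{P}(n,k,m)|$ grows geometrically --- the ratio $|\mathcal{P}(n,k,m+1)|/|\mathcal{P}(n,k,m)|\to(2n-1)^k\ge3$ --- the denominator $\sum_{m\le\ell}|\mathcal{P}(n,k,m)|$ diverges, while for any fixed $M$ the partial sum $\sum_{m<M}|\mathcal{P}(n,k,m)|$ is a constant. Hence, given $\varepsilon>0$, pick $M$ with $f(m)>1-\varepsilon$ for all $m\ge M$; then
\[
\frac{\sum_{m\le\ell}|\{P:\,G_P\ \text{hyperbolic}\}|}{\sum_{m\le\ell}|\mathcal{P}(n,k,m)|}\ \ge\ 1-\varepsilon-o_\ell(1),
\]
and letting $\ell\to\infty$ followed by $\varepsilon\to0$ gives the claimed limit $1$.

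I expect the main obstacle to be not any single deep step --- the probabilistic part is essentially a union bound --- but the careful bookkeeping inside it: enumerating all the types of overlap that can arise in a symmetrized presentation (relator with relator, relator with an inverse, relator with a cyclic rotation of itself, in overlapping versus disjoint arcs), handling the length normalization so that the occasional short relator does not spoil the $C'(1/6)$ threshold, and making the ``$1/(2n-1)$ per matched letter'' estimate rigorous via the Markovian structure of reduced words. The one genuinely external ingredient, imported without proof, is the small-cancellation $\Rightarrow$ hyperbolicity implication.
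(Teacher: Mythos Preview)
The paper does not prove this theorem at all: it is quoted as a ``seminal result due to Gromov'' with a citation to \cite{gromov1987hyperbolic}, and is used only as background to argue that hyperbolic groups are generic. So there is no proof in the paper to compare against.

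Your proposal is the standard argument for the few-relator model and is essentially correct. The route via $C'(1/6)$ small cancellation is exactly how this is usually done (see e.g.\ Ol'shanskii, or the expositions of Ollivier): one shows that with overwhelming probability a random presentation satisfies $C'(1/6)$, and then invokes the classical chain Greendlinger $\Rightarrow$ linear isoperimetric inequality $\Rightarrow$ hyperbolic. Your union-bound computation over pairs of witness positions, with the $(2n-1)^{-s}$ matching probability coming from the Markov structure of reduced words, is the right mechanism, and your Ces\`aro-type averaging at the end is the correct way to pass from ``$f(m)\to 1$'' to the averaged limit in the statement (the geometric growth of $|\mathcal{P}(n,k,m)|$ is what makes the tail dominate). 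One small point: the theorem as written in the paper says ``$m\ge 2$'', which is almost certainly a typo for ``$n\ge 2$''; you have correctly identified that the hypothesis on the number of generators is exactly what makes $2n-1>1$ and hence the union bound nontrivial. The bookkeeping caveats you flag (self-overlaps, short relators, cyclic reduction) are real but routine, and your sketch handles them at the right level of detail for a proposal.
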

    A consequence of this statement is that with high probability, a randomly chosen group with fixed $\ell$ will be hyperbolic.\footnote{Gromov also introduced a more natural ``density model''.  For this setting, one fixes the number of generators to be $n$ and defines $\rho(\ell) = \log(k(\ell))/\log(N_{n, \ell})$.  Here, $k(\ell)$ is the number of cyclically reduced relations in the group presentation, with $\ell$ denoting the maximum length of a relation.  Furthermore, $N_{n, \ell}$ is the number of cyclically reduced words which have length at most $\ell$.  Intuitively, $\rho(\ell)$ indicates how constrained the group algebra is.  When $\rho(\ell) = 1$, the group is trivial, and when $\rho(\ell)$ is small, it is ``close'' to a free group.  Gromov proved that almost every group with $\lim_{\ell \to \infty} d(\ell) < 1/2$ is infinite and hyperbolic.}  This result indicates that hyperbolic groups are rather abundant, and proving thermalization properties for these groups indicates that our conjecture would be generically true.  However, when the underlying dynamics is not group dynamics,\footnote{Ref.~\cite{balasubramanian2023glassy} showed that any 1D constraint can be formulated in terms of a {\it semi}group, whose generators no longer necessarily have inverses and where the identity element need not exist. } additional subtleties arise, which we will mention later.

	\ss{Heat kernels and Benjamini's conjecture}\label{ss:math}

	In this subsection, we will show that the main conjecture can be related to a conjecture about the existence of certain kinds of expander graphs.  As in previous sections, our dynamics at each time step will consist of i) constrained circuit-averaged dynamics which maps any state in a fixed Krylov sector to the maximally mixed state in that sector, and ii) maximally depolarizing noise applied to the boundary site (though the analysis below can be readily extended to the case where noise is applied to $O(1)$ boundary sites).  We emphasize that by results in App.~\ref{app:local}, any model of dynamics which is more local will relax more slowly. 
    The thermalization time in this model is determined by the conductance of the Krylov graph $\mcg_{\mck}$, which as before we aim to show is exponentially small in the system size $L$.  Note that due to the phenomenon of fragile fragmentation, the Krylov sectors are no longer labeled by group elements $g \in G$.  However, in this situation, we will additionally cluster together Krylov sectors containing words representing the same group element, thus defining a graph $\mcg_G$.  In particular, defining $S$ to be a subset of group elements of $G$ and $R_S = \{w : \exists g \in S,\, w \sim g\}$, we write 
    \begin{equation}\label{eq:group_conductance}
    \Phi(\mcg_{G}) = \min_{S \subset G:\, \mu(R_S) \leq 1/2} \frac{\sum_{\psi \in R_S, \psi' \in R_S^c} \mu(\psi) \mel{\psi}{\mcm}{\psi'}}{\mu(R_S)}.
    \end{equation}   
    with $\mu$ the stationary distribution as usual.  The value of this ``doubly'' coarse-grained conductance indicates whether the dynamics is slow, which follows from $\Phi(\mcg_\mck) \leq \Phi(\mcg_G)$ (see App.~\ref{app:local}). Qualitatively, by clustering together all Krylov sectors corresponding to a given group element, these clusters are associated with nodes in the Cayley graph $\mathrm{Cay}(G)$\footnote{The usual notation is $\mathrm{Cay}(G, \mca)$ where $\mca$ is a generating set, but we will sometimes drop the second argument.}. Furthermore, edges between these clusters---drawn due to the boundary depolarizing term---connect nodes to their neighbors and a subset of next nearest neighbors in $\mathrm{Cay}(G)$, and therefore the connectivity of the state space inherits the topology of the Cayley graph. 
	
	We will now invoke a number of results from the group theory literature.  The reader does not need to be familiar with their proofs and thus we do not provide them.  We first provide a formal definition of an important property for groups called {\it amenability}~\cite{paterson1988amenability}:
	\begin{definition}[Amenability]
		A discrete group $G$ is amenable if it admits a finitely additive probability measure $\mu$ (satisfying $\mu(A \cup B) = \mu(A) + \mu(B)$ for $A,B \subseteq G$ with $A \cap B = \varnothing$) which additionally obeys
		\begin{equation}
			\mu(g A) = \mu(A), \,\,\,\forall A \subseteq G, \,\,\, \forall g \in G.
		\end{equation}
	\end{definition}
	While this definition may seem rather formal, it turns out to be rather general and powerful condition.  The particular facts that we need are summarized below (without proof):
	\begin{proposition}
		The following statements about a discrete group $G$ are equivalent~\cite{woess2000random}:
		\begin{enumerate}
			\item $G$ is amenable
			\item The Cayley graph $\mathrm{Cay}(G, \mca)$ for finite generating set $\mca$ does not have vertex expansion: there is no $\epsilon > 0$ such that $\forall A \subset \mathrm{Cay}(G, \mca)$ with $A$ finite, $|\partial A| \geq \epsilon |A|$.
			\item Kesten's criterion~\cite{kesten1959symmetric, kesten1959full}: the probability that a length $L$ symmetric random walk on $\mathrm{Cay}(G, \mca)$ returns to its starting point, denoted $p_L(g,g)$, decays subexponentially in $L$.
		\end{enumerate}
	\end{proposition}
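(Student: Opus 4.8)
The plan is to establish the cycle $(1)\Rightarrow(2)\Rightarrow(3)\Rightarrow(1)$, using the \emph{Følner condition} as the common bridge between the analytic formulation (invariant mean), the geometric one (no vertex expansion), and the probabilistic one (subexponential return probability). Throughout I would fix a finite \emph{symmetric} generating set $\mca$, so that the walk in (3) is reversible and $\mathrm{Cay}(G,\mca)$ is the natural underlying graph, and recall that a \emph{Følner sequence} is a sequence of finite sets $F_n\subset G$ with $|gF_n\,\triangle\,F_n|/|F_n|\to 0$ for every $g\in\mca$ (equivalently for every $g\in G$). The two elementary facts I would use repeatedly are that $|gA\,\triangle\,A|\le 2|\partial A|$ for each generator $g$, and that $|\partial A|\le\sum_{g\in\mca}|gA\,\triangle\,A|$, so that smallness of the vertex boundary and almost-invariance of $A$ are equivalent up to the bounded degree $2n$.

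\textbf{Step 1: $(1)\Leftrightarrow(2)$.} For $(1)\Rightarrow(2)$ I would start from a finitely additive left-invariant probability measure $\mu$, use weak-$*$ density of $\ell^1(G)$ in $(\ell^\infty(G))^*$ to replace $\mu$ by probability densities (Reiter's property): $f_n\in\ell^1(G)$, $f_n\ge 0$, $\|f_n\|_1=1$, with $\|g\cdot f_n-f_n\|_1\to 0$ for each $g\in\mca$. A Namioka-type ``layer-cake'' argument, writing $f_n=\int_0^\infty \mathbf{1}_{\{f_n>s\}}\,ds$ and averaging the almost-invariance inequality over level sets, then produces almost-invariant finite \emph{sets}, i.e. a Følner sequence $F_n$; by the boundary estimate above $|\partial F_n|/|F_n|\to 0$, so no uniform expansion constant $\epsilon$ can exist, which is the negation of (2)'s failure. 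For $(2)\Rightarrow(1)$: negating (2) gives, for each $\epsilon=1/n$, a finite $A_n$ with $|\partial A_n|<|A_n|/n$; these form a Følner sequence, and the means $m_n(E)=|E\cap A_n|/|A_n|$ on $\ell^\infty(G)$ have a weak-$*$ cluster point $m$ in the compact unit ball of $(\ell^\infty)^*$, which the Følner property renders left-invariant.

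\textbf{Step 2: $(2)\Leftrightarrow(3)$ (Kesten's theorem).} I would phrase this through the spectral radius $\rho:=\|P\|_{\ell^2\to\ell^2}$ of the symmetric random-walk operator $P$, showing $\rho=1$ iff $\mathrm{Cay}(G,\mca)$ has no vertex expansion. For ``no expansion $\Rightarrow\rho=1$'', insert the normalized indicator of a Følner set $F_n$ into the Rayleigh quotient: $\langle \mathbf{1}_{F_n},P\,\mathbf{1}_{F_n}\rangle/\|\mathbf{1}_{F_n}\|^2\to 1$, since only the vanishing-fraction of boundary terms fail to contribute. For ``expansion $\Rightarrow\rho<1$'', invoke the discrete Cheeger inequality for bounded-degree graphs, $\rho\le 1-c\,h^2$ with $h>0$ the positive isoperimetric constant and $c$ depending only on $2n$. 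Finally, self-adjointness of $P$ gives $p_{2L}(g,g)=\|P^{L}\delta_g\|^2$ and $\lim_L p_{2L}(g,g)^{1/2L}=\rho$ (Gelfand formula, using $\rho\in\mathrm{spec}(P)$); hence $\rho=1$ is equivalent to subexponential decay of $p_L(g,g)$, which is (3). Parity for odd $L$ is handled by passing to the lazy walk or by $p_L(g,g)^2\le C\,p_{2L}(g,g)$.

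\textbf{Main obstacle.} The genuinely hard content is the two classical theorems being reconstructed: Følner's theorem (the Namioka argument converting an abstract invariant mean into honest almost-invariant finite sets) and Kesten's theorem (constructing approximate $\ell^2$-eigenvectors supported on Følner sets, and the converse Cheeger-type bound). In a physics-oriented treatment these are best cited rather than reproved, exactly as in the paper; the only ``work'' specific to our setting is the bookkeeping that the relevant graph is $\mathrm{Cay}(G,\mca)$ for a symmetric generating set, that it has bounded degree $2n$, and that edge expansion, vertex expansion, operator-norm positivity $1-\rho>0$, and exponential return-probability decay all coincide up to constants controlled by $n$.
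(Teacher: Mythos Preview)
The paper does not prove this proposition at all: it explicitly introduces it with ``The particular facts that we need are summarized below (without proof)'' and simply cites \cite{woess2000random} and Kesten's original papers. Your proposal is a correct and standard outline of how one would actually prove the equivalences (F{\o}lner $\Leftrightarrow$ invariant mean via Reiter/Namioka, and F{\o}lner $\Leftrightarrow$ $\rho=1$ via Rayleigh quotients and Cheeger), so it goes well beyond what the paper does; in effect you are supplying the textbook argument the authors chose to black-box.
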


    Thus, amenable groups correspond to groups whose Cayley graphs have vanishing expansion.  Therefore, we expect the dynamics corresponding to amenable groups (with boundary depolarization) to be slow.  However, they are {\it not} necessarily exponentially slow.  For instance, an example of an amenable group is $\mathbb{Z}$, and the associated group dynamics is similar to that of the symmetric exclusion process, which has relaxation time $\sim 1/L^2$.  In fact, group dynamics when $G$ is amenable can often correspond to dynamics with global symmetries or polynomial fragmentation, in which case we cannot guarantee exponential thermalization time.  Instead, it is the {\it non-amenable groups} that will be our focus.  The reason is twofold.  First, we show that they correspond to group dynamics with exponentially strong fragmentation, which is relevant to the main conjecture of the paper.  Second, although we will show that the Cayley graphs of non-amenable groups are vertex expanders (which naively exhibit rapid mixing), we argue that an important subtlety due to the large but finite system size results in the dynamics being exponentially slow.  Unlike for amenable groups which can exhibit polynomially or exponentially slow dynamics, we provide evidence that the dynamics for non-amenable groups is generically exponentially slow. 
    
    To proceed, we first link the nature of the Hilbert space fragmentation with the expansion properties of $\mathrm{Cay}(G)$.  We require following simple result regarding the conductance of the Krylov graph:
    \begin{lemma}
    Consider maximally depolarizing group dynamics with group $G$ corresponding to a Markov process with transition matrix $\mcm = \Pi \, \mcn_L$, where $\Pi$ is the intra-sector part of the dynamics and $\mcn_L$ is the (symmetric) inter-sector part induced by the boundary noise. Then    \begin{equation}\label{eq:weighted_exp}
    \Phi(\mcg_G) \leq \min_{S \subset G: \mu(S) \leq 1/2}\frac{\sum_{g 
    \in S, g' \in S^c, g \sim g'} \nu(g')}{\sum_{g \in S} \nu(g)}
    \end{equation}
    where the equivalence relation $\sim$ is defined such that $g \sim g'$ if $\exists w \in \mck_g$ and $\exists w' \in \mck_{g'}$ such that $\mel{w}{\mcn_L}{w'} \neq 0$.  Furthermore, $\nu(g) = |\mck_g|/d^L$.
    \end{lemma}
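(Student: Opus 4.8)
The plan is to show, for every competitor set $S\subset G$ in the right-hand side, that the associated region $R_S=\{w:\exists\,g\in S,\ w\sim g\}$ of $\mcg_G$ satisfies $\Phi(R_S)$ bounded by the corresponding ratio; since $\Phi(\mcg_G)$ is by definition the minimum of $\Phi(R_S)$ over such $S$, the Lemma follows. Two standing facts are used throughout: (i) $\mu$ is uniform on the $d^L$ length-$L$ words, and the genuine Krylov sectors $\mck_\alpha$ partition this set; (ii) because every rewriting move preserves $\varphi$, each $\mck_\alpha$ sits inside a single cluster $\mck_g$, the union of all Krylov sectors whose words evaluate to $g$. Together these give $\mu(R_S)=\sum_{g\in S}|\mck_g|/d^L=\sum_{g\in S}\nu(g)=:\nu(S)$, so the constraint ``$\mu(S)\le 1/2$'' in the statement is exactly $\nu(S)\le 1/2$, matching the right-hand side; the denominators on both sides therefore already agree, and it remains to bound the numerator of $\Phi(R_S)$.

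The first move is to integrate out the intrasector channel $\Pi$. Writing the numerator as $d^{-L}\sum_{\psi\in R_S,\,\psi'\in R_S^c}\sum_{\psi''}\Pi_{\psi,\psi''}(\mcn_L)_{\psi'',\psi'}$ and exchanging the order of summation, one is left to evaluate $\sum_{\psi\in R_S}\Pi_{\psi,\psi''}$. Since $\Pi$ is block-diagonal with the $\mck_\alpha$-block equal to $|\mck_\alpha|^{-1}$ times the all-ones matrix, a nonzero $\Pi_{\psi,\psi''}$ forces $\psi$ and $\psi''$ into a common sector $\mck_\alpha$; by fact (ii), $\psi\in R_S$ then forces $\mck_\alpha\subseteq R_S$, hence $\psi''\in R_S$, and conversely if $\psi''\in R_S$ then every $\psi$ in its sector lies in $R_S$. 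Thus $\sum_{\psi\in R_S}\Pi_{\psi,\psi''}$ equals $1$ when $\psi''\in R_S$ and $0$ otherwise (morally: coarse-graining states within a region leaves that region's conductance unchanged), and the numerator collapses to $d^{-L}\sum_{\psi\in R_S,\,\psi'\in R_S^c}(\mcn_L)_{\psi,\psi'}$, i.e. $d^{-L}$ times the $\mcn_L$-cut across $(R_S,R_S^c)$.

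The second move rewrites this word-level cut as an edge sum on $\mathrm{Cay}(G)$. Using $R_S=\bigsqcup_{g\in S}\mck_g$, $R_S^c=\bigsqcup_{g'\in S^c}\mck_{g'}$, and the fact that $(\mcn_L)_{\psi,\psi'}\neq 0$ with $\psi\in\mck_g$, $\psi'\in\mck_{g'}$ occurs only when $g\sim g'$ (which is the definition of $\sim$), the cut equals $\sum_{g\in S,\,g'\in S^c,\,g\sim g'}\ \sum_{\psi\in\mck_g,\,\psi'\in\mck_{g'}}(\mcn_L)_{\psi,\psi'}$. Here I would invoke the symmetry of $\mcn_L$: being symmetric and stochastic, its columns also sum to $1$, so $\sum_{\psi\in\mck_g}(\mcn_L)_{\psi,\psi'}\le 1$ for each fixed $\psi'$, whence $\sum_{\psi\in\mck_g,\,\psi'\in\mck_{g'}}(\mcn_L)_{\psi,\psi'}\le|\mck_{g'}|$. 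Substituting, the numerator is at most $d^{-L}\sum_{g\in S,\,g'\in S^c,\,g\sim g'}|\mck_{g'}|=\sum_{g\in S,\,g'\in S^c,\,g\sim g'}\nu(g')$; dividing by $\mu(R_S)=\sum_{g\in S}\nu(g)$ and minimizing over $S$ gives the stated inequality.

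The argument is short, so I expect no genuine obstacle; the two points that really need care are (a) the two-level coarse-graining forced by fragile fragmentation — one must check that genuine Krylov sectors nest inside the clusters $\mck_g$, which is what makes $\Pi$ and $R_S$ decouple cleanly in the first move — and (b) the decision to route the final bound through the destination weights $\nu(g')$ rather than $\nu(g)$, which is exactly where the symmetry of the boundary-noise matrix $\mcn_L$ is needed (bare stochasticity would give only the weaker bound with $\nu(g)$ in the numerator). The payoff is conceptual rather than technical: the Lemma recasts the Krylov-graph conductance as a weighted vertex expansion of $\mathrm{Cay}(G)$ with weights $\nu(g)=|\mck_g|/d^L$, which — since a uniformly random length-$L$ word is an $L$-step lazy random walk on $\mathrm{Cay}(G)$ — is the heat kernel $p_L(e,g)$, setting up the connection to the Benjamini-type statements used in the rest of the section.
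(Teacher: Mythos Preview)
Your proof is correct and tracks the paper's argument essentially step for step. The only cosmetic difference is that the paper packages the first move by introducing the uniform-sum vectors $\ket{\phi_g}=\sum_{\psi\sim g}\ket{\psi}$ and invoking $\bra{\phi_{g'}}\Pi=\bra{\phi_{g'}}$, then bounds $\tfrac{1}{|\mck_{g'}|}\mel{\phi_g}{\mcn_L}{\phi_{g'}}\le 1$ by interpreting it as a transition probability; you do the same algebra componentwise, using that genuine Krylov sectors nest inside $\mck_g$ (which is exactly what underlies the steady-state claim) and that $\mcn_L$ is doubly stochastic. Your explicit remark about fragile fragmentation and about why the symmetry of $\mcn_L$ is the step that forces $\nu(g')$ rather than $\nu(g)$ in the numerator matches precisely the points the paper relies on.
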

    \begin{proof}
    We bound the quantity in Eq.~\ref{eq:group_conductance}.  Noting that the stationary distribution of $\mcm$ is uniform, we can write
    \begin{equation}
    \Phi(\mcg_G) = \min_{S \subset G: \mu(R_S) \leq 1/2}\frac{\sum_{g \in S, g' \in S^c} \sum_{\psi \sim g, \psi' \sim g'} \mel{\psi'}{\Pi \, \mcn_L}{\psi}}{\sum_{g \in S} |\mck_g|}
    \end{equation}
    Writing $\ket{\phi_g} = \sum_{\psi \sim g} \ket{\phi}$ for the uniform sum over states in sector $g$---with $|\phi_g\rangle$ a steady state of $\Pi$---and suppressing the argument of the minimum above with `$(\cdot)$', we get
    \begin{align}
    \Phi(\mcg_G) &= \min_{(\cdot)}\frac{\sum_{g \in S, g' \in S^c} \mel{\phi_{g'}}{\Pi\, \mcn_L}{\phi_g}}{\sum_{g \in S} |\mck_g|} \nonumber \\ 
    &= \min_{(\cdot)}\frac{\sum_{g \in S, g' \in S^c} \frac{1}{|\mck_{g'}|}\mel{\phi_{g'}}{\mcn_L}{\phi_g} \nu(g')}{\sum_{g \in S} \nu(g)} \nonumber \\
    &= \min_{(\cdot)}\frac{\sum_{g \in S, g' \in S^c} \frac{1}{|\mck_{g'}|}\mel{\phi_{g}}{\mcn_L}{\phi_{g'}} \nu(g')}{\sum_{g \in S} \nu(g)}.
    \end{align}
    where in the third line we used the fact that $\mcn_L$ is symmetric.  Note that $\frac{1}{|\mck_{g'}|}\mel{\phi_{g}}{\mcn_L}{\phi_{g'}}$ is the probability that a uniformly selected state $\psi \in \mck_{g'}$ transitions to $\mck_g$ under the stochastic process $\mcn_L$, so $\frac{1}{|\mck_{g'}|}\mel{\phi_{g}}{\mcn_L}{\phi_{g'}} \leq 1$.  Furthermore, $\mel{\phi_{g}}{\mcn_L}{\phi_{g'}} = 0$ if $g$ and $g'$ do not satisfy $g \sim g'$.  These two facts finish the proof.
    \end{proof}
    Next, we are going to define the graph ${\mcg}_G$ whose vertices are labelled by $g$, and where an edge connects $g$ and $g'$ if $g \sim g'$ (where `$\sim$' is defined in the Lemma above).  It is quasi-isometric to $\mathrm{Cay}(G)$, which it differs from only by the addition of self edges and certain second nearest-neighbor edges in $\mathrm{Cay}(G)$.  The following proposition then holds:
	\begin{proposition}\label{prop:return}
		If $G$ is non-amenable, then group dynamics on $G$ exhibits exponentially strong Hilbert space fragmentation and the unweighted infinite graph ${\mcg}_G$ is a vertex expander (i.e. calling $V$ the set of vertices of ${\mcg}_G$, for all subsets $A \subset V$ with $|A| \leq |V|/2$ there exists an $\epsilon > 0$ such that $|\partial A|/|A| \geq \epsilon$)
	\end{proposition}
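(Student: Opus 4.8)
The plan is to prove the two assertions separately, each time reducing to the amenability dictionary recorded in the Proposition immediately above (equivalence of non-amenability with the failure of subexponential return probabilities and with positive vertex expansion of $\mathrm{Cay}(G,\mca)$).

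For \emph{exponential fragmentation}, I would first note that every Krylov sector is contained in a single fibre of $\varphi$: rewriting moves preserve $\varphi$, so any sector $\mck_\alpha$ lies inside $\varphi^{-1}(g)\cap W_L$ for exactly one $g$, and hence $|\mck_{\max}|\le \max_{g}N_L(g)$ with $N_L(g):=\#\{w\in W_L:\varphi(w)=g\}$ and $|\mch|=d^L$, $d=2n+1$. The key identification is that $N_L(g)/d^L$ is exactly the probability that $L$ i.i.d.\ characters drawn uniformly from $S_0:=\mca\cup\mca^{-1}\cup\{\tte\}$ multiply to $g$, i.e.\ it equals $p_L(\tte,g)$, the $L$-step transition probability of the random walk on $\mathrm{Cay}(G)$ with uniform step measure on $S_0$. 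This walk has symmetric step measure ($S_0=S_0^{-1}$), is aperiodic ($\tte\in S_0$), and its support generates $G$, so Kesten's criterion applies: since $G$ is non-amenable the return probability fails to decay subexponentially, hence by the spectral-radius form of Kesten's theorem there exist $C>0$ and $\rho\in(0,1)$ with $p_{L}(\tte,\tte)\le C\rho^{L}$ for all $L$. Using translation invariance and symmetry one has $p_L(g,\tte)=p_L(\tte,g)$, so Chapman--Kolmogorov gives $p_{2L}(\tte,\tte)=\sum_{g\in G}p_L(\tte,g)^2\ge \max_g p_L(\tte,g)^2$, and therefore
\begin{equation}
\frac{|\mck_{\max}|}{|\mch|}\;\le\;\max_{g}\frac{N_L(g)}{d^L}\;=\;\max_{g}p_L(\tte,g)\;\le\;\sqrt{p_{2L}(\tte,\tte)}\;\le\;\sqrt{C}\,\rho^{L}\;=\;O(\exp(-L)),
\end{equation}
which is exponential fragmentation; any residual fragile fragmentation only makes $|\mck_{\max}|$ smaller and cannot spoil this.

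For \emph{vertex expansion of $\mcg_G$}, by the Proposition non-amenability gives an $\epsilon_0>0$ with $|\partial A|\ge\epsilon_0|A|$ for every finite $A\subset G$ in $\mathrm{Cay}(G,\mca)$, so it suffices to observe that the infinite graph $\mcg_G$ (on vertex set $G$) is obtained from $\mathrm{Cay}(G,\mca)$ by adjoining self-loops and a bounded number of extra edges per vertex: the boundary noise replaces the character at the noisy site, hence connects $g$ to $g'$ precisely when $g^{-1}g'$ (or $g'g^{-1}$, depending on which boundary carries the noise) lies in the finite symmetric set $\mathcal{S}=S_0S_0$, and $\mathcal{S}\supseteq\mca\cup\mca^{-1}$ shows every Cayley edge survives. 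Since adding edges can only enlarge a vertex boundary and self-loops do not affect it, $|\partial_{\mcg_G}A|\ge|\partial_{\mathrm{Cay}}A|\ge\epsilon_0|A|$ for every finite $A$; as $\mathcal{S}$ is finite $\mcg_G$ also has bounded degree, so this is the genuine vertex-expander property.

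The main obstacle is not a hard estimate but lies in making the first reduction airtight: one must check carefully that $N_L(g)/d^L$ genuinely \emph{is} a symmetric aperiodic random walk whose steps generate $G$, so that Kesten's theorem may be invoked in its quantitative form (spectral radius $<1$, hence geometric decay $p_L(\tte,\tte)\le C\rho^L$) rather than the bare ``not subexponential'' statement, and then the symmetry/Chapman--Kolmogorov manipulation that upgrades control of the diagonal $p_{2L}(\tte,\tte)$ to control of $\max_g p_L(\tte,g)$ must be carried out. The subgraph argument for vertex expansion and the reduction from arbitrary Krylov sectors to $\varphi$-fibres are then routine bookkeeping.
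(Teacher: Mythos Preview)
Your proposal is correct and follows essentially the same route as the paper: identify $N_L(g)/d^L$ with the transition probability $p_L(\tte,g)$ of the symmetric lazy walk on $\mathrm{Cay}(G)$, use Chapman--Kolmogorov plus symmetry to bound $\max_g p_L(\tte,g)$ by $\sqrt{p_{2L}(\tte,\tte)}$, invoke Kesten for exponential decay, and note that fragile fragmentation only shrinks sectors further; then for expansion, observe that $\mcg_G$ is $\mathrm{Cay}(G,\mca)$ with self-loops and finitely many extra edges per vertex added, so the vertex-expansion constant inherited from non-amenability survives. Your write-up of the Chapman--Kolmogorov step and of the ``adding edges only enlarges $\partial A$'' direction is in fact slightly cleaner than the paper's, which phrases the latter inequality in a way that reads backwards, but the underlying argument is the same.
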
  
	\begin{proof}
		To show exponentially strong fragmentation, it is sufficient to show that for all group elements $g$, we have $|\mck_g|/d^L \leq \rho^L$ for some $\rho < 1$, with $\mck_g$ designating the set of length-$L$ words with $\varphi(w) = g$ and $d=2n+1$ the dimension of the onsite Hilbert space.  We first assume the case of no fragile fragmentation, i.e. each Krylov sector is labeled by a group element $g \in G$.  Then, one can reinterpret $|\mck_g|/d^L$ simply as the probability of a length-$L$ random walk on the Cayley graph starting at node $e$ and ending at node $g$.  Call this probability $p_L(e,g)$.  Then, we can write (using transitivity of the Cayley graph)
		\begin{align}
			p_{2L}(e,e) &= \sum_{g \in G} p_L(e,g)p_L(g,e) = \sum_{g \in G} p^2_L(e,g) \nonumber \\  
			&= \sqrt{\sum_{g \in G} p^2_L(e,g)}\sqrt{\sum_{h \in G} p^2_L(e,h)} \nonumber \\ 
			&\geq \sum_{g,h \in G} p_L(e,g) p_L(e,h) \geq p_{2L}(g,h) 
		\end{align}
		where Cauchy-Schwartz was used in the last line.  This means that $\mck_e$ is the largest sector.  Since $G$ is non-amenable these return probabilities decay exponentially, so that $|\mck_g|/d^L \leq \rho^L$, and group dynamics on $G$ is exponentially strongly fragmented.  Now we consider the case of fragile fragmentation.  Then, each sector $\mck_g$ may shatter into additional sectors, and the size of the largest sector $|\mck_{\mathrm{max}}| \leq |\mck_e|$, thus implying that the group dynamics is exponentially strongly fragmented.  
        
        We use the equivalent characterization of (non)-amenable groups to prove the second part about expansion, i.e. that $\mathrm{Cay}(G)$ is a vertex expander if $G$ is non-amenable.  For some generating set $\mca$, the graph ${\mcg}_G$ almost has the topology of the Cayley graph $\mathrm{Cay}(G, \mca)$ except that it has self loops (which do not affect the expansion) and next-nearest neighbor connections (which do).  Assuming that the Cayley graph is an $\epsilon$-expander (that is, $\forall A \subset \mathrm{Cay}(\mca,G)$, $|\partial A| \geq \epsilon |A|$), in ${\mcg}_G$ the same vertex set $A$ has $|\partial A|_{{\mcg}_G} \leq (d+1) |\partial A|_{\mathrm{Cay}(\mca,G)}$ where $d$ is the degree of $\mathrm{Cay}(\mca,G)$.  This means that ${\mcg}_G$ is a $\frac{\epsilon}{d+1}$-expander.
	\end{proof}

Is the converse of this statement true?  In App.~\ref{app:proofbonanza1}, we use some additional results from group theory to show the following statement: 
\begin{proposition}\label{prop:poly_charac}
$G$ has polynomial growth iff group dynamics on $G$ exhibits polynomially strong fragmentation.
\end{proposition}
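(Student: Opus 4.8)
The plan is to convert the statement into the classical dichotomy governing the decay of random-walk return probabilities on finitely generated groups. Fix the generating set $\mca = \{\ttg_1,\dots,\ttg_n\}$, write $d = 2n+1$ for the onsite dimension, and observe that a uniformly random length-$L$ word $w = \ttg_1 \cdots \ttg_L$ (each letter a generator, an inverse, or $\tte$) is exactly an $L$-step trajectory of the lazy, symmetric, finitely supported random walk on $\mathrm{Cay}(G,\mca)$ with holding probability $1/d$, and $\varphi(w)$ is that trajectory's endpoint. Hence the $\varphi$-preimage sectors $\mck_g^\varphi \equiv \{w : |w|=L,\ \varphi(w)=g\}$ have sizes $|\mck_g^\varphi| = d^L\, p_L(\tte,g)$, where $p_L$ is the $L$-step transition kernel. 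By the Cauchy--Schwarz-plus-transitivity argument already used in the proof of Prop.~\ref{prop:return}, $\mck_\tte^\varphi$ is the largest $\varphi$-sector (up to a $\poly(L)$ factor from the laziness/parity bookkeeping), and since every genuine Krylov sector refines some $\mck_g^\varphi$ (if $w\sim w'$ then $\varphi(w)=\varphi(w')$), one gets unconditionally $|\mck_{\max}| \le d^L\, p_L(\tte,\tte)\cdot\poly(L)$. So $\dyn$ is polynomially fragmented only if $p_L(\tte,\tte) \ge 1/\poly(L)$.

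With this in hand, the ``polynomial fragmentation $\Rightarrow$ polynomial growth'' direction is immediate: polynomial fragmentation forces $p_L(\tte,\tte)=\Omega(1/\poly(L))$, and a finitely generated group whose symmetric finitely supported random walk has return probability decaying no faster than a polynomial must have polynomial volume growth --- this is the classical theorem of Varopoulos (see \cite{woess2000random}), equivalently, by Gromov's polynomial-growth theorem, $G$ is virtually nilpotent. Laziness and the symmetry of $\mca$ remove any parity obstruction, so the theorem applies directly.

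For the converse, ``polynomial growth $\Rightarrow$ polynomial fragmentation'', I would start from the matching upper estimate: for groups of polynomial growth the Varopoulos/Hebisch--Saloff-Coste Gaussian bounds give $p_L(\tte,\tte)\asymp L^{-D/2}$ with $D$ the homogeneous dimension, so $|\mck_{\max}^\varphi|\ge d^L/\poly(L)$. It then remains to produce an \emph{actual} Krylov sector of comparable size. I would do this by bounding the total sector count: the image set $\{\varphi(w) : |w|=L\}$ has size at most the ball volume $|B_\mca(L)| = \poly(L)$, so there are only polynomially many $\varphi$-sectors; and within each $\varphi$-sector the further shattering due to fragile fragmentation is also only polynomial, because a virtually nilpotent group has a polynomial Dehn function, which controls (after a standard conversion of van Kampen fillings into sequences of bounded-length rewriting moves) how many Krylov sectors a single $\mck_g^\varphi$ can split into. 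Combining, $N_\mck = \poly(L)$, hence $|\mck_{\max}| \ge d^L/N_\mck = d^L/\poly(L)$.

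The main obstacle is precisely this last point --- the fragile-fragmentation control for polynomial-growth groups. The Dehn function bounds the \emph{area} of a van Kampen diagram filling a null-homotopic loop of length $\le 2L$, whereas what one really needs is a bound on the maximal \emph{length} of intermediate words appearing while rewriting $w$ into $w'$ on a chain of fixed size $L$; manufacturing a bounded-length rewriting sequence from a small-area filling is the delicate combinatorial step. A fallback I would try is to use an explicit combing / automatic normal form for virtually nilpotent groups and show directly that the length-$L$ words reachable from a normal-form representative of $g$ by length-$L$-bounded moves already make up a $\Omega(1/\poly(L))$ fraction of $\mck_g^\varphi$. It is worth stressing that the converse direction needs none of this, since $|\mck_{\max}| \le |\mck_{\max}^\varphi|$ holds unconditionally --- fragile fragmentation can only ever make $\dyn$ \emph{more} fragmented.
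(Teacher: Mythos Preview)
Your ``polynomial fragmentation $\Rightarrow$ polynomial growth'' direction is correct and is exactly how the paper argues: $|\mck_{\max}| \le |\mck_\tte^\varphi| = d^L p_L(\tte,\tte)$ unconditionally, so polynomial fragmentation forces polynomially decaying return probability, hence (Varopoulos / Gromov) polynomial growth.

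For the converse, the gap you flag is genuine, and your diagnosis is right: the Dehn function bounds filling \emph{area}, whereas fragile fragmentation is controlled by the maximal \emph{length} of intermediate words during a rewriting. The paper's fix is precisely the invariant you are groping for: the \emph{filling length function} $\mathrm{FL}(L)$, defined as the minimax over derivations $w\to\cdots\to\varnothing$ of the longest intermediate word. Gersten--Holt--Riley prove $\mathrm{FL}(L)\sim L$ for nilpotent (hence virtually nilpotent) groups, which is exactly the linear control needed. So your ``delicate combinatorial step'' is already a theorem in the literature, just not for the Dehn function.

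The paper's strategy for the converse also differs from your counting argument. Rather than bounding $N_\mck$ and pigeonholing, it \emph{constructs} a single large connected piece of $\mck_\tte^\varphi$. First it shows (by a Jensen-type estimate on the binomial weighting) that a uniformly random length-$L$ word in $\mck_\tte^\varphi$ has $\ge \eta L$ copies of $\tte$ with probability $1-\epsilon$. Then, starting from $\tte^L$, it partitions the chain into $L/\ell$ blocks of length $\ell \sim \eta L / \alpha$ (with $\mathrm{FL}(L)=\alpha L$) and, within each block, generates any identity-representing word with $\ge \eta\ell$ $\tte$'s; linear filling length guarantees each block can be rewritten using only the $\tte$'s available elsewhere as scratch space. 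This produces a connected set of size $\ge \big((1-\epsilon)|\mck_\tte^\varphi(\ell)|\big)^{L/\ell} \sim |\mck_\tte^\varphi|/L^\xi$. Your proposed route via $N_\mck=\poly(L)$ is riskier: even with $\mathrm{FL}(L)\sim L$, words lacking enough $\tte$-slack could in principle shatter into many tiny sectors, so $N_\mck$ might not be polynomial; the paper sidesteps this by only lower-bounding $|\mck_{\max}|$ directly.
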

The proof of this result uses a characterization of groups with polynomial growth rate (due to Gromov) in order to compute $|\mck_e|$.  However, due to fragile fragmentation, the size of the largest sector might be much smaller than $|\mck_e|$; showing that this does not happen then proves the above proposition.  In addition, there are examples of amenable groups $G$ for which the associated group dynamics is neither polynomially fragmented nor exponentially fragmented.  Making a very reasonable technical assumption (see App.~\ref{app:proofbonanza1}), we can address the nature of fragmentation in such groups and prove: 
\begin{proposition}\label{prop:all_charac}
If group dynamics on $G$ exhibits exponentially strong Hilbert space fragmentation, then $G$ is non-amenable.
\end{proposition}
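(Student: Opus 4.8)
\emph{Proof strategy.} The plan is to prove the contrapositive: if $G$ is amenable then group dynamics on $G$ is \emph{not} exponentially strongly fragmented, i.e. $|\mck_{\max}|/|\mch|$ is not $O(\exp(-cL))$ for any fixed $c>0$ (recall $|\mch| = d^L$ with $d = 2n+1$). First I would reduce the size of the largest Krylov sector to a return probability of the random walk on $\mathrm{Cay}(G)$. Exactly as in the proof of Proposition~\ref{prop:return}, a uniformly random length-$L$ word is carried by $\varphi$ to the endpoint of an $L$-step symmetric lazy random walk on $\mathrm{Cay}(G)$ started at $e$, so $|\mck^{\mathrm{grp}}_g|/d^L = p_L(e,g)$, where $\mck^{\mathrm{grp}}_g$ denotes the set of length-$L$ words evaluating to $g$; the Cauchy--Schwarz/transitivity estimate used there additionally shows $\mck^{\mathrm{grp}}_e$ is the largest group-labelled sector, with $|\mck^{\mathrm{grp}}_e|/d^L = p_L(e,e)$.

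Next I would invoke amenability through Kesten's criterion (item~3 of the characterization of amenability above): since $G$ is amenable, $p_L(e,e)$ decays subexponentially, and since the identity letter is always available the walk is aperiodic, so $p_L(e,e)^{1/L}\to 1$, i.e. $p_L(e,e) = \exp(-o(L))$. In the absence of fragile fragmentation each genuine Krylov sector \emph{is} some $\mck^{\mathrm{grp}}_g$, so $|\mck_{\max}|/|\mch| = p_L(e,e) = \exp(-o(L))$, which is not $O(\exp(-cL))$ for any $c>0$; hence the dynamics is not exponentially fragmented. This already disposes of, for example, abelian and nilpotent $G$.

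The remaining case is when fragile fragmentation is present, so that $\mck^{\mathrm{grp}}_e$ splits into several genuine Krylov sectors --- this happens when two length-$L$ words with image $e$ can only be related by rewriting sequences that at some intermediate step use more non-identity characters than the chain has room for. Here I would use the technical assumption of App.~\ref{app:proofbonanza1}, namely that this shattering is mild: the number of genuine Krylov sectors contained in $\mck^{\mathrm{grp}}_e$ is at most $\exp(o(L))$. Pigeonhole then produces a genuine sector of size at least $|\mck^{\mathrm{grp}}_e|\exp(-o(L)) = d^L p_L(e,e)\exp(-o(L)) = d^L\exp(-o(L))$, so $|\mck_{\max}|/|\mch|\geq \exp(-o(L))$ and the dynamics is not exponentially fragmented, as claimed.

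The hard part is the input invoked in the last step: bounding how badly the identity sector can fragilely fragment. Kesten's criterion is a soft spectral fact that says nothing about \emph{length-bounded} rewriting, so an unconditional argument would instead need geometric information about amenable groups --- for instance, a F{\o}lner-sequence argument showing that a word evaluating to the identity can be ``unwound'' inside a region that exceeds length $L$ by only $o(L)$, which would bound the number of identity-sector components subexponentially. Absent such an argument, Proposition~\ref{prop:all_charac} is established under the stated technical assumption; the reverse implication --- non-amenable $\Rightarrow$ exponential fragmentation --- was proved unconditionally as Proposition~\ref{prop:return}.
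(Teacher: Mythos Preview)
Your contrapositive + Kesten reduction matches the paper, but your handling of fragile fragmentation --- and in particular the technical assumption you guess --- differs substantially. You assume the number of connected components of $\mck_e^{\mathrm{grp}}$ is $\exp(o(L))$ and pigeonhole; this is logically valid, but that assumption is not obviously more accessible than the conclusion itself. The paper's actual assumption is different: that a uniformly random length-$L$ word in $\mck_e$ contains at least $\eta L$ copies of the identity character $\tte$ with probability $1-\epsilon$ (this was \emph{proven} for polynomial-growth groups in the preceding lemma, and is the ``reasonable'' hypothesis for general amenable $G$). From this the paper does not count components but \emph{constructs} one: start from $\tte^L$, partition into $L/\ell$ blocks with $\ell\sim\mathrm{FL}^{-1}(\eta L)$ (inverse filling length), and use the $\eta L$ blanks as global workspace so that within each block one can independently reach every length-$\ell$ identity word having $\geq\eta\ell$ blanks. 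Since $\ell\to\infty$ and, by Kesten, $f(\ell)/\ell\to 0$ where $p_\ell(e,e)\sim\exp(-f(\ell))$, the product over blocks gives a single connected set of size $|\mch|\exp(-o(L))$. Thus the paper's route makes the geometry explicit --- it is exactly the filling-length / ``length-bounded rewriting'' control you correctly identify at the end of your writeup as the crux --- whereas your pigeonhole packages that geometry into a black-box component-count hypothesis with no evident path to verification.
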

This provides an exhaustive characterization of the nature of fragmentation for group-based dynamics.

As mentioned previously, from Prop.~\ref{prop:return}, since ${\mcg}_G$ is a vertex expander, we might expect that the dynamics thermalizes quickly.  However, an important subtlety is that ${\mcg}_G$ is only an expander because it is formally an {\it infinite} graph.  To better understand this, consider the example when the Cayley graph is a tree.  We know that an infinite $k$-ary tree $\mathbb{T}_k$ is an expander for $k \geq 3$.  However, for a finite system size $L$, we instead need to consider the infinite graph restricted to a ball of radius $L$, where distances are provided by the word metric.  The dynamics with a configuration graph formed by restricting ${\mcg}_G$ to a ball may no longer rapidly mix due to certain initial configurations localized near the boundary.  In the case of a tree, while the infinite tree is an expander, the finite tree is not an expander and has a relaxation time scaling exponentially in its diameter.  Is this phenomenon generic?  This question precisely turns out to be a conjecture in the expander literature known as {\it Benjamini's conjecture}:
	\begin{conjecture}[Benjamini, \cite{benjamini1998expanders}]
		A graph $\mathcal{G}$ is said to be {\it an expander at all scales} if there exists an $\epsilon > 0$ such that all balls $B \subset \mcg$ and subsets $A \subset \mcg$ with $|A \cap B| \leq |B|/2$ obey $|\partial A \cap B| \geq \epsilon |A \cap B|$.  An expander at all scales does not exist.
	\end{conjecture}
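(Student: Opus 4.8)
The plan is to argue by contradiction. Throughout I take $\mathcal{G}$ to be connected and of bounded degree $D$, as is implicit in the statement and true in all cases relevant here ($\mathcal{G}$ a Cayley graph or a Krylov graph). Suppose $\mathcal{G}$ is an $\epsilon$-expander at all scales for some fixed $\epsilon>0$, i.e.\ every ball, viewed as a finite graph, has vertex-Cheeger constant $\ge\epsilon$. The first step is to extract the easy geometric consequences. Applying the hypothesis to the nested pair $A=B(v,s)\subset B=B(v,r)$ whenever $|B(v,s)|\le|B(v,r)|/2$ gives $|S(v,s+1)|\ge\epsilon\,|B(v,s)|$, hence $|B(v,s+1)|\ge(1+\epsilon)|B(v,s)|$ for all $s$ up to a constant fraction of $r$; letting $r\to\infty$ this yields uniform exponential growth of balls around \emph{every} center, and in particular that $\mathcal{G}$ is non-amenable, so the simple random walk has exponentially decaying return probability by Kesten's criterion \cite{kesten1959symmetric,woess2000random}. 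Regular trees and hyperbolic graphs share these features, so a contradiction cannot come from non-amenability alone; it must exploit that \emph{every} ball, not merely those centered at one point, is required to expand.

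To use the full hypothesis I would take a Benjamini--Schramm limit of the finite graphs $B(v,r)$ with a uniformly random root, decorated at each vertex $u$ by the ``gradient'' of $d(\cdot,v)$ (the partition of $u$'s incident edges into those that decrease vs.\ increase the distance to $v$); this decoration has bounded complexity, so a subsequential limit exists and is a unimodular random rooted graph $(\mathcal{U},\rho)$ carrying a consistent ``uphill/downhill'' edge orientation. Because the mass of $B(v,r)$ concentrates exponentially near its outer shell, a uniform root typically sits at macroscopic distance from $v$, so $\mathcal{U}$ sees $\mathcal{G}$ from infinitely far along a distinguished direction, with $v$ escaped to infinity. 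Two properties should survive the limit: (i) the expander-at-all-scales condition passes to a uniform lower bound on the \emph{invariant} Cheeger constant of $(\mathcal{U},\rho)$, so $\mathcal{U}$ is invariantly non-amenable a.s.; and (ii) from exponential shell growth, along the downhill orientation each vertex receives, on average, strictly more uphill mass than it sends downhill. For vertex-transitive $\mathcal{G}$ the limit is simply $\mathcal{G}$ with a marked end, which is the setting treated in \cite{fraczyk2019heat}.

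I would then run a mass-transport argument on $(\mathcal{U},\rho)$: let every vertex send one unit of mass along each of its uphill edges. By property (ii) the net flux at the root has strictly positive expectation, while it is bounded in absolute value by $D$; but the mass-transport principle for unimodular random graphs forces this expectation to vanish, a contradiction. An essentially equivalent route avoids flows and argues directly that, under the all-scales hypothesis, an entropy-dissipation / differential inequality along the random walk on $\mathcal{G}$ forces the Cheeger constant of $B(v,t)$ to tend to $0$ as $t\to\infty$ (comparing walk speed, entropy, and ball growth), again contradicting the hypothesis.

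The first step is elementary and the limiting object is natural, but the third step is where the real work lies, and I expect it to be the main obstacle. The difficulty is that a shell $S(v,s)$ can itself be internally highly connected, so the uphill/downhill orientation need not behave like the edge set of a tree; one must show that, after passing to the unimodular limit, the divergence of the induced flow is \emph{genuinely} bounded below by a positive constant (equivalently, that the distinguished end contributes an honest positive drift) rather than being washed out by within-shell edges or cancellations near the escaped center. Controlling this, and verifying that all-scales expansion really transfers to invariant non-amenability of the limit, is precisely what requires the measured-group-theoretic and random-walk machinery of \cite{fraczyk2019heat} (stationarity of the limiting environment together with Kesten/entropy estimates). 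It is also the reason that only a weaker conclusion---ball Cheeger constants vanishing at a possibly only $1/\poly$ rate---is currently available, which is exactly the gap the stronger, exponential-decay form conjectured in this paper would need to close.
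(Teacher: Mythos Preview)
The paper does not prove this statement. It is presented precisely as a \emph{conjecture} due to Benjamini \cite{benjamini1998expanders}, and the paper treats it as open in general: immediately after stating it, the paper notes only that ``there are several subclasses of Cayley graphs where Benjamini's conjecture is proven to be true'' (citing \cite{brodzki2013uniform, ozawa}), and then pivots to the distinct heat-kernel-weighted version established by Fraczyk and van Limbeek \cite{fraczyk2019heat}. There is no ``paper's own proof'' of this conjecture to compare your proposal against.

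Your sketch is therefore an attack on an open problem, not a reconstruction of an argument in the paper. The first step (uniform exponential ball growth and non-amenability from all-scales expansion) is correct and standard. The Benjamini--Schramm limit with a distinguished direction is a natural strategy. But you have correctly identified your own gap: the mass-transport step is not a proof. The claim that ``each vertex receives, on average, strictly more uphill mass than it sends downhill'' and that this survives passage to the unimodular limit as a strictly positive expected divergence is exactly the statement you would need to prove, and it does not follow from what you have written; within-shell connectivity and the loss of a genuine basepoint in the limit are real obstructions, as you note. The entropy/speed alternative you mention is equally heuristic at this level. So what you have is a plausible outline in the spirit of \cite{fraczyk2019heat}, together with an honest acknowledgment that the decisive step is missing. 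That is a fair summary of the state of affairs, but it is not a proof, and the paper does not claim one either.
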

	
	Note that the conjecture does not require $\mcg$ to be a Cayley graph, and thus this phenomenon can generalize beyond group-based dynamics.  We also note the connection to another concept called the separation profile of a graph $\mathrm{sep}(n)$, which is defined to be the maximum over size $n$ subgraphs of the minimum number of nodes that need to be removed from the subgraph to shatter it into connected components, each with size $\leq n/2$ (see Ref.~\cite{benjamini2012separation}).  Benjamini's conjecture pertains to the separation profile of an infinite expander where the subgraphs are restricted to be balls.
    
    There are several subclasses of Cayley graphs where Benjamini's conjecture is proven to be true, see \cite{brodzki2013uniform, ozawa}.  However, the statement we are looking for is slightly different from Benjamini's conjecture.  In particular, Eq.~\ref{eq:weighted_exp} describes the expansion of ${\mcg}_G$ weighted by the distribution $\nu(g)$.  Since $\nu(g) = |\mathcal{K}_g|/d^L$ is the distribution of a length-$L$ random walk on $\mathrm{Cay}(G, \mca)$ starting at $e$, we would instead want an estimate of the conductance for graphs weighted by the heat kernel measure.  
    
    Fortuitously, Fracyzk and van Limbeek showed in 2019 that Benjamini's original conjecture can be proven with the heat kernel weighting; more precisely, they proved the following theorem \cite{fraczyk2019heat}:
	\begin{theorem}[Fracyzk and van Limbeek, \cite{fraczyk2019heat}]
		Let $\mathcal{G}$ be an infinite, irreducible, bounded degree graph.  Define $\nu_{L, o}$ to be the $L$-step heat kernel measure rooted at $o$.  We say that the heat kernel measure is $\epsilon$-expanding if $\nu_{L,o}(\partial A) \geq \epsilon \, \nu_{L,o}(A)$ for all $A \subset \mcg$ with $\nu_{L,o}(A) < 1/2$ and all choices of $o$.  The heat kernel measure on $\mcg$ is not $\epsilon$-expanding for any $\epsilon > 0$.
	\end{theorem}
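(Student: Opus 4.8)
\emph{Proof strategy.} The statement is the negation of ``$\nu_{L,o}$ is $\epsilon$-expanding for some $\epsilon>0$'', so it suffices to exhibit, for each $\epsilon>0$, a base point $o$, a time $L$, and a set $A\subset\mcg$ with $\nu_{L,o}(A)\le 1/2$ and $\nu_{L,o}(\partial A)<\epsilon\,\nu_{L,o}(A)$; equivalently, to show $\liminf_{L\to\infty}\Phi(\nu_{L,o})=0$ for a fixed root $o$, where $\Phi(\nu)$ denotes the $\nu$-weighted vertex conductance. The plan is a co-area (``slicing'') argument. Pick a $1$-Lipschitz function $\psi:\mcg\to\mathbb{Z}_{\ge 0}$ with $\psi(o)=0$ and cut along its super-level sets $A_t=\{x:\psi(x)>t\}$. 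Because $\psi$ is $1$-Lipschitz, each vertex belongs to $\partial A_t$ for at most two integer values of $t$, so $\sum_{t\ge 0}\nu_L(\partial A_t)\le 2$ (writing $\nu_L\equiv\nu_{L,o}$), while $\sum_{t\ge 0}\nu_L(A_t)=\sum_{t\ge 0}\nu_L(\psi>t)=\mathbb{E}_{\nu_L}[\psi]$.

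\emph{Pigeonhole.} Let $m$ be an upper median of $\psi$ under $\nu_L$, so that $\nu_L(A_t)\le 1/2$ for every $t\ge m$ and each such $A_t$ is an admissible test set. If $\Phi(\nu_L)\ge\epsilon$ then $\nu_L(\partial A_t)\ge\epsilon\,\nu_L(A_t)$ for all $t\ge m$, and summing over these $t$ gives $2\ge\epsilon\sum_{t\ge m}\nu_L(\psi>t)=\epsilon\,\mathbb{E}_{\nu_L}\!\big[(\psi-m)_+\big]$. Hence the theorem reduces to a \emph{non-concentration lemma} for the heat kernel: there is a $1$-Lipschitz $\psi$ based at $o$ (possibly depending on $L$) with $\mathbb{E}_{\nu_L}\!\big[(\psi-\mathrm{med}_{\nu_L}\psi)_+\big]\to\infty$ along a sequence $L\to\infty$. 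The natural candidate is $\psi(x)=d(o,x)$: for a recurrent or diffusive walk the displacement $d(o,X_L)$ has fluctuation scale $\to\infty$ (it is $\asymp\sqrt L$ on $\mathbb{Z}^d$), and for a ballistic walk it typically obeys a central limit theorem with variance $\asymp L$, so in both standard regimes the required one-sided spread diverges.

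\emph{Main obstacle.} The genuinely hard point is to establish the non-concentration lemma \emph{uniformly over all infinite bounded-degree graphs}, including pathological geometries carrying neither algebraic (group) structure nor any small combinatorial cut; for these one cannot simply quote a CLT, and a crude distance function might conceivably concentrate around a growing value. I see two routes. (i) Replace $d(o,\cdot)$ by a Lipschitz function adapted to the coarse geometry of $\mcg$ — a Busemann-type function toward an end or a boundary point, or a bounded-gradient function pulled back from the Poisson/Martin boundary — arranged so that its level sets genuinely bisect the (possibly exponentially large, thin) shell on which $\nu_L$ lives; for Gromov-hyperbolic $\mcg$ this can be implemented via the boundary at infinity. (ii) Run an entropy argument: take $\psi$ to be a regularisation of $-\log\nu_L$. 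The Shannon entropy $H(\nu_L)=\mathbb{E}_{\nu_L}[-\log\nu_L]$ increases without bound, and Shannon--McMillan--Breiman-type equipartition spreads $\nu_L$ across a range of information values whose width diverges, while $\log\nu_L$ varies by only $O(1)$ across an edge after one additional lazy step; this produces a near-$1$-Lipschitz $\psi$ with diverging one-sided spread. The real work in route (ii) is a uniform multiplicative (parabolic-Harnack-type) Lipschitz bound for $\nu_L$ along edges — which can fail near the frontier of the support and at geometric bottlenecks, and is exactly where bounded degree enters — together with a proof that the equipartition width diverges without assuming any structure on $\mcg$. This is what the Fraczyk--van Limbeek argument supplies, and it is strictly more than the ball-weighted Benjamini conjecture: a heat kernel concentrated on a thin, ``measure-zero-looking'' shell can defeat every purely combinatorial cut of the graph, so the weighting by $\nu_L$ is essential.

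\emph{A dead end worth noting.} One might hope to invoke the Morris--Peres evolving-set machinery, which converts a uniform lower bound on the conductance \emph{profile} into exponential decay of the return probability $p_L(o,o)$; but that bounds conductance against the reversible (degree) measure, not against $\nu_L$, and exponentially fast return-probability decay is fully consistent with an infinite graph (regular trees, non-amenable Cayley graphs). It therefore only recovers the amenable case and is no substitute for the non-concentration input above.
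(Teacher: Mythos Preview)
The paper does not prove this theorem; it is quoted as an external result of Fraczyk and van Limbeek and only a brief sketch of their method appears in the appendix (``Connection to amenability of Poisson boundary''). There the paper characterises the FvL argument, at least in the Cayley-graph case, as proceeding through the \emph{amenability of the Poisson boundary}: one takes the measured boundary $(P,\tau)$ of the random walk, shows it admits sets $S_n$ with $\tau(\partial S_n)/\tau(S_n)\to 0$, and pulls these back to the graph via $A_n(p)=\{g: gp\in S_n\}$, so that $\nu_L(\partial A_n)/\nu_L(A_n)\to\tau(\partial S_n)/\tau(S_n)$.

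Your co-area reduction is correct and is a clean alternative framing: slicing by a $1$-Lipschitz $\psi$ and summing indeed gives $2\ge\epsilon\,\mathbb{E}_{\nu_L}[(\psi-\mathrm{med}\,\psi)_+]$, so the theorem reduces to a non-concentration statement. But, as you yourself flag, that non-concentration statement \emph{is} the theorem --- the reduction is soft, and all the content lies in producing $\psi$ with diverging one-sided spread on an arbitrary bounded-degree graph. Your route (i), pulling back a Lipschitz function from a boundary at infinity, is essentially a repackaging of the Poisson-boundary strategy the paper attributes to FvL; your route (ii), using $\psi\approx-\log\nu_L$, is attractive but the edge-Lipschitz (Harnack) control you need is genuinely delicate on general graphs and is exactly where the FvL machinery enters. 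Neither route is completed in your sketch, so your proposal is an honest reduction rather than a proof --- which is also the status in the paper itself, since it too defers the argument to \cite{fraczyk2019heat}.

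One minor caution: your appeal to a CLT for $d(o,X_L)$ in the ballistic case is not available on a general bounded-degree graph (no group structure, no stationarity), so even the ``standard regimes'' paragraph does not close the gap outside Cayley graphs.
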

	Surprisingly, this theorem suggests that such $\epsilon$-expanders do not exist for {\it any} bounded degree, not just Cayley graphs.  Unfortunately, it does not tell us the {\it rate} at which this decay occurs -- this is left as an open question in Ref.~\cite{fraczyk2019heat}.  For example, in the case where $\mcg = \mathbb{T}_k$, the restriction of $\mcg$ to a ball of diameter $L$ has conductance $\exp(-O(L))$ even with the heat kernel weighting.  However, it could be possible that the restriction of $\mcg$ has conductance $1/\poly(L)$.  Since expanders look locally tree-like, one may think it is unlikely to achieve such a slowly decaying conductance, but because the conductance can be dictated by `global' properties of the graph, we cannot rule out such a situation.  Instead, we will pose the following conjecture, which we call a quantitative version of Benjamini's conjecture:
	\begin{conjecture}\label{conj:weightedbenjamini}
		Let $\mathcal{G}$ be an infinite irreducible, bounded degree graph.  There exists an $A \subset \mcg$ and an $o$ such that $\nu_{L,o}(A) < 1/2$ and 
		\begin{equation}
			\nu_{L,o}(\partial A) \geq \phi^L \, \nu_{L,o}(A)
		\end{equation}
		for $\phi < 1$.
	\end{conjecture}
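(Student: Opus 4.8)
The plan is to observe that, read with the natural convention that $A$ ranges over sets with $0 < \nu_{L,o}(A) < 1/2$ (without positivity the statement is vacuous, as any set supported outside $B_L(o)$ has zero heat-kernel mass), the assertion follows from two elementary structural facts about the lazy-random-walk heat kernel $\nu_{L,o}(\cdot) = p_L(o,\cdot)$ on a connected graph $\mcg$ of maximum degree $\Delta$ — the setting in which the theorem of Fracyzk and van Limbeek \cite{fraczyk2019heat} is phrased. First, the support of $\nu_{L,o}$ is exactly the metric ball $B_L(o) = \{x : d(o,x)\le L\}$: a length-$L$ lazy walk cannot leave $B_L(o)$, and it reaches any $x\in B_L(o)$ by following a geodesic and then idling. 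Second, every vertex of that ball carries mass at least $(2\Delta)^{-L}$: the walk that runs along a geodesic $o=v_0,v_1,\dots,v_k=x$ (each step having probability $\ge \tfrac{1}{2\Delta}$, since the lazy walk moves to a prescribed neighbour with probability $\tfrac{1}{2\deg}$) and then idles for the remaining $L-k$ steps (each lazy step having probability $\tfrac12\ge\tfrac{1}{2\Delta}$) has probability $\ge(2\Delta)^{-L}$, so $p_L(o,x)\ge(2\Delta)^{-L}$.

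Given these, I would first fix the root $o$ to be a vertex of degree $\ge 2$ that has a neighbour which is not a leaf; such an $o$ exists because $\mcg$ is infinite and connected (otherwise the component of any vertex would be a finite star-like graph). For such $o$ the walk is not trapped on a single edge, and one checks $p_L(o,o)<1/2$ for every $L\ge 2$, so $A=\{o\}$ satisfies $0<\nu_{L,o}(A)<1/2$. Its vertex boundary $\partial A$ is the neighbour set of $o$, which lies in $B_1(o)\subseteq B_L(o)$, so by the second fact (indeed already via the single path ``idle for $L-1$ steps, then step to a fixed neighbour'', of probability $\tfrac{1}{2\Delta}2^{-(L-1)}\ge(2\Delta)^{-L}$) we get $\nu_{L,o}(\partial A)=\sum_{y\sim o}p_L(o,y)\ge(2\Delta)^{-L}$. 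Since $\nu_{L,o}(A)\le 1$, this yields $\nu_{L,o}(\partial A)/\nu_{L,o}(A)\ge(2\Delta)^{-L}$, establishing the conjecture with $\phi=\tfrac{1}{2\Delta}<1$. In fact the same bound holds for every admissible $A$: connectivity of $B_L(o)$ forces an edge between $A\cap B_L(o)$ and $B_L(o)\setminus A$ (both nonempty when $0<\nu_{L,o}(A)<1$), and the endpoint lying in $B_L(o)\setminus A$ belongs to $\partial A$ and has mass $\ge(2\Delta)^{-L}$.

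For the statement exactly as worded I do not expect a substantive obstacle; the only work is the bookkeeping in the choice of $o$ to dispose of the degenerate small-$L$ and degree-one cases, plus (under a non-lazy convention on a bipartite graph) a trivial parity remark about $L$-reachability. What is genuinely hard, and what actually feeds the thermalization conjecture of Sec.~\ref{ss:conj}, is the \emph{complementary} inequality — the existence of an admissible $A$ with $\nu_{L,o}(\partial A)\le\phi^L\,\nu_{L,o}(A)$, i.e. the quantitative (rate) refinement of the Fracyzk--van Limbeek theorem showing that the heat-kernel-weighted conductance of a ball is \emph{exponentially small}. Proving that direction is where controlling the decay of the heat-kernel separation profile on balls will require new input, and I would leave it for future work.
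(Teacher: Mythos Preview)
You have correctly identified that the inequality in the conjecture is misprinted: it should read $\nu_{L,o}(\partial A)\le\phi^L\,\nu_{L,o}(A)$. The surrounding text makes this unambiguous---the conjecture is introduced as a quantitative sharpening of the Fraczyk--van Limbeek theorem (which produces sets of \emph{small} weighted expansion), the sentence immediately after speaks of ``counterexamples'' as requiring exotic group boundaries, and the two theorems the paper then proves for hyperbolic groups both conclude that the weighted expansion is $\le e^{-\xi L}$. Your proof of the literal $\ge$ statement via the single-vertex set $A=\{o\}$ and the crude heat-kernel lower bound $p_L(o,x)\ge(2\Delta)^{-L}$ on $B_L(o)$ is correct and, as you say, essentially vacuous; the same bound for arbitrary admissible $A$ via connectivity of $B_L(o)$ is also fine.

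The paper makes no attempt to prove the intended $\le$ direction in general; it is left as a conjecture. The partial result for hyperbolic groups (App.~\ref{app:proofbonanza2}) proceeds by a route that has nothing in common with your pointwise estimate: one shows that the Gromov boundary $\partial X$ has finite Assouad dimension, uses the Bonk--Schramm cone construction $f:\partial X\times(0,D]\to X$ to pull a metric ball in $\partial X$ back to a ``cone'' $R\subset X$, and bounds $|\partial R|/|R|$ by an isoperimetric inequality on the boundary. Upgrading from counting measure to the heat-kernel weighting then requires concentration of the random-walk distance (Thm.~\ref{thm:random_walk_conc}), positivity of the entropy rate, and a pigeonhole over exponentially many concentric shells to locate one carrying exponentially small mass. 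Your closing paragraph is therefore exactly right: the $\le$ direction is the substantive content, and the paper's contribution toward it is confined to the hyperbolic case.
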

	Henceforth, we will focus our attention on graphs of the form ${\mcg}_G$ with $G$ a finitely generated and finitely presented group.  Demonstrating counterexamples to the above conjecture (in the context of Cayley graphs) would require classifying boundaries of rather exotic groups, which to the authors' knowledge is an open problem and thus beyond the scope of this paper.  Here, we will content ourselves with a proof for hyperbolic groups, which are rather generic based on Theorem~\ref{thm:random_hyperbolic}.
	
	\ss{A proof for hyperbolic groups} \label{ss:hyperbolic}
	
    We will outline the main theorems that we prove as well as some of the essential ideas, but will divert most of the details to Appendix~\ref{app:proofbonanza2}.  Let us first provide some crude intuition for why the thermalization time is exponential in $L$.  To understand this, first note that the Cayley graph of a hyperbolic group corresponds to a {\it hyperbolic metric space}.  Crudely (and technically incorrectly) approximating this metric space by $n$-dimensional hyperbolic space $\mathbb{H}^n$ for some $n$, we want to compute the expansion of $\mathbb{H}^n$ restricted to a ball of diameter $L$, denoted $B(L)$, which has volume $\sim e^{(n-1) L}$.    If we slice this ball in half and call one of the regions $R$, the boundary $\partial R$ is a disk, which has volume $\sim e^{(n-2) L}$.  Therefore, the expansion of $R$ is $\sim e^{-L}$.  Unfortunately, this heuristic argument is not enough because it assumes that Cayley graphs of hyperbolic groups are isometric to $\mathbb{H}^n$ for some $n$, which is not true.

    Instead, we need a more sophisticated notion of the dimension of the group.  The way we quantify this is by using the notion of the {\it boundary} of a hyperbolic group, also known as the Gromov boundary.  The Gromov boundary directly gives us information about the structure of the group in two ways.  First, the boundary of the group is a topological space, and we show that it has finite dimension.  Second, there is a correspondence between the boundary of a hyperbolic group $\partial G$ and the group itself via a certain map $f: \partial G \times (0,D] \to G$ where the interval $(0,D]$ corresponds to the `radial' direction in $G$.  Therefore, by constructing sets with small expansion in $\partial G$, we can construct sets with small expansion in $G$ using this map.  The finite dimension of the boundary allows us to construct such small expansion sets.  These two ideas allow us to prove the following theorem for hyperbolic groups:
    \begin{theorem}[Informal]
		Consider a hyperbolic metric space $X$ corresponding to the Cayley graph of a hyperbolic group with the word metric, where $\partial X$ has non-zero dimension.  The expansion of $X \cap B(L)$ is $\leq e^{-\xi L}$ for some $\xi > 0$.
	\end{theorem}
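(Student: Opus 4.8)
\emph{Strategy.} Let $h=\lim_{L\to\infty}\tfrac1L\log|B(L)|$ be the exponential growth rate of $G$; it is positive and finite precisely because $G$ is non-elementary hyperbolic, which is the content of the hypothesis that $\partial X$ is not zero-dimensional (equivalently $\partial X$ is infinite, equivalently $G$ is not virtually cyclic). Fix a basepoint $e\in X$, write $S(t)$ for the sphere of radius $t$ about $e$, and fix a visual metric $d_v$ on $\partial X$ with parameter $a>0$ small enough that a theorem of Coornaert applies: the Patterson--Sullivan measure $\mu$ is Ahlfors $Q$-regular on $(\partial X,d_v)$ with $Q=h/a$, and $|S(t)|\asymp e^{ht}$. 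The plan is to exhibit a single region $R\subset X\cap B(L)$ with $\min(|R|,|B(L)\setminus R|)\asymp e^{hL}$ and edge boundary $|\partial_E R|$ exponentially smaller than $e^{hL}$; then $\Phi(X\cap B(L))\le|\partial_E R|/\min(|R|,|B(L)\setminus R|)\le e^{-\xi L}$ for a fixed $\xi>0$ and all large $L$. Via one side of Cheeger's inequality, Eq.~\eqref{cheeger}, this gives $\tth\gtrsim e^{\xi L}$, and, chained with Prop.~\ref{prop:return} and the Lemma preceding it, $\Phi(\mcg_\mck)\le\Phi(\mcg_G)=O(e^{-\xi L})$ for hyperbolic group dynamics. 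This refines the $\mathbb{H}^n$ heuristic of the text by replacing ``ambient dimension'' with the structure of $\partial X$.

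\emph{Polar coordinates on $X$ via shadows.} For each $\eta\in\partial X$ fix a geodesic ray $[e,\eta)$. Non-elementary hyperbolicity gives a visibility property --- every $g\in X$ lies within a bounded distance of some such ray --- so $(\eta,t)\mapsto$ the point of $[e,\eta)$ at distance $\lfloor t\rfloor$ from $e$ is a coarse surjection $\partial X\times[0,\infty)\to X$; this is the map $f$ of Sec.~\ref{ss:hyperbolic}, truncated radially at $L$. Its coarse fibres are governed by the shadow lemma: for $g\in S(t)$ the shadow $\mathrm{Sh}(g)=\{\eta\in\partial X:[e,\eta)\text{ passes within a bounded distance of }g\}$ is, up to bounded distortion of its radius, the $d_v$-ball of radius $e^{-at}$ about any of its points, so $\mu(\mathrm{Sh}(g))\asymp e^{-aQt}=e^{-ht}$ \emph{uniformly} in $g\in S(t)$; since $|S(t)|\asymp e^{ht}$ the shadows at level $t$ cover $\partial X$ with bounded overlap. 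Two adjacent vertices $g,g'$ of $X$ have shadows of comparable radius lying at bounded $d_v$-distance.

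\emph{A good boundary cut and the count.} We want $U\subset\partial X$ with $\mu(U)\in[\tfrac14,\tfrac12]$ whose topological boundary has uniformly thin neighbourhoods at all scales $e^{-at}$, i.e. $\mu\bigl(N_{e^{-at}}(\partial U)\bigr)\le e^{-a(1-\beta)t}$ for all $t\ge t_0$, with $\beta\in(0,1)$ and $t_0$ constants. Take $U=B_{d_v}(\eta_0,\rho^\star)$. For any fixed $\delta>0$, Fubini gives $\int_0^{\mathrm{diam}\,\partial X}\mu\bigl(\{\,|d_v(\cdot,\eta_0)-\rho|<\delta\,\}\bigr)\,d\rho\le2\delta$, so by Markov the radii $\rho$ for which this annulus has measure $>\delta^{1-\beta}$ form a set of Lebesgue measure $\le2\delta^{\beta}$; applying this at $\delta=e^{-at}$ and summing over $t\ge t_0$, the union of all bad radii has measure $\le2\sum_{t\ge t_0}e^{-a\beta t}<\tfrac12\,\mathrm{diam}\,\partial X$ once $t_0$ is a large enough constant, so a good $\rho^\star$ exists and, $\mu$ being atomless, can be taken with $\mu(U)\in[\tfrac14,\tfrac12]$. (Alternatively one can build such a $U$ from the finite Assouad--Nagata dimension of $\partial X$; the coarea argument is more elementary.) Now put $R=\{g\in B(L):\mu(\mathrm{Sh}(g)\cap U)\ge\tfrac12\mu(\mathrm{Sh}(g))\}$. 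Counting shadows as above, $|R\cap S(t)|\asymp\mu(U)e^{ht}$, hence $|R|\asymp\mu(U)e^{hL}\asymp e^{hL}$ and likewise $|B(L)\setminus R|\asymp e^{hL}$. If $(g,g')$ is an edge of $X$ with $g\in S(t)$, $g\in R$, $g'\notin R$, then $\mathrm{Sh}(g)$ and $\mathrm{Sh}(g')$ straddle $\partial U$ and, having radius $\asymp e^{-at}$, force $\mathrm{Sh}(g)$ to meet $N_{Ce^{-at}}(\partial U)$ for a fixed $C$; therefore
\begin{equation}
|\partial_E R|\;\lesssim\;\sum_{t\le L}\frac{\mu\bigl(N_{Ce^{-at}}(\partial U)\bigr)}{\min_{g\in S(t)}\mu(\mathrm{Sh}(g))}\;\lesssim\;t_0+\sum_{t_0\le t\le L}e^{(h-a(1-\beta))t}\;\lesssim\;\max\!\bigl(1,\,e^{(h-a(1-\beta))L}\bigr),
\end{equation}
using the bounded degree of $X$. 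Thus $\Phi(X\cap B(L))\lesssim e^{-a(1-\beta)L}$, and we may take $\xi=\tfrac12a(1-\beta)>0$.

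\emph{Main obstacle.} The substantive work is not the combinatorics above but making ``$X$ coarsely fibres over $\partial X$ radially'' quantitatively honest with constants uniform in $t$ and $L$: one needs (a) the shadow measures $\mu(\mathrm{Sh}(g))$ comparable for \emph{all} $g\in S(t)$, not just on average --- this uniform Ahlfors regularity of the Patterson--Sullivan measure is Coornaert's theorem, and is exactly where $X$ being the Cayley graph of a \emph{group} rather than an arbitrary hyperbolic space is used; and (b) adjacent vertices to have comparably sized, nearby shadows, so that the edge boundary is captured by a neighbourhood of $\partial U$ \emph{at the matching scale} $e^{-at}$ --- if this scale matching slipped by even a factor $e^{ct}$ the per-level count would degrade to $e^{ht}$ and the estimate would collapse to a merely polynomial bound. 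Finally, the functional relevant for thermalization, Eq.~\eqref{eq:weighted_exp}, weights $\mcg_G$ by $\nu(g)=|\mck_g|/d^L$ rather than by counting measure; one must verify that $\nu$ is itself radially Ahlfors-regular on $B(L)$ (again a consequence of hyperbolicity, since $|\mck_g|$ is essentially a heat kernel and decays like $e^{-h|g|}$ on the relevant range), after which the same cut $R$ applies, while fragile fragmentation only shrinks $|\mck_{\max}|$ relative to $|\mck_e|$ and so cannot hurt the bound.
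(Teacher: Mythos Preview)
Your argument is correct and takes a genuinely different route from the paper's. Both proofs exploit the same structural fact---that $X$ coarsely fibres radially over $\partial X$---to pull back a ``half'' of the boundary to a low-expansion subset of $B(L)$, but the machinery differs substantially. The paper works through the Bonk--Schramm cone $\mathrm{Con}(\partial X,d)$ and its explicit quasi-isometry to $X$, then uses the \emph{finite Assouad dimension} $d_a$ of $\partial X$ directly: the set $A$ is the image of a fixed ball $S_{\alpha,\beta}\subset\partial X$, and the per-level counts $|A_n|\sim(\beta e^{\epsilon n})^{d_a}$, $|\partial A_n|\lesssim(\beta e^{\epsilon n})^{d_a-1}$ come from the Assouad exponent, yielding $|\partial A|/|A|\lesssim e^{-\epsilon L}$. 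You instead invoke Coornaert's Ahlfors regularity of the Patterson--Sullivan measure together with the shadow lemma, and replace the dimension count by an elementary coarea/Borel--Cantelli argument that produces a radius $\rho^\star$ whose annular neighbourhoods are thin \emph{at every scale simultaneously}. This buys you two things: you never need to name $d_a$ (only that $\partial X$ is infinite, which is exactly the ``non-zero dimension'' hypothesis), and the argument is closer to standard geometric-group-theory toolkits. The price is that Coornaert's theorem is a black box, whereas the paper's route is essentially self-contained once Bonk--Schramm is granted. One small point to tighten: the good-radius set has Lebesgue measure $\ge \mathrm{diam}(\partial X)-2\sum_{t\ge t_0}e^{-a\beta t}$, but you also need $\rho^\star$ to land in the interval where $\mu(B_{d_v}(\eta_0,\rho))\in[\tfrac14,\tfrac12]$; that interval has some fixed positive length by atomlessness, so you should choose $t_0$ large relative to \emph{that} length rather than to $\mathrm{diam}(\partial X)$.

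Your closing remark about the weighted (heat-kernel) functional \eqref{eq:weighted_exp} is where you should be more cautious. The claim that ``$\nu$ is itself radially Ahlfors-regular'' and hence ``the same cut $R$ applies'' is not right as stated: the $L$-step random-walk measure concentrates in a shell near radius $vL$ with Gaussian tails (Theorem~\ref{thm:random_walk_conc}), so it is highly non-uniform in the radial direction and \emph{not} comparable to counting measure on all of $B(L)$. The paper's weighted result (Theorem~\ref{thm:quant_hyperbolic_rw}) needs a genuinely different argument---random-walk concentration, positive entropy density, and a probabilistic shell-selection step---and cannot be obtained by reusing the unweighted cut. This does not affect your proof of the present (unweighted) statement, but the extension you sketch in the final paragraph would require substantially more work along those lines.
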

    The proof of this theorem is in App.~\ref{app:proofbonanza2}, and relies on many ideas in the Bonk-Schramm embedding theorem of Ref.~\cite{bonk2011embeddings}.  This theorem therefore proves a quantitative version of Benjamini's conjecture but does not yet address the heat kernel version of Benjamini's conjecture (Conjecture~\ref{conj:weightedbenjamini}).  Naively speaking, so long as the heat kernel measure ``smoothly varies'' over the Cayley graph, then one should still expect the weighted expansion to decay exponentially in $L$.  In particular, taking a large subset $R$ of the Cayley graph, we expect the heat kernel measure of $R$ is $O(1)$, and since $|\partial R|$ is exponentially smaller than $|R|$ and the heat kernel varies smoothly enough, we expect the measure of the boundary to be $O(\exp(-L))$.  Once again, this intuition is roughly correct, but one needs to more carefully quantify how smoothly the heat kernel measure varies over the group.  For this, we need several properties of random walks on hyperbolic groups, namely concentration inequalities~\cite{maher2018random} and the finiteness of the entropy.  Combined with geometric properties such as the finite dimension of the Gromov boundary as well as some probabilistic arguments for showing the existence of subsets with guarantees on their heat kernel measure, we prove the following theorem:
    \begin{theorem}[Informal]
    Suppose $X$ is the Cayley graph of a hyperbolic group $G$, weighted by a probability distribution $\nu$ corresponding to an $L$-step random walk measure with step distribution $\mu$.  Assume that $\mu$ has bounded support and is symmetric.  The weighted expansion of $X$ is $\leq e^{-\xi L}$ for some $\xi > 0$.
    \end{theorem}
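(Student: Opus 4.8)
The plan is to transport the heat-kernel measure $\nu=\nu_L$ to the Gromov boundary $\partial G$, where it is comparable to the harmonic measure $\nu_\partial$ of the walk, cut $\partial G$ along a suitable codimension-one set $\Gamma$, and pull the cut back to a ``cone'' $R\subset\mcg_G$. I would then show $\nu_L(R)=\Omega(1)$ while $\nu_L(\partial R)=e^{-\Omega(L)}$; by the Lemma of Eq.~(\ref{eq:weighted_exp}), the chain $\Phi(\mcg_\mch)\le\Phi(\mcg_\mck)\le\Phi(\mcg_G)$, and Cheeger's inequality~(\ref{cheeger}), this is the theorem (and in particular verifies Conjecture~\ref{conj:weightedbenjamini} for the graphs $\mcg_G$ with $G$ hyperbolic). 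The conceptual point is that, in contrast with the unweighted Theorem above — which bounds the \emph{cardinality} of $\partial R$ — here $\partial R$ may be exponentially large, but it sits at graph distance $\approx\ell L$ from $e$ in directions forming an $e^{-\Omega(L)}$-neighbourhood of $\Gamma$, and the harmonic mass of such a neighbourhood is exponentially small. Here $\ell>0$ is the drift of the $\mu$-random walk, which is positive because $G$ — being exponentially fragmented, hence non-amenable by Prop.~\ref{prop:all_charac}, with $\partial G$ of nonzero dimension — is non-elementary. The random-walk inputs I would invoke, all standard for non-elementary hyperbolic $G$ with $\mu$ symmetric and finitely supported (Kaimanovich; Maher--Tiozzo; Mathieu--Sisto; Gou\"ezel), are: (i) linear progress with exponential concentration, $\Pr[\,|d(e,X_L)-\ell L|>\epsilon L\,]\le e^{-c(\epsilon)L}$; (ii) a.s.\ convergence $X_L\to X_\infty\in\partial G$ with geodesic tracking — with probability $1-e^{-cL}$, $X_L$ is within $O(\log L)$ of a ray $[e,X_\infty)$, so the direction $\zeta(X_L)$ of $X_L$ lies within visual distance $e^{-\ell L+O(\log L)}$ of $X_\infty$; (iii) $\nu_\partial=(X_\infty)_*\Pr$ has full support and is atomless. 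Consequently the push-forward $\zeta_*\nu_L$ agrees with $\nu_\partial$ up to total-variation error $e^{-cL}$ and angular resolution $e^{-\ell L+O(\log L)}$.

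Next I would build the cut. By the Bonk--Schramm embedding theorem (already used for the unweighted Theorem), $\partial G$ with a visual metric embeds quasisymmetrically into $S^M$ for finite $M$, so it has finite topological/Assouad dimension and admits genuine codimension-one separating sets. Combining this with the dimension theory of harmonic measure on hyperbolic groups (exact-dimensionality; Blach\`ere--Ha\"issinsky--Mathieu, Tanaka), I would pick a closed $\Gamma\subset\partial G$ such that (a) $\partial G=P_1\sqcup P_2$ with $\nu_\partial(P_1),\nu_\partial(P_2)$ bounded away from $0$ and $1$ — possible because $\nu_\partial$ is atomless and fully supported, by sweeping a one-parameter family of cuts and invoking the intermediate value theorem — and (b) $\Gamma$ has thin neighbourhoods, $\nu_\partial(N_\delta(\Gamma))\le C\delta^\beta$ for some $\beta>0$; (b) follows from a Fubini/averaging argument over the sweeping family (each point lies in the $\delta$-neighbourhood of only an $O(\delta)$-fraction of the cuts) together with enough regularity of $\nu_\partial$ to promote ``most cuts'' to one that also satisfies (a). Let $R\subset\mcg_G$ be the cone over $P_1$: the vertices $g$ with $\zeta(g)\in P_1$ and $d(e,g)\le(\ell+\epsilon)L$.

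For the estimate, work on the probability-$(1-e^{-cL})$ event $E$ that $(\ell-\epsilon)L\le d(e,X_L)\le(\ell+\epsilon)L-1$ and the tracking bound holds. On $E$, $\zeta(X_L)\in P_1$ iff $X_\infty\in P_1$, save when $X_\infty$ lies in an $e^{-\ell L+O(\log L)}$-neighbourhood of $\Gamma$, so $\nu_L(R)\ge\nu_\partial(P_1)-e^{-cL}=\Omega(1)$ (and $\nu_L(R)\le1/2$ is arranged by taking $\nu_\partial(P_1)\le1/2$, else replace $R$ by $R^c$). For the interface: the part of $\partial R$ at radius $\approx(\ell+\epsilon)L$ carries $\nu_L$-mass $\le e^{-cL}$ by concentration; a remaining $g\in\partial R$ has a neighbour $g'$ with $\zeta(g')\in P_2$, and since $g,g'$ are adjacent and both roughly at distance $d(e,g)$ from $e$ their Gromov product at $e$ is $d(e,g)+O(1)$, so $\zeta(g)$ lies within visual distance $e^{-c_0 d(e,g)+O(1)}$ of $\Gamma$. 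Hence
\begin{equation}
\nu_L(\partial R)\;\le\;\Pr[\,d(e,X_L)<(\ell-\epsilon)L\,]\;+\;\nu_\partial(N_{e^{-c_1L}}(\Gamma))\;+\;e^{-cL}\;\le\;e^{-\xi L}
\end{equation}
for suitable $c_0,c_1,\xi>0$, using (b). Therefore $\Phi(\mcg_G)\le\nu_L(\partial R)/\nu_L(R)\le e^{-\xi'L}$, which is the asserted weighted-expansion bound; via Theorem~\ref{thm:random_hyperbolic} it also gives $\tth=\O(\exp(L))$ for \emph{generic} group constraints.

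The main obstacle is the construction of the cut: exhibiting a single $\Gamma$ that simultaneously splits $\nu_\partial$ into two macroscopic pieces \emph{and} has exponentially small $\nu_\partial$-measure in shrinking neighbourhoods. Finite-dimensionality of $\partial G$ guarantees that codimension-one cuts exist, but bounding $\nu_\partial(N_\delta(\Gamma))$ needs quantitative regularity of $\nu_\partial$ beyond its a.e.\ exact-dimensionality, and one must check that the ``good'' cuts produced by the averaging argument are not atypical for the balancing requirement (a); the hard case is a group whose Gromov boundary has small conformal dimension, where $\nu_\partial$ can concentrate on a thin subset. A secondary, more mechanical point is carrying the $O(\log L)$ tracking deviation through the bounds — harmless since $e^{-\ell L+O(\log L)}=e^{-\ell L(1-o(1))}$, but it must be tracked with care.
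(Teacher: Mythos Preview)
Your approach is genuinely different from the paper's main proof and is in fact close to the alternative route the paper sketches (but does not make rigorous) in its subsection on the Poisson boundary. The paper's actual proof of this theorem avoids cutting $\partial G$ along a codimension-one $\Gamma$ altogether. Instead it works directly in $X$: cover $\partial G$ by balls $\mcb_z(\beta/2)$, pick the ball whose cone $C_{z_m}\subset X$ carries the largest contribution $s/K$ to the entropy density of $\nu_L$, and enclose it in a slightly larger cone $C'_{z_m}$. The annular region $C'_{z_m}\setminus C_{z_m}$ is then sliced into $\sim e^{\epsilon L}$ concentric ``shells'' $S_m$, and since their total $\nu_L$-mass is $O(1)$, Markov's inequality produces some $S_{m^\ast}$ with $\nu_L(S_{m^\ast})\lesssim e^{-(\epsilon-\eta)L}$. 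The region $R$ is then bounded by $S_{m^\ast}$ together with a piece of a sphere $G_{n^\ast}$ at radius $\lesssim(1-\eta/\epsilon)L$, whose $\nu_L$-mass is controlled by the linear-progress concentration inequality. Two short lemmas convert entropy contributions into $\nu_L$-mass bounds, giving $\nu_L(R)=\Theta(1)$ and $\nu_L(\partial R)\le e^{-\xi L}$.

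The trade-off is this: your approach is conceptually cleaner and, as you note, potentially generalises beyond hyperbolic groups via amenability of the Poisson boundary --- indeed the paper itself advertises this as the likely route to a proof for all non-amenable groups. But the obstacle you flag is exactly why the paper does not take this route for its rigorous argument: promoting the a.e.\ exact-dimensionality of $\nu_\partial$ (Blach\`ere--Ha\"issinsky--Mathieu) to a uniform Frostman-type bound $\nu_\partial(N_\delta(\Gamma))\le C\delta^\beta$ for a \emph{specific} balanced cut $\Gamma$ requires regularity of $\nu_\partial$ that is not immediately available, and your Fubini-over-a-sweeping-family argument still needs to land on a $\Gamma$ that also satisfies the balancing condition~(a). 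The paper's entropy-plus-pigeonhole argument sidesteps this entirely: it never needs to know anything fine about $\nu_\partial$, only that the entropy density is positive and the Assouad dimension of $\partial G$ is finite, and the existence of a good separator is forced by averaging over shells rather than over boundary cuts.
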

    The proof idea is discussed before the theorem statement in App~\ref{app:proofbonanza2} and further details can be found there.  Some parts of our proof are a sketch but can be readily made rigorous. Once again, we emphasize that due to Theorem~\ref{thm:random_hyperbolic}, hyperbolic groups are rather generic, and this proves slow thermalization times for a {\it large} class of constrained dynamics. 
    
    We end by briefly noting that there is a relationship between the vanishing expansion of the heat kernel measure and the vanishing expansion of subsets of the boundary of the group.  In particular, equipping the boundary of a hyperbolic group with a ``hitting'' probability measure, one can show that the boundary has vanishing conductance (i.e. the boundary is amenable), and this can be mapped onto the existence of subsets of the Cayley graph that also have vanishing conductance.  This is the main idea that Fracyzk and van Limbeek use to prove their theorem, and it indicates that quantitatively characterizing the conductance of the boundary of Cayley graphs of finitely generated groups will provide an answer to our Conjecture~\ref{conj:weightedbenjamini}.  We leave this interesting question to the future.
	
	\ss{Non-hyperbolic groups and beyond}\label{sec:nongroups}

    We discussed the boundary depolarizing model for group dynamics, but a natural question is to ask whether this is a good model for subsystem dynamics.  For this, we need to know the ergodicity length for group dynamics.  For this, we will assume the initial state is a random product state, which with high probability has a number of $\tte$'s that is $O(L)$.  Call the number of $\tte's$ $\alpha L$; then, if the subsystem is of size $\alpha L/2$, then we can shuttle $\alpha L/2$ $\tte's$ in the subsystem and $\alpha L/2$ $\tte's$ next to the subsystem, and use the free relations to generate $w w^{-1}$ for any $w$ in the subsystem and $w^{-1}$ outside the subsystem.  This means that the ergodicity length of any groups dynamics is $O(L)$ with high probability.  In order to get a more interesting scaling we need to turn to semigroups (see Ref.~\cite{balasubramanian2023glassy}).
	
	Within group dynamics, our results mainly pertain to a subclass of non-amenable groups, namely hyperbolic groups.  There are two natural questions one could ask.  The first is whether the dynamics remain exponentially slow for non-amenable groups that are not hyperbolic.  Addressing this question will provide a complete proof of our Conjecture~\ref{conj:weightedbenjamini} for groups.  One simple example of such a group is $H \times \mathbb{Z}$, where $H$ is a hyperbolic group,.  However, the expansion of the Cayley graph of $H \times \mathbb{Z}$ restricted to a ball should still be dictated by that of $H$, and so the dynamics should remain exponentially slow.  
    
    A more complicated example of a non-hyperbolic and non-amenable group is $SL(3;\mathbb{Z})$, which is finitely generated and presented.  In particular, this group has an interesting property not shared by any of the models we have thus far studied.  Note that words which represent the identity element correspond to oriented and based loops on the Cayley graph which are homotopy equivalent to a point.  Therefore, group dynamics within the sector of length-$L$ words representing the identity element (without depolarizing noise) correspond to the dynamics of a fluctuating loop on the Cayley graph.  If the length of the loop is roughly maintained but it has large filling (or area), then it will take a long time to contract the loop to a point, see Ref.~\cite{balasubramanian2023glassy} for more discussion of this point.  This would imply that the diameter of the graph $\mcg_{\mch}$ (ignoring additional edges created by the boundary noise) would be large.  For the group $G \cong SL(3;\mathbb{Z})$, loops of length $L$ can have filling like $\sim \exp(L)$~\cite{young2013dehn} (i.e. following Ref.~\cite{balasubramanian2023glassy}, the Dehn function is $\sim \exp(L)$) and so the diameter of $\mcg_{\mch}$ can be exponentially large in $L$ without boundary depolarizing noise.  However, proving this estimate on the diameter requires that the loop length must remain $O(L)$ during its homotopy to a point, which is related to the linear scaling of a quantity called the filling length function.  It is believed that this function scales linearly in $L$ for $SL(3;\mathbb{Z})$, although a rigorous proof is not known to the authors' knowledge.  When one turns on boundary depolarization, we expect the dynamics to be slow, but it can be of type I or II.  We can rule out type I dynamics because the depolarizing noise can ``erase'' the current word in the system and rewrite a new word in $O(L^2)$ steps, which implies that the diameter of the Krlov graph is $\text{poly}(L)$ and not $\exp(L)$.  However, because $SL(3;\mathbb{Z})$ has a similar looking structure to hyperbolic groups, we still expect it to be type II.  There are certainly other exotic groups that are non-amenable but do not look like hyperbolic groups, and it would be interesting to characterize their boundaries and thus compute their expansion.

    The second question is whether there are amenable groups that still exhibit exponentially slow dynamics when coupled to a bath.  It may be possible that groups of polynomial growth rate have this property, but that our bounds on the conductance are not strong enough to prove this.  We believe this is unlikely and conjecture that the dynamics is polynomially slow even with a one-sided infinite temperature bath.  A more interesting class of groups are those which have return probabilities scaling superpolynomially but subexponentially in $L$.  One such example is the Baumslag-Solitar group $BS(1,2)$ which has the presentation
    \be BS(n,m) = \lan \tta,\, \ttb\, | \, \texttt{b} \texttt{a}^m \texttt{b}^{-1} = \texttt{a}^n\ran.\ee
    The Krylov sector corresponding to the identity sector has size $\sim \exp(- \alpha L^{1/3}) |\mch|$~\cite{woess2000random}, indicating that this group is neither polynomially nor exponentially fragmented (based on the results in App.~\ref{app:proofbonanza1}, fragile fragmentation is not expected to affect the asymptotic scaling of the size of the identity sector).  Furthermore, this group has as similar property to $SL(3;\mathbb{Z})$ in that the intra-sector dynamics (without the boundary depolarizing noise) are slow due to large diameters.  However, in this case, we can prove that loops in the Cayley graph have large fillings {\it and} do not change in length very much during the homotopy, providing a rigorous proof that the configuration graph in the identity sector has a diameter $\sim \exp(L)$.  The Cayley graph of $BS(1,2)$ is isomorphic to $\mathbb{T}_3 \times \mathbb{R}$, and it is possible that when restricted to a ball $B(L)$, the expansion still is exponentially small in $L$.  Thus, this would provide an example of an {\it amenable} group which still exhibits dynamics of type II (when boundary depolarizing noise is added, the diameter of the Krylov graph is the diameter of the Cayley graph which no longer scales like $\sim \exp(L)$, ruling out type I dynamics).  This would also be an interesting direction for future study.  In Sec.~\ref{sss:counterexample}, we also discuss an example not based on a group which exhibits polynomial fragmentation but exponentially slow dynamics.  This suggests a larger and more interesting landscape of systems that exhibit exponentially slow dynamics. 
        
    Finally, we ask whether our results can be extended beyond group-based constraints.  Ref.~\cite{balasubramanian2023glassy} showed that the constraints of any classically-fragmented model of dynamics in 1D can be recast as {\it semigroup dynamics}.  A semigroup has the same structure as a group except that it in general lacks inverses, and consequently the only difference between semigroup and group dynamics is that the former does not possess basis states associated with the inverses of generators. The Krylov graphs of generic classically fragmented models are thus obtained from the ``Cayley graphs'' of semigroups. These graphs lack some of the nice features enjoyed by the Cayley graphs of groups: their edges are in general directed (due to the lack of inverses) and are not transitive.  However, since our dynamics is reversible, the Krylov graph in this case would simply correspond to an undirected version of the semigroup Cayley graph.  Furthermore, Fraczyk and van Limbeek's result, as well as Benjamini's conjecture, does not require transitivity of the graph, and should still be applicable to semigroups.  One complication is that for semigroup dynamics, it is possible for the maximum degree of the Krylov graph to scale with $L$.  As an example, consider a model with onsite Hilbert space ${\rm span}(\k\upa,\k\doa,\k{\star})$, with the constraints allow for all 2-site transitions involving at least one $\star$: $\k{x \star} \lra \k{y \star}, \, \k{\star x}\lra \k{\star y}$, for all $x,y\in \{\upa,\doa,\star\}$. Then the sector where at least one $\star$ is present is connected to the $2^L$ frozen sectors that lack any $\star$s, so $\max_{v \in \mcg_\mck} {\rm deg}(v) = 2^L$.  In this example however, by additionally regrouping these $2^L$ frozen sectors into a single sector, one can construct a new Krylov graph where the degree remains bounded, and study Benjamini's conjecture for this.  We additionally note that semigroup dynamics is necessary in order to get type 0 dynamics; with a group constraint, a maximally depolarizing boundary can connect any two words together, thus implying that type 0 dynamics is not possible.
    
	Thus, our conjecture about heat kernel expanders (Conjecture~\ref{conj:weightedbenjamini}) holds equally well for semigroups, provided the maximal degree of the Cayley graph remains bounded (or a reasonable ``regrouping'' procedure exists if this is not the case). As such, upon accounting for semigroups this conjecture would also imply our exponential fragmentation conjecture.

	\section{Models with fragmentation transitions} \label{sec:fragtrans}

	So far, we have demonstrated in a set of concrete examples that an exponentially fragmented Hilbert space under $\dyn$ leads to exponentially slow thermalization. In most examples discussed above 
    (except for spin-1 breakdown model), the origin of exponential fragmentation and hence slow dynamics can be traced back to strong HSF, where $\mathcal{K}_{\rm max}$ is an exponentially small fraction of the global symmetry sector to which it belongs. In this section, we discuss models in which the degree of HSF---now as measured {\it within a fixed symmetry sector}---transitions from strong to weak as the global symmetry charge is varied (e.g. by tuning the charge/particle density). According to our definition in Sec.~\ref{sec:setup}, these models are polynomially fragmented\footnote{Notice that in these models, the total number of Krylov sectors is $N_{\mathcal K} = O({\rm exp}(L))$, and yet $|\mathcal{K}_{\rm max}|/|\mathcal{H}| = \Omega(1/{\rm poly}(L))$. This is because the total number of global symmetry sectors is only $O({\rm poly}(L))$, and the fact that there exist weakly fragmented symmetry sectors necessarily implies that $\mathcal{K}_{\rm max}$ cannot be an exponentially small fraction of the entire Hilbert space.}, and as we will show explicitly, they thermalize on timescales that are only polynomial in system sizes when coupled to a bath.
	
	\subsection{Particle-conserving East model}

	The simplest example exhibiting a fragmentation transition is the particle-conserving East model~\cite{10.21468/SciPostPhys.15.3.093, PhysRevB.108.144308}. Consider a one-dimensional system of hard-core particles with occupation number $n_i=0, 1$ on each site. The dynamics conserve the total number of particles, yet particle hoppings are subject to an East-constraint, which, in its simplest version (range $r=1$), requires that a particle can hop to the right only when the site to its left is occupied, as illustrated in Fig.\ref{fig:East_Krylov}(a). It is easy to see that the Hilbert space of this model further fractures within each U(1) symmetry sector: configurations with different leftmost particle positions remain disconnected from one another under the dynamics. Intuitively, one may anticipate that the degree of fragmentation should be different for different charge sectors, since at high fillings the East constraint essentially becomes ineffective and the system should behave like an unconstrained charge-conserving system where the Hilbert space is fully connected within the symmetry sector. This turns out to be exactly the case, as shown in Ref.~\cite{PhysRevB.108.144308}, that the system transitions from being strongly fragmented and nonergodic at low fillings $n<n_c=0.5$, to weakly fragmented and thermalizing diffusively at $n>n_c$, with the two phases separated by a continuous phase transition at $n_c$.

	\begin{figure}[!t]
		\centering
		\includegraphics[width=0.5\textwidth]{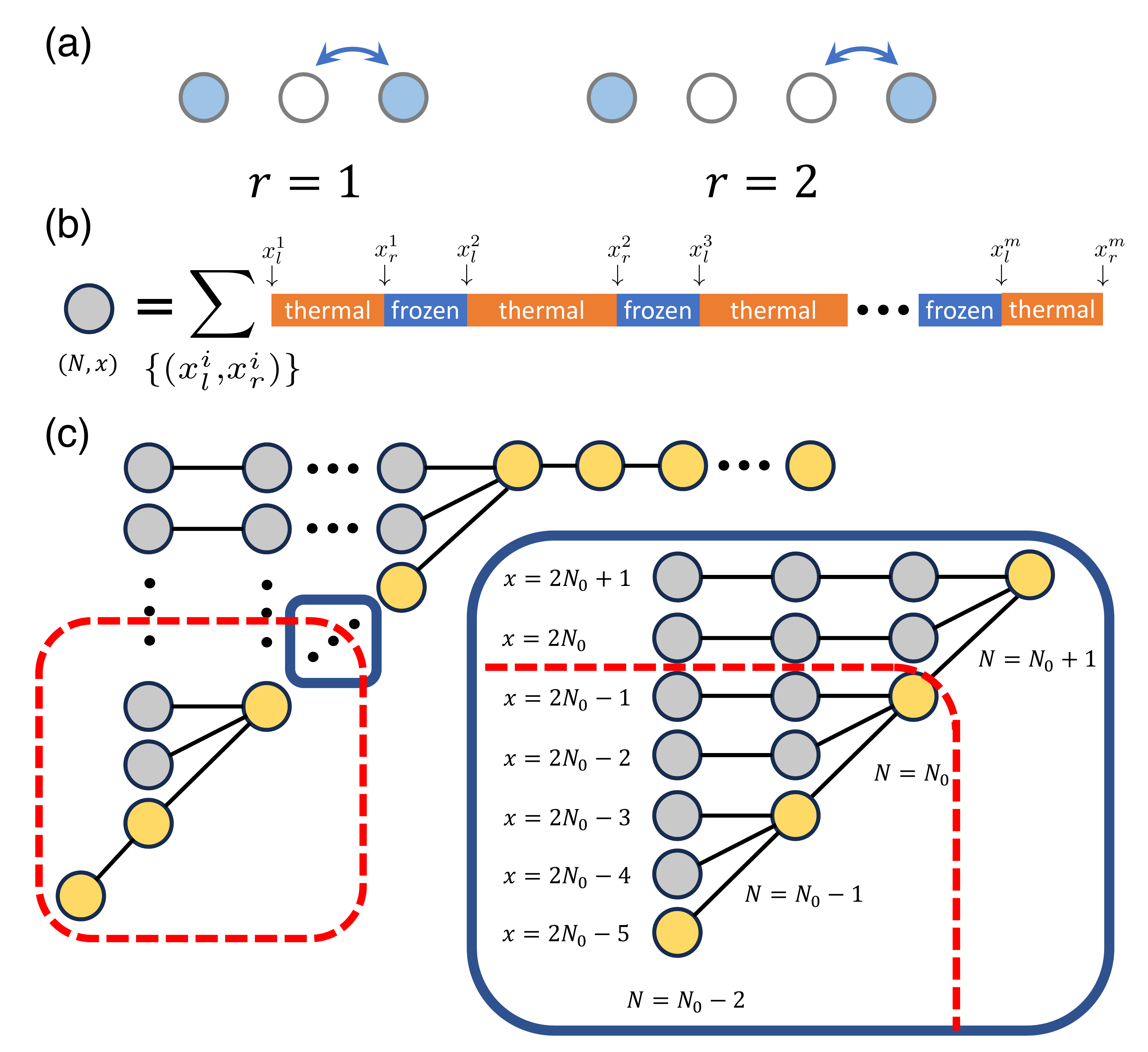}
		\caption{(a) Allowed local dynamical moves under particle-conserving East constraints. A particle is allowed to move only when there is at least 1 occupied site to its left within range $r$. In this work, we consider the simple case with $r=1$. (b) An illustration of the packed sector with a fixed total charge $N=N_0$ and rightmost point $x$ that the particles are able to reach.
			Each orange segment represents a thermal region extended to its maximal length with critical density $\frac{1}{2}$, and each blue segment represents an empty (frozen) region. The packed sector consists of all Krylov sectors (labelled by $\{(x^i_l, x^i_r)\}$) with the same $(N,x)$, satisfying $\sum_{i=1}^m (x_r^i-x_l^i+2)=2N,~x_r^m=x$. (c) The Krylov graph for the particle-conserving East model. Each circle represents a packed sector as illustrated in (b), and the circles are connected with bonds due to the thermal bath coupled to the boundary. The figure on the right is a zoom-in of a region on the left. The top right region of the graph where the vertices are connected horizontally corresponds to $N\geq \frac{L}{2}$, such that $x=L$ and cannot further increase. }
		\label{fig:East_Krylov}
	\end{figure}
	
	\sss{Real-space picture}
	
	The dynamics of this model also admits a simple real-space picture. For a given configuration with an average particle density below $n_c$, there exists regions whose local charge densities are above $n_c$ and hence locally thermal. Particles in such regions are able to spread out up to a maximal distance at which point the local density drops to $n_c$, and the thermal region cannot grow any further. Therefore, under $\dyn$ the system contains locally thermal regions separated by frozen regions in real space, and the frozen regions constitute a finite fraction of the full system when $n<n_c$. As $n$ increases past $n_c$, the thermal regions are able to merge and the fraction of the frozen sites vanishes.

	Now consider coupling the left endpoint to a bath (particle reservoir) that randomly adds or removes one particle at each time. More precisely, since the particle on the leftmost site is immobile, we instead couple the bath to the second leftmost site while keeping the leftmost site occupied at all times. With the real-space picture described above in mind, it is easy to anticipate the dynamics under a bath. Starting from a random initial particle configuration with $n<n_c$, a local thermal region is seeded in the vicinity of the bath, which expands to larger and larger distances as more particles are pumped into the system at the boundary. As the dynamics inside the thermal region is fast, one expects that the spreading of the thermal region is also fast, until the entire system thermalizes.
	
	\begin{figure}[!t]
		\centering
		\includegraphics[width=0.5\textwidth]{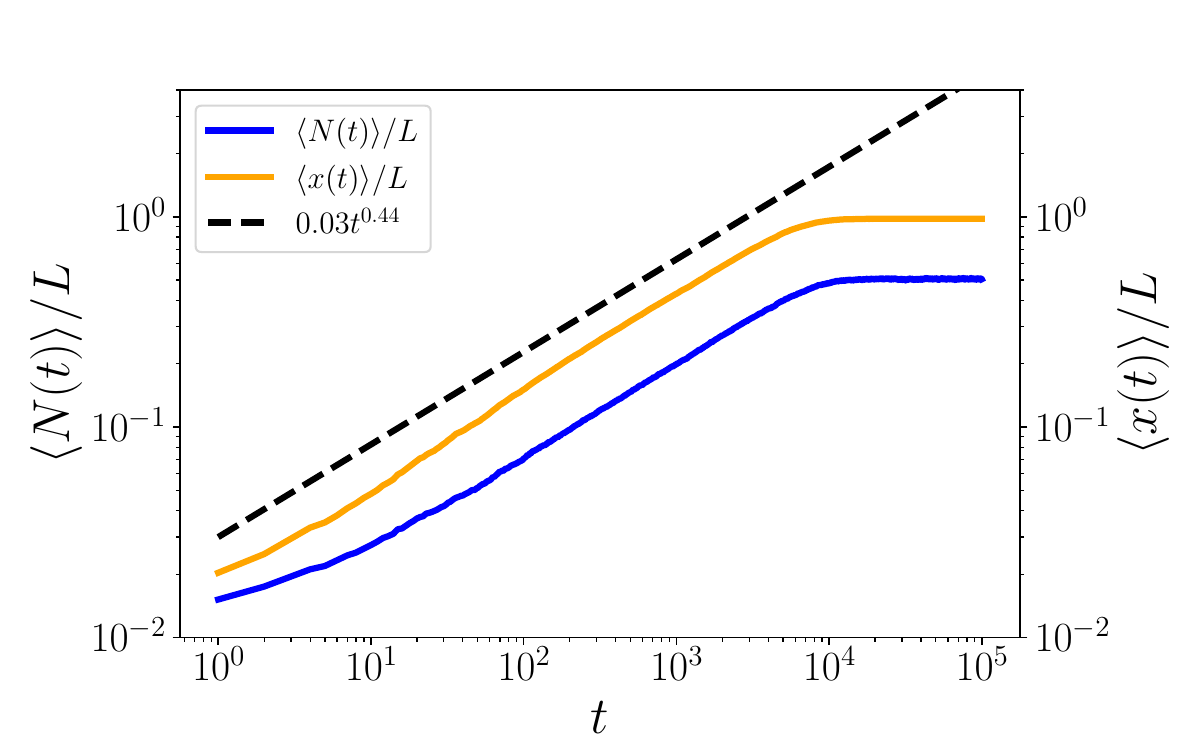}
		\caption{Numerical results for the particle-conserving East model. The blue line represents the average total particle number and the yellow line represents the size of the maximal region that the particles can spread out at a given time. We find that these two quantities behave in a similar fashion, both showing a polynomial-in-time growth, which implies fast relaxation.}
		\label{fig:East_Dynamics_numerics}
	\end{figure}
	

	To corroborate the above picture, we numerically study the relaxation of total charge starting from an initial configuration $|\bullet \circ \circ \circ \circ \cdots\rangle$, where $\bullet$ and $\circ$ denotes an occupied and empty site, respectively. In Fig.\ref{fig:East_Dynamics_numerics}, we show the average total particle number $\langle N(t) \rangle$, and the average size of the thermal region $\langle x(t) \rangle$ as a function of time. The results clearly demonstrate that both quantities increase polynomially in time with identical exponents, indicating fast relaxation.

	\sss{Krylov graph}
	
	While the above real-space picture is intuitive, it is possible to understand the thermalization process in terms of the Krylov graph. An interesting property of the particle-conserving East model is that the maximal range to which an active region can grow is completely determined by the number of particles in that region~\cite{PhysRevB.108.144308}. To see this, consider the following particle configuration
	\begin{equation}
		\underbrace{\bullet \bullet \bullet \cdots \bullet}_N \circ \circ \cdots \circ.
	\end{equation}
	The maximum distance that the particles can spread to the right is given by $x=2N-1$, which corresponds to the most dilute configuration allowed by the constraint
	\begin{equation}
		\underbrace{\bullet \circ \bullet \circ \bullet \circ \cdots \bullet}_{2N-1} \circ \circ \circ \circ. 
	\end{equation}
	At this point, the particle density of the thermal region self-tunes to the critical value $n_c=\frac{1}{2}$ and ceases to grow any further. In general, a particle configuration will contain spatially disconnected thermal regions separated by empty (frozen) regions. 
	Therefore, each Krylov sector can be uniquely labelled by a collection of coordinates $\{(x^i_l, x^i_r)\}$, which specifies the boundaries of each thermal region in the system when expanded to the maximal length [see Fig.~\ref{fig:East_Krylov}(b)].
	
	Upon coupling to a bath at the boundary, different Krylov sectors become connected. Rather than directly constructing the Krylov graph, it turns out to be useful to coarse-grain further and construct a packed sector consisting of all Krylov sectors with the same particle number $N$ and coordinate of the rightmost site $x$ that the particles can reach: $(N, x)$. The reason is that, while the total particle number obviously changes by $\pm 1$ per time step, the farthest distance that the particles can expand to the right serves as another useful ``quantum number" to organize various Krylov sectors with the same total charge. Consider a simple example of three particles: $\bullet \bullet \bullet \circ \circ \cdots \circ$. In this case, the rightmost position that the particles are able to reach is $x=5$. Suppose the bath removes one particle from the second site on the left, such that $(N=3, x=5) \rightarrow (N'=2, x')$. Now the value of $x'$ depends on the specific particle configuration within the Krylov sector when the system is put in contact with the bath. For instance,
	\begin{enumerate}
		\item $\bullet \bullet \bullet \circ \circ \cdots \circ \xrightarrow[]{{\rm bath}} \bullet \circ \bullet \circ \circ \cdots \circ, \quad (N'=2, x'=x-2)$
		
		\item $\bullet \bullet  \circ \bullet \circ \cdots \circ \xrightarrow[]{{\rm bath}} \bullet \circ \circ \bullet \circ \cdots \circ, \quad (N'=2, x'=x-1)$
		
		\item $\bullet \bullet \circ  \circ \bullet \circ \cdots \circ \xrightarrow[]{{\rm bath}} \bullet \circ \circ \circ \bullet \circ \cdots \circ, \quad (N'=2, x'=x)$.
	\end{enumerate}
	One can analyze the reverse process where the bath adds one particle in a similar way. We thus group the Krylov sectors according to $(N, x)$, and the coarse-grained Krylov graph is depicted in Fig.~\ref{fig:East_Krylov}(c). We find that the Krylov graph again has a self-similar tree-like structure. 
	In Appendix~\ref{app:east}, we calculate the expansion $\Phi(G_{\mathcal{K}})$ for the cut shown in Fig.~\ref{fig:East_Krylov}(c), which yields $\Phi(G_{\mathcal{K}}) = \frac{1}{2(2N-1)}$, only polynomially small in the total charge (and system size).

	\subsection{Range-four dipole-conserving model}
	\label{ss:dipole_four}
	
	Another class of models exhibiting a fragmentation transition are dipole-conserving (fracton) systems with local dynamical moves involving $r>3$ consecutive sites~\cite{PhysRevB.101.214205, PhysRevB.107.045137}. Unlike the range-three fracton model discussed in Sec.~\ref{ss:dipole} which does not have a weakly fragmented phase and suffers from real-space bottleneck, fractonic systems with longer-range dynamical moves do have a fragmentation transition as the charge density is varied, similarly to the U(1)-symmetric East model. We consider $r=4$ in this subsection. The scenario of the fragmentation transition very much resembles that of the East model, and previous studies have found that the transitions in these two classes of models seem to belong to the same universality class~\cite{PhysRevB.108.144308}. In particular, the existence of local thermal regions that are able to grow until the density self-tunes to the critical point also applies to fractonic systems~\cite{PhysRevB.101.214205, PhysRevB.107.045137}. We thus expect that charge relaxation is also fast in this case when coupled to a bath.
	
	\begin{figure}[!t]
		\centering
		\includegraphics[width=0.5\textwidth]{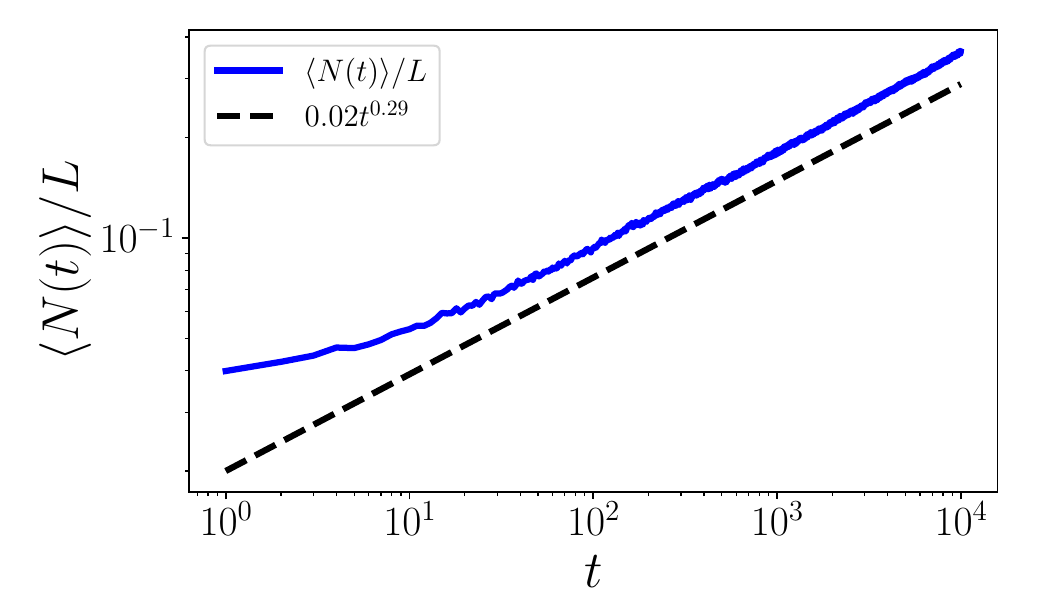}
		\caption{Numerical results for the range-four dipole-conserving model. The average total particle number increases polynomially in time, indicating fast relaxation. The behavior is to be contrasted with the range-three dipole-conserving model, where the total charge (particle number) relaxes logarithmally slowly (Fig.~\ref{fig:Dipole_conserving_r=3_Dynamics}).}
		\label{fig:Dipole_conserving_r=4_Dynamics}
	\end{figure}
	

	In Fig.~\ref{fig:Dipole_conserving_r=4_Dynamics}, we show numerical results for the average total particle number $\langle N(t)\rangle$ as a function of time, starting from an initial state with no particle, similarly to the setup considered in Sec.~\ref{ss:dipole}. In contrast to the logarithmically slow charge relaxation observed in range-three fracton systems, here the results clearly demonstrate a fast polynomial-in-time relaxation dynamics.

	\section{Discussion} \label{sec:disc}
	
	In this work we have considered the thermalization dynamics of constrained spin chains connected to a thermalizing bath at one end. Our study has focused on constraints which strongly fragment Hilbert space (in the absence of the bath coupling), with the largest sector occupying an exponentially small fraction of Hilbert space. We have provided strong evidence for the conjecture that in the presence of the thermalizing bath, the thermalization time $\tth$ is exponentially long in system size, regardless of the details of the dynamics or specific choice of constraint. This evidence came in the form of proving exponentially long bounds on $\tth$ in a large family of models, and connecting this conjecture to a related conjecture in the mathematics of expander graphs. 
	
	In what follows we briefly describe a few additional lines of inquiry that would be interesting to pursue in future work. 

	\sss{Slow thermalization without exponential fragmentation}\label{sss:counterexample}
	
	While our conjecture is that exponentially strong HSF is a sufficient condition for exponentially slow thermalization, it is by no means necessary, and the converse of our conjecture is not true. An explicit example illustrating this fact 
	 may be constructed by adding a certain type of `dynamical impurity' to the spin-1 breakdown model. 	
	One does this by augmenting the Hilbert space of the breakdown model by a state $\k{A}$, with the dynamics conserving the number of $A$ particles $\sum_i \proj{A}_i$. The $A$ particles can move to the left only by absorbing particles in the breakdown model, so that the only allowed matrix elements that move an $A$ from site $i$ to site $i+1$ are of the form $\kb{n,A}{A,0}_{i,i+1}$ for $n\in \{1,2\}$ (together with the Hermitian conjugate thereof). We furthermore allow $A$s to modify the breakdown-sector degrees of freedom to their right, so that they dynamics also includes matrix elements $\kb{A,m}{A,m'}_{i,i+1}$ with $m,m'\in\{0,1,2\}$. 
	
	With this construction, it is easy to see that for an initial state that includes a pattern like $\k{A0^lA0^{L-l-2}}$, it will take a time $\sim 2^l$ for the left $A$ to provide the right $A$ with the `fuel' it needs to move. This argument implies an exponentially long thermalization time reflected in the relaxation of the $A$ density $\r_A$, which in more generic states we expect to relax with a diffusion constant $D \sim 2^{-1/\rho_A}$. 

    Additionally, within the context of group-based dynamics, there could be other candidates for non-exponentially fragmented dynamics that are exponentially slow.  One such example could be polynomially fragmented groups due to intra-sector connectivity, though as remarked in Sec.~\ref{sec:nongroups} this is unlikely.  Another example is a certain kind of Baumslag-Solitar group which is neither polynomially nor exponentially fragmented but whose Cayley graph has a branching structure due to the exponential growth rate of the group.

	\sss{Other types of open system dynamics}
	
	When the bath couples to a spatially local collection of $O(1)$ sites, exponentially fragmented dynamics retains memory of its initial conditions for times exponetially long in $L$. If {\it all} sites couple to a depolarizing bath, or if at each time step a depolarizing bath couples to a random fraction of the sites, the story is drastically different: in these cases, if the strength of the coupling to the bath is $\l$ the thermalization time can be shown to satisfy $\tth = O(\frac1\l \log(L))$, which follows as a consequence of the results in \cite{aharonov1996limitations}. A bath coupled at all sites thus heats the system doubly-exponentially faster than a local bath. It could thus be interesting to explore couplings intermediate between strictly local and fully global. As an example, when the coupling occurs near the boundary and is taken to decay exponentially in space, we expect that models in all classes will have still $\tth = O(\exp(L))$, but this scaling will change when the coupling becomes more long range. Future work could also explore the influence of bath non-Markovianity on $\tth$.

	\sss{Towards quantum dynamics}
	
	This paper has mostly focused on classical dynamics or random unitary circuit dynamics (which reduces to the former for observables linear in the density matrix after circuit averaging).  An important question is whether Hamiltonian quantum dynamics with energy conservation can yield qualitatively different dynamical properties.  For instance, it could be possible that $\dyn$ thermalizes {\it faster} in the quantum setting a) due to constructive quantum interference (which is often difficult to achieve in a local system, see Refs.~\cite{childs2003exponential, balasubramanian2023exponential} for examples in non-local systems), or b) if $\dyn$ has additional time-dependence that steers the dynamics in $\mch$ (in which case the time dependence of $\dyn$ would need to depend on the initial state).  In the absence of these kinds of mechanisms, we conjecture that the dynamics in the Hamiltonian setting is similarly slow for strongly fragmented systems. In fact, for systems of type I, where the slowness originates from large Krylov graph diameters, this was explicitly shown in Ref.~\cite{balasubramanian2023glassy}.
	
	\sss{Higher dimensions}
	
	An obvious question is the status of our conjecture in higher dimensions. It is easy to construct examples of strongly fragmented models in $d>1$ which are in class 0 (persistently non-ergodic; e.g. $d>1$ PXP models) or class I (exponentially large Krylov-graph diameters; e.g. $d>1$ models with exponentially modulated symmetries). Understanding to what extent models in class II (Krylov graphs with small diameters but strong bottlenecks) exist in $d>1$ is an interesting question.\footnote{In this situation, we are looking for intrinsically 2D examples and not stacks of 1D models.} Since for $d>1$ the number of sites that couple to the bath is extensive, the nodes of the Krylov graph will generically have thermodynamically large degrees. New techniques may thus be needed to address this question in generality. Studying the models of \cite{lehmann2023fragmentation} could provide a fruitful starting point.  It would also be interesting to study relaxation in the presence of an infinite bath for the two dimensional group loop models introduced in Ref.~\cite{balasubramanian2023glassy}, of which several subclasses where studied in Refs.~\cite{stephen2022ergodicity, Balasubramanian2023, stahl2023topologically, khudorozhkov2024robust}.

	\section*{Acknowledgements} 
	We thank Ian Agol, Itai Benjamini, Mikolaj Fraczyk, Rahul Nandkishore, Elia Portnoy, Pablo Sala, Alistair Sinclair, Ewin Tang, and Wouter van Limbeek for helpful discussions and correspondence, Sajant Anand and Ruihua Fan for helpful discussions about the spin-1 breakdown model at the early stages of this work, and Alexey Khudorozhkov for discussions and collaboration on related work. 
	This research is supported in part by the National Science Foundation under Grant No. DMR-2219735 (Y. H. and X. C.), Grant No. PHY-2325080 and a grant from the Simons Foundation (MP-SIP-00001553, AWH) (S. B.), Grant No. 12375027 from the National Natural Science Foundation of China and a Peking University startup fund (Z.-C. Y.), and a Miller research fellowship (E.L.). Numerical simulations were performed on the High-performance Computing Platform of Peking University.

	\appendix
    
    \section{Local dynamics versus maximally depolarizing dynamics}\label{app:local}

    In the main text, we introduced and studied a maximally depolarizing model of dynamics.  In this appendix, we will show that the thermalization times of any generic local dynamics with constraints are strictly longer than that of the maximally depolarizing model.  As in the main text, we define the probability distribution
	\begin{equation}
		p_{\psi}(t) = \mathbb{E}_{U_t}[\mel{\psi}{\rho(t)}{\psi}]
	\end{equation}
	where $\ket{\psi}$ is a basis state and we assume that $\rho(t) = U_t^{\dagger} \rho(0) U_t$ where $U_t$ is a depth-$t$ circuit built from local gates (which act on $l$ qubits) satisfying the Hilbert space constraint.  Under the Haar average (assuming no correlations between the gates), $\rho(t)$ becomes diagonal and the probability vector $p(t)$ evolves under time as 
	\begin{equation}
		p(t+1) = \mathcal{M}_1 \mathcal{M}_2 \mathcal{M}_3 \cdots \mathcal{M}_{l}\mathcal{N}_L \, p(t).
	\end{equation}
	Here, $\mathcal{M}_i$ denotes the tensor product of a sequence of $l$-site transition matrices globally shifted by $i$ sites which respect the Hilbert space constraint, and
    $\mathcal{N}_L$ denotes the transition matrix corresponding to the depolarizing channel applied to the last site.  Therefore, $\mathcal{M}_1 \mathcal{M}_2 \mathcal{M}_3 \cdots \mathcal{M}_{l}$ can be considered to be part of a brickwork random unitary circuit which we call a ``single period'' of the circuit; under the assumption that $\mathcal{N}$ is applied after every $k$ periods, we may write
	\begin{equation}
		p(k, t+1) = (\mathcal{M}_1 \mathcal{M}_2 \mathcal{M}_3 \cdots \mathcal{M}_{l})^k\mathcal{N}_L \, p(k, t)
	\end{equation}
	and in the limit that $k \to \infty$, this gives us the maximally depolarizing dynamics $p(\infty, t+1) = \Pi\, \mathcal{N}_L \, p(\infty, t)$ where $\Pi$ is a projector onto the steady state in all Krylov sectors.  If the Markov chain is ergodic (which is true so long as the dynamics is not class 0), the stationary distribution of this Markov chain is uniform over all states, i.e. $p_{\psi}(k,\infty) = \frac{1}{d^L}$ regardless of $k$, where $d$ is the local Hilbert space dimension.  

    For group-based dynamics (see Ref.~\cite{Balasubramanian2023}), it is suggestive to label the Krylov sectors by an element of the group $g \in G$.  However, due to the phenomenon of fragile fragmentation, this is not true: it may not possible traverse between two words $w$ and $w'$ corresponding to the same group element because $w$ may need to first grow beyond the size of the system $L$ before it can shrink to become $w'$.  This kind of bottleneck causes the sector of words $\mck_g = \{w: |w| = L; \, w \sim g\}$ to further shatter into many sectors: see Ref.~\cite{Balasubramanian2023} for more discussion of this phenomenon.  Therefore, in addition to proving that local dynamics is strictly slower than the maximally depolarizing model, we also want to show that additionally coarse-graining the conductance to account for fragile fragmentation does not quantitatively affect our results.

    First we show that for any fragmented dynamics (which need not obey a group structure), local dynamics is strictly slower than the maximally depolarizing dynamics:

    \begin{lemma}
		Let $\mcg_{\mathrm{loc}}$ denote the configuration graph of local depolarizing dynamics, i.e. one for which the transition matrix is given by
        \begin{equation}
        \mcm_{\mathrm{loc}} = \mathcal{M}_1 \mathcal{M}_2 \mathcal{M}_3 \cdots \mathcal{M}_{l_R}\mathcal{N}_L
        \end{equation}
        for $\mathcal{M}_i$ a bistochastic matrix with the uniform distribution as a left and right eigenstate with eigenvalue $1$. Furthermore let $\mcg_{{\rm loc},\mck}$ denote the graph $\mcg_{\rm loc}$ which is then coarse-grained according to the same procedure that coarse-grains $\mcg$ to $\mcg_\mck$ for maximally depolarizing dynamics. Then, $\Phi(\mcg_{\mathrm{loc}, \mck})$ satisfies 
        \begin{equation}
        \Phi(\mcg_{\mathrm{loc}, \mck}) = \Phi(\mcg_{\mck})
        \end{equation}
        with $\Phi(\mcg_{\mck})$ denoting the coarse-grained graph conductance of the maximally depolarizing dynamics.  
    \end{lemma}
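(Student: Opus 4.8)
\emph{Proof proposal.} The plan is to show that the coarse-grained conductance sees only the boundary channel $\mcn_L$ and the Krylov-sector partition of $\mch$, both of which are identical in the local and the maximally depolarizing models; the internal structure of the bulk part --- whether it is the full intra-sector depolarizer $\Pi$ or the generic sector-preserving bistochastic channel $B \equiv \mcm_1\mcm_2\cdots\mcm_{l_R}$ --- drops out. First I would record that $B$, $\Pi$ and $\mcn_L$ are all bistochastic, hence $\mcm = \Pi\,\mcn_L$ and $\mcm_{\mathrm{loc}} = B\,\mcn_L$ are bistochastic; assuming the dynamics is ergodic (automatic outside class 0), each chain has the uniform distribution $\mu(\psi) = d^{-L}$ as its unique stationary state, so $\mu(R) = |R|/d^L$ in both models and the two coarse-grained minimizations range over the same family of admissible cuts (unions of sectors with $\mu \le 1/2$). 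Note also that $\Pi$ and every layer $\mcm_i$ preserve Krylov sectors --- the $\mcm_i$ because each local gate respects the Hilbert space constraint that \emph{defines} the sectors, and $\Pi$ by construction --- so the local model has exactly the same sectors as the maximally depolarizing one and ``the same coarse-graining procedure'' makes sense.

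Next I would reduce the claim to an identity for inter-sector currents. By definition $\Phi(\mcg_{\mck})$ and $\Phi(\mcg_{\mathrm{loc},\mck})$ are the fine-grained conductances restricted to cuts $R_\mck = \bigcup_{\a\in S}\mck_\a$ that are unions of sectors; for such a cut the numerator of $\Phi(R_\mck)$ equals $d^{-L}\sum_{\a\in S}\sum_{\b\notin S} J(\a\to\b)$ with $J(\a\to\b) \equiv \sum_{\psi\in\mck_\a,\,\psi'\in\mck_\b} T(\psi\to\psi')$ the unnormalized one-step current between sectors and $T$ the transition probability. It thus suffices to prove $J_{\mathrm{loc}}(\a\to\b) = J_{\mathrm{mdp}}(\a\to\b)$ for all $\a\neq\b$. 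Writing $T(\psi\to\psi') = \sum_\chi \mcn_L(\psi\to\chi)\,X(\chi\to\psi')$ with $X\in\{\Pi,B\}$, the factor $X(\chi\to\psi')$ forces $\chi$ and $\psi'$ into the same sector, so only $\chi\in\mck_\b$ contributes to $J(\a\to\b)$; and since $X$ is stochastic and sector-preserving, $\sum_{\psi'\in\mck_\b}X(\chi\to\psi') = 1$ for every $\chi\in\mck_\b$. Hence
\[
J_X(\a\to\b) \;=\; \sum_{\psi\in\mck_\a}\ \sum_{\chi\in\mck_\b} \mcn_L(\psi\to\chi),
\]
which is independent of $X$. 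Therefore $\Phi(R_\mck)$ agrees in the two models for every admissible $R_\mck$, and the minima agree: $\Phi(\mcg_{\mathrm{loc},\mck}) = \Phi(\mcg_{\mck})$. (The argument is symmetric in the ordering of $\mcn_L$ and $X$: with the convention $\mcm = \mcn_L\,X$ one replaces the row sum of $X$ by its column sum, again giving $1$.) Combined with the trivial bound $\Phi(\mcg_{\mathrm{loc}})\le\Phi(\mcg_{\mathrm{loc},\mck})$ and Cheeger's inequality \eqref{cheeger}, this yields $\tth^{\mathrm{loc}}\ge C/\Phi(\mcg_{\mck})-1$, i.e. local dynamics thermalizes at least as slowly as the maximally depolarizing model.

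The computation is short, so the real work is in pinning down the definitions. The point to articulate carefully is \emph{why} the intra-sector mixing rate of $B$ is irrelevant here even though $B$ may mix very slowly within a sector: $\Phi(\mcg_{\mathrm{loc},\mck})$ is by construction the fine conductance minimized only over cuts that do not split any sector, and across such a cut the bulk channel merely reshuffles probability inside each sector without moving any net mass between sectors. (The slow intra-sector mixing of $B$ \emph{does} lower the un-coarse-grained $\Phi(\mcg_{\mathrm{loc}})$, which is exactly why $\Phi(\mcg_{\mathrm{loc}})\le\Phi(\mcg_{\mck})$ can be strict --- the sense in which local dynamics is strictly slower.) The only genuine hypotheses used are ergodicity (for uniqueness of the uniform stationary state) and the fact that the local gates are constrained by precisely the rules defining the Krylov sectors; in particular no lumpability of the coarse-grained chain is needed, since $\Phi(\mcg_\mck)$ is defined directly through the fine chain. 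I do not anticipate a serious obstacle; if anything, the subtlety is making sure the two models genuinely share the same sector partition and the same admissible-cut family before invoking the current identity.
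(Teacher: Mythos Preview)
Your proposal is correct and essentially the same as the paper's proof. The paper packages the key step slightly differently --- it writes the numerator as $\langle \pi(R_S)|\mcm|\pi(R_S^c)\rangle$ with $|\pi(R_S)\rangle$ the uniform vector on $R_S$, and then observes that $\langle\pi(R_S)|$ is a left eigenvector (with eigenvalue $1$) of each sector-preserving bistochastic layer $\mcm_i$ and of $\Pi$, so both bulk channels collapse to the identity on the left and only $\langle\pi(R_S)|\mcn_L|\pi(R_S^c)\rangle$ survives --- but this is exactly your ``stochastic plus sector-preserving $\Rightarrow \sum_{\psi'\in\mck_\b}X(\chi\to\psi')=1$'' step, written in bra-ket form.
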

    Thus, if the thermalization time of the coarse-grained maximally depolarizing dynamics is $\geq \exp(\alpha L)$, then the thermalization time of the local depolarizing dynamics is also $\geq \exp(\alpha L)$, following from the fact that $\Phi(\mcg_{\text{loc}}) \leq \Phi(\mcg_{\mathrm{loc}, \mck}) = \Phi(\mcg_{\mck}) $.
    \begin{proof}
        
        Recall that the graph conductance $\Phi(\mcg)$ for weighted graph $\mcg$ with stationary distribution $\mu$ and transition matrix $\mcm$ is defined to be
		\begin{equation}
			\Phi(\mcg) = \min_{R \subset G:\, \mu(R) \leq 1/2} \frac{\sum_{\psi \in R, \psi' \in R^c} \mu(\psi) \mel{\psi}{\mcm}{\psi'}}{\mu(R)}.
		\end{equation}
        To prove the inequality, first note that the stationary distribution of both the local and non-local models are uniform.  Since the proposition concerns the coarse-grained conductance, we only minimize over sets $R_S$ which are expressible as $R_S = \bigcup_{s \in S} \mck_s$ for some subset $S$ of the set of all Krylov sectors $K$.  Denote by $\ket{\pi(R)}$ the uniform superposition over states in $R$.  Then, we can write
        \begin{equation}
			\Phi(\mcg_{\mck}) = \min_{S \subset K:\, \mu(R_S) \leq 1/2} \sqrt{\frac{\mu(R_S^c)}{\mu(R_S)}} \mel{\pi(R_S)}{\mcm}{\pi(R_S^c)}.
		\end{equation}
        For the maximally depolarizing dynamics, we have $\mcm = \Pi \, \mathcal{N}_L$ where $\Pi = \mcm_1 \cdots \mcm_{l_R}$; since $\bra{\pi(R_S)} \Pi = \bra{\pi(R_S)}$ for any $S$ (which follows from a uniform superposition of states in a Krylov sector being a left steady state of $\Pi$), $\mel{\pi(R_S)}{\mcm}{\pi(R_S^c)} = \mel{\pi(R_S)}{\mathcal{N}_L}{\pi(R_S^c)}$.  For the local depolarizing dynamics,
		\begin{align}
			\mel{\pi(R_S)}{\mcm_{\text{loc}}}{\pi(R_S)} &= \mel{\pi(R_S)}{\mathcal{M}_1 \mathcal{M}_2 \cdots \mathcal{M}_{l_R}\mathcal{N}_L}{\pi(R_S^c)} \nonumber\\ &= \mel{\pi(R_S)}{\mathcal{N}_L}{\pi(R_S^c)}
		\end{align}
		where in the second equality we use the fact that $\bra{\pi(R_S)}$ is a left stationary distribution for $\mathcal{M}_i$ and thus a stationary distribution for $\mathcal{M}_1 \mathcal{M}_2 \mathcal{M}_3 \cdots \mathcal{M}_{l_R}$.  Thus, it follows that $\Phi(\mcg_{\mathrm{loc}, \mck}) = \Phi(\mcg_{\mck})$.
	\end{proof}
    
    As previously mentioned, for group based dynamics which exhibits fragile fragmentation, the Krylov sectors cannot be labeled uniquely by group elements.  Let $\mcg_G$ denote the graph obtained by coarse-graining $\mcg_\mck$ by grouping all sectors associated with a given group element into a single node. Call $S$ a subset of group elements of $G$ and $R_S = \{w : \exists g \in S,\, w \sim g\}$, and consider then the conductance 
    \begin{equation}
    \Phi(\mcg_{G}) = \min_{S \subset G:\, \mu(R_S) \leq 1/2} \frac{\sum_{\psi \in R, \psi' \in R^c} \mu(\psi) \mel{\psi}{\mcm}{\psi'}}{\mu(R)}.
    \end{equation}
    Since $\mcg_G$ is obtained by coarse-graining $\mcg_\mck$, for every $S$ there exists some $S' \subset K$ constructed from $S$ such that $R_S = \bigcup_{s \in S'} \mck_s$.  Because of this, the following holds:    
	\begin{lemma}
		For maximally depolarizing dynamics for a local constraint based on a presentation of group $G$,
		\begin{equation}
			\Phi(\mcg_{\mck}) \leq \Phi(\mcg_{G}).
		\end{equation}
	\end{lemma}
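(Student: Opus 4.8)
The plan is to exploit the fact that $\mcg_G$ is nothing but a further coarse-graining of $\mcg_\mck$: grouping all Krylov sectors that carry a common group element into a single supernode. Consequently the minimization defining $\Phi(\mcg_G)$ ranges over a \emph{sub-collection} of the cuts appearing in the minimization defining $\Phi(\mcg_\mck)$, and shrinking the feasible set of a minimization can only raise its value. Concretely, I would first record that for maximally depolarizing dynamics the stationary distribution $\mu$ is uniform over the product-state basis of $\mch$ (since $\mcc_{\rm bulk}$ depolarizes within every sector), so the same measure $\mu$ may be used throughout, and the constraint $\mu(R)\le 1/2$ is an intrinsic statement about the underlying set of basis states, independent of whether we organize them into Krylov sectors $\mck_s$ or into group-element clusters $\mck_g$.

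Next I would check that every feasible region in the $\mcg_G$ problem pulls back to a feasible region in the $\mcg_\mck$ problem. Given $S\subset G$ with $\mu(R_S)\le 1/2$, the set $R_S=\{w:\exists g\in S,\ w\sim g\}$ is a union $\bigcup_{s\in S'}\mck_s$ of \emph{entire} Krylov sectors, where $S'\subset K$ is the set of all Krylov sectors whose common group element lies in $S$; and $\mu(R_S)\le 1/2$ holds by hypothesis. Hence $R_S$ is an admissible cut in the definition of $\Phi(\mcg_\mck)$, and the collection $\{R_S : S\subset G,\ \mu(R_S)\le 1/2\}$ is contained in the collection of all unions of Krylov sectors of measure $\le 1/2$.

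Then I would verify that the conductance ratio attached to such a cut is literally the same number in the two graphs. The denominator $\mu(R_S)$ is manifestly unchanged. For the numerator, the flow crossing the cut in $\mcg_G$ is the sum of superedge weights between $S$ and $S^c$, and each superedge weight is by construction $\sum_{\psi\in\mck_g,\psi'\in\mck_{g'}}\mu(\psi)\mel{\psi}{\mcm}{\psi'}$; summing over $g\in S$, $g'\in S^c$ recovers exactly $\sum_{\psi\in R_S,\psi'\in R_S^c}\mu(\psi)\mel{\psi}{\mcm}{\psi'}$, the numerator in $\mcg_\mck$ (flow internal to a supernode, in particular flow between fragile sub-sectors sharing a group element, never crosses the cut). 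Putting the three steps together: $\Phi(\mcg_\mck)$ is a minimum of the ratio over a family of admissible cuts that contains all the $R_S$, whereas $\Phi(\mcg_G)$ is the minimum of the same ratio over only the $R_S$, so $\Phi(\mcg_\mck)\le\Phi(\mcg_G)$.

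I do not expect a substantive obstacle here; the argument is an instance of the general principle that coarse-graining a reversible Markov chain weakly increases its bottleneck ratio. The only points requiring care are the bookkeeping that $\mu$ really is the same uniform distribution in both descriptions, and the remark that fragile fragmentation only \emph{splits} sectors sharing a group element—so that passing from $\mcg_\mck$ to $\mcg_G$ is a genuine refinement of cut families rather than an unrelated re-partition—both of which follow directly from the definitions recalled earlier in this appendix.
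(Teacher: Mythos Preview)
Your argument is correct and is precisely the detailed unpacking of what the paper intends: the paper's own proof reads in full ``Follows from the definitions,'' and your three steps---that $\mcg_G$ is a further coarse-graining of $\mcg_\mck$, that every $R_S$ is an admissible cut for $\Phi(\mcg_\mck)$, and that the conductance ratio of such a cut is identical in both pictures---are exactly what that one-liner encodes. There is no difference in approach; you have simply written out the bookkeeping the paper elides.
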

	\begin{proof}
		Follows from the definitions.
	\end{proof}
    Therefore, for studying group dynamics in Section~\ref{sec:groups}, it suffices to compute $\Phi(\mcg_{G})$, which can be more readily related to properties of the Cayley graph of $G$.

    \section{Equivalence of exponentially fragmentation and non-amenability }\label{app:proofbonanza1}
	
    In this appendix we will prove Prop.~\ref{prop:poly_charac} and Prop.~\ref{prop:all_charac} from the main text.  Proofs of these statements will ultimately provide a complete characterization of all group dynamics which exhibit exponentially strong fragmentation, and implies the equivalence:
    \begin{equation*}
    G\text{ non-amenable} \leftrightarrow \dyn_G\text{ exponentially fragmented}
    \end{equation*}
    
    As we noted in the main text, while the non-amenability of a group implies that the corresponding fragmentation of the group dynamics is exponentially strong, the converse is not a priori clear.  We will show that if the group dynamics exhibits polynomially strong fragmentation or weak fragmentation, then the group has to be amenable.  The most subtle part of the proof is dealing with fragile fragmentation, which can further fracture $\mck_g$.  Thus it is not immediately clear how to estimate the size of the largest component even when accounting for fragile fragmentation.  After proving the result for group dynamics with polynomially strong fragmentation, we show that all remaining amenable groups cannot exhibit exponentially strong fragmentation, even when accounting for fragile fragmentation.  For this result, we make a mild assumption which is very likely true.

    To prove the first statement about polynomial fragmentation, we note a few important definitions and results: 
    \begin{definition}[Nilpotent and virtually nilpotent groups]
    A group $G$ is class $k$ nilpotent iff it admits a central series with length $k$:
    \begin{equation}
    \{1\} = G_1 \triangleleft G_2 \triangleleft G_3 \triangleleft \cdots \triangleleft G_k = G
    \end{equation}
    where $G_i \triangleleft G_{i+1}$ means that $G_i$ is a normal subgroup of $G_{i+1}$.
    A group $G$ is virtually nilpotent if there exists a subgroup $H$ of finite index which is nilpotent.
    \end{definition}
    \begin{definition}[Growth rate]
    The growth rate of a group $G$, denoted $N(r)$, is the number of group elements that are expressible as words of length $\leq r$.  A polynomial growth rate of degree $n$ is one where $N(r) = O(r^{n})$.
    \end{definition}
    A seminal result by Gromov characterizes all groups with polynomial growth rate:
    \begin{theorem}[Gromov~\cite{gromov1981groups}]
    A finitely generated and presented group $G$ has polynomial growth rate iff it is virtually nilpotent.
    \end{theorem}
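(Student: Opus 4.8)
\emph{Proof proposal.} The statement is an ``if and only if,'' and the plan is to treat the two implications separately. The forward direction --- virtually nilpotent $\Rightarrow$ polynomial growth --- is the easy one, and I would dispose of it first via the Bass--Guivarc'h formula: for a finitely generated nilpotent group $N$ with lower central series $N = N_1 \triangleright N_2 \triangleright \cdots \triangleright N_{c+1} = \{1\}$ one has $N(r) \asymp r^{D}$ with $D = \sum_{i\ge 1} i\,\operatorname{rank}(N_i/N_{i+1})$, proved by induction on the nilpotency class $c$ by writing each element as an ordered product of powers of a generating set adapted to the central series and counting admissible exponent vectors. Since a finite-index subgroup is quasi-isometric to the ambient group (Milnor--\v{S}varc), growth type is inherited, so a virtually nilpotent group also has polynomial growth. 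All the content is in the converse.

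For the converse I would follow Gromov's original route \cite{gromov1981groups}. Assume $G$ is finitely generated with word metric $d_S$ and $N(r) \le C r^{d}$ for all $r$. The first step is to form the rescaled pointed spaces $X_n := (G, \tfrac1n d_S, e)$; the polynomial bound yields a uniform doubling-type estimate on balls, so by Gromov's precompactness criterion a subsequence converges in the pointed Gromov--Hausdorff topology to an asymptotic cone $(Y, y_0)$. The crucial --- and, I expect, hardest --- step is then Gromov's regularity theorem: using the polynomial bound \emph{at all scales}, one shows $Y$ is complete, locally compact, connected, locally connected, homogeneous, and of finite topological dimension. The dimension estimate, obtained by an iterated covering argument, is the technical heart of the whole proof.

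Next I would feed $Y$ into the solution of Hilbert's fifth problem: a locally compact, connected, locally connected, finite-dimensional homogeneous metric space has isometry group a Lie group with finitely many components, and is in fact a manifold (Montgomery--Zippin, via Gleason--Yamabe). The asymptotic quasi-action of $G$ on $Y$ then produces, after passing to a finite-index subgroup, a homomorphism of $G$ into this Lie group; if its image is finite then $G$ is finite (hence trivially virtually nilpotent), and otherwise --- using that polynomial growth rules out non-abelian free subgroups, so the image sits inside an amenable, virtually solvable Lie subgroup --- one extracts a finite-index subgroup $G' \le G$ together with a surjection $\pi\colon G' \twoheadrightarrow \zz$. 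A separate lemma, which again uses polynomial growth essentially (via a bounded-generation estimate along the $\zz$-direction), shows $K := \ker\pi$ is finitely generated, and a \v{S}varc-type comparison shows $K$ has polynomial growth of degree at most $d-1$.

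Finally I would close the argument by induction on the degree $d$, the base case $d=0$ being finite groups. By the inductive hypothesis $K$ is virtually nilpotent, so $G'$ is an extension of $\zz$ by a virtually nilpotent group, hence virtually polycyclic --- in particular finitely generated, solvable-by-finite, and of polynomial growth. The Milnor--Wolf theorem (a finitely generated solvable group of polynomial growth is virtually nilpotent; equivalently, a polycyclic group of polynomial growth is virtually nilpotent) then upgrades $G'$, and hence $G$, to virtually nilpotent. I would also remark, without pursuing it, that Kleiner's later proof (streamlined by Shalom--Tao and by Ozawa) replaces the asymptotic-cone and Hilbert's-fifth-problem input by the finite-dimensionality of the space of polynomial-growth harmonic functions on the Cayley graph together with a reduced-cohomology argument, yielding a more self-contained derivation of the same ``surjection onto $\zz$, finitely generated kernel, induct'' skeleton.
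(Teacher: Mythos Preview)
Your outline is a faithful sketch of Gromov's original argument, with the Bass--Guivarc'h computation for the easy direction and the asymptotic-cone/Montgomery--Zippin/induction-on-degree skeleton for the hard one; the remark on Kleiner's harmonic-function approach is also accurate. However, the paper does not prove this theorem at all: it is stated as a classical result with a citation to \cite{gromov1981groups} and then used as a black box (together with the companion Theorem~\ref{thm:poly_rw} on return probabilities) to characterize which groups give polynomially fragmented dynamics. So there is nothing to compare against---your proposal supplies a proof where the paper simply invokes the literature.
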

    The following is a consequence of seminal results, many of which can be found in~\cite{woess2000random}:
    \begin{theorem}\label{thm:poly_rw}
    A finitely generated and presented group $G$ has polynomial growth rate of degree $n$ iff a symmetric random walk asymptotically has $p_{2L}(e,e) \sim L^{-n/2}$.
    \end{theorem}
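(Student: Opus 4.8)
The plan is to derive both implications of Theorem~\ref{thm:poly_rw} from the \emph{sharp} two-sided on-diagonal heat-kernel estimate for finitely generated groups of polynomial growth — namely that polynomial growth of degree $n$ forces $p_{2L}(e,e)\asymp L^{-n/2}$ — together with the dichotomy that a group of superpolynomial growth has superpolynomial return-probability decay. The first move is to invoke Gromov's theorem stated just above~\cite{gromov1981groups} to replace ``polynomial growth of degree $n$'' by ``virtually nilpotent'', and then to use the Bass--Guivarc'h computation of the growth of nilpotent groups, which gives a genuine two-sided estimate $|B_r|\asymp r^{n}$ (not merely $r^{n+o(1)}$) with $n$ an explicit integer read off from the lower central series. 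This pins down the exponent on the growth side and is what makes the matching of exponents on the two sides meaningful.

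For the forward direction, assume $|B_r|\asymp r^{n}$. For the \emph{upper} bound on $p_{2L}(e,e)$ I would feed the lower volume bound $|B_r|\ge c\,r^{n}$ into the Coulhon--Saloff-Coste isoperimetric inequality $|\partial S|\ge |S|/(2\,\rho(2|S|))$ (with $\rho$ the inverse of $r\mapsto|B_r|$), obtaining the $n$-dimensional isoperimetric inequality $|\partial S|\ge c'|S|^{1-1/n}$ for all finite $S\subset G$; this is equivalent to a Nash / Faber--Krahn inequality, whence by the standard equivalence (Varopoulos; Grigor'yan; Coulhon) $p_{2L}(e,e)\le C\,L^{-n/2}$. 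For the matching \emph{lower} bound, write $p_{2L}(e,e)=\sum_{g\in G}p_{L}(e,g)^{2}$ using symmetry of the walk, and apply Cauchy--Schwarz restricted to the ball $B_{K\sqrt L}$:
\begin{equation}
p_{2L}(e,e)\;\ge\;\frac{\Big(\sum_{g\in B_{K\sqrt L}}p_{L}(e,g)\Big)^{2}}{|B_{K\sqrt L}|}.
\end{equation}
The Carne--Varopoulos Gaussian bound $p_{L}(e,g)\le 2\,e^{-d(e,g)^{2}/2L}$, summed against the \emph{upper} volume bound $|B_r|\le C\,r^{n}$, makes the numerator $\ge 1/2$ once $K$ is a large enough absolute constant, while the denominator is $\le C'L^{n/2}$; hence $p_{2L}(e,e)\ge c\,L^{-n/2}$, completing the two-sided estimate.

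For the converse, assume $p_{2L}(e,e)\asymp L^{-n/2}$ (which is the meaning of ``$\sim L^{-n/2}$''). The decay is in particular merely polynomial, so $G$ cannot have superpolynomial growth: if it did, then for every $k$ one has $|B_r|\ge r^{k}$ eventually, the Coulhon--Saloff-Coste argument yields the $k$-dimensional isoperimetric inequality, and Nash's inequality then forces $p_{2L}(e,e)=O(L^{-k/2})$ for \emph{every} $k$, i.e.\ superpolynomial decay, contradicting the hypothesis. Therefore $G$ has polynomial growth of some degree $m$, and the forward direction already proved gives $p_{2L}(e,e)\asymp L^{-m/2}$; comparing exponents with $p_{2L}(e,e)\asymp L^{-n/2}$ forces $m=n$, which is the assertion.

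The hard part is the \emph{sharp matching of exponents}: one needs that a single integer $n$ governs both the volume growth and the heat-kernel decay. Neither of the two analytic inputs that make this work — that polynomial volume growth of degree $n$ implies the $n$-dimensional isoperimetric inequality, and that this isoperimetric inequality is equivalent to the $L^{-n/2}$ on-diagonal bound — is elementary; these are exactly the ``seminal results'' the theorem alludes to, with the lower bound additionally leaning on Carne--Varopoulos. A secondary subtlety is that the degree of polynomial growth is \emph{a priori} only a $\limsup$ of $\log|B_r|/\log r$; upgrading $r^{n+o(1)}$ to a genuine two-sided $\asymp r^{n}$ is precisely where Gromov's theorem and the Bass--Guivarc'h structure theory of nilpotent groups enter. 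Modulo these imported facts the argument above is routine; a self-contained development of the random-walk side can be found in~\cite{woess2000random}.
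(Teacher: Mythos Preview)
Your proposal is correct and, in fact, goes well beyond what the paper does. The paper does not prove this theorem at all: it is stated as a known consequence of ``seminal results, many of which can be found in~\cite{woess2000random}'' and then used as a black box in the subsequent analysis of fragmentation in group-based dynamics. So there is no ``paper's own proof'' to compare against.

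What you have written is a faithful outline of the standard argument as it appears in the literature (in particular in Woess's book), and the ingredients you list --- Gromov's theorem plus Bass--Guivarc'h on the structure side, Coulhon--Saloff-Coste isoperimetry feeding into Nash/Faber--Krahn for the upper heat-kernel bound, and the Carne--Varopoulos Gaussian estimate combined with Cauchy--Schwarz on a ball of radius $\asymp\sqrt{L}$ for the matching lower bound --- are exactly the right ones. Your emphasis on why the exponent matching is sharp (namely that Bass--Guivarc'h upgrades $|B_r|=r^{n+o(1)}$ to a genuine two-sided $|B_r|\asymp r^n$ with integer $n$) is the correct point to flag, since without it one could only conclude equality of $\limsup$ exponents. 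The converse direction via the dichotomy (superpolynomial growth $\Rightarrow$ superpolynomial return-probability decay, contradicting the hypothesis) is also the standard and cleanest route. In short: nothing to fix, and your sketch would serve as a useful expansion of what the paper leaves implicit.
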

    The above theorems, along with the result from the main text showing that $\mck_e$ is the largest sector, would imply that group dynamics which is polynomially fragmented is equivalent to the group having polynomial growth and thus being amenable.  However, due to fragile fragmentation we need to further characterize the size largest sector {\it within} $\mck_e$ in order to determine whether the group dynamics still exhibits polynomial fragmentation.  For this, we will need to introduce the notion of a {\it filling length} (a similar quantity called the expansion length was discussed in Ref.~\cite{balasubramanian2023glassy}).  A good reference for these concepts is Ref.~\cite{riley2006filling}.
    \begin{definition}[Filling length]
    Consider a group $G$, and a length-$L$ word representing the identity element which does not contain any $\tte$'s.  Call $w \to w_1 \to w_2 \to \cdots w_m \to \varnothing$ a derivation of $w$ if each pair of consecutive words differs by the application of a single relation and $\varnothing$ denotes the empty word.  Call $\mathrm{FL}(w, \varnothing)$ the minimum over derivations from $w$ to $\varnothing$ of the maximum value of the length of an intermediate word in the derivation.  Then,
    \begin{equation}
    \mathrm{FL}(L) = \max_{w \sim \varnothing: |w| = L} \mathrm{FL}(w, \varnothing).
    \end{equation}
    \end{definition}
    Furthermore, we have the following result regarding the filling length for nilpotent groups
    \begin{proposition}[Gersten, Holt, Riley,~\cite{gersten2003isoperimetric}]
    For nilpotent and virtually nilpotent groups, $\mathrm{FL}(L) \sim L$.
    \end{proposition}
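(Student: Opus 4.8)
The lower bound $\mathrm{FL}(L)\succeq L$ is immediate for any infinite group: taking $w=uu^{-1}$ with $u$ a geodesic word of length $\lfloor L/2\rfloor$, so $|w|=L$ and $w=_G1$, every derivation $w\to\cdots\to\varnothing$ begins with $w$ itself and hence has an intermediate word of length $|w|=L$. So the content is the upper bound $\mathrm{FL}(L)\preceq L$. Since the filling length function is a quasi-isometry invariant of finitely presented groups (up to the standard $\simeq$-equivalence) and every virtually nilpotent group is quasi-isometric to a finitely generated nilpotent one, it suffices to prove $\mathrm{FL}_G(L)=O(L)$ for a finitely generated nilpotent group $G$, which I would do by induction on the nilpotency class $c$, freely enlarging the fixed finite presentation by all relations of length below a large constant (in particular by all commutation relations $[a,b]=1$ that hold in $G$ between bounded-length words; this changes $\mathrm{FL}$ only up to $\simeq$). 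Base case $c=1$: $G$ is finitely generated abelian, and given $w$ of length $n$ with $w=_G1$ one performs a ``bubble sort'', repeatedly swapping adjacent letters via the length-preserving commutation relations to gather equal generators and then freely cancelling inverse pairs; the length never exceeds $n$, so $\mathrm{FL}_G(n)\le n$.

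\textbf{Inductive step.} Let $N=\gamma_c(G)$, which is central and finitely generated abelian, and set $Q=G/N$, nilpotent of class $c-1$. Given $w$ with $|w|=n$ and $w=_G1$, its image $\bar w$ is trivial in $Q$, so by induction there is a derivation $\bar w=\bar w_0\to\cdots\to\bar w_M=\varnothing$ in $Q$ with $\max_i|\bar w_i|=O(n)$, and we may take $M=O(\delta_Q(n))=O(n^{c})$ using the polynomial Dehn-function bound for class-$(c-1)$ nilpotent groups (the Gersten--Holt--Riley area bound, which could alternatively be proved by a parallel induction). Re-reading each $\bar w_i$ letter-by-letter as a word $w_i$ over the generators of $G$, each step $\bar w_i\to\bar w_{i+1}$ is either (i) a free reduction or an application of a relation valid in $G$, or (ii) an application of one of the ``extra'' relators $z=1$ used to kill $\gamma_c(G)$ in $Q$, where $z$ is a bounded-length weight-$c$ commutator word representing a nontrivial central element $\zeta\in N$. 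I would build a $G$-derivation of $w$ whose intermediate words have the form $W_i=s_i\,w_i$, with $|w_i|=|\bar w_i|=O(n)$ and $s_i$ a word representing an element $\sigma_i\in N$, maintaining $W_i=_Gw$. Type-(i) steps are performed inside the $w_i$-part verbatim; for a type-(ii) step, using centrality one slides the bounded word $z$ leftward past all of $w_i$ and past $s_i$ (each elementary slide is one derivation step of bounded length change, and there are $O(n)$ of them, using the added commutation relators), absorbs it into the central prefix, and deletes it from $w_i$.

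\textbf{Controlling the central prefix; the main obstacle.} Since there are at most $M=O(n^{c})$ type-(ii) steps, the running central element obeys $|\sigma_i|_N=O(n^{c})$; and because $\gamma_c(G)$ is distorted in $G$ with polynomial degree $c$ (a standard structural fact for nilpotent groups, realized by explicit commutator identities — e.g.\ $z^{k}$ admits a $G$-word representative of length $O(k^{1/c})$), $\sigma_i$ has a $G$-word representative of length $O\!\big((n^{c})^{1/c}\big)=O(n)$. Thus $|W_i|$ can in principle be kept $O(n)$. The one genuinely technical point — and the step I expect to be the main obstacle — is to \emph{re-balance} the central prefix along the way: after a type-(ii) step lengthens $s_i$, one must rewrite it into a fresh $O(n)$-length efficient representative of the same central element via a derivation whose intermediate words also stay $O(n)$. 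Equivalently one needs a linear bound on the filling length of the word problem \emph{restricted to the central subgroup} $\gamma_c(G)$ as computed inside $G$; this is a ``carrying''-type argument on the commutator-calculus normal forms that realize the $c$-distortion, in which incrementing one of the ($\approx k^{1/c}$-scaled) ``digits'' forces a controlled cascade of bounded rewrites. Dispatching this bookkeeping, unwinding the induction, and invoking quasi-isometry invariance for the virtually nilpotent case then completes the proof.

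\textbf{An alternative route.} A slicker but less self-contained argument uses asymptotic cones: by Pansu's theorem the asymptotic cone of $G$ is a Carnot group $G_\infty$, and the Carnot dilations $\delta_t$ contract any loop to a point while scaling its length by $t$, so $G_\infty$ has ``linear filling length''; one then transfers this back to $G$ by the theorems (in the spirit of Riley) relating the filling length of a finitely presented group to filling invariants of its asymptotic cones. The obstacle there is that the transfer statement for filling length (as opposed to area) is delicate, which is why I would keep the explicit class induction above as the primary argument.
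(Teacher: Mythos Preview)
The paper does not prove this proposition: it is quoted without proof as a known result of Gersten, Holt, and Riley, and is used only as a black-box input to the subsequent proposition bounding the size of a connected component of $\mck_e$. So there is no ``paper's own proof'' to compare against.

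On its own merits, your sketch follows the general strategy of the original Gersten--Holt--Riley argument: induction on nilpotency class via the quotient by $\gamma_c(G)$, carrying a running central correction whose $G$-length is controlled through the degree-$c$ distortion of $\gamma_c(G)$ inside $G$. You correctly flag the rebalancing of the central prefix as the genuine technical crux. One point that deserves more care: you invoke a derivation in $Q$ that \emph{simultaneously} has $\max_i|\bar w_i|=O(n)$ (the inductive FL bound) and $M=O(n^{c})$ (the Dehn bound), but these bounds are a priori realized by \emph{different} derivations, and one must argue that a single derivation---or a single van Kampen diagram---achieves both. This is true and is handled in the GHR paper, but it is not automatic from the separate existence of the two bounds. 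Your asymptotic-cone alternative is attractive, though as you say the transfer statement for filling length (as opposed to area) is not available off the shelf in the form you would need.
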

    Next, we prove that a randomly drawn word representing the identity element of a group with polynomial growth rate will have a large fraction of $\tte$'s:
    \begin{lemma}
    Consider a group $G$ with polynomial growth rate.  If the number of length-$L$ words representing the identity is $|\mck_e|$, with probability $1-\epsilon$, the number of $\tte$'s in a randomly chosen word $w \in \mck_e$ is $\geq \eta(\epsilon) L$, where $\eta$ and $\epsilon$ are positive constants.
    \end{lemma}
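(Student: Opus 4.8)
The plan is to reinterpret $\mck_e$ as the support of a random walk conditioned to return to its origin, and then exploit a competition between an exponentially thin binomial tail and a subexponential return probability. Concretely, because the onsite Hilbert space is spanned by the $d=2n+1$ equally weighted basis vectors $\{\ket{\tte}\}\cup\{\ket{\ttg_i^{\pm 1}}\}$, a uniformly random length-$L$ word $w=\ttg_1\cdots\ttg_L$ is precisely a trajectory of the random walk on $\mathrm{Cay}(G)$ whose increment is drawn uniformly from $\{\tte\}\cup\{\ttg_i^{\pm 1}\}$. This walk is symmetric (every generator appears together with its inverse) and lazy (it holds with probability $1/d>0$), hence aperiodic; conditioning on $w\in\mck_e$ is exactly conditioning on the event $\mathcal{R}$ that the walk sits at $e$ after $L$ steps, and $|\mck_e|=d^L\,\Pr[\mathcal{R}]>0$ for every $L$.

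First I would control the unconditioned law of $N:=\#\{i:\ttg_i=\tte\}$: under the uniform measure $N\sim\mathrm{Bin}(L,1/d)$, so by a Chernoff bound, for any $\eta<1/d$ there is $c(\eta)>0$ with $\Pr[N\le\eta L]\le e^{-c(\eta)L}$. Next I would lower bound $\Pr[\mathcal{R}]$: since $G$ has polynomial growth it is virtually nilpotent by Gromov's theorem, hence amenable, so Kesten's criterion (item 3 of the amenability proposition) says the symmetric-random-walk return probability decays subexponentially; the lazy walk's return probability differs only by averaging the non-lazy return probability over a binomial number of moves, which costs at most a $\Theta(L^{-1/2})$ factor (one may alternatively invoke Theorem~\ref{thm:poly_rw} directly to get $\Pr[\mathcal{R}]=L^{-\Theta(1)}$). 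Either way $\Pr[\mathcal{R}]\ge e^{-o(L)}$. Combining, $\Pr[N\le\eta L\mid\mathcal{R}]\le\Pr[N\le\eta L]/\Pr[\mathcal{R}]\le e^{-c(\eta)L+o(L)}\to 0$; fixing e.g.\ $\eta=1/(2d)$, for every $\epsilon$ there is $L_0(\epsilon)$ past which a uniform $w\in\mck_e$ has $\ge\eta L$ occurrences of $\tte$ with probability $\ge 1-\epsilon$, and shrinking $\eta$ slightly absorbs the finitely many small $L$ (where $\eta L<1$ makes the claim weaker anyway), yielding the stated form with $\eta=\eta(\epsilon)$.

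The hard part is not any single estimate but a matching issue: Kesten's criterion and Theorem~\ref{thm:poly_rw} are stated for a (possibly periodic) symmetric random walk driven by the generators, whereas random words are governed by the \emph{lazy} walk, so I would need to check carefully that the conditioning event $\mathcal{R}$ still has provably subexponential --- ideally merely polynomial --- probability in $L$; the lazy, aperiodic version is what makes $\mck_e$ nonempty for all $L$ and what makes the binomial interpretation of $N$ exact, so the bookkeeping around laziness is where the care goes. A secondary, purely cosmetic point is whether the statement is read asymptotically in $L$ or with $\eta$ genuinely a function of $\epsilon$; the argument above accommodates either reading. This lemma then feeds the fragile-fragmentation analysis: having $\Omega(L)$ spare $\tte$'s in a typical identity word, together with the linear filling-length bound $\mathrm{FL}(L)\sim L$ for (virtually) nilpotent groups, leaves enough room to carry out the rewriting within length $L$, so that fragile fragmentation does not shrink the largest sector below $\poly(L)^{-1}|\mch|$.
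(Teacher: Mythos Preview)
Your proposal is correct and takes a genuinely different route from the paper. The paper computes the conditional expectation $\langle N_{\tte}\rangle$ directly: it writes $|\mck_e|=\sum_n\binom{L}{n}W_n$ with $W_n$ the number of $\tte$-free identity words of length $n$, invokes the two-sided polynomial bound $W_n\asymp (2d)^n/n^{k/2}$ from Theorem~\ref{thm:poly_rw}, and then pushes through a somewhat delicate case analysis (Jensen on $f_k(x)=x^{k/2}-x^{k/2-1}$ for $k>2$, separate treatment for $k=1,2$, plus an explicit binomial identity) to conclude $\langle N_{\tte}\rangle\sim L$; Markov on $L-N_{\tte}$ finishes. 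Your argument bypasses all of this: you bound $\Pr[N_{\tte}\le\eta L\mid\mathcal{R}]\le \Pr[N_{\tte}\le\eta L]/\Pr[\mathcal{R}]$, kill the numerator exponentially by Chernoff on an unconditioned binomial, and kill the denominator only polynomially via the return-probability estimate. This is shorter, avoids the convexity casework, and actually yields an exponentially small failure probability rather than the $O(1)$ bound Markov gives---more than enough for the downstream use in Proposition~\ref{prop:fragFL}. Your worry about matching the lazy walk to Theorem~\ref{thm:poly_rw} is legitimate but shared: the paper itself silently uses the same polynomial scaling for the lazy walk when it asserts $|\mck_e|\asymp (2d+1)^L/L^{k/2}$, so you are on equal footing there, and in any case only the subexponential lower bound on $\Pr[\mathcal{R}]$ (immediate from amenability) is strictly needed for your argument to go through.
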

    \begin{proof}
    First, we note that the symmetric random walk generating a length-$L$ word is lazy due to the `$\tte$' generator explicitly present in the group relations.  Call $W_n$ the number of words of length $n$ that do not have any $\tte$'s.  Then, the expected number of $\tte$'s in a length-$L$ word uniformly chosen in $\mck_e$ is
    \begin{equation}
    \langle N_{\tte} \rangle = \frac{1}{|\mck_e|}\sum_{n=0}^L \binom{L}{n} W_n (L-n)
    \end{equation}
    We know that due to Theorem~\ref{thm:poly_rw}, for $n > 0$, $\frac{C}{n^{k/2}} (2d)^n \leq W_n \leq \frac{C'}{n^{k/2}} (2d)^n$ where $k$ is the growth rate degree, $d$ is the number of non-identity generators, and $C$ and $C'$ are constants.  Furthermore, $\frac{C}{L^{k/2}} (2d+1)^L \leq |\mck_e| \leq \frac{C'}{L^{k/2}} (2d+1)^L$. Thus, we can write
    \begin{equation}
    \langle N_{\tte} \rangle \gtrsim L\sum_{n=1}^L \binom{L}{n} \frac{(2d)^n}{(2d+1)^L} (x_n^{k/2} - x_n^{k/2-1})
    \end{equation}
    where $\gtrsim$ indicates greater than or equal to up to multiplicative constants, and $x_n = L/n$.  Next, we note that $f_k(x) = x^{k/2}(1-1/x)$ is convex in the interval $[1,\infty)$ for $k > 2$.  Noting that $p(n) = \binom{L}{n}\frac{(2d)^n}{(2d+1)^L}$ is the PDF of a binomial random variable with parameter $2d/(2d+1)$, we can write
    \begin{align}
    \langle N_{\tte} \rangle &\gtrsim \zeta L \sum_{n=1}^L \zeta^{-1} p(n) f_k(x_n) \gtrsim \zeta L f_k\left(\sum_{n=1}^L \zeta^{-1} p(n) x_n\right)  
    \end{align}
    where we reweighted the distribution by $\zeta = 1-\frac{1}{(2d+1)^L}$ and used Jensen's inequality.  What remains is to compute the expectation value of $x_n$ under the reweighted binomial distribution.  First define $x_n' = L/(n+1)$, and note that $f_k(\cdot)$ is increasing in the interval $[L/(L+1), \infty)$ for $L$ large enough.  Thus,
    \begin{align}
    \langle N_{\tte} \rangle \gtrsim \zeta L f_k\left(\sum_{n=1}^L \zeta^{-1} p(n) x_n'\right). 
    \end{align}
    Then, use the identity 
    \begin{equation}
    \sum_{n=1}^L \binom{L}{n} \frac{(2d)^n}{(2d+1)^L} \frac{L}{n+1} = \frac{(2d+1)L}{2d(L+1)} - \frac{L}{(2d+1)^L}
    \end{equation}
    which is $(2d+1)/(2d + \epsilon)$ for $\epsilon \to 0$ as $L \to \infty$.  Then, $\zeta f_k((2d+1)/(2d + \epsilon))$ is a constant, and thus $\langle N_{\tte} \rangle \sim L$.

    Next, suppose that $k = 2$.  Then, 
    \begin{equation}
    \langle N_{\tte} \rangle \gtrsim L\sum_{n=1}^L \binom{L}{n} \frac{(2d)^n}{(2d+1)^L} (x'_n - 1)
    \end{equation}
    which evaluates to $L(1/2d + \epsilon)$ where $\epsilon \to 0$ as $L \to \infty$.  Thus, $\langle N_{\tte} \rangle \sim L$.  Finally, suppose that $k = 1$.  Define $y_n = 1/x_n = n/L$ so that we may write
    \begin{equation}
    \langle N_{\tte} \rangle \gtrsim L\sum_{n=0}^L \binom{L}{n} \frac{(2d)^n}{(2d+1)^L} (y_n^{-1/2} - y_n^{1/2})
    \end{equation}
    Note that $g(x) = x^{-1/2} - x^{1/2}$ is convex in the interval $(0,1]$, and thus by Jensen's inequality
    \begin{equation}
    \langle N_{\tte} \rangle \gtrsim L g\left(\sum_{n=0}^L p(n) y_n\right)
    \end{equation}
    Then, we use
    \begin{equation}
    \sum_{n=0}^L \binom{L}{n} \frac{(2d)^n}{(2d+1)^L} \frac{n}{L} = \frac{1}{2d+1}
    \end{equation}
    and thus $\langle N_{\tte} \rangle \gtrsim L g((2d+1)^{-1}) \sim L$.  Having shown that $\langle N_{\tte} \rangle \sim L$ for all $k$, call $\langle N_{\tte} \rangle = (1-\delta) L$.  Applying Markov's inequality to the random variable $L - N_{\tte}$, we have
    \begin{equation}
    \mathbb{P}\left[ L - N_{\tte} \geq (1-\eta) L\right] \leq \frac{\delta}{1-\eta},
    \end{equation}
    and choosing $\eta$ small enough proves the Lemma.
    \end{proof}

    \begin{proposition}\label{prop:fragFL}
    Consider a group $G$ with $\mathrm{FL}(L) \sim L$.  If the number of length-$L$ words representing the identity is $|\mck_e|$, under group dynamics on $G$, there exists a connected component of words within $\mck_e$ of size $\sim |\mck_e|/L^{\xi}$ for $\xi > 0$.
    \end{proposition}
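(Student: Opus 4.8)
The plan is to show that the connected component of the ``all-blank'' word $\ket{\tte^{L}}$ already contains a $1/\poly(L)$ fraction of the words in $\mck_{e}$; since everything connected to $\ket{\tte^{L}}$ stays in $\mck_{e}$ (the image under $\varphi$ is invariant along the dynamics), this one component then has the advertised size $\gtrsim |\mck_{e}|/L^{\xi}$. The first step is to pin this component down combinatorially. Because the onsite alphabet contains the blank letter $\tte$, and the group dynamics includes the blank-shuffling moves $\tte\,\ttg_{i}\lra\ttg_{i}\,\tte$ and the free (un)cancellation moves $\ttg_{i}\,\ttg_{i}^{-1}\lra\tte\,\tte$, a word $\ket{w}\in\mck_{e}$ is connected to $\ket{\tte^{L}}$ if and only if its \emph{core} $v$ --- the subword left after erasing every $\tte$ --- admits a null-sequence $v\to v_{1}\to\cdots\to\varnothing$ of relation applications in which no intermediate word has length exceeding $L$; equivalently, iff the filling length of this specific word obeys $\mathrm{FL}(v,\varnothing)\le L$. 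In particular, writing the hypothesis as $\mathrm{FL}(\ell)\le C\ell$, \emph{every} $w\in\mck_{e}$ whose core has at most $L/C$ letters is connected to $\ket{\tte^{L}}$.

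Next I would count such words. Writing $W_{m}$ for the number of length-$m$ words over the $2n$ non-identity letters with $\varphi=e$, the number of $w\in\mck_{e}$ with core-length $\le m_{0}$ is $\sum_{m\le m_{0}}\binom{L}{m}W_{m}$. Using the polynomial-growth dictionary of Theorem~\ref{thm:poly_rw} (with $k$ the degree of growth), $W_{m}=\Theta\big((2n)^{m}m^{-k/2}\big)$ and $|\mck_{e}|=\Theta\big((2n+1)^{L}L^{-k/2}\big)$, while the binomial weights $\binom{L}{m}(2n)^{m}$ concentrate at $m^{*}=\tfrac{2n}{2n+1}L$. Hence when $C\le\tfrac{2n+1}{2n}$, so that $L/C\ge m^{*}$, the short-core words already form a constant fraction of $\mck_{e}$, the component of $\ket{\tte^{L}}$ has size $\Theta(|\mck_{e}|)$, and the Proposition holds in this regime; the previous Lemma (a $(1-\epsilon)$ fraction of $\mck_{e}$ has $\ge \eta L$ blanks) is just the probabilistic phrasing of the same count.

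The main obstacle is the regime of larger $C$, where the worst-case bound $\mathrm{FL}(\ell)\le C\ell$ is too crude: the typical core has $\sim m^{*}$ letters, and the set of words with core-length $\le L/C<m^{*}$ is only an $\exp(-\Theta(L))$ fraction of $\mck_{e}$. To recover a $1/\poly(L)$ fraction one must show that, for $m$ up to $m^{*}-O(\sqrt{L})$, a constant fraction of the length-$m$ words $v$ with $\varphi(v)=e$ nevertheless satisfy $\mathrm{FL}(v,\varnothing)\le L$ --- i.e.\ that a \emph{typical} random null-word fills much more cheaply than the worst case $Cm$, essentially within its own length. The natural route is to use that the random walk on a group of polynomial growth is diffusive, so a uniformly random length-$m$ null-word stays with high probability inside a ball of radius $O(\sqrt{m})$ up to lower-order corrections; combining the small polynomial volume of such balls with the linear filling-length bound applied locally one contracts the loop within budget $m+o(m)\le L$. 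Making this ``typical loops fill cheaply'' estimate precise --- with the right constants rather than merely $\mathrm{FL}\le Cm$ --- is where the virtually-nilpotent structure of the groups to which the Proposition is applied (polynomial Dehn function, controlled asymptotic-cone geometry) is genuinely used; once it is in hand the corresponding component of $\ket{\tte^{L}}$ again has size $\gtrsim|\mck_{e}|/L^{\xi}$ for a suitable constant $\xi>0$.
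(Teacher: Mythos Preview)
Your first two paragraphs are fine: the characterization of the component of $\ket{\tte^L}$ via the filling length of the core is correct, and the small-$C$ case follows from the binomial concentration. But the proposition must hold for \emph{any} constant $C$ in the linear filling-length bound $\mathrm{FL}(\ell)\le C\ell$, and your third paragraph --- where all the content lies --- is a genuine gap. You need that a typical null-word $v$ of length $m\approx\tfrac{2n}{2n+1}L$ has $\mathrm{FL}(v,\varnothing)\le L\approx(1+\tfrac{1}{2n})m$, i.e.\ that the \emph{typical} filling-length constant is essentially $1$, independent of the worst-case $C$. Knowing that the random loop stays in a ball of radius $O(\sqrt{m})$ does not give this: bounded diameter does not by itself force filling length $\le(1+\epsilon)m$, and invoking ``polynomial Dehn function'' or ``asymptotic-cone geometry'' is not a proof. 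Without this unproven lemma your argument only covers the regime $C\le(2n+1)/(2n)$, which need not include the virtually nilpotent groups the proposition is meant to handle.

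The paper sidesteps this entirely by \emph{not} trying to fill the whole core at once. It partitions the $L$ sites into a constant number $L/\ell=\alpha/\eta$ of segments of length $\ell=\eta L/\alpha$, and restricts to words that are \emph{products of per-segment identity words}, each segment carrying $\ge\eta\ell$ blanks. Any such word is connected to $\ket{\tte^L}$ by rewriting one segment at a time: a single segment needs at most $\mathrm{FL}(\ell)=\alpha\ell=\eta L$ positions, and the $\ge\eta L$ blanks guaranteed elsewhere in the word can be shuttled in as needed. The count per segment is $\gtrsim(2d+1)^\ell/\ell^{k/2}$ by the preceding Lemma, and raising to the constant power $L/\ell$ gives $\gtrsim(2d+1)^L/L^{\alpha k/(2\eta)}\sim|\mck_e|/L^\xi$. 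The segmentation localizes the filling problem so that only the worst-case bound $\mathrm{FL}(\ell)\le\alpha\ell$ is ever invoked; no control on typical filling length is needed.
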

    \begin{proof}
    Suppose $\mathrm{FL}(L) = \alpha L$, $\alpha > 1$.  Consider a length-$L$ word $w = \tte^L$.  Divide this word into $L/\ell$ segments of length $\ell$.  In each length $\ell$ segment, we will create a word representing the identity such that the number of $\tte$'s in the word is at least $\eta \ell$.  Selecting $\eta$ small enough, there is an $\epsilon(\eta)$ such that the number of such words is $(1-\epsilon) (2d+1)^{\ell}/\ell^{k/2}$ due to the previous Lemma and Theorem~\ref{thm:poly_rw}.  For each length $\ell$ interval, all such words can be connected to one another.  This is because the number of $\tte$'s needed for this in each interval is $\mathrm{FL}(\ell) = \alpha \ell$ and since the number of $\tte$'s in the entire word is $ \geq (\eta \ell)L/\ell = \eta L$, choosing $\ell = \eta L / \alpha$ ensures the connectivity property.  Therefore, the number of words that one can construct is at least
    \begin{equation}
    W \geq \left((1-\epsilon) \frac{(2d+1)^{\ell}}{\ell^{k/2}}\right)^{L/\ell}
    \end{equation}
    and using $L/\ell = \alpha/\eta$, we find
    \begin{equation}
    W \gtrsim  \frac{(2d+1)^{L}}{L^{\alpha k/(2\eta)}} \sim \frac{|\mck_e|}{L^{k/2(\alpha/\eta - 1)}},
    \end{equation}
    thus proving the Proposition.
    \end{proof}
    This implies the following Proposition:
    \begin{proposition}
    Group dynamics for group $G$ has polynomial fragmentation iff $G$ has polynomial growth rate.
    \end{proposition}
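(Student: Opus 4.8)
The plan is to use the elementary identity $|\mck_e|/d^L = p_L(e,e)$ --- the fraction of length-$L$ words evaluating to the identity equals the return probability of the lazy symmetric walk on $\mathrm{Cay}(G)$ whose steps are uniform over the $d=2n+1$ onsite basis states (generators, their inverses, and $\tte$) --- as the bridge between Krylov-sector combinatorics and random-walk asymptotics, and then to control the effect of fragile fragmentation from both sides.

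For the forward direction, suppose $G$ has polynomial growth of degree $k$. Then $G$ is virtually nilpotent by Gromov's theorem, so $\mathrm{FL}(L)\sim L$ by the Gersten--Holt--Riley result, and $|\mck_e|/d^L = p_L(e,e) \asymp L^{-k/2}$ by Theorem~\ref{thm:poly_rw} (the parity of $L$ is irrelevant since the walk is lazy, $\tte$ being a valid step). The general inequality $|\mck_{\max}| \le |\mck_e|$, valid because fragile fragmentation can only shatter the potential sector $\mck_e$, gives $|\mck_{\max}|/|\mch| \lesssim L^{-k/2}$. The matching lower bound is exactly Prop.~\ref{prop:fragFL}: since $\mathrm{FL}(L)\sim L$, there is a single connected component inside $\mck_e$ of size $\gtrsim |\mck_e|/L^{\xi}$, whence $|\mck_{\max}|/|\mch| \gtrsim L^{-k/2-\xi}$. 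Combining, $|\mck_{\max}|/|\mch| = \Theta(1/\poly(L))$, so $\dyn_G$ is polynomially fragmented.

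For the converse, suppose $\dyn_G$ is polynomially fragmented, i.e. $|\mck_{\max}|/|\mch| = \Omega(1/\poly(L))$. Using $|\mck_{\max}| \le |\mck_e|$ again (here it is also convenient, though not strictly needed, that $\mck_e$ is the largest sector before fragile fragmentation, Prop.~\ref{prop:return}), we get $p_L(e,e) = |\mck_e|/d^L = \Omega(1/\poly(L))$. I would then invoke the dichotomy for finitely generated groups: a group either has polynomial growth --- in which case $p_{2L}(e,e)\asymp L^{-k/2}$ by Theorem~\ref{thm:poly_rw} --- or has super-polynomial growth, in which case $p_{2L}(e,e)$ decays faster than any inverse polynomial (see Ref.~\cite{woess2000random}). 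The second alternative is incompatible with $p_L(e,e)=\Omega(1/\poly(L))$, so $G$ must have polynomial growth rate, closing the equivalence.

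The only genuinely delicate input is Prop.~\ref{prop:fragFL}, which is where the real work sits and which I am taking as given: it says that although words representing $e$ may need extra ``slack'' characters to be rewritten into one another, the linear filling length $\mathrm{FL}(L)\sim L$ of virtually nilpotent groups can be afforded within the system size because a typical length-$L$ identity word already contains $\Theta(L)$ copies of $\tte$ (the supporting Lemma on the $\tte$-content of random identity words). Had $\mathrm{FL}(L)$ grown super-linearly --- as it does for the non-nilpotent groups discussed in Sec.~\ref{sec:nongroups} --- this budget would not close and fragile fragmentation could in principle destroy polynomial fragmentation; what makes the statement true for groups of polynomial growth is precisely the combination of Gromov's theorem and the linear filling-length bound. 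Beyond that, the remaining steps are pure bookkeeping (reconciling the $p_L$ versus $p_{2L}$ conventions and checking that the exponents $k/2$ and $\xi$ add up), which I do not expect to cause trouble.
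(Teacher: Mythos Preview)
Your argument is correct and follows essentially the same route as the paper: in both directions you combine the identification $|\mck_e|/d^L=p_L(e,e)$ with Theorem~\ref{thm:poly_rw}, use Gromov's theorem plus the Gersten--Holt--Riley linear filling-length bound to invoke Prop.~\ref{prop:fragFL} for the lower bound, and appeal to $|\mck_{\max}|\le|\mck_g|\le|\mck_e|$ for the upper bound and for the converse. Your converse is slightly more explicit about the polynomial-versus-superpolynomial return-probability dichotomy (citing Woess directly) than the paper, which simply cites Theorem~\ref{thm:poly_rw} as an iff; otherwise the proofs are the same.
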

    \begin{proof}
    If $G$ has polynomial growth rate, then the size of the identity sector is $|\mck_e|/|\mch| \sim L^{-2k}$ where $L$ is the system size.  Furthermore, we know that all nilpotent groups have $\mathrm{FL}(L) \sim L$, and thus all virtually nilpotent groups have $\mathrm{FL}(L) \sim L$ because the asymptotic scaling of the filling length is a quasi-isometry invariant (see Ref.~\cite{riley2006filling}).  Therefore, the above Proposition can be applied, showing that there exists a connected subset $\mck \subset \mck_e$ such that $|\mck|/|\mch| \geq L^{-\xi}$.  Thus, the largest sector must have size $L^{-\xi} \leq |\mck_{\mathrm{max}}|/|\mch| \leq L^{-2k}$ thus proving polynomial fragmentation.

    If $G$ has polynomial fragmentation, then there exists a Krylov sector $\mck \subset \mck_g$ for some $g$ such that $|\mck|/|\mch| \sim L^{-\zeta}$.  Then, 
    \begin{equation}
    \frac{|\mck_e|}{|\mch|} \geq \frac{|\mck_g|}{|\mch|} \geq \frac{|\mck|}{|\mch|} \sim L^{-\zeta}
    \end{equation}
    where the first inequality follows from Prop.~\ref{prop:return} in the main text.  Therefore, the return probability decays with $L$ at least as slow as inverse polynomially.  From the results in this appendix, this implies that $G$ is virtually nilpotent and thus has polynomial growth rate.
    \end{proof}
    Next, we will study the case where the dynamics is neither polynomially fragmented nor exponentially fragmented.  Such groups exist where return probabilities scale neither inverse polynomially nor exponentially; the most notable example is the Baumslag-Solitar group, where the return probability of a length $L$ random walk scales like $\sim \exp(-L^{1/3})$.  In general, call the scaling of the return probability $\exp(-f(L))$ where $f(L) \ll O(L)$.  Furthermore, call the inverse filling length function $\mathrm{FL}^{-1}(L)$ (i.e. satisfying $\mathrm{FL}^{-1} \circ \mathrm{FL}(L) = L$).  Then the following is true:
    \begin{proposition}
    Given a finitely generated and presented group $G$, suppose that with probability $1-\epsilon$, a randomly chosen length-$L$ word from $\mck_e$ has a number of $\tte$'s which is $\geq \eta L$ for some $\eta(\epsilon)$.  Then, if $G$ is amenable, then group dynamics on $G$ is not exponentially strongly fragmented. 
    \end{proposition}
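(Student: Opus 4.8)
The plan is to reduce the statement to Kesten's criterion and then to absorb the extra shattering produced by fragile fragmentation via a block‑decomposition argument in the spirit of Prop.~\ref{prop:fragFL}. Since $G$ is amenable, Kesten's criterion tells us that the return probability $p_L(e,e)=|\mck_e|/d^L$ decays subexponentially in $L$; writing $p_L(e,e)=e^{-f(L)}$ this is the statement $f(L)=o(L)$. If fragile fragmentation were absent we would be finished immediately, because then $|\mck_{\max}|/|\mch|=|\mck_e|/d^L=e^{-o(L)}$, which is not $O(e^{-cL})$ for any $c>0$. So the real task is to produce, inside $\mck_e$, a \emph{single} connected component that is still an $e^{-o(L)}$ fraction of $\mch$.

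For the chunking, I would fix the constant $\epsilon$ and the corresponding $\eta=\eta(\epsilon)$ supplied by the hypothesis (used at every length scale, not just $L$), let $\mathrm{FL}$ be the filling‑length function of $G$ (finite at every argument since $G$ is finitely presented, and satisfying $\mathrm{FL}(\ell)\geq\ell$), and set the block length to $\ell=\ell(L):=\min\!\big(\lfloor\sqrt L\rfloor,\ \max\{\ell' : \mathrm{FL}(\ell')\leq\eta L/4\}\big)$, so that $\ell\to\infty$, $\ell=o(L)$, and $\mathrm{FL}(\ell)\leq\eta L/4$. Partition the chain into $m=\lfloor L/\ell\rfloor$ blocks of length $\ell$ plus an $o(L)$‑site remainder, which I fix to all $\tte$. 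Call $w=w_1\cdots w_m$ a \emph{good product word} if each $w_i$ has length $\ell$, $\varphi(w_i)=\tte$, and $w_i$ contains at least $\eta\ell$ copies of $\tte$. By the hypothesis applied at scale $\ell$, there are at least $(1-\epsilon)\,e^{-f(\ell)}d^\ell$ admissible choices for each $w_i$, so there are at least $\big((1-\epsilon)e^{-f(\ell)}\big)^m d^{m\ell}$ good product words in total.

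Next I would show these all lie in the Krylov sector of $\tte^L$, by collapsing the blocks from left to right. Using that $\tte$ is the identity character and that the move $\tte g\leftrightarrow g\tte$ is allowed whenever $l_R\geq 2$ (the case $l_R=1$ being trivially free of fragile fragmentation), one can slide identity characters freely; hence, when block $i$ is processed—restoring all $\tte$'s to their block positions after each collapse, so that blocks $1,\dots,i-1$ are $\tte^\ell$ and blocks $i,\dots,m$ are still their original good words—all of the word's $\tte$'s can be gathered into one contiguous region abutting $\bar w_i$, the reduced word underlying $w_i$ (i.e.\ $w_i$ with its $\tte$'s deleted), of length $\leq(1-\eta)\ell$. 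At that moment the word contains $\geq\eta\ell\cdot m-o(L)\geq\eta L/2$ identity characters, while $\bar w_i$ can be derived to the empty word through intermediate words of length $\leq\mathrm{FL}(\ell)\leq\eta L/4$; padding each such intermediate word with the gathered $\tte$'s keeps the total word length equal to $L$, so block $i$ becomes $\tte^\ell$ without leaving the sector. Iterating reaches $\tte^L$, whence
\begin{equation}
\frac{|\mck_{\max}|}{|\mch|}\ \geq\ \big((1-\epsilon)e^{-f(\ell)}\big)^m d^{\,m\ell-L}\ =\ \exp\!\Big(-m\big(f(\ell)+\log\tfrac1{1-\epsilon}\big)-(L-m\ell)\log d\Big).
\end{equation}
Since $\ell\to\infty$, $m\big(f(\ell)+\log\tfrac1{1-\epsilon}\big)\leq\tfrac L\ell\big(f(\ell)+\log\tfrac1{1-\epsilon}\big)=o(L)$, and $L-m\ell\leq\ell=o(L)$, so the exponent is $-o(L)$; thus $|\mck_{\max}|/|\mch|$ is not $O(e^{-cL})$ for any $c>0$, i.e.\ group dynamics on $G$ is not exponentially strongly fragmented. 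Combined with Prop.~\ref{prop:return} this yields the equivalence between non‑amenability of $G$ and exponentially strong fragmentation.

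The step I expect to be the genuine obstacle is the block‑collapse connectivity argument just sketched. Making it airtight requires passing rigorously between the ``variable‑length word with $\mathrm{FL}(\ell)$ headroom'' picture in which the filling length is defined and the ``fixed length‑$L$ word with $\tte$‑padding'' picture in which the Krylov graph lives; checking that identity characters really can be permuted past arbitrary generators using only the allowed group‑dynamics moves (and dealing with the harmless $l_R=1$ degenerate case, and with replacing $\mathrm{FL}$ by its monotone envelope so it can be applied to subwords of length $\leq\ell$); and verifying that pooling $\eta L/4$ spare $\tte$'s next to a block genuinely realizes every length‑$\leq\mathrm{FL}(\ell)$ intermediate word of some derivation. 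Each of these is morally routine given the structure of group dynamics—and the ``mild assumption'' flagged in the statement is precisely what licenses the $\tte$‑counting input at scale $\ell$—but assembling them into a fully rigorous proof is where the work lies.
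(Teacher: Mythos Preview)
Your proof is correct and takes essentially the same approach as the paper: both invoke Kesten's criterion to get subexponential return probabilities, perform a block decomposition with block length $\ell$ chosen via the inverse filling-length function so that each block can be collapsed using the pooled $\tte$'s, and then count the resulting product words to show the largest sector occupies an $e^{-o(L)}$ fraction of Hilbert space. The only differences are cosmetic---your $\sqrt{L}$ cap on $\ell$ and your more explicit treatment of the block-collapse connectivity step, which the paper simply defers to its Prop.~\ref{prop:fragFL}.
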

    \begin{proof}
    The proof is similar to that of Prop.~\ref{prop:fragFL}.  If $G$ is amenable, then the return probability decays subexponentially with $L$, i.e. $\exp(-f(L))$.  We consider a word $\tte^L$ and split it into $L/\ell$ segments of length $\ell$.  The length of $\ell$ has to be $\ell \sim \mathrm{FL}^{-1}(\eta L)$  to access all words of length $\ell$ representing the identity.  Since this can be done on each segment of length $\ell$ the number of such words which can be formed is
    \begin{equation}
    W \gtrsim \left((1-\epsilon)\frac{(2d+1)^{\ell}}{\exp(f(\ell))}\right)^{L/\ell} \sim |\mch| \frac{(1-\epsilon)^{L/\ell}}{\exp(L f(\ell)/\ell)}.
    \end{equation}
    Since $\ell \sim \mathrm{FL}^{-1}(\eta L)$ which is asymptotically scaling with $L$, $W/|\mch|$ decays subexponentially in $L$ and thus the dynamics cannot exhibit exponential fragmentation.
    \end{proof}

    The above proofs heavily relied on the existence of an ``identity'' character $\tte$.  Suppose one constructs a Hamiltonian implementing rewriting dynamics that does not feature such a character in the local Hilbert space.  An example is the pair-flip model, which is a group model for $G = \mathbb{Z}_2 \ast \mathbb{Z}_2$.  Then, one simply needs to replace $\tte$ with a sequence of characters of length $O(1)$ which represents the identity.  So long as the expected number of such sequences is $O(L)$ for a randomly chosen length-$L$ word representing the identity element, then the same results will hold.  Since the sequence is of length $O(1)$, we expect this to be true, although we leave a proof to future work.

    \section{Gromov boundaries and quantitative Benjamini conjecture for hyperbolic groups}\label{app:proofbonanza2}
    
        In this appendix, we provide the reader with a brief introduction to boundaries of hyperbolic groups, before proving the main theorems in Sec.~\ref{sec:groups}.  This appendix will be divided into four parts.  First, we introduce Gromov boundaries of hyperbolic groups and prove some of their known properties.  Then, we use these properties to prove a quantitative version of Benjamini's conjecture for hyperbolic groups.  We then prove a quantitative version of Benjamini's conjecture for the heat kernel measure in hyperbolic groups.  We finally provide a discussion about the relationship to the amenability of the Poisson boundary, which we believe is ultimately the route towards a more general proof for all groups (and beyond). 

    	First, we will provide a discussion of some important results pertaining to hyperbolic metric spaces and their boundaries.  Understanding properties of the boundaries of these spaces is important because the bottlenecks are produced from finite system size effects, which mean that they have to originate near the boundary of the metric space.  Denote $X$ as a hyperbolic metric space (i.e. a metric space satisfying the $\delta$-thin triangle property for some $\delta > 0$) and call $\partial X$ the {\it Gromov boundary} of that space.  Excellent surveys which will contain most of the results we will use in this section are Refs.~\cite{hullhyperbolic, kapovich2002boundaries}.  We will also be using $d(a,b)$ and $|a-b|$ interchangeably to denote the word distance. The formal definition of the Gromov boundary is given below:
	\begin{definition}[Gromov boundary]
		Given a $\delta$-hyperbolic metric space $X$, define a geodesic ray to be an isometry $\gamma: [0,\infty) \to X$ such that the set of points $\gamma(0),\gamma(1),\cdots, \gamma(t)$ is a geodesic for all $t$.  We define the equivalence relation $\sim$ to mean that $\gamma_1 \sim \gamma_2$ if there exists a $K > 0$ such that $d(\gamma_1(t), \gamma_2(t)) \leq K$ for all $t > 0$.  Call $\Gamma$ the space of geodesic rays.  The {\it Gromov boundary} $\p X$ is then defined as $\partial X = \Gamma / \sim$, i.e. the set of equivalence classes of geodesic rays.
	\end{definition}
	To build intuition, we list some examples of Gromov boundaries:
	\begin{enumerate}
		\item The Gromov boundary of $\mathbb{Z}$ is $\partial \mathbb{Z} \cong \{-\infty, \infty\}$
		\item The Gromov boundary of the free group is $\partial F_n \cong \mathcal{C}$ where $\mathcal{C}$ is a Cantor set
		\item The Gromov boundary of Fuchsian groups $\Gamma$ (certain discrete subgroups of $PSL(2,\mathbb{R})$) is $\partial \Gamma \cong S^1$.  This follows from the fact that Fuchsian groups are surface groups of the hyperbolic plane $\mathbb{H}^2$ and $\partial \mathbb{H}^2 \cong S^1$.
		\item The Gromov boundary of most hyperbolic groups is homeomorphic to a Menger sponge~\cite{champetier1995proprietes, kapovich2002boundaries}.
	\end{enumerate}
	One nice property of the Gromov boundary is that it is a quasi-isometry invariant.  Roughly speaking, this means that two geodesic metric spaces $X$ and $Y$ are not `equivalent' to one another (i.e. mappable via a quasi-isometry) if their Gromov boundaries are not equivalent.  Since hyperbolic groups have a variety of different kinds of Gromov boundaries (including examples with non-integer Hausdorff dimensions), the classification of these groups is rather rich.  We should {\it not} simply expect to quasi-isometrically embed any hyperbolic group in a simple space like $\mathbb{H}^n$.  Another nice fact is that the Gromov boundary is itself a topological space, and a notion of distance can be defined -- we elaborate on this below.  
	
	One important property of the Gromov boundary is that it has finite Hausdorff dimension.  For our purposes it is more convenient to use a different measure of dimension called the Assouad dimension~\cite{assouad1979etude}:
	\begin{definition}[Assouad dimension]
		Let $X$ be a metric space with distance metric $d$.  Let $S_{\alpha, \beta}$ denote the largest set such that all points $x,y$ in the set satisfy $\alpha \leq d(x,y) \leq \beta$.  
    The Assouad dimension of $X$ is then defined by 
	\be d_a(X) = \min \{ \, t\, : \, |S_{\a,\b}| \leq K (\b/\a)^t \, \, \, \,\forall \alpha < \beta \}.\ee
    for some constant $K$.
	\end{definition}
    
	We will also need to define the Gromov product, an important notion of distance:
	\begin{definition}[Gromov product]
		Given $x,z,w \in X$, the Gromov product is
		\begin{equation}
			(x|z)_w = \frac{1}{2}\left(d(x,w) + d(z,w) - d(x,z)\right),
		\end{equation}
		which roughly measures how long geodesics remain close to one another.  When $x, w \in X$ but $\gamma \in \partial X$, the Gromov product is
		\begin{equation}
			(x|\gamma)_w = \sup\{ \liminf_{i \to \infty}(x|y_i)_w : y_i \in [\gamma]\}.
		\end{equation}
	\end{definition}
	Using the Gromov product, we can formulate an equivalent definition of a $\delta$-hyperbolic metric space which follows from the thin triangle property (see standard texts for a proof):
    \begin{lemma}[$\delta$-hyperbolic metric space]
    $X$ is a $\delta$-hyperbolic metric space iff for all $x,y,z,w \in X$,
	\begin{equation}
		(x|z)_w \geq (x|y)_w \wedge (y|z)_w - \delta.
	\end{equation}
    where $a \wedge b = \min(a,b)$.  
    \end{lemma}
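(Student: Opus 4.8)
The plan is to prove the two implications separately, treating ``$\delta$-hyperbolic'' as the Rips thin-triangle property and flagging at the outset the standard caveat that the hyperbolicity constant on the two sides of the ``iff'' need only agree up to a bounded multiplicative factor; since the $X$ of interest is a Cayley graph it is a geodesic space, so a geodesic $[a,b]$ exists between any pair of points and both formulations make sense. Throughout I will freely use the elementary identities $(a|b)_w + (w|b)_a = d(w,a)$ and $(c|b)_a + (c|a)_b = d(a,b)$, and the fact that if $p$ lies on a geodesic $[a,b]$ with $d(a,p)=s$ then $(p|b)_a = s$.

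For the direction ``thin triangles $\Rightarrow$ four-point inequality'', I would first show that $(a|b)_w$ is comparable to $d(w,[a,b])$, the distance from $w$ to a geodesic joining $a$ and $b$. The bound $(a|b)_w \leq d(w,[a,b])$ is immediate from the triangle inequality and uses no hyperbolicity. For the reverse bound I would invoke the insize reformulation of thinness (a $\delta$-thin geodesic triangle has its three internal points pairwise within $4\delta$; this is the standard Rips-to-Gromov dictionary, and I would cite a standard text for it), applied to the geodesic triangle with vertices $w,a,b$: its internal point on $[w,a]$ sits at distance exactly $(a|b)_w$ from $w$, its internal point on $[a,b]$ at distance $(w|b)_a$ from $a$, and comparing them yields $d(w,[a,b]) \leq (a|b)_w + 4\delta$. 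With this comparison in hand I would apply thinness once more, now to the triangle $x,y,z$: a point $p$ of $[x,z]$ realizing $d(w,[x,z])$ lies within $\delta$ of $[x,y]\cup[y,z]$, so $\min\bigl(d(w,[x,y]),d(w,[y,z])\bigr) \leq d(w,[x,z]) + \delta$. Converting all three distances back to Gromov products via the comparison gives the four-point inequality $(x|z)_w \geq (x|y)_w \wedge (y|z)_w - 5\delta$, i.e.\ with a constant that is $O(\delta)$.

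For the converse, ``four-point inequality $\Rightarrow$ thin triangles'', I would argue directly at the level of Gromov products. Fix a geodesic triangle with vertices $a,b,c$ and a point $p$ on $[a,b]$ with $d(a,p)=s$. Using $(c|b)_a + (c|a)_b = d(a,b)$, after possibly swapping $a$ and $b$ we may assume $s \leq (c|b)_a$, so there is a point $q$ on $[a,c]$ with $d(a,q)=s$. Then $(p|b)_a = (c|q)_a = s$, and since $d(a,p)=d(a,q)=s$ one has $d(p,q) = 2\bigl(s - (p|q)_a\bigr)$, so it suffices to bound $(p|q)_a$ below. Applying the four-point inequality twice, $(p|c)_a \geq \min\bigl((p|b)_a,(b|c)_a\bigr) - \delta \geq s - \delta$ (using $s \leq (c|b)_a$), and then $(p|q)_a \geq \min\bigl((p|c)_a,(c|q)_a\bigr) - \delta \geq s - 2\delta$, whence $d(p,q) \leq 4\delta$. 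Thus every point of $[a,b]$ lies within $4\delta$ of $[a,c]\cup[b,c]$, which is the thin-triangle property with constant $4\delta$.

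The main (really the only) delicate point is the constant bookkeeping in the forward direction: one must pass from the Rips thin-triangle formulation to the equivalent ``insize'' formulation, which rescales $\delta$ by a bounded factor, and then carry the accumulated multiples of $\delta$ through the two uses of thinness. None of this is deep; it is the classical dictionary between the Rips and Gromov formulations of hyperbolicity, and I would relegate the insize lemma to a citation. The converse direction, by contrast, is entirely self-contained, using only the four-point inequality together with the elementary Gromov-product identities above.
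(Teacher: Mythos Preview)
Your proof is correct, but you should know that the paper does not actually prove this lemma: it explicitly defers to standard references with the parenthetical ``see standard texts for a proof'' and states the result without argument. So there is no paper proof to compare against; you have supplied one where the authors chose not to.

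That said, your argument is the standard one and is sound on both sides. The forward direction correctly routes through the comparison $(a|b)_w \approx d(w,[a,b])$ (up to additive $O(\delta)$) via the insize/internal-point picture, then applies thinness to the triangle $xyz$ to chain the inequalities; the $5\delta$ you obtain is the expected loss in constants. The converse is clean and self-contained: the identity $d(p,q) = 2(s - (p|q)_a)$ together with two applications of the four-point condition gives $d(p,q) \leq 4\delta$ with no appeal to external lemmas. Your explicit caveat that the two formulations only match up to bounded rescaling of $\delta$ is exactly right and is the honest way to state the equivalence; the paper's ``iff'' with the same $\delta$ on both sides is a slight abuse that is conventional in this literature.
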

    Finally, since the Gromov boundary is a topological space, we additionally would like to endow a metric on it, called the visual metric:
	\begin{definition}[Visual metric on Gromov boundary]
		If $x,y \in \partial X$ and $o \in X$, let $\epsilon > 0$: the {\it visual metric} on the Gromov boundary is given by
		\begin{equation}
			d_{o,\epsilon}(x,y) = \inf_{\{x_i\}: x_0 = x, x_n = y}\left(\sum_{i=1}^n e^{-\epsilon (x_{i-1}|x_i)_o}\right)
		\end{equation}
	\end{definition}
    The following lemma is a well-known result that we present without proof~\cite{hullhyperbolic, kapovich2002boundaries}:
    \begin{lemma}
	If $X$ is $\delta$-hyperbolic, then for $\epsilon$ small enough,
	\begin{equation}
		\frac{1}{2} e^{-\epsilon (x|y)_o} \leq d_{o,\epsilon}(x,y) \leq e^{-\epsilon(x|y)_o}.
	\end{equation}
    $\forall x,y \in \partial X$ and some $o \in X$.
    \end{lemma}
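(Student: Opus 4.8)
The plan is to prove the two inequalities separately. Abbreviate $\rho(x,y):=e^{-\epsilon(x|y)_o}$, so the claim is $\tfrac12\rho(x,y)\le d_{o,\epsilon}(x,y)\le\rho(x,y)$ for all $x,y\in\partial X$. The upper bound is immediate: evaluating the infimum defining $d_{o,\epsilon}$ on the trivial one-link chain $x_0=x,\ x_1=y$ already gives $d_{o,\epsilon}(x,y)\le\rho(x,y)$. All the content is in the lower bound.

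The first real step is to upgrade the four-point characterization of $\delta$-hyperbolicity from points of $X$ to points of $\partial X$. For $x,y,z\in\partial X$ one picks approximating sequences in $X$, applies the bulk inequality $(a|c)_o\ge (a|b)_o\wedge(b|c)_o-\delta$ along them, and uses the standard fact that $\bigl|(a|\gamma)_o-\liminf_i(a|y_i)_o\bigr|$ is controlled by a fixed multiple of $\delta$ for any sequence $y_i\to\gamma$; the net effect is a boundary inequality $(x|z)_o\ge (x|y)_o\wedge(y|z)_o-\delta'$ with $\delta'$ a small constant multiple of $\delta$. Exponentiating turns this into the quasi-ultrametric inequality $\rho(x,z)\le K\max\bigl(\rho(x,y),\rho(y,z)\bigr)$ with $K=e^{\epsilon\delta'}$, valid for all $x,y,z\in\partial X$; note also that $\rho$ is symmetric and vanishes exactly on the diagonal.

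The second step is a purely combinatorial chain lemma: if a symmetric $\rho\ge0$ vanishing on the diagonal satisfies $\rho(x,z)\le K\max(\rho(x,y),\rho(y,z))$, then its chain-infimum $d(x,y)=\inf\sum_{i}\rho(x_{i-1},x_i)$ obeys $d(x,y)\ge K^{-2}\rho(x,y)$. I would prove this by induction on the number $n$ of links. Given a chain $x=x_0,\dots,x_n=y$ of total weight $S=\sum_{i=1}^n\rho(x_{i-1},x_i)$, let $j$ be the largest index with $\sum_{i\le j}\rho(x_{i-1},x_i)\le S/2$; then the sub-chain $x_0\cdots x_j$ and the sub-chain $x_{j+1}\cdots x_n$ each have total weight $\le S/2$, while the middle link satisfies $\rho(x_j,x_{j+1})\le S$. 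By the inductive hypothesis $\rho(x_0,x_j)\le S$ and $\rho(x_{j+1},x_n)\le S$, and two applications of the quasi-ultrametric inequality to the four points $x_0,x_j,x_{j+1},x_n$ give $\rho(x_0,x_n)\le K^2\max\bigl(\rho(x_0,x_j),\rho(x_j,x_{j+1}),\rho(x_{j+1},x_n)\bigr)\le K^2 S$, i.e. $S\ge K^{-2}\rho(x_0,x_n)$; the base case $n=1$ is trivial. Taking the infimum over chains yields $d_{o,\epsilon}(x,y)\ge K^{-2}\rho(x,y)=e^{-2\epsilon\delta'}\rho(x,y)$, and choosing $\epsilon$ small enough that $e^{2\epsilon\delta'}\le 2$ (for instance $\epsilon\le(\ln 2)/(2\delta')$) upgrades this to $d_{o,\epsilon}(x,y)\ge\tfrac12\rho(x,y)$, which together with the upper bound is the claim.

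The main obstacle is the first step: the excerpt states the four-point inequality only for points of $X$, and its passage to $\partial X$ must be done carefully because the boundary Gromov product is defined through a supremum of $\liminf$'s along approximating sequences, so one has to track the correct (fixed, $\delta$-only) additive loss. This is entirely standard, and it affects only the numerical threshold on $\epsilon$; everything else is bookkeeping around the quasi-ultrametric inequality and the chain induction.
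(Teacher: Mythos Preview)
Your proof is correct and follows the standard argument (the Frink-type chain lemma combined with the quasi-ultrametric inequality obtained by exponentiating the four-point condition). Note, however, that the paper does not actually prove this lemma: it is introduced with the phrase ``the following lemma is a well-known result that we present without proof'' and is simply cited from the survey references on hyperbolic groups and their boundaries. So there is no paper proof to compare against; what you have written is essentially the textbook argument found in those references.

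One small point worth making explicit in your write-up: the induction step as you state it silently uses the hypothesis $K^2\le 2$. When you invoke the inductive hypothesis on the sub-chain $x_0\cdots x_j$ of total weight at most $S/2$, you obtain $\rho(x_0,x_j)\le K^2\cdot(S/2)$, and it is only the assumption $K^2\le 2$ that lets you conclude $\rho(x_0,x_j)\le S$ (and likewise for the right sub-chain). You do impose exactly this at the end via $e^{2\epsilon\delta'}\le 2$, so the argument closes, but it would read more cleanly if you stated up front that the inductive claim is ``$\rho(x_0,x_n)\le K^2 S$ provided $K^2\le 2$'' rather than deferring the constraint on $K$ to the last line.
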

	For the rest of this subsection, we will use many ideas from the proof of the Bonk-Schramm embedding theorem, see Ref.~\cite{bonk2011embeddings}:  
    \begin{theorem}[Bonk-Schramm~\cite{bonk2011embeddings}]
    Any Gromov hyperbolic geodesic metric space with bounded growth at some scale is roughly similar to some convex subset of $\mathbb{H}^d$ for some $d$.  By roughly similar, we mean there exists a map $f:X \to Y \subset \mathbb{H}^d$ and constants $\alpha$ and $\beta$ such that
    \begin{equation}
    \left|\alpha d_X(x,y) - d_Y(f(x),f(y))\right| \leq \beta
    \end{equation}
    $\forall x,y \in X$ and $\sup_{y \in Y}d_Y(y, f(X)) \leq \beta$.  Here, $d_{X,Y}$ denotes the geodesic distance in $X$ and $Y$.
    \end{theorem}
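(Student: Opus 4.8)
The plan is to prove this by passing to the Gromov boundary, where the hypothesis of ``bounded growth at some scale'' becomes a doubling condition, and then exploiting the fact that a Gromov hyperbolic geodesic space is determined up to rough similarity by (a snowflake‑equivalence class of) its boundary via a functorial ``hyperbolic cone'' construction $\mathrm{Con}(\cdot)$. So the target dimension $d$ will come out as $N+1$, where $N$ is the dimension of a Euclidean space into which $\partial X$ embeds.

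First I would show that if $X$ has bounded growth at some scale, then its Gromov boundary $\partial X$, equipped with a visual metric $d_{o,\epsilon}$ for $\epsilon$ small enough (as in the lemmas above), is a doubling metric space, i.e. $d_a(\partial X) < \infty$ in the notation of the Assouad dimension defined above. This is the step that converts the coarse geometry of $X$ into a finite‑dimensionality statement about $\partial X$: a ball of radius $\sim e^{-\epsilon t}$ in $(\partial X, d_{o,\epsilon})$ is the ``shadow'' of a unit‑size ball sitting at distance $t$ from the basepoint $o$, and the number of such shadows needed to cover a larger shadow is controlled by the growth bound. In parallel I would set up the hyperbolic cone / hyperbolic approximation functor: to a bounded metric space $Z$ one associates a geodesic $\delta$‑hyperbolic space $\mathrm{Con}(Z)$ (a tree‑like union of rescaled copies of $Z$) whose Gromov boundary is identified quasisymmetrically with $Z$, and — the \emph{reconstruction} step — one shows that the natural comparison map $X \to \mathrm{Con}(\partial X)$ is a rough similarity, using that a geodesic hyperbolic $X$ is ``visual'' in the sense that every point lies within bounded distance of a geodesic ray to $\partial X$.

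Next I would embed the boundary into Euclidean space. Since $\partial X$ is doubling, Assouad's embedding theorem produces a snowflake exponent $\sigma \in (0,1)$ and a bi‑Lipschitz embedding of $(\partial X,\, d_{o,\epsilon}^{\,\sigma})$ into some $\mathbb{R}^N$ with $N = N(d_a(\partial X))$; because visual metrics are canonical only up to snowflake equivalence, replacing $d_{o,\epsilon}$ by a power is harmless. Identifying $\mathbb{R}^N$ with $\partial \mathbb{H}^{N+1}\setminus\{\infty\}$ in the upper half‑space model realizes $\partial X$ as a subset of $\partial \mathbb{H}^{N+1}$, and then I would invoke \emph{functoriality}: a power‑quasisymmetric (in particular, bi‑Lipschitz‑after‑snowflaking) embedding of boundaries lifts to a rough‑similarity embedding $\mathrm{Con}(\partial X) \to \mathrm{Con}(\partial\mathbb{H}^{N+1})$, and $\mathrm{Con}(\partial\mathbb{H}^{N+1})$ is in turn roughly similar to a geodesically convex subset of $\mathbb{H}^{N+1}$. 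Composing with the comparison map $X \to \mathrm{Con}(\partial X)$ from the previous step and absorbing all multiplicative and additive errors into constants $\alpha,\beta$ gives a map $f: X \to Y \subset \mathbb{H}^{N+1}$ with $\left|\alpha\, d_X(x,y) - d_Y(f(x),f(y))\right| \leq \beta$ and $\sup_{y\in Y} d_Y(y, f(X)) \leq \beta$, so we take $d = N+1$.

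The main obstacle is the reconstruction/functoriality machinery, not the Euclidean embedding. One must build $\mathrm{Con}(\cdot)$ carefully enough that (i) it takes quasisymmetric maps of spaces to rough similarities of cones with quantitatively controlled constants, (ii) the comparison $X$ versus $\mathrm{Con}(\partial X)$ holds with the basepoint dependence of the visual metric affecting only constants — this is precisely where \emph{power}-quasisymmetry of the boundary, rather than bare bi‑Lipschitz equivalence, is essential, since changing $o$ distorts $d_{o,\epsilon}$ in a controlled but non‑Lipschitz way — and (iii) the interplay between the Assouad snowflake exponent $\sigma$ and the visual parameter $\epsilon$ keeps all constants finite and the dimension $N$ depending only on $d_a(\partial X)$. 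Assouad's theorem is itself a nontrivial input (but can be cited); the genuinely delicate work is the round trip between a hyperbolic space and its boundary while tracking rough‑similarity constants.
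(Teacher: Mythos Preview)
Your proposal is correct and follows essentially the same route the paper sketches: the paper does not actually prove the Bonk--Schramm theorem but only records the strategy from the original reference---show $\partial X$ has finite Assouad dimension, apply Assouad's embedding theorem to get $\partial X\hookrightarrow\mathbb{R}^N$, then ``argue that the rest of the space can be embedded into a space with $\mathbb{R}^N$ as its boundary, namely $\mathbb{H}^{N+1}$.'' Your write-up is a more fleshed-out version of exactly this outline, making the $\mathrm{Con}(\cdot)$ functor and its boundary-to-bulk functoriality explicit where the paper leaves them implicit.
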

    We may expect to compute the expansion of the Cayley graph restricted to a ball by computing the expansion of the associated convex set obtained upon embedding.  Unfortunately characterizing such convex sets in $\mathbb{H}^d$ is not straightforward as these sets can reside in lower-dimensional subspaces or exhibit fractal-like properties.  Instead, a more convenient proof involves working directly with $X$ rather than its embedding, but still relying on ideas from Bonk and Schramm's proof.  In particular, they first prove that the Gromov boundary has finite Assouad dimension.  Then they use a result due to Assouad to embed the boundary into $\mathbb{R}^d$ for some $d$.  Then they argue that the rest of the space can be embedded into a space with $\mathbb{R}^d$ as its boundary, namely $\mathbb{H}^d$.  The first idea in the proof is particularly important to us:
\begin{proposition}\label{prop:assouad}
		The Assouad dimension of the Gromov boundary of a Gromov hyperbolic geodesic metric space with bounded degree is finite.
	\end{proposition}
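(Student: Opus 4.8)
The plan is to follow the strategy of Bonk and Schramm~\cite{bonk2011embeddings}: the bounded-degree hypothesis is a form of ``bounded growth at unit scale'' for $X$, and this, together with the $\delta$-thin triangle property, forces balls in $\partial X$ (equipped with a visual metric) to be coverable by a controlled number of smaller balls, which is exactly finiteness of the Assouad dimension.

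First I would fix a basepoint $o \in X$ and a visual parameter $\epsilon > 0$ small enough that the visual metric $d_{o,\epsilon}$ on $\partial X$ obeys the two-sided bound $\frac12 e^{-\epsilon(\xi|\eta)_o} \le d_{o,\epsilon}(\xi,\eta) \le e^{-\epsilon(\xi|\eta)_o}$ recorded in the Lemma above. Next I would translate the Gromov product of two boundary points into fellow-traveling data for geodesics: if $\gamma_\xi,\gamma_\eta$ are geodesic rays issuing from $o$ and representing $\xi,\eta$, then $d(\gamma_\xi(t),\gamma_\eta(t)) \le 4\delta$ for $t \le (\xi|\eta)_o - C\delta$, while $\gamma_\xi$ and $\gamma_\eta$ have separated by parameter $(\xi|\eta)_o + C\delta$; this is the standard ``approximating tree'' consequence of $\delta$-hyperbolicity. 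Thus, up to $O(\delta)$ ambiguities, a ball $B_{d_{o,\epsilon}}(\xi, r)$ in $\partial X$ is the set of endpoints of geodesic rays that agree with a fixed ray $\gamma_\xi$ up to parameter $s_r := \frac1\epsilon \log(1/r)$.

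Then I would use the bounded-degree hypothesis: if every vertex of $X$ has degree at most $D$, then every metric $R$-ball in $X$ contains at most $D^{R+1}$ vertices. To bound the Assouad dimension, consider any finite set $\{\xi_1,\dots,\xi_N\} \subset \partial X$ with all pairwise $d_{o,\epsilon}$-distances lying in $[\alpha,\beta]$ (with $0 < \alpha < \beta$); fixing one as a ``center'', all others lie in $B_{d_{o,\epsilon}}(\xi_1,\beta)$ and are $\alpha$-separated. By the previous step each $\gamma_{\xi_i}$ agrees with $\gamma_{\xi_1}$ up to parameter $\approx s_\beta$ and any two distinct ones separate by parameter $\approx s_\alpha = s_\beta + \frac1\epsilon\log(\beta/\alpha)$, both up to $O(\delta)$. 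Hence the points $\gamma_{\xi_i}(s_\alpha)$ lie in the $O(\delta)$-tube around the segment of $\gamma_{\xi_1}$ between parameters $s_\beta$ and $s_\alpha$ --- a set of diameter $\le \frac1\epsilon\log(\beta/\alpha) + O(\delta)$ --- and $\alpha$-separation of the $\xi_i$ forces these points to be $\Omega(1)$-separated in $X$. By the volume bound there are at most $D^{\frac1\epsilon\log(\beta/\alpha) + O(\delta)} = K\,(\beta/\alpha)^{(\log D)/\epsilon}$ of them, so $N \le K(\beta/\alpha)^{(\log D)/\epsilon}$, which by definition gives $d_a(\partial X) \le (\log D)/\epsilon < \infty$.

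The main obstacle is making the second step airtight: converting the clean visual-metric inequalities into statements about where geodesic rays from $o$ fellow-travel and branch off requires bookkeeping of the non-uniqueness of geodesics in $X$, of the choice of representing rays for boundary points, and of all the additive $\delta$-constants; and one must verify that an $\alpha$-separated family of boundary points genuinely produces an $\Omega(1)$-separated family of ``branch vertices'' in the tube, so that bounded degree actually caps their number. These are precisely the estimates carried out in~\cite{bonk2011embeddings}; the only thing one needs to check here is that the bounded-degree assumption is a sufficient substitute for their growth hypothesis.
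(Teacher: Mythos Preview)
Your strategy is the same as the paper's (both reproduce Bonk--Schramm): push an $\alpha$-separated set in $\partial X$ lying in a $\beta$-ball to a controlled-radius ball in $X$, then invoke bounded degree. The paper phrases this via the Gromov-product distance formula $d(x,y)=|x|+|y|-2\bigl((a|b)_o\wedge|x|\wedge|y|\bigr)+C(\delta)$, choosing $y_j$ with $|y_j|=s_\alpha$ and $x_j$ with $|x_j|=s_\beta$ and showing the $y_j$ sit in a ball of radius $\log(\beta/\alpha)+C(\delta)$ about $x_1$; you phrase the same via fellow-traveling of rays.

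One geometric slip to fix: the points $\gamma_{\xi_i}(s_\alpha)$ do \emph{not} lie in the $O(\delta)$-tube around the segment $\gamma_{\xi_1}([s_\beta,s_\alpha])$. Each ray $\gamma_{\xi_i}$ fellow-travels $\gamma_{\xi_1}$ only up to parameter $\approx(\xi_1|\xi_i)_o$ and then diverges, so by time $s_\alpha$ it can be as far as $s_\alpha-s_\beta$ from $\gamma_{\xi_1}$. The correct containment (and what the paper actually proves) is that all $\gamma_{\xi_i}(s_\alpha)$ lie in the \emph{ball} of radius $\tfrac1\epsilon\log(\beta/\alpha)+O(\delta)$ about the common near-point $\gamma_{\xi_1}(s_\beta)$; this only changes constants, not the finiteness conclusion. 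On the $\Omega(1)$-separation of the branch vertices that you rightly flag as the crux: evaluating exactly at $s_\alpha$ does not work, since $(\xi_i|\xi_j)_o$ can equal $s_\alpha$ up to $O(1)$ and then $d(\gamma_{\xi_i}(s_\alpha),\gamma_{\xi_j}(s_\alpha))$ is only bounded by $O(\delta)$. The clean repair is to evaluate at $s_\alpha+C$ for a fixed $C=C(\delta,\epsilon)$ large enough that $2C$ beats the additive $\delta$-errors; then the resulting vertices are genuinely $\Omega(1)$-separated and the volume bound $|B(R)|\le D^{R+1}$ caps their number as you wrote.
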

    For the reader's convenience, we will reproduce the proof from Ref.~\cite{bonk2011embeddings} with minor pedagogical differences. We first need the following fact:
    \begin{fact}\label{fact:shortfact}
       Given $X$ a $\delta$-hyperbolic metric space, for $z \in \partial X$ and every $t > 0$, there is an $x \in X$ such that $|x| = t$ (where $|x| \triangleq d(x,o)$ for some specified origin $o$) and $(x|z)_o - t = C(\delta)$. 
    \end{fact}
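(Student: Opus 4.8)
The plan is to realize $x$ as a point on a geodesic running from $o$ towards $z$ and then to control the Gromov product $(x|z)_o$ using only the $\delta$-inequality of the preceding lemma. Represent $z$ by a sequence $(z_i)_{i\ge 1}$ in $X$ with $|z_i|\to\infty$ and $(z_i|z_j)_o\to\infty$; in the situation of interest — the Cayley graph of a finitely generated group — the space is proper, so one may equivalently take $z$ to be represented by a geodesic ray $\gamma:[0,\infty)\to X$ with $\gamma(0)=o$ and simply set $x=\gamma(t)$. In either case the first step is the elementary identity: if $x$ lies on a geodesic segment $[o,y]$ with $d(o,x)=t\le d(o,y)$, then
\begin{equation}
(x|y)_o=\tfrac12\bigl(|x|+|y|-d(x,y)\bigr)=\tfrac12\bigl(t+|y|-(|y|-t)\bigr)=t ,
\end{equation}
so the Gromov product of such an $x$ with the far endpoint of the segment is exactly $t$.

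Second, I would transfer this to the boundary point $z$. Choose $i$ large enough that $(z_i|z)_o\ge t+\delta$ (possible since $(z_i|z)_o\to\infty$, up to the usual $O(\delta)$ slack in the extension of the Gromov product to $\partial X$), pick a geodesic $[o,z_i]$ — which exists because $|z_i|\ge t$ for $i$ large and $X$ is geodesic — and let $x$ be its point at distance $t$ from $o$, so $|x|=t$ and $(x|z_i)_o=t$ by the identity above. The $\delta$-hyperbolicity inequality then gives
\begin{equation}
(x|z)_o\ \ge\ (x|z_i)_o\wedge (z_i|z)_o-\delta\ \ge\ \min(t,\,t+\delta)-\delta\ =\ t-\delta ,
\end{equation}
and combined with the trivial bound $(x|z)_o\le |x|=t$ this yields $|(x|z)_o-t|\le\delta$, which is the claim after absorbing the convention-dependent constants into a single $C(\delta)$.

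The only points requiring care — and the closest thing to an obstacle — are bookkeeping matters rather than genuine difficulties. First, the extension $(x|z)_o=\sup\{\liminf_i (x|y_i)_o:(y_i)\in[z]\}$ is only coarsely well defined, so one must carry an extra additive $O(\delta)$ when passing between the chosen representative sequence and a generic one, and check that different representatives produce the same $x$ up to bounded distance. Second, for a non-proper $\delta$-hyperbolic space a genuine geodesic ray from $o$ to $z$ need not exist; one then replaces $[o,z_i]$ by a $(1,\delta)$-quasigeodesic (or, as above, works directly with the finite segments $[o,z_i]$ and never passes to a limit), and the Morse/stability lemma contributes only a further $\delta$-dependent constant. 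In the application the Cayley graph is locally finite, hence proper, so neither subtlety is active; collecting the constants gives the stated $C(\delta)$.
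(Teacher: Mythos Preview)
Your proof is correct and follows essentially the same approach as the paper: choose a point $w$ (your $z_i$) in $X$ that approximates $z$ well in the sense $(w|z)_o\ge t+\delta$, take $x$ on the geodesic $[o,w]$ at distance $t$, and compare $(x|z)_o$ to $(x|w)_o=t$ via the $\delta$-hyperbolicity inequality. Your version is in fact tidier: you apply hyperbolicity in the direction $(x|z)_o\ge (x|z_i)_o\wedge(z_i|z)_o-\delta$ to obtain the lower bound $t-\delta$ and then invoke the trivial bound $(x|z)_o\le|x|=t$, whereas the paper's displayed inequality runs the other way and yields only $(x|z)_o\le t+\delta$, which is weaker than the trivial bound; your handling of the two inequalities is the cleaner way to close the argument, and your remarks on the $O(\delta)$ slack when extending the Gromov product to $\partial X$ and on properness are appropriate.
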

    \begin{proof}
        To show the fact, consider $w \in X$ with $|w|\geq t$ such that $(w|z) \geq t+\delta$, and call $x$ a point on $[o,w]$ satisfying $|x| = t$.  Using the simple identity $(x|w)_o \leq  d(x,o) \wedge d(w,o)$ and the hyperbolicity condition, we obtain
		\begin{equation}
			t = d(x,o)\wedge d(w,o) \geq (x|z)_o \wedge (w|z)_o - \delta
		\end{equation}
		Since $(w|z) \geq t+\delta$ we must have $(x|z)_o \wedge (w|z)_o = (x|z)_o$ therefore giving the desired result.  We will now spend a bit of time proving the following simple fact.  Suppose that $a,b \in X \cup \partial X$ and $x,y \in X$.  Suppose that $(a|x)_o \geq |x| - \epsilon$ and $(b|y)_o \geq |y| - \epsilon$ for some $\epsilon \geq 0$.  Then,
		\begin{equation}
			d(x,y) = |x| + |y| - 2\left( (a|b)_o \wedge |x| \wedge |y|\right) + C(\delta, \epsilon).
		\end{equation}
		The proof follows from using the hyperbolicity condition many times (we will drop the subscript $o$ in what follows):
		\begin{align}
			(a|b) &\geq (a|x) \wedge (b|x) - \delta  \nonumber \\
			&\geq (a|x) \wedge (b|y) \wedge (x|y) - 2\delta\nonumber \\
			&\geq |x| \wedge |y| \wedge (x|y) - 2\delta - \epsilon.
		\end{align}
		Using the Gromov product identity $(x|y) \leq |x| \wedge |y|$, we have $(a|b) \geq (x|y) - 2\delta - \epsilon$ and therefore $(a|b) \wedge |x| \wedge |y| \geq (x|y) - 2\delta - \epsilon$.  Next, we can apply the hyperbolicity condition multiple times for $(x|y)$ instead:
		\begin{align}
			(x|y) &\geq (x|a) \wedge (y|a) - \delta  \nonumber \\
			&\geq (a|x) \wedge (y|b) \wedge (a|b) - 2\delta\nonumber \\
			&\geq |x| \wedge |y| \wedge (a|b) - 2\delta - \epsilon.
		\end{align}
		Combining these two results gives:
		\begin{equation}
			|x| \wedge |y| \wedge (a|b) - C(\delta, \epsilon) \leq (x|y) \leq |x| \wedge |y| \wedge (a|b) + C(\delta, \epsilon)
		\end{equation}
		proving the fact.
		\end{proof}
        We now prove Prop.~\ref{prop:assouad}:
        \begin{proof}
        {\it of Prop.~\ref{prop:assouad}}. 
		Now, consider points $z_1, z_2, \cdots, z_n \in \partial X$ satisfying $\alpha \leq d_{o,\nu}(z_i, z_j) \leq \beta$.  Rescaling the metric on $X$ so that we can set $\nu = 1$ (see Ref.~\cite{bonk2011embeddings} for further discussion of this point; without the rescaling we would have some additional unimportant dependence on $\nu$), this expression is equivalent to
		\begin{equation}
			-\log \beta - C(\delta) \leq (z_i|z_j) \leq  -\log \alpha + C(\delta)
		\end{equation}
		for $0<\alpha \leq \beta \leq 1$.  We also choose points $x_1, \cdots, x_n \in X$ and $y_1, \cdots, y_n \in X$ which satisfy $|x_j| = -\log \beta, |-\log \beta - (x_j|z_j)| \leq C(\delta)$ and $|y_j| = -\log \alpha, |-\log \alpha - (y_j|z_j)| \leq C(\delta)$ as per the claim made towards the beginning of the proof.  Next, we use Fact~\ref{fact:shortfact} (with $\epsilon$ set to a function of $\delta$) to show that (abusing notation with use of $C(\cdot)$ to indicate different constants)
		\begin{align}
			d(x_i, x_j) &= |x_i| + |x_j| -2((z_i|z_j)\wedge |x_i| \wedge |x_j|) + C(\delta) \nonumber \\
			& \leq -2 \log \beta - 2 (-\log \beta - C(\delta)) + C(\delta) \nonumber \\
			& \leq C(\delta).
		\end{align}
		Similarly, 
		\begin{align}
			d(x_i, y_j) &= |x_i| + |y_j| -2((z_i|z_j)\wedge |x_i| \wedge |y_j|) + C(\delta) \nonumber \\
			&= -\log \beta - \log \alpha - 2 (-\log \beta) + C(\delta) \nonumber \\
			&= \log \frac{\beta}{\alpha} + C(\delta).
		\end{align}
		and, 
		\begin{align}
			d(y_i, y_j) &= |y_i| + |y_j| -2((z_i|z_j)\wedge |y_i| \wedge |y_j|) + C(\delta) \nonumber \\
			&\leq -2\log \alpha -2 (-\log \beta - C(\delta) \wedge -\log \alpha) + C(\delta) \nonumber \\
			&= 2 \log \frac{\beta}{\alpha} + C(\delta).
		\end{align}
		This implies that a ball of radius $R_{\ast} = \log(\beta/\alpha) + C(\delta)$ centered on one of the $x_i$ will contain all of the $y_i$.  
		
		The rest proceeds using standard topological arguments.  If $X$ is a bounded degree graph, then it satisfies the property that there exist constants $R>r>0$ and $N > 0$ such that every open ball of radius $R$ can be covered with $N$ open balls of radius $r$ in $X$.  Furthermore, a ball of radius $2R-r$, denoted $B(2R-r)$ can be covered by $N^2$ open balls of radius $r$ \footnote{The argument is standard: consider $B(R)$ centered at the same point, which is covered by $N$ balls $B_1(r), B_2(r), \cdots, B_N(r)$.  Consider the union $\bigcup_i B_i(R)$; this covers $B(2R-r)$ following from $X$ being a geodesic metric space.  Finally, each one of these balls is covered by $N$ balls of radius $r$ and so $B(2R-r)$ is covered by $N^2$ balls of radius $r$.}.  The generalization to this is $B(nR - (n-1)r)$ being covered by $N^n$ balls of radius $r$.  Call $n_{\ast}$ the smallest integer for which $n_{\ast}(R-r) \geq R_{\ast}$.  Then, the points $\{y_i\}$ are covered by $N^{n_{\ast}}$ radius-$r$ balls.  If $r \ll \log \frac{\beta}{\alpha}$, then balls of radius $r$ cover at most one of the $y_i$'s.  Therefore,
		\begin{equation}
			S_{\alpha, \beta} \leq N^{n_{\ast}} \leq N^{1 + R_{\ast}/(R-r)} \leq C(R,r,N,\delta) \left(\frac{\beta}{\alpha}\right)^{C(R,r,N,\delta)},
		\end{equation}
		which implies that the Assouad dimension of the Gromov boundary is finite.
	\end{proof}
	The finiteness of the Assouad dimension means that the Gromov boundary exhibits an isoperimentric inequality of the form $|\partial A|/|A| \sim |A|^{r}$ for $A \subset \partial X$.  To show a similar bound for the space $X$, we will need a `bulk-boundary' dictionary, which maps points on the Gromov boundary to points in $X$ at some fixed distance to the specified origin $o$.  This concept is also introduced an heavily utilized in Ref.~\cite{bonk2011embeddings}. 
	\begin{theorem}\label{thm:quant_hyperbolic}
		Consider a $\delta$-hyperbolic visual\footnote{A metric space $X$ is visual with respect to $o$ if every point $x \in X$ lies on a roughly-geodesic ray originating from $o$.  Cayley graphs of hyperbolic groups with the word distance are visual.} metric space $X$ where $\partial X$ has non-zero Assouad dimension.  The expansion of a ball $B_o(L)$ centered at $o$ is $\leq e^{-\xi L}$ for some $\xi > 0$.
	\end{theorem}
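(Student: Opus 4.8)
The plan is to exhibit a single subset $R\subset B_o(L)$ with $|R|\le\tfrac12|B_o(L)|$ whose boundary obeys $|\partial R|/|R|\le e^{-\xi L}$; since the expansion of the subgraph induced on $B_o(L)$ is the minimum of $\Phi(R)$ over such sets, this suffices. I would take $R$ to be the ``cone'' over a codimension‑one cut of the Gromov boundary $\partial X$, using the shadow dictionary implicit in Fact~\ref{fact:shortfact} to pass between $X$ and $\partial X$.

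The first step is a bulk--boundary dictionary. For $x\in X$ let $\mathrm{Sh}(x)=\{z\in\partial X:(x|z)_o\ge |x|-C(\delta)\}$; by the visual‑metric comparison lemma this shadow is comparable to a ball of radius $\asymp e^{-\epsilon|x|}$ in $(\partial X,d_{o,\epsilon})$, and because $X$ is visual, Fact~\ref{fact:shortfact} guarantees that every $z\in\partial X$ and every $t$ admit an $x$ with $|x|=t$ and $z\in\mathrm{Sh}(x)$, so the shadows of the sphere $S_o(t)$ form an $\asymp e^{-\epsilon t}$‑cover of $\partial X$. The dictionary identity $d(x,y)=|x|+|y|-2\big((a|b)_o\wedge|x|\wedge|y|\big)+C$ (with $a\in\mathrm{Sh}(x)$, $b\in\mathrm{Sh}(y)$) from the proof of Fact~\ref{fact:shortfact} shows that two points of $S_o(t)$ agree up to bounded distance iff their shadows are $\lesssim e^{-\epsilon t}$‑close, so with bounded degree one gets $|S_o(t)|\asymp N(\partial X,e^{-\epsilon t})$, the covering number at scale $e^{-\epsilon t}$. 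By Prop.~\ref{prop:assouad} the Assouad dimension $\alpha\triangleq d_a(\partial X)$ is finite, and the hypothesis $\alpha>0$ forces $N(\partial X,\rho)$ to grow like a genuine positive power of $1/\rho$; for Cayley graphs of hyperbolic groups one may simply invoke Coornaert's theorem that $\partial X$ is Ahlfors $Q$‑regular with $Q>0$, so that $|S_o(t)|\asymp e^{\epsilon Qt}$ and $|B_o(L)|\asymp e^{\epsilon QL}$. This is exactly where the nonvanishing‑dimension hypothesis is used: if $\partial X$ were $0$‑dimensional (say $X$ a half‑line) balls would have only polynomial volume and the argument would yield a $1/\poly(L)$ bound.

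Next I would cut the boundary. Since $\partial X$ has finite Assouad dimension, Assouad's embedding theorem---the ingredient of the Bonk--Schramm argument that I would borrow---supplies, for some $\theta\in(0,1)$ and some $N$, a bi‑Lipschitz embedding $\iota:(\partial X,d_{o,\epsilon}^{\,\theta})\hookrightarrow\mathbb{R}^N$. Choosing a unit vector $v$ along which the bounded set $\iota(\partial X)$ genuinely spreads and a level $s$, set $A=\iota^{-1}\{\langle x,v\rangle\le s\}$. An averaging (Fubini) argument over the pairs $(v,s)$ against the Ahlfors‑regular measure $\mu$ on $\iota(\partial X)$, together with a scale‑by‑scale Markov step with summable weights, should produce one choice of $(v,s)$ for which $\mu(A)\in[\tfrac14,\tfrac34]$ and, simultaneously for all dyadic scales $\rho=e^{-\epsilon t}$ with $t\le L$,
\[
\mu\big((\partial_{\partial X}A)_\rho\big)\le C\,\rho^\gamma
\]
for a fixed $\gamma>0$, where $\partial_{\partial X}A$ is the topological boundary of $A$ in $\partial X$; the bound comes from the fact that $\iota$ sends an $\rho$‑neighbourhood of $\partial_{\partial X}A$ into a Euclidean slab of width $\asymp\rho^\theta$, whose $\mu$‑measure is controlled by the co‑area behaviour of $\mu$ on parallel slabs.

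Finally I would assemble the estimate with $R=\{x\in B_o(L):\mathrm{Sh}(x)\cap A\neq\varnothing\}$. Each $|R\cap S_o(t)|\asymp N(A,e^{-\epsilon t})$ is a fixed positive fraction of $|S_o(t)|$ since $\mu(A)$ is bounded away from $0$ and $1$, so $|R|\gtrsim|B_o(L)|$ (pass to the complement if $|R|>\tfrac12|B_o(L)|$). A vertex of $R\cap S_o(t)$ can have a neighbour outside $R$ only if its shadow lies within $\asymp e^{-\epsilon t}$ of $\partial_{\partial X}A$, so $|\partial R\cap S_o(t)|\lesssim\mu\big((\partial_{\partial X}A)_{e^{-\epsilon t}}\big)e^{\epsilon\alpha t}\lesssim e^{-\epsilon\gamma t}|S_o(t)|$; summing over $t\le L$ gives $|\partial R|/|R|\lesssim e^{-\epsilon\gamma L}$ up to polynomial factors, hence the expansion of $B_o(L)$ is $\le e^{-\xi L}$ for any $\xi<\epsilon\gamma$. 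The hard part is the middle step: producing a boundary cut whose $\rho$‑neighbourhood has uniformly power‑law‑small measure at every scale. This is where one genuinely needs the metric geometry of the Gromov boundary---finiteness of its Assouad dimension and the Assouad/Bonk--Schramm embedding---rather than just the thin‑triangle axiom, and establishing the existence of a good cutting direction and level for a general Ahlfors‑regular compactum is the technical crux that the sketch above leaves to be filled in.
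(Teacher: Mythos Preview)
Your architecture is correct and matches the paper's: exhibit a cone $R\subset B_o(L)$ over a subset $A$ of the Gromov boundary, use the bulk--boundary dictionary (shadows on your side, the explicit Bonk--Schramm quasi-isometry $f:\mathrm{Con}(\partial X,d)\to X$ on the paper's side) to count $|R\cap S_o(t)|$ and $|\partial R\cap S_o(t)|$ via covering numbers of $A$ and of a thin neighbourhood of $\partial_{\partial X}A$, and sum over $t$. Where you diverge is in the choice of $A$ and in how you control its boundary.

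The paper takes $A$ to be a metric \emph{ball} $\mathcal B_z(\beta)$ in $(\partial X,d_{o,\epsilon})$ rather than a half-space after an Assouad embedding. The boundary of a ball is a thin annulus, and the Assouad-dimension definition is already a statement about separated sets in balls: one has directly $|S_{\alpha,\beta+r\alpha}|-|S_{\alpha,\beta}|\le C(\beta/\alpha)^{d_a}\big((1+r\alpha/\beta)^{d_a}-1\big)\lesssim (\beta/\alpha)^{d_a-1}$. Pulling this back through the quasi-isometry at each radius $n$ gives $|\partial A_n|\lesssim e^{\epsilon(d_a-1)n}$ versus $|A_n|\asymp e^{\epsilon d_a n}$, hence $|\partial R|/|R|\lesssim e^{-\epsilon L}$. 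No Assouad embedding into $\mathbb R^N$, no Coornaert Ahlfors-regularity, no Fubini/Markov over cutting directions and levels --- precisely the ``technical crux'' you flag disappears. Your route would work (the slab-averaging outline is sound), but it buys nothing extra here and imports more machinery; the ball-in-the-boundary trick is the simplification you are missing. One caveat that applies to both approaches: Assouad dimension is an \emph{upper} bound on $|S_{\alpha,\beta}|$, so the matching lower bound on $|A_n|$ (equivalently your $|S_o(t)|\asymp N(\partial X,e^{-\epsilon t})$) genuinely needs an additional input --- the paper asserts existence of a set $Z_n$ with $|Z_n|=C(\beta/\alpha_n)^{d_a}$, which for Cayley graphs follows from Coornaert-type regularity as you note.
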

	\begin{proof}
		The Gromov boundary $\partial X$ of hyperbolic metric space $X$ has $\text{diam}(\partial X) = D$ for $D = O(1)$.  Define the ``cone'' space 
		\begin{equation}
			\text{Con}(\partial X, d) = \partial X \times (0, D].
		\end{equation}
		with $d$ the word metric on $X$.  We can define the metric on this space $\rho:\text{Con}(\partial X,d) \times \text{Con}(\partial X,d) \to \mathbb{R}^+$ satisfying
		\begin{equation}
			\rho((z,h), (z',h')) = 2 \log \left(\frac{d_{o,\epsilon}(z,z') + h \wedge h'}{\sqrt{hh'}}\right).
		\end{equation}
		where $d_{o,\epsilon}$ is the visual metric on $\partial X$ with respect to some basepoint $o$.  Bonk and Schramm also show (see Theorem 7.2 of Ref.~\cite{bonk2011embeddings}) that this space is $\delta$-hyperbolic and visual. Additionally, if $X$ is visual, they show that $X \sim \text{Con}(\partial X, d)$ \footnote{The assumption of the metric space being visual is not needed for the main Bonk-Schramm embedding theorem.}.  In this context $\sim$ means that there exists a quasi-isometry $f: \text{Con}(\partial X, d) \to X$, i.e. satisfying the property that there exist constants $\lambda, K$ such that for all $\zeta = (z,h)$ and $\zeta' = (z',h')$
		\begin{equation}
			\rho(\zeta, \zeta') - K \leq \lambda^{-1}d_X(f(\zeta) - f(\zeta')) \leq \rho(\zeta, \zeta') + K.
		\end{equation}
		 The proof of this property is the proof of Theorem 8.2 in Ref.~\cite{bonk2011embeddings}.  For our purposes, we need the explicit map $f$.  Define $\gamma_z$ to be a geodesic ray with $\gamma_z(0) = o$ which is in the equivalence class $z \in \partial X$.  Then,
		\begin{equation}
			f(z,h) = \gamma_z(\epsilon^{-1} \log(D/h))
		\end{equation}
		obeys the quasi-isometry property.  Suppose we consider a ball on $X$ with radius $L$, denoted as $B(L)$.  The range of $h$ which will contain all points in $B(L)$ is 
		\begin{equation}
			0 \leq \epsilon^{-1} \log(D/h) \leq L
		\end{equation}
		or $h \in [D e^{-\epsilon L}, D]$, which is discretized into $L$ intervals.  We also show the following property of the quasi-isometry.  Consider the points in $X$ corresponding to the image of points in $\mathrm{Con}(X,d)$ with $h = D e^{-\epsilon n}$ for integer $n \in [0,L]$.  Since the distance to neighbors in $X$ is $O(1)$, the corresponding points in $\partial X$ must satisfy $\rho((z,h), (z',h)) = 2 \log (D^{-1} e^{\epsilon n} d_{o,\epsilon}(z,z') + 1) = O(1)$, which is not possible to satisfy if $d_{o,\epsilon}(z,z') \ll D e^{-\epsilon n}$.  Then, at the surface of a ball of radius $n$ in $X$, associated points in $\partial X$ are at least a distance $\sim O(D e^{-\epsilon n})$ from each other (up to multiplicative constants determined by parameters of the quasi-isometry).

    Suppose that we pick $\beta < D$.  The points in $S_{\a,\b}$ can by definition be enclosed in a ball of diameter $\beta$.  Add additional points to this set so that $\alpha \leq d_{o,\epsilon}(z_i,z_j) \leq \beta + r \alpha$ for constant $r$ (related to the parameters of the quasi-isometry $f$): this set, denoted by $S_{\alpha, \beta+k\alpha}$, is enclosed in a ball of diameter $\beta + r \alpha$.  Choose $r$ such that $r \alpha/\beta \ll 1$.  Since $\partial X$ has finite Assouad dimension denoted $d_a > 0$, there exists an $S_{\alpha, \beta}$ satisfying
		\begin{align}
			|S_{\alpha, \beta+r\alpha}| - |S_{\alpha, \beta}| &= C \left(\frac{\beta}{\alpha}\right)^{d_a}\left(\left(1 + \frac{r \alpha}{\beta}\right)^{d_a} - 1\right) \nonumber \\
            &\leq C\left(\frac{\beta}{\alpha}\right)^{d_a}\left(\left(1 + \frac{r \alpha}{\beta}\right)^{\ceil{d_a}} - 1\right) \nonumber \\
			&\leq  C' 2^{d_a} r\left(\frac{\beta}{\alpha}\right)^{d_a - 1}
		\end{align}
		where in the second inequality we used $(1+x)^{\ceil k} - 1 \leq 2^{\ceil{k}} x \leq 2^{k+1} x$ if $x < 1$ and $C' = 2C$.  This provides an upper bound on the number of points adjacent to the boundary of the ball of diameter $\beta$.  Define the set $A_n \subset X$ to be
		\begin{equation}
			A_n = \{x: |x|_o = n, \exists z \in S_{\alpha, \beta}, |x-f(z,D e^{-\epsilon n})| \leq K\}.
		\end{equation}
        for some constant $K$, which is the image of the diameter $\beta$ ball in $\partial X$ at scale $h = D e^{-\epsilon n}$.  Then define 
        \begin{equation}
        A = \bigcup_{n=0}^L A_n.
        \end{equation}
	    To compute $|A_n|$, we define $Z_n$ to be a set of maximal size which can be expressed as $\{\hat{z}_i \in S_{\alpha, \beta}: \forall i \neq j, \, K' e^{-\epsilon n} \leq d_{o,\epsilon}(\hat{z}_i, \hat{z}_j) \leq \beta\}$, where $K'$ is a constant.

       Due to the property of the quasi-isometry we proved earlier, for large enough $K'$ the size of $|A_n|$ is the size of $|Z_n|$, up to an $O(1)$ constant depending on $K, K'$, and the degree of the Cayley graph.  Because of the finite Assouad dimension of $\partial X$, we can choose the set $Z_n$ such that $|Z_n| = C(\beta/\alpha_n)^{d_a} = C (\beta e^{\epsilon n}/D)^{d_a}$.  Therefore,
        \begin{equation}
        |A| \sim \sum_{n=0}^L |Z_n| \sim C \sum_{n=0}^L \left(\frac{\beta e^{\epsilon n}}{D}\right)^{d_a} \sim C' e^{\epsilon d_a L}.
        \end{equation}
        Next, we compute $|\partial A|$.  We construct the set $\partial S = S_{\alpha, \beta+r\alpha} \setminus S_{\alpha, \beta}$.  We then define
        \begin{equation}
			\partial A_n = \{x: |x|_o = n, \exists z \in \partial S, |x-f(z,D e^{-\epsilon n})| \leq K\}.
		\end{equation}
        and similarly define $\partial Z_n$ to be a set of maximal size which is expressed as $\{\hat{z}_i \in \partial S: \forall i \neq j, K' e^{-\epsilon n} \leq d_{o,\epsilon}(\hat{z}_i, \hat{z}_j) \leq \beta+r\alpha\}$.  We then use the bound on $|S_{\alpha, \beta+r\alpha}| - |S_{\alpha, \beta}|$ to show that $|\partial Z_n| \leq C' 2^{d_a} r (\beta/\alpha_n)^{d_a-1} = C' 2^{d_a} r (\beta e^{\epsilon n}/D)^{d_a-1}$.  Therefore,
        \begin{equation}
        |\partial A| \sim \sum_{n=0}^L |\partial Z_n| \lesssim C' 2^{d_a} r \sum_{n=0}^L \left(\frac{\beta e^{\epsilon n}}{D}\right)^{d_a-1} \lesssim C'' e^{\epsilon (d_a-1) L}.
        \end{equation}
        Therefore, there exists an $A \subset X$ such that $|\partial A|/|A| \lesssim e^{-\epsilon L}$, thus proving the upper bound for the expansion.
        
	\end{proof}
    The proof idea crucially only relies on the Assouad dimension of the boundary being finite.  In addition, an important assumption we made for the proof was that the Assouad dimension is strictly positive.  Having zero Assouad dimension would mean that the hyperbolic group has trivial Gromov boundary (i.e. it is isomorphic to a finite set of points) which implies that such a group would be amenable. 

    \subsection{Hyperbolic groups with heat kernel measure}

    Having proved that the unweighted Cayley graph has expansion exponentially small in $L$, we would like to prove a similar statement for the $\nu$-weighted Cayley graph of hyperbolic groups where $\nu$ is the heat kernel measure.  Naively, one can use the same sets $R$ and $\partial R$ constructed in the proof of Theorem~\ref{thm:quant_hyperbolic}, but unfortunately we cannot guarantee that $\nu(\partial R)/\nu(R)$ scales in a similar way as $|\partial R|/|R|$ without requiring more information about $\nu$.  Instead, we resort to a slightly different reasoning, which uses a probabilistic argument along with certain properties of random walks on hyperbolic groups.  In particular, we need the following result, which was proved in~\cite{aoun2022random} and the references therein (see also~\cite{maher2018random}):
    \begin{theorem}\label{thm:random_walk_conc}
    Consider a symmetric (possibly lazy) random walk on hyperbolic group $G$ with step distribution $\mu$.  Construct sample path $w_n = \ttg_1 \ttg_2 \cdots \ttg_n$ where $\ttg_i \sim \mu$.  Suppose $\mu$ is a non-elementary probability distribution on $G$ with bounded support.  Then, for each sample path $w_n$, the limit
    \begin{equation}
    \lim_{n \to \infty} \frac{|w_n|}{n} = v > 0
    \end{equation}
    exists almost surely.  Furthermore, there exists a constant $K$ such that
    \begin{equation}
    \mathbb{P}[||w_n| - v n| \geq n t] \leq 2\,\exp\left(-\frac{n t^2}{K}\right).
    \end{equation}
    \end{theorem}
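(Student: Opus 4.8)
The plan is to prove the two assertions in turn: first the almost-sure existence and positivity of the drift $v$, then the Gaussian deviation estimate around $vn$.

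For the drift I would appeal to Kingman's subadditive ergodic theorem. Put $f_{m,n} = |w_m^{-1} w_n| = |\ttg_{m+1}\cdots \ttg_n|$; the triangle inequality for the word metric gives $f_{0,n} \le f_{0,m} + f_{m,n}$, and since the $\ttg_i$ are i.i.d.\ with $\mu$ of bounded support the array $(f_{m,n})$ is stationary and integrable. Kingman then yields $|w_n|/n \to v$ almost surely and in $L^1$, with $v = \inf_n \mathbb{E}|w_n|/n$ by Fekete's lemma. Positivity of $v$ is where the hypotheses on $G$ and $\mu$ are essential: since $\mathrm{supp}\,\mu$ generates a non-elementary subgroup of the hyperbolic group $G$, a theorem of Kaimanovich identifies the Poisson boundary of $(G,\mu)$ with the Gromov boundary $\partial G$, which is non-trivial; hence the Avez entropy $h(\mu)$ is strictly positive, and Guivarc'h's fundamental inequality $h(\mu) \le v\,\ell_{\mathrm{gr}}$ --- with $\ell_{\mathrm{gr}} < \infty$ since $G$ is finitely generated --- forces $v > 0$. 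Equivalently, one argues directly that $w_n$ converges almost surely in $G \cup \partial G$ to a boundary point and that this convergence occurs at linear speed.

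The deviation bound rests on a single geometric estimate, which I expect to be the main obstacle: an exponential tail bound for the \emph{fellow-traveling length} of the walk, namely constants $C, c > 0$ with $\mathbb{P}\big[(e \mid w_{2n})_{w_n} \ge t\big] \le C e^{-ct}$ uniformly in $n$ (and similarly for translated copies along the path). This is exactly where the $\delta$-thin-triangle property must be combined with non-elementarity and boundedness of the support: after recentering at $w_n$, the reversed walk towards $e$ and the forward walk towards the limiting boundary point are governed by two independent copies of the walk, whose harmonic measures on $\partial G$ are non-atomic; a shadow lemma of Ancona/Gou\"ezel type then bounds the probability that these two boundary-bound trajectories land in a common shadow of ``radius'' $t$ by $e^{-ct}$, which translates into the stated decay for the Gromov product. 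Establishing this estimate with the quantitative exponential rate --- rather than merely ``sublinear tracking'' --- is the crux; the proof of Theorem~\ref{thm:random_walk_conc} in the literature (see the references cited there) is essentially devoted to it.

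Granting the fellow-traveling estimate, the deviation inequality follows by standard concentration machinery. First, the estimate upgrades subadditivity of $\mathbb{E}|w_n|$ to near-superadditivity: from the exact identity $|w_{m+n}| = |w_m| + |w_m^{-1}w_{m+n}| - 2(e \mid w_{m+n})_{w_m}$ together with $\mathbb{E}(e \mid w_{m+n})_{w_m} = O(1)$ one gets $\mathbb{E}|w_{m+n}| \ge \mathbb{E}|w_m| + \mathbb{E}|w_n| - C'$, hence $vn \le \mathbb{E}|w_n| \le vn + C''$. Second, decompose $|w_n|$ dyadically: along a binary subdivision of $[0,n]$, $|w_n|$ equals the sum of the block increments $|w_{(j+1)2^{-k}n}^{-1} w_{j 2^{-k}n}|$ corrected by Gromov-product defects, the increments being identically distributed with bounded means and the defects being exponentially small by the fellow-traveling bound. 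Truncating the (exponentially rare) large-defect events and applying a bounded-differences inequality --- Azuma--Hoeffding on the resulting martingale, or a McDiarmid-type bound with a controlled number of ``bad'' coordinates --- gives $\mathbb{P}\big[\,\big||w_n| - \mathbb{E}|w_n|\big| \ge s\,\big] \le 2 e^{-s^2/(Kn)}$. Combining with $|\mathbb{E}|w_n| - vn| = O(1)$, absorbing that constant into a larger $K$, and using the trivial deterministic bound $\big||w_n| - vn\big| \le M n$ (with $M = \max_{g \in \mathrm{supp}\,\mu}|g|$) to dispatch small $n$, one obtains $\mathbb{P}\big[\,\big||w_n| - vn\big| \ge nt\,\big] \le 2\exp(-nt^2/K)$, as claimed.
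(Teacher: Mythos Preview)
The paper does not prove this theorem at all: it is quoted as a known result, attributed to \cite{aoun2022random} and \cite{maher2018random}, and used as a black box in the proof of Theorem~\ref{thm:quant_hyperbolic_rw}. There is therefore no ``paper's own proof'' to compare your proposal against.

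That said, your sketch is broadly faithful to how these results are established in the cited literature. The drift via Kingman plus positivity from non-triviality of the Poisson boundary (Kaimanovich) and the Guivarc'h inequality is standard. For the concentration, the key ingredient is indeed an exponential tail for the Gromov product $(e\mid w_{m+n})_{w_m}$, which in the bounded-support, non-elementary setting is obtained in Maher--Tiozzo and refined in Aoun--Sert; your identification of this as the crux is correct. The dyadic/martingale reduction you describe is one route, though the cited papers organize the argument somewhat differently (e.g.\ via deviation inequalities for cocycles or via the ``pivoting'' technique of Gou\"ezel). One caution: the step where you ``truncate exponentially rare large-defect events and apply bounded differences'' hides real work, since the Gromov-product corrections at different dyadic scales are not independent and a naive union bound over $O(n)$ defects costs a polynomial factor that must be absorbed; the literature handles this carefully, and you should not expect a clean Azuma--Hoeffding to go through without that bookkeeping.
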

    This theorem shows that the length-$n$ random walk measure concentrates on the surface of a ball of radius $v n$, with Hoeffding-like guarantees on the tails.  Additionally, we will be making use of other properties of random walks on groups, in particular relating to their entropy:
    \begin{definition}[Entropy]
    The entropy of the step distribution $\mu$ of a random walk on $G$ is $H(\mu) = -\sum_{g \in G} \mu(g) \log \mu(g)$.  Denoting $\nu_n = \delta_o \ast \mu^{\ast n}$ where $\ast$ denotes convolution and $\delta_o$ is a point mass distribution at $o$, the entropy density of a random walk on $G$ is
    \begin{equation}
    s = - \lim_{n \to \infty} \frac{1}{n} \sum_{g \in G} \nu_n(g) \log \nu_n(g).
    \end{equation}
    \end{definition}
    The following result holds for Cayley graphs of non-amenable groups:
    \begin{theorem}
    A symmetric random walk (which can be lazy) on a non-amenable group has $s > 0$.
    \end{theorem}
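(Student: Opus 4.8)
The plan is to bound the asymptotic entropy $s$ from below by the logarithm of the inverse spectral radius of the walk, so that Kesten's criterion --- already invoked in the amenability proposition above --- does all the work. Write $\nu_n$ for the $n$-step distribution; since a harmless left translation by $o$ changes neither entropy nor the $\ell^\infty$-norm, we may simply take $\nu_n = \mu^{*n}$, and recall that $H(\mu) < \infty$ because $\mu$ has bounded support. First I would record that $n \mapsto H(\nu_n)$ is subadditive: if $X,Y$ are independent $G$-valued with laws $\mu^{*m},\mu^{*n}$ then $H(XY) \le H(X)+H(Y)$, i.e.\ $H(\nu_{m+n}) \le H(\nu_m)+H(\nu_n)$. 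By Fekete's lemma $s = \lim_n H(\nu_n)/n = \inf_n H(\nu_n)/n$ exists and is finite; in particular $s = \lim_n H(\nu_{2n})/(2n)$.

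The next step is the elementary bound $H(\nu) \ge -\log\|\nu\|_\infty$, valid for any probability measure on a countable group, since $-\nu(g)\log\nu(g) \ge -\nu(g)\log\|\nu\|_\infty$ termwise and $\sum_g \nu(g)=1$. It remains to identify $\|\nu_{2n}\|_\infty$, and here symmetry enters. For any $g$, $\nu_{2n}(g) = \sum_h \mu^{*n}(h)\,\mu^{*n}(h^{-1}g) \le \bigl(\sum_h \mu^{*n}(h)^2\bigr)^{1/2}\bigl(\sum_h \mu^{*n}(h^{-1}g)^2\bigr)^{1/2} = \sum_h \mu^{*n}(h)^2$ by Cauchy--Schwarz and the substitution $k=h^{-1}g$; and since $\mu$ is symmetric, $\sum_h \mu^{*n}(h)^2 = \sum_h \mu^{*n}(h)\mu^{*n}(h^{-1}) = \nu_{2n}(e) = p_{2n}(o,o)$. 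Hence $\|\nu_{2n}\|_\infty = p_{2n}(o,o)$, so $H(\nu_{2n}) \ge -\log p_{2n}(o,o)$.

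Finally I would bring in non-amenability. Keeping only the $g=o$ term in the Chapman--Kolmogorov identity gives supermultiplicativity along even times, $p_{2(m+n)}(o,o) \ge p_{2m}(o,o)\,p_{2n}(o,o)$, so $p_{2n}(o,o)^{1/2n}$ increases to the spectral radius $\rho$ and $p_{2n}(o,o) \le \rho^{2n}$ for every $n$. By Kesten's criterion (item 3 of the amenability proposition above), non-amenability of $G$ is precisely the statement that this decay is exponential, i.e.\ $\rho < 1$. Combining, $H(\nu_{2n})/(2n) \ge -\log p_{2n}(o,o)/(2n) \ge -\log\rho$, and letting $n\to\infty$ yields $s \ge -\log\rho > 0$.

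There is no serious obstacle once Kesten's criterion is granted; the only point requiring care is that $s$ is a "typical-path" averaged quantity whereas $p_{2n}(o,o)$ controls a single, atypical endpoint, and the bridge between them is the observation --- special to \emph{symmetric} walks --- that the return probability dominates every $2n$-step transition probability. For a lazy walk periodicity is a non-issue; for a non-lazy walk one simply restricts throughout to even times, where $\|\nu_{2n}\|_\infty = p_{2n}(o,o)$ still holds, and the conclusion is unchanged.
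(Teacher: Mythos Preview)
Your proposal is correct and follows the same route the paper indicates: the paper simply invokes the inequality $s \ge -2\log\rho$ together with Kesten's criterion, referring to standard texts for the proof, while you supply that proof in full via the chain $H(\nu_{2n}) \ge -\log\|\nu_{2n}\|_\infty = -\log p_{2n}(o,o) \ge -2n\log\rho$. Your constant $-\log\rho$ differs from the paper's stated $-2\log\rho$, but given the paper's own normalization $p_{2n}(e,e)\sim\rho^{2n}$ yours is the correct one, and only positivity matters for the statement.
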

    This follows from the inequality $s \geq -2 \log \rho$, where the return probability is $p_{2n}(e,e) \sim \rho^{2n}$, as well as Kesten's criterion.  We will not provide a proof of the inequality as it can be found in many standard texts.  Next, we prove some simple technical Lemmas before attempting to prove our main result:

    \begin{lemma}\label{lem:technical1}
    Consider the Cayley graph $X$ of hyperbolic group $G$ with Gromov boundary $\partial X$ of diameter $D$ equipped with the map $f:\mathrm{Con}(\partial X, d) \to X$.  Cover $\partial X$ with open balls of diameter $R$ denoted $\mcb_z(R)$, where $z \in \partial X$ is the center of the ball.  Call $C_z = f(\mcb_z(R) \times (0,D])$ the image of the ball under the map $f$.  Consider a symmetric random walk on $G$ with finite entropy density $h$. Defining 
    \begin{equation}
    s_{\max} = \frac{1}{n} \max_{z} \sum_{x \in C_z} -\nu_L(x) \log \nu_L(x),
    \end{equation}
    if $s_{\max} = s/K$ for constant $K$, then $R \geq D \cdot \exp\left(-(\epsilon v - s/(d_a K))L\right)$, where $\epsilon$ is the parameter of the visual metric, $d_a$ is the Assouad dimension of $\partial X$, and  $v$ is defined in Theorem~\ref{thm:random_walk_conc}.
    \end{lemma}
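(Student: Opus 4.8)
The plan is to bound, for each ball-centre $z \in \partial X$, the single-cone entropy contribution $E_z \triangleq \sum_{x \in C_z} -\nu_L(x)\log\nu_L(x)$ from above by a quantity that grows with $R$, and then to set $\max_z E_z = L\,s_{\max} = (L/K)\,s$ and solve for $R$. Two inputs will drive the estimate. First I would use Theorem~\ref{thm:random_walk_conc}: the $L$-step measure $\nu_L$ concentrates on the sphere $S(vL) \triangleq \{x : d(x,o) = \lfloor vL\rfloor\}$, so fixing the shell $\Sigma_\Delta \triangleq \bigcup_{|n-vL|\le\Delta} S(n)$ of half-width $\Delta = C_0\sqrt{L\log L}$, the complement carries mass $\nu_L(X\setminus\Sigma_\Delta) \le 2\exp(-\Delta^2/(\kappa L)) \le 2L^{-C_0^2/\kappa}$ (with $\kappa$ the concentration constant of that theorem), which is super-polynomially small once $C_0$ is taken large. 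Second I would invoke the bulk--boundary dictionary $f(z,h) = \gamma_z(\epsilon^{-1}\log(D/h))$ together with the resolution estimate proved inside Theorem~\ref{thm:quant_hyperbolic} — two points of $X$ at depth $n$ associated with $z',z''\in\partial X$ are distinct lattice points only if $d_{o,\epsilon}(z',z'') \gtrsim De^{-\epsilon n}$ — and the finiteness of the Assouad dimension $d_a$ of $\partial X$ from Prop.~\ref{prop:assouad}.

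From these, the lattice points of $C_z = f(\mcb_z(R)\times(0,D])$ on $S(n)$ are, up to an $O(1)$ multiplicity from the quasi-isometry constants of $f$, in bijection with a $De^{-\epsilon n}$-separated subset of the diameter-$R$ boundary ball $\mcb_z(R)$, whence $|C_z\cap S(n)| \le \max\{1,\,C_1(Re^{\epsilon n}/D)^{d_a}\}$ by the Assouad bound. Summing over the $O(\Delta)$ shells in $\Sigma_\Delta$ (the count being largest at depth $n = vL+\Delta$) gives
\[
\log\bigl|C_z\cap\Sigma_\Delta\bigr| \;\le\; d_a\,\epsilon v L + d_a\log(R/D) + O(\Delta) \;=\; d_a\,\epsilon v L + d_a\log(R/D) + o(L),
\]
since $\Delta = o(L)$ absorbs the $\log\Delta$ from the number of shells and the $\epsilon d_a\Delta$ from widening the innermost estimate.

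The next step would be to convert this count into a bound on $E_z$. Splitting $E_z$ over $C_z\cap\Sigma_\Delta$ and its complement and using the elementary facts $\sum_{x\in F}-\nu(x)\log\nu(x) \le \nu(F)\log(1/\nu(F)) + \nu(F)\log|F|$ and $-t\log t \le 1/e$: the shell part is $\le 1/e + \log|C_z\cap\Sigma_\Delta| \le d_a\epsilon v L + d_a\log(R/D) + o(L)$, while the tail part is $\le 1/e + \nu_L(X\setminus\Sigma_\Delta)\cdot\log|\mathrm{supp}\,\nu_L| \le 1/e + 2L^{-C_0^2/\kappa}\cdot O(L) = o(1)$, using $\log|\mathrm{supp}\,\nu_L| = O(L)$ (the Cayley graph has bounded degree). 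Hence $L\,s_{\max} = \max_z E_z \le d_a\epsilon v L + d_a\log(R/D) + o(L)$; substituting $s_{\max}=s/K$ and rearranging yields
\[
R \;\ge\; D\cdot\exp\!\Bigl(-\bigl(\epsilon v - s/(d_aK)\bigr)L - o(L)\Bigr),
\]
which is the claimed bound once the $o(L)$ slack is absorbed (and which is vacuous, as it must be, when $\epsilon v < s/(d_aK)$, since then it would demand $R > \mathrm{diam}(\partial X)$).

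The hard part will be the middle step: keeping the bulk--boundary correspondence quantitative enough that neither the coarseness of the quasi-isometry $f$ nor the ``$\gtrsim De^{-\epsilon n}$ separation at depth $n$'' estimate spoils the exponential rate, while simultaneously pinning the shell half-width to the narrow window ($\Delta \sim \sqrt{L\log L}$) where the tail mass of $\nu_L$ is already super-polynomially small yet the widening factor $e^{\epsilon d_a\Delta}$ is still subexponential. Both of these rest on the heat-kernel measure not concentrating anomalously inside a single cone, which is precisely where the positivity of the speed $v$ and the finiteness of the Assouad dimension $d_a$ of the Gromov boundary enter.
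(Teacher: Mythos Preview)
Your proposal is correct and follows essentially the same route as the paper's proof: concentrate $\nu_L$ on a shell of half-width $O(\sqrt{L\log L})$ around radius $vL$ via Theorem~\ref{thm:random_walk_conc}, use the bulk--boundary dictionary and the Assouad bound from Prop.~\ref{prop:assouad} to count cone points at that depth, and invert the entropy--size inequality to solve for $R$. The only cosmetic difference is that the paper applies pigeonhole to extract a single sphere $S(n)$ carrying $\gtrsim \exp(sL/K)/(\sqrt{L}\log L)$ cone points and then bounds that one slice by $(R/\alpha)^{d_a}$ with $\alpha \sim De^{-\epsilon vL}$, whereas you sum $|C_z\cap S(n)| \le C_1(Re^{\epsilon n}/D)^{d_a}$ over all shells in $\Sigma_\Delta$; both give the same exponential rate and differ only in the subexponential slack. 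Your explicit use of the conditional-entropy inequality $\sum_{x\in F}-\nu(x)\log\nu(x) \le \nu(F)\log|F| - \nu(F)\log\nu(F)$ makes precise a step the paper compresses into one sentence.
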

    \begin{proof}
     Due to Theorem~\ref{thm:random_walk_conc}, the probability of a random walk $w_L$ being distance $\gg \sqrt{L} \log L$ from the surface of a ball of radius $v L$ scales inverse polynomially in $L$.  Therefore, the contribution to the entropy density from any of these points is $\leq -(\log \nu_{\min})/L^{C+1}$ where $\nu_{\min} \geq 1/(2d+1)^L$ is the smallest probability of any $x \in X$ and $d$ is the number of generators of $G$.  Thus, the contribution to the entropy density from these points is $\leq (\log (2d+1))/L^{C}$.  Since the maximum contribution to the entropy density from one of the regions is $s/K$, this places a lower bound on the size of the region which is $\geq \exp(s L/K)$ nodes.  At least $\geq \frac{\exp(s L/K)}{\sqrt{L} \log L}$ of these nodes reside on the surface of a ball of some radius in the interval $[v L - O(\sqrt{L} \log L), v L + O(\sqrt{L} \log L)]$.  We will simply call this value $v L$ because the subleading contributions do not affect the analysis.

    Next, from the explicit form of the quasi-isometry $f$, we know that points $(z, D e^{-\epsilon v L}) \in \mathrm{Con}(\partial X, d)$ correspond to being close to the surface of a radius-$v L$ ball in $X$.  Furthermore, each point on $X$ near this ball corresponds to a radius $\alpha \sim D e^{-\epsilon v L}$ ball in $\partial X$.  If the Assouad dimension of $\partial X$ is $d_a > 0$, then $(R/\alpha)^{d_a} \gtrsim |S_{\alpha,R}| \geq \exp(s L/K)$, which means that $R \gtrsim D \cdot \exp\left(-(\epsilon v - s/(d_a K))L\right)$.
    \end{proof}
    \begin{lemma}\label{lem:technical2}
    Suppose a subset $P \subset X$ of Cayley graph $X$ contributes $h/K$ to the entropy density with respect to the $L$-step heat kernel measure $\nu_L$.  Then, there exists constants $a < 1$ and $a' < 1$ with $a' > a$ such that
    \begin{equation}
    \frac{s}{K \log(1/a)} \leq \nu_L(P) \leq \frac{s}{K \log(1/a')}
    \end{equation}
    \end{lemma}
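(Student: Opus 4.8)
The plan is to bracket the pointwise information content $-\log\nu_L(x)$ of every point in the support of the heat kernel between two explicit multiples of $L$, and then to integrate these bounds against $\nu_L$ over $P$; the hypothesis that $P$ contributes $s/K$ to the entropy density then pins $\nu_L(P)$ between the two resulting constants. (Here and below I write $s$ for the entropy density; the ``$h$'' appearing in the statement is the same quantity.)

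First I would record the two elementary estimates on $\nu_L(x)$. Since the step distribution includes the identity generator the walk is lazy, so every $x$ with $\nu_L(x)>0$ is reached by at least one length-$L$ path and hence $\nu_L(x)\geq(2d+1)^{-L}$, with $2d+1$ the onsite dimension; equivalently $-\log\nu_L(x)\leq L\log(2d+1)$. For the matching upper bound on $\nu_L(x)$ I would use symmetry and transitivity of the Cayley graph: $\nu_L(x)^2=p_L(e,x)p_L(x,e)\leq\sum_y p_L(e,y)p_L(y,e)=p_{2L}(e,e)$, and by supermultiplicativity of $n\mapsto p_n(e,e)$ one has $p_{2L}(e,e)\leq\rho^{2L}$ where $\rho$ is the spectral radius of the walk; hence $\nu_{\max}:=\max_x\nu_L(x)\leq\rho^L$, i.e. $-\log\nu_L(x)\geq L\log(1/\rho)$. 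Non-amenability of $G$ gives $\rho<1$ by Kesten's criterion, so $\log(1/\rho)>0$. Writing $c_1=\log(1/\rho)$ and $c_2=\log(2d+1)$, we get for every $x\in\mathrm{supp}(\nu_L)$
\begin{equation}
c_1 L\;\leq\;-\log\nu_L(x)\;\leq\;c_2 L .
\end{equation}

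Then I would multiply this chain by $\nu_L(x)\geq 0$, sum over $x\in P$ (points outside the support contribute nothing), and divide by $L$. Using $\frac1L\sum_{x\in P}\nu_L(x)\bigl(-\log\nu_L(x)\bigr)=s/K$, the lower estimate yields $c_1\,\nu_L(P)\leq s/K$ and the upper one yields $c_2\,\nu_L(P)\geq s/K$, that is
\begin{equation}
\frac{s}{Kc_2}\;\leq\;\nu_L(P)\;\leq\;\frac{s}{Kc_1},
\end{equation}
and setting $a=(2d+1)^{-1}$, $a'=\rho$ rewrites this as the claimed $\frac{s}{K\log(1/a)}\leq\nu_L(P)\leq\frac{s}{K\log(1/a')}$.

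It remains only to check the strict ordering $a<a'<1$. We already have $a'=\rho<1$ and $a=(2d+1)^{-1}<1$; for $a<a'$, note that the lazy two-step return probability obeys $p_2(e,e)\geq(2d+1)\cdot(2d+1)^{-2}=(2d+1)^{-1}$ (the identity--identity step, and each generator step followed by its inverse, already give $2d+1$ returning paths), so $\rho\geq p_2(e,e)^{1/2}\geq(2d+1)^{-1/2}>(2d+1)^{-1}=a$. The only input beyond bookkeeping is the uniform heat-kernel bound $\nu_{\max}\leq\rho^L$, which is standard, so I expect no real obstacle; if one prefers a cleaner gap between the constants one may instead take $a'$ to be any fixed number in $(\rho,1)$ at the cost of a slightly weaker bound.
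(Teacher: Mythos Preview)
Your proof is correct and follows essentially the same approach as the paper: both bracket $-\log\nu_L(x)$ between $L\log(1/\rho)$ and $L\log(2d+1)$ via $\nu_{\max}\leq\rho^L$ and $\nu_{\min}\geq(2d+1)^{-L}$, then integrate over $P$ and solve for $\nu_L(P)$, arriving at $a=(2d+1)^{-1}$ and $a'=\rho$. Your version supplies a bit more justification (the explicit $\nu_L(x)^2\leq p_{2L}(e,e)\leq\rho^{2L}$ argument and the check that $a<a'$), but the core argument is identical.
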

    \begin{proof}
    We suppress the subscript $L$ in $\nu_L$ for convenience.  Define $\nu_{\min} = \min_x \nu(x)$ and $\nu_{\max} = \max_x \nu(x)$.  We note the simple inequality
    \begin{equation}
    -\frac{1}{L}\sum_{x \in P} \nu(x) \log \nu_{\max} \leq \frac{s}{K} \leq -\frac{1}{L}\sum_{x \in P} \nu(x) \log \nu_{\min}.
    \end{equation}
    Note that $\nu(x) > 0$ for all $x \in X$.  It is immediate that $\nu_{\min} = (1/(2d+1))^L$ where $d$ is the number of generators, and $\nu_{\max} = \rho^L$ since the identity sector is the largest sector along with Kesten's criterion.  Therefore, writing $\sum_{x \in P} \nu(x) = \nu(P)$,
    \begin{equation}
    \nu(P) \log(1/\rho) \leq \frac{s}{K} \leq \nu(P) \log(2d+1),
    \end{equation}
    and thus $s/(K \log(2d+1)) \leq \nu(P) \leq s/(K \log(1/\rho))$.
    \end{proof}
    
    We now sketch the idea behind the proof of the main result of this subsection.  The main goal is to construct a set $R$ with not too large of measure such that the $\nu$-weighted expansion of $R$ is exponentially small in $L$.  The proof strategy first constructs a region $R'$ which has large probability mass.  Then, a number of ``shells'' $S_m$ are added to $R'$ such that $S_m$ ``wraps around'' $R' \cup S_1 \cup S_2 \cup \cdots \cup S_{m-1}$.  By a probabilistic argument, we show that there is an $m$ where $\nu(S_m)$ is small.  We then call $R$ the space enclosed by the separators $S_m$ and $G$, where $G$ is the boundary of a ball of a certain diameter.  We argue that $\nu(G)$ is small, and $\nu(R)$ is large, therefore showing that the expansion of $\nu(R)$ is small.
    \begin{theorem}\label{thm:quant_hyperbolic_rw}
    Suppose $X$ is the Cayley graph of a hyperbolic group $G$, weighted by the $L$-step heat kernel measure $\nu_L$.  Then, the weighted expansion of $X$ satisfies $\Phi(X) \leq \exp(-\xi L)$ for some $\xi > 0$.
    \end{theorem}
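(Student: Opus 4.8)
\medskip
\noindent\emph{Proof proposal.} The plan is to exhibit a single ``cone-shaped'' subset $R\subset X$ with $\nu_L(R)=\Omega(1)$ whose edge boundary carries only exponentially small $\nu_L$-mass; by the variational characterization of $\Phi$ in Eq.~\eqref{eq:weighted_exp} (equivalently, the formulation in Conjecture~\ref{conj:weightedbenjamini}), producing such an $R$ with $\nu_L(R)<1/2$ immediately yields $\Phi(X)\le e^{-\xi L}$. Throughout I would work inside the bulk--boundary dictionary $f:\mathrm{Con}(\partial X,d)\to X$ from the proof of Theorem~\ref{thm:quant_hyperbolic}, so that cones over subsets of the Gromov boundary $\partial X$ correspond to honest subsets of the Cayley graph and the cone sets built there can be reused.

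First I would localize the measure. Since $G$ is non-amenable the walk has strictly positive entropy density $s$, and by Theorem~\ref{thm:random_walk_conc} the heat kernel measure $\nu_L$ concentrates on the annular shell $\mathcal A=\{x:\,|\,|x|-vL\,|\le C\sqrt L\log L\}$, with the far tail $\{|x|\ge (v+\delta)L\}$ carrying $\nu_L$-mass at most $2e^{-\delta^2L/K}$. Covering $\partial X$ by boundary balls of a common small diameter $R$ and pulling them back through $f$ gives a cover $\{C_z\}$ of $\mathcal A$; using Lemma~\ref{lem:technical1} and the finiteness of the Assouad dimension $d_a=d_a(\partial X)>0$, I would choose $R$ so that each $C_z$ contributes at most $s/K$ to the entropy density, for a large constant $K$. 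By Lemma~\ref{lem:technical2} this bounds $\nu_L(C_z)$ by $\approx s/(K\log(1/\rho))$, so no single $C_z$ has mass above, say, $\tfrac13$; dually, among the $O(1)$ disjoint cones over boundary balls of a fixed diameter $\beta_{\min}$ covering $\mathcal A$, the one with the largest entropy density carries $\Omega(s)$ of it, hence has $\nu_L$-mass $\Omega(1)$. Call this core cone $R'$.

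Next I would build nested separators. Growing $R'$ ``angularly''---increasing the radius of its defining boundary ball from $\beta_{\min}$ toward some $\beta_{\max}<D=\text{diam}(\partial X)$---adds pairwise disjoint onion shells $S_1,\dots,S_M$, each wrapping around $R'\cup S_1\cup\cdots\cup S_{m-1}$ inside the truncated cone of radius $\le(v+\delta)L$. Using the lateral-boundary estimate underlying Theorem~\ref{thm:quant_hyperbolic} (finest scale $\sim De^{-\epsilon(v+\delta)L}$) I would arrange $M$ to be \emph{exponentially large} in $L$ while keeping each $S_m$ a genuine one-layer vertex separator. Disjointness forces $\sum_{m=1}^M\nu_L(S_m)\le 1$, so some $S_{m^\ast}$ has $\nu_L(S_{m^\ast})\le 1/M\le e^{-cL}$. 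Setting $R=R'\cup S_1\cup\cdots\cup S_{m^\ast-1}$ together with whatever lies inside the cap $\mathcal G=\{|x|=(v+\delta)L\}$, and choosing $\beta_{\min},\beta_{\max}$ so that both $R'$ and the complementary cone carry $\Omega(1)$ mass (so one of $R,R^c$ has mass in $[c,\tfrac12]$), every edge from $R$ to $R^c$ lands in $S_{m^\ast}\cup\mathcal G$. Hence the boundary term in $\Phi$ is at most $\nu_L(S_{m^\ast})+\nu_L(\mathcal G)\le e^{-cL}+2e^{-\delta^2L/K}$ while the enclosed region has mass $\Omega(1)$, giving $\Phi(X)\le e^{-\xi L}$ with $\xi$ depending only on $v$, $s$, $d_a$, $\epsilon$, $D$.

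The hard part will be the geometric step: fitting exponentially many genuine thin separators $S_m$ between two concentric cones in an \emph{arbitrary} hyperbolic Cayley graph rather than in literal $\mathbb{H}^d$. This requires transporting a family of nested ``spherical shells'' through the quasi-isometry $f$ from $\mathrm{Con}(\partial X,d)$ to $X$ and verifying both that their images remain disjoint separators of bounded thickness (so bounded degree is respected) and that their number is $e^{\Theta(L)}$. A secondary obstacle, already implicit in Lemmas~\ref{lem:technical1}--\ref{lem:technical2}, is reconciling the cardinality estimates governed by the Assouad dimension of $\partial X$ with the $\nu_L$-mass estimates governed by the entropy density of the walk, with error terms uniform in $L$; this is where the remark that parts of the argument ``can be readily made rigorous'' demands the most care.
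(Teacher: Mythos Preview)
Your overall strategy---a core cone carrying $\Omega(1)$ heat-kernel mass, exponentially many nested angular shells, pigeonhole to find one thin shell, and a cap controlled by the concentration estimate---is exactly the structure of the paper's proof. The use of Lemmas~\ref{lem:technical1}--\ref{lem:technical2} to pin the cone mass via entropy, and of Theorem~\ref{thm:random_walk_conc} for the cap, are the same ingredients.

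There is, however, a concrete gap in the geometric step you correctly flag as the hard part. You take shells of fixed boundary-width at the ``finest scale'' $\sim D e^{-\epsilon(v+\delta)L}$ so as to fit $M\sim e^{\epsilon(v+\delta)L}$ of them, and you place your cap at the \emph{outer} radius $(v+\delta)L$. But the quasi-isometry $f$ resolves $\partial X$ only at scale $\sim e^{-\epsilon n}$ at radius $n$, so a boundary annulus of width $e^{-\epsilon(v+\delta)L}$ maps to the \emph{empty set} at every $n<(v+\delta)L$---in particular at the typical radius $n\approx vL$ where essentially all of $\nu_L$ lives. Your thin shell $S_{m^\ast}$ is therefore not a separator there, and the assertion ``every edge from $R$ to $R^c$ lands in $S_{m^\ast}\cup\mathcal G$'' fails: the lateral boundary of the cone at $n\approx vL$ leaks into higher-indexed shells you have not bounded.

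The paper repairs this by making the annulus width \emph{scale-dependent}: at radius $n$ it uses boundary annuli of width $r e^{-\epsilon n}$, so each shell is one vertex thick at every radius where it is nonempty but exists only for $n\ge n^\ast(m)$. The cap is then placed at the \emph{inner} radius $n^\ast=n^\ast(m^\ast)$, and by restricting $m^\ast$ to a slightly smaller (still exponential) range one forces $n^\ast<vL$, so concentration makes $\nu_L(G_{n^\ast})$ exponentially small. Equivalently, in your fixed-width picture you could keep the outer cap, take shell width $\sim e^{-\epsilon(v-\delta')L}$ (still $e^{\Theta(L)}$ shells), and add an inner cap at $(v-\delta')L$; either way the missing ingredient is an \emph{inner} truncation below the drift radius, not the outer one you wrote down.
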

    \begin{proof}
    We will be denoting subsets of boundaries with calligraphic letters, and subsets of hyperbolic space with latin letters.  First, consider the Gromov boundary $\partial X$ of the hyperbolic metric space $X$ corresponding to the Cayley graph of $G$.  Suppose that $\partial X$ has diameter $D$: then, cover $\partial X$ by open balls of diameter $\beta < D$.  For each such ball $\mathcal{B}_z(\beta/2)$ with $z$ labeling the center of the ball, define the cone $C_z$ to be the image of $\mathcal{B}_z(\beta/2) \times (0,D]$ under the quasi-isometry $f:\text{Con}(\partial X, d) \to X$.  Then,
    \begin{equation}
    \bigcup_{z} \, C_z = X.
    \end{equation}
    Call $s(P)$ the contribution to the entropy density due to set $P$, so that $s(X) = s > 0$.  We define $z_{m} = \mathrm{argmax}_{z} s(C_{\mathcal{B}_z(\beta/2)}) $.  Choose $\beta$ so that $s(C_{z_m}) = s/K$ for some large constant $K$.  Construct $\mathcal{B}_{z_m}(3\beta/2)$, which is a ball of diameter $3\beta$ centered around $z_m$.  The number of balls of diameter $\beta$ that fit in this ball is $\lesssim 3^{d_a}$ with $d_a$ the Assouad dimension of $\partial X$.  Call $C'_{z_m}$ the image of $\mathcal{B}_{z_m}(3\beta/2) \times (0,D]$ under $f$.  Therefore, 
    \begin{equation}\label{eq:bound1}
    \nu(C'_{z_m} \setminus C_{z_m}) \lesssim \frac{3^{d_a}}{\log(1/\rho)} \frac{s}{K} \leq 3^{d_a} \frac{\log(2d+1)}{\log(1/\rho)} \nu(C_{z_m})
    \end{equation}
    where we have used Lemma~\ref{lem:technical2}.  For simplicity, call $S = C'_{z_m} \setminus C_{z_m}$.  We now will subdivide $S$ into many thin ``shells''.  To construct these shells, subdivide $\mathcal{B}_{z_m}(3\beta/2) \setminus \mathcal{B}_{z_m}(\beta/2)$ into concentric annuli of thickness $r e^{-\epsilon n}$ for constant $r$ and integer $n$, with $\epsilon$ is the parameter of the visual metric.  Define $\mca_{n}(\Delta)$ to be the annulus with inner diameter $\Delta$, so that
    \begin{equation}
    \mathcal{B}_{z_m}(3\beta/2) \setminus \mathcal{B}_{z_m}(\beta/2) = \bigcup_{m = 0}^{2\beta e^{\epsilon n}/r} \mca_{n}(\beta + m r e^{-\epsilon n}).
    \end{equation}
    Call $Q_n(R) = f(\mca_{n}(R), D e^{-\epsilon n})$ and $P_n = f(\mathcal{B}_{z_m}(3\beta/2), D e^{-\epsilon n})$.  We then define the ``shells''
    \begin{equation}
    S_m = \bigcup_{n=0}^L Q_n(\beta + m r e^{-\epsilon n}) \cap P_n
    \end{equation}
    which satisfy $S = \bigcup_{m = 0}^{2\beta e^{\epsilon L}/r} S_m$.  To provide some intuition for these shells, note that for large enough $m$ and small $n$, the argument of $Q_n(\beta + m r e^{-\epsilon n})$ can become larger than $3 \beta$.  Thus, the intersection with $P_n$ implies that no points in $Q_n(\beta + m r e^{-\epsilon n})$ are included in $S_m$ for $n$ small enough.  Thus, one should think of these shells as having two open ends whose radii increase as $m$ increases.  Next, we know from Eqn.~\ref{eq:bound1} that
    \begin{equation}
    \sum_{m=0}^{2\beta e^{\epsilon L}/r} \nu(S_m) \lesssim 3^{d_a} \frac{\log(2d+1)}{\log(1/\rho)} \nu(C_{z_m}),
    \end{equation}
    and therefore treating $m$ as a random variable uniformly sampled from $0$ to $2\beta e^{\epsilon L}/r$, we have from Markov's inequality
    \begin{equation}
    \mathbb{P}\left[\nu(S_m) \geq \frac{F}{\beta e^{(\epsilon - \eta)L}} \nu(C_{z_m})\right] \leq \frac{1}{e^{\eta L}}
    \end{equation}
    where all constants $d_a, \rho, d, r$ have been absorbed into a single constant $F$.  This means that there exists an $m^{\ast} \in [0, 2 \beta e^{(\epsilon - \eta) L}/r]$ such that $\nu(S_{m^\ast}) \leq \frac{F}{\beta e^{(\epsilon - \eta)L}} \nu(C_{z_m})$.  As mentioned previously, $S_{m^{\ast}}$ is a shell with two open ends; we now want to compute the radial location of one of the open ends.  For this, we find the value of $n^{\ast}$ for which $\beta + m^{\ast} r e^{-\epsilon n^{\ast}} = 3\beta$.  Since $m^{\ast} \leq 2 \beta e^{(\epsilon - \eta) L}/r$, we have $2 \beta e^{(\epsilon - \eta) L - \epsilon n^{\ast}} \geq 2 \beta$, or $n^{\ast} \leq L(1 - \eta/\epsilon)$.  Then, define $G_{n^{\ast}} = \{x: |x| = n^{\ast}, x \in C'_{z_m}\}$.  The region $S_{m^{\ast}} \cup G_{n^{\ast}}$ naturally separates $X \cap B(L)$ into two pieces, the smaller of which we call $R$.  In particular, $R$ has the property that 
    \begin{equation}
    C_{z_m} \cap (B(L) \setminus B(n^{\ast})) \subset R
    \end{equation}
    Note that $C_{z_m} \cap B(n^{\ast})$ has measure
    \begin{equation}
    \nu(C_{z_m} \cap B(n^{\ast})) \leq \mathbb{P}\left[|w_L| \leq L\left(1 - \frac{\eta}{\epsilon}\right)\right]
    \end{equation}
    where the probability is taken over length-$L$ random walks on the Cayley graph with step distribution $\mu$.  Since $\mu$ has bounded support, we can use Theorem~\ref{thm:random_walk_conc} after choosing $\eta$ so that $1 - \frac{\eta}{\epsilon} = (1-\zeta) v$ and find
    \begin{equation}
    \nu(C_{z_m} \cap B(n^{\ast})) \leq \exp(-\zeta' L)
    \end{equation}
    for constant $\zeta' \propto \zeta^2$.  Then, we use
    \begin{align}
    \nu(\partial R) &\leq \nu(S_{m^\ast}) + \nu(G_{n^{\ast}}) \nonumber \\
    &\leq \frac{F}{\beta e^{(\epsilon - \eta)L}} \nu(C_{z_m})+ \nu(C_{z_m} \cap B(n^{\ast})) \nonumber \\
    &\leq F \beta^{-1} e^{-\epsilon (1-\zeta) v L} \nu(C_{z_m})+ e^{-\zeta' L} \nonumber \\ 
    &\leq F' e^{(\epsilon \zeta v - s/(d_a K)) L} \nu(C_{z_m})+ e^{-\zeta' L} \nonumber \\ &\leq (1+F') e^{-((s/(d_a K)-\epsilon \zeta v) \wedge \zeta') L} \nonumber  
    \end{align}
    where in the second to last line we use Lemma~\ref{lem:technical1}.  Next, since $s(C_{z_m}) = s/K$, we know that $\nu(C_{z_m}) \geq s/(K \log(2d+1))$ but that $\nu(C_{z_m}) \leq s/(K \log(1/\rho)) = 1/100$ for sufficiently large $K$.  Therefore,
    \begin{align}
    \nu(R) &\geq \nu(C_{z_m} \cap (B(L) \setminus B(n^{\ast}))) \nonumber \\
    &= \nu(C_{z_m}) - \nu(C_{z_m} \cap B(n^{\ast})) \nonumber \\
    &\geq \frac{\log(1/\rho)}{100 \log(2d+1)} - e^{-\zeta' L} \geq \frac{\log(1/\rho)}{101 \log(2d+1)} 
    \end{align}
    for sufficiently large $L$.  However, $\nu(R) \leq \nu(C'_{z_m} \cap (B(L) \setminus B(n^{\ast})))$, which is upper bounded by using Eqn.~\ref{eq:bound1}:
    \begin{align}
    \nu(R) &\leq \nu(C'_{z_m} \cap (B(L) \setminus B(n^{\ast}))) \nonumber \\
    &\leq \nu(C'_{z_m}) \nonumber \\
    &\leq \nu(C_{z_m}) + \frac{3^{d_a}}{\log(1/\rho)} \frac{s}{K} \nonumber \\
    &\leq \frac{1}{100} + \frac{3^{d_a}}{\log(1/\rho)} \frac{s}{K}
    \end{align}
    and can be chosen to be less than $1/2$ for sufficiently large $K$.  Therefore, $\nu(\partial R)/\nu(R) \lesssim e^{-((s/(d_a K)-\epsilon \zeta v) \wedge \zeta') L}$, and with $\zeta$ small enough such that $s/(d_a K)-\epsilon \zeta v > 0$, this proves the theorem.
    \end{proof}

    We note that this result also extends to certain generalizations of hyperbolic groups, such as weakly hyperbolic groups.  This is because Theorem~\ref{thm:random_walk_conc} holds for more general hyperbolic metric spaces than Cayley graphs of word hyperbolic groups.
    
    \subsection{Connection to amenability of Poisson boundary}
    The proof above is rather general, and crucially relied on the Assouad dimension of the boundary being finite in order to construct the sets $S_m$.  Therefore, Benjamini's conjecture could be reformulated as a conjecture suggesting that boundaries of infinite bounded degree graphs are amenable.  If they are, then one could construct subsets of the boundary which have vanishing expansion, which can be mapped to a subset of vertices of the graph which also must have vanishing conductance.  In particular, the analysis of Fraczyk and van Limbeek~\cite{fraczyk2019heat} essentially reduces to showing that the boundary of the measured space corresponding to a random walk on group $G$ is amenable, though they do not provide a quantitative estimate of the conductance.  In this subsection, we summarize their results and provide a more heuristic but direct argument that the expansion decays exponentially in $L$ for hyperbolic groups.
    
    To proceed we need to know some properties of random walks on Cayley graphs.  The first result pertains to the asymptotic behavior of such a walk at infinity.   More precisely, we define the {\it Poisson boundary} of a random walk
    \begin{definition}[Poisson boundary]
    Consider a random walk on $G$ with step distribution $\mu$. Assume that for any initial distribution $m$, the support of the measure $m \ast \mu^{\ast n}$ is $G$ in the limit $n \to \infty$, where $\ast$ denotes convolution.  For two sample paths $w = g_0 g_1 g_2\cdots$ and $w' = g_0' g_1' g_2'\cdots$, define the equivalence relation $w \sim w'$ if there exists $k$ and $k'$ such that $w_{m+k} = w_{m+k'}'$ for all $m \geq 0$.  Call $(\Gamma, \nu)$ the measured space of samples paths corresponding to random walks on $G$.  The Poisson boundary of $(G, \mu)$ is the measured space $(\Gamma, \nu)/\sim$.  We denote the Poisson boundary by $(P, \tau)$, with $\tau$ called the hitting measure.
    \end{definition}
    For hyperbolic groups, there are a number of known results that characterize Poisson boundaries.  We will rely on a result due to Kaimanovich~\cite{kaimanovich1994poisson, kaimanovich2000poisson}, who requires an assumption that the step distribution of the random walk has finite entropy.  This is true for our considerations, since $\mu$ has bounded support.
    \begin{theorem}[Kaimonovich]
    Consider a random walk on $G$ with step distribution $\mu$.  If $H(\mu)$ is finite and $\sum_{g \in G} \mu(g) \log |g|$ is finite, then the Poisson boundary of $(G,\mu)$ is $(\partial G, \tau)$, where $\partial G$ is the Gromov boundary of $G$ and $\tau$ is the hitting measure on the Gromov boundary and is non-atomic. 
    \end{theorem}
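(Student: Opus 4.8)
The plan is to follow Kaimanovich's strategy for identifying the Poisson boundary of a random walk on a non-elementary hyperbolic group (we assume throughout, as is implicit in the context, that $G$ is non-elementary and $\mu$ is non-degenerate, i.e. $\mathrm{supp}\,\mu$ generates $G$). The argument has three parts: (i) realize $(\partial G,\tau)$ as a $\mu$-boundary by proving almost-sure convergence of sample paths to the Gromov boundary; (ii) upgrade this to maximality---so that $(\partial G,\tau)$ is the \emph{full} Poisson boundary and not merely a quotient---via Kaimanovich's strip criterion, which is where finiteness of $H(\mu)$ enters; and (iii) deduce non-atomicity of the hitting measure $\tau$ from the absence of finite orbits for the boundary action.

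First I would show that for $\mathbb{P}$-a.e. sample path $w=(w_n)$ the limit $w_\infty = \lim_n w_n$ exists in $\partial G$. The hypothesis $\sum_g \mu(g)\log|g| < \infty$ (automatic here since $\mu$ has bounded support) together with Kingman's subadditive ergodic theorem gives a well-defined speed $v = \lim_n |w_n|/n$, and non-elementarity forces $v > 0$---this is the content already recorded in Theorem~\ref{thm:random_walk_conc}. In a $\delta$-hyperbolic space, a sequence $(w_n)$ with $(w_m|w_n)_e \to \infty$ converges to a point of $\partial X$, and a Borel--Cantelli estimate (using the finite logarithmic moment to control how far the increments backtrack toward $e$) shows $(w_m|w_n)_e \to \infty$ a.s. Hence $w_\infty$ exists; its law is the hitting measure $\tau$, which is $\mu$-stationary, $\tau = \mu * \tau$, so $(\partial G,\tau)$ is a measurable $G$-equivariant quotient of the path space, i.e. a $\mu$-boundary.

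Second---the crux---I would prove maximality of this $\mu$-boundary. By Kaimanovich's strip criterion it suffices to produce a measurable, $G$-equivariant map $(\xi_-,\xi_+)\mapsto S(\xi_-,\xi_+)\subset G$ defined for $(\check\tau\times\tau)$-a.e. pair (where $\check\tau$ is the hitting measure of the reflected walk, to which the backward trajectory converges, independently of $\xi_+$) such that the strips grow subexponentially along the trajectory, $\frac{1}{n}\log\#\big(S(\xi_-,\xi_+)\cap B(e,R_n)\big)\to 0$ with $R_n\sim vn$. The natural choice is a bounded neighbourhood of a geodesic line from $\xi_-$ to $\xi_+$; hyperbolicity makes such geodesics coarsely unique, so the strip is quasi-isometric to $\mathbb{Z}$ and its intersection with $B(e,R_n)$ has size $O(R_n)=O(n)$, trivially subexponential. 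Finiteness of the entropy $H(\mu)$ is exactly the hypothesis that makes the criterion bite: it bounds the asymptotic entropy $s$, and linear strip growth pitted against subexponential entropy growth forces the fibers over $(\partial G,\tau)$ to be trivial. The hard part will be the measure-theoretic bookkeeping: verifying that the strip map is measurable and $G$-equivariant, establishing the shadowing statement that a geodesic from $\xi_-$ to $\xi_+$ stays within bounded distance of $w_{\lfloor vn\rfloor}$ with high probability, and checking that the forward and backward boundary limits are jointly distributed as $\check\tau\times\tau$. Everything else in this step is soft.

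Third, for non-atomicity of $\tau$: suppose $\tau(\{\xi_0\})>0$, and set $c=\max_\xi\tau(\{\xi\})$ and $A=\{\xi:\tau(\{\xi\})=c\}$, which is finite. Stationarity $\tau=\sum_g\mu(g)\,g_*\tau$ forces each $g\in\mathrm{supp}\,\mu$ to permute $A$, so $A$ is a finite $G$-invariant subset of $\partial G$; a non-elementary hyperbolic group has no finite orbit on its boundary, a contradiction. Hence $\tau$ is non-atomic, completing the proof.
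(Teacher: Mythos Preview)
The paper does not actually prove this theorem; it is quoted without proof as a known result of Kaimanovich, with a citation to his original papers. Your proposal is a faithful outline of Kaimanovich's own argument---convergence to $\partial G$ via positive speed, maximality via the strip criterion applied to coarse geodesics between forward and backward boundary points, and non-atomicity from the absence of finite $G$-orbits on $\partial G$---and is correct at the level of a sketch. Since the paper offers no proof to compare against, there is nothing further to contrast.
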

    Therefore, properties of the Poisson boundary can be inherited from properties of the Gromov boundary and the hitting measure.  Now, we will follow the analysis of Fracyzk and van Limbeek, who first proved the heat kernel version of Benjamini's conjecture for Cayley graphs of groups before proving the more general case of bounded-degree graphs~\cite{fraczyk2019heat}.  Call $(P,\tau)$ the Poisson boundary of $(\Gamma, \mu)$.  Construct a sequence of subsets $\{S_n\}$ where $S_n \in P$.  Then for a fixed $n$ and $p \in P$, they construct the set
    \begin{equation}
    A_n(p) = \{g \in G: g p \in S_n\}
    \end{equation}
    Under a $k$-step random walk measure for $k \to \infty$:
    \begin{equation}
    \lim_{k \to \infty} \mu^k(A_n(p)) = \tau(S_n).
    \end{equation}
    Thus, choosing $S_n \subset P$ and $\partial S_n \subset P$, we have that there exist sets $A_n(p) \subset G$ and $\partial A_n(p) \subset G$ such that 
    \begin{equation}
    \frac{\mu^k(\partial A_n(p))}{\mu^k(A_n(p))} \xrightarrow{k \to \infty} \frac{\tau(\partial S_n)}{\tau(S_n)}.
    \end{equation}
    Since $P = \partial G$ due to the result of Kaimonovich the expansion of $A_n(p)$ weighted by a $k$-step random walk converges to the expansion of a subset $S_n \subset \partial G$ on the Gromov boundary weighted by $\tau$.  Next, we use a property of $\tau$ that it has finite Haussdorff dimension (see Theorem 1.3 of Ref.~\cite{blachere2011harmonic})\footnote{We note that this Theorem may be potentially used to provide a more direct proof of our Theorem~\ref{thm:quant_hyperbolic_rw}, which we leave to future work.}:
    \begin{theorem}[Blach\'ere, Ha\"issinsky, Mathieu~\cite{blachere2011harmonic}]
    Suppose $X$ is the Cayley graph of a (non-elementary) hyperbolic group with Gromov boundary $\partial X$ admitting a visual metric $d_{o, \epsilon}$.  Let $\tau$ be the hitting measure of a heat kernel measure with step distribution $\mu$ that has bounded support.  Denote by $\mathcal{B}_z(r)$ a ball in $\partial X$ of radius $r$ centered at $z \in \partial X$.  Then,
    \begin{equation}
    \lim_{r \to 0}\frac{\log \tau(\mathcal{B}_z(r))}{\log r} = \frac{s}{\epsilon v}
    \end{equation}
    for $\tau$-almost every $z \in \partial X$.
    \end{theorem}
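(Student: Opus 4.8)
The plan is to identify a $\tau$-typical boundary point $z$ with the escape point $w_\infty$ of a random-walk sample path $w_n = \ttg_1\cdots \ttg_n$ and to compute the local dimension of $\tau$ at $w_\infty$ by controlling $\tau(\mathcal{B}_{w_\infty}(r))$ through a \emph{shadow lemma}. The ingredients I would assemble are: (i) the asymptotic entropy $s = -\lim_n \tfrac1n\log\nu_n(w_n)$, which exists $\tau$-a.s.\ by the Shannon--McMillan--Breiman theorem for random walks (Kaimanovich--Vershik) and agrees with the entropy-density definition used above; (ii) the speed $v = \lim_n |w_n|/n$, which exists a.s.\ and is the constant $v>0$ of Theorem~\ref{thm:random_walk_conc}; (iii) \emph{geodesic tracking}: since $G$ is hyperbolic and $\mu$ has bounded support, a.s.\ $w_n$ stays within sublinear (indeed $O(\log n)$) distance of a geodesic ray $[o,w_\infty)$, so $(w_n\mid w_\infty)_o = |w_n| + o(n)$; and (iv) the Green metric $d_G(x,y) = -\log F(x,y)$, with $F(x,y)$ the hitting probability, together with the identity $\tfrac1n d_G(o,w_n) \to s$ (Green speed $=$ asymptotic entropy, in the spirit of Blach\'ere--Ha\"issinsky--Mathieu).

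First I would record shadow estimates on both sides. On the boundary: the visual metric satisfies $d_{o,\epsilon}(x,y)\asymp e^{-\epsilon(x\mid y)_o}$, so by tracking the shadow $\mathrm{Sh}(w_n)$ of a bounded ball about $w_n$ (boundary points whose geodesic from $o$ passes near $w_n$) contains $w_\infty$ and has visual diameter $r_n := e^{-\epsilon|w_n| + o(n)}$; moreover every ball $\mathcal{B}_{w_\infty}(r)$ with $r\to 0$ is sandwiched between two shadows $\mathrm{Sh}(w_m)\subset\mathcal{B}_{w_\infty}(r)\subset\mathrm{Sh}(w_{m'})$ with $|m-m'|$ bounded, because consecutive radii $r_m/r_{m+1}$ are bounded (bounded step support). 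On the walk side, the strong Markov property together with the Ancona-type deviation inequalities valid in hyperbolic groups with bounded-support walks give the harmonic-measure shadow lemma $\tau(\mathrm{Sh}(w_n)) \asymp F(o,w_n) = e^{-d_G(o,w_n)}$: a walk whose limit lies in $\mathrm{Sh}(w_n)$ must, up to bounded multiplicative error, pass within bounded distance of $w_n$.

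Combining these, for $\tau$-a.e.\ $z = w_\infty$ (which realizes $\tau$-a.e.\ point of $\partial X$, since $\tau$ is by definition the law of $w_\infty$ and is non-atomic) and along the radii $r_n\to 0$,
\begin{equation}
\frac{\log \tau(\mathcal{B}_{w_\infty}(r_n))}{\log r_n}
= \frac{-\,d_G(o,w_n) + o(n)}{-\,\epsilon\,|w_n| + o(n)}
= \frac{\frac{1}{n}d_G(o,w_n) + o(1)}{\epsilon\,\frac{1}{n}|w_n| + o(1)} \;\longrightarrow\; \frac{s}{\epsilon v},
\end{equation}
using $d_G(o,w_n)/n\to s$ and $|w_n|/n\to v>0$. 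The sandwiching of arbitrary balls between shadows with bounded index gap, plus monotonicity of $r\mapsto\tau(\mathcal{B}_{w_\infty}(r))$, upgrades convergence along $\{r_n\}$ to the full limit $r\to 0$, yielding $\lim_{r\to0}\log\tau(\mathcal{B}_z(r))/\log r = s/(\epsilon v)$ for $\tau$-a.e.\ $z$.

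The main obstacle is item (iv) together with the harmonic-measure shadow lemma: both rest on the Ancona inequalities (a Harnack-type comparison for the Green function along geodesics, available precisely because $G$ is hyperbolic and $\mu$ has bounded support), and ``Green speed $=$ asymptotic entropy'' is itself a substantial theorem rather than a formality. Everything else --- existence of $s$ and $v$, geodesic tracking, and the comparison between shadows and metric balls --- is standard once the Poisson boundary has been identified with $(\partial X,\tau)$, which is exactly the input supplied by Kaimanovich's theorem quoted above.
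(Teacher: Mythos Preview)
The paper does not prove this theorem: it is quoted as a known result from Blach\`ere--Ha\"issinsky--Mathieu and invoked only in the heuristic closing paragraphs of Appendix~\ref{app:proofbonanza2} (indeed, the paper explicitly remarks that using this theorem ``may be potentially used to provide a more direct proof of our Theorem~\ref{thm:quant_hyperbolic_rw}, which we leave to future work''). So there is no in-paper proof to compare your proposal against.

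That said, your sketch is a faithful outline of the actual BHM argument. The architecture---identify $\tau$-typical $z$ with $w_\infty$, pass between visual balls and shadows via $d_{o,\epsilon}\asymp e^{-\epsilon(\cdot\mid\cdot)_o}$ and geodesic tracking, estimate $\tau(\mathrm{Sh}(w_n))\asymp F(o,w_n)$ via Ancona, and then read off the local dimension from $\tfrac1n d_G(o,w_n)\to s$ and $\tfrac1n|w_n|\to v$---is exactly how the original proof is organized, and you correctly flag the two nontrivial inputs (Ancona inequalities for bounded-support walks on hyperbolic groups, and the Green-speed-equals-entropy identity). One small refinement: the sandwiching of an arbitrary ball $\mathcal{B}_{w_\infty}(r)$ between shadows $\mathrm{Sh}(w_m)$ and $\mathrm{Sh}(w_{m'})$ with $|m-m'|=O(1)$ requires not just bounded step support but also that $|w_{n+1}|-|w_n|$ is bounded along the sample path, which is immediate here; however, the shadows one sandwiches between are more naturally indexed by points along the limit geodesic $[o,w_\infty)$ rather than by the walk itself, and the passage between the two uses tracking once more.
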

    
    Our argument will be heuristic from here on.  Using the above result, for a ball of radius $\alpha$ which is small enough, the measure $\tau(\mathcal{B}_z(\alpha)) \sim \alpha^{s/(\epsilon v)}$.  Choosing $S_\beta$ to be a ball $\mathcal{B}_z(\beta)$ for some $z \in P$ we have 
    \begin{equation}
    \tau(S_\beta) \sim \alpha^{s/(\epsilon v)} \left(\frac{\beta}{\alpha}\right)^{d_a}
    \end{equation}
    and choosing $\partial S_{\beta} = S_{\beta + r \alpha} \setminus S_{\beta}$, we have
    \begin{equation}
    \tau(\partial S_{\beta}) = \tau(\partial S_{\beta + r \alpha}) - \tau(\partial S_{\beta}) \lesssim \alpha^{s/(\epsilon v)} \left(\frac{\beta}{\alpha}\right)^{d_a - 1},
    \end{equation}
    the weighted expansion of this set scales like $\tau(\partial S_{\beta})/\tau(S_{\beta}) \lesssim \alpha/\beta$.  We expect that for hyperbolic groups $\alpha$ is exponentially small in $L$ while $\beta$ is $O(1)$, which would imply exponentially small in $L$ expansion for the Cayley graph restricted to a ball of diameter $L$.  However, this argument is clearly a sketch and a more rigorous proof may also provide an avenue towards a general proof for all non-amenable groups (since we would only need to prove that the hitting measure for the boundaries of such groups have finite Hausdorff dimension).
	\section{Details about the spin-1 breakdown model}\label{app:breakdown}
	
	In this appendix, we give a detailed analysis of the Hilbert space structure of the spin-1 breakdown model.
	
	
	

	\subsection{Recurrence relation for $|\mathcal{K}_Q|$}
	
	Denote by $|\mathcal{K}_Q(L)|$ the dimension of the Krylov sector with charge $Q$ and total length $L$. Since $n_i=S^z_i+1$ could be 0, 1, or 2, the maximum value of $Q$ is $Q_{\rm max}=2\sum_{i=1}^{L}2^{i-1}=2^{L+1}-2$ and the total Hilbert space dimension satisfies $\sum_{Q=0}^{Q_{\rm max}}|\mathcal{K}_Q(L)|=3^{L}$. 
	
	Due to the particle-hole symmetry about half filling, the sizes of the Krylov sectors are also symmetric with respect to $Q_{\rm max}/2$:
	\begin{equation}
		|\mathcal{K}_{Q}(L)|=|\mathcal{K}_{Q_{\rm max}-Q}(L)|.
	\end{equation}
	Therefore, below we only need to consider sectors with $Q \leq Q_{\rm max}/2 = 2^L-1$. Notice that for $Q \leq 2^L-1$, the rightmost site can have at most one particle, and hence particles cannot propagate further to the right even if we were to append more empty sites to its right. We thus conclude that $|\mathcal{K}_{Q}(L+1)|=|\mathcal{K}_{Q}(L)|$ for $Q\leq2^{L}-1$.
	
	
	Now consider some state with charge $Q \leq 2^L-1$. If $Q$ is odd, the first site has exactly 1 particle and is frozen. Thus $|\mathcal{K}_Q(L)|$ is completely determined by charges of the system excluding the first site. Relabelling each site $ 2 \leq i \leq L$ as $i-1$ amounts to lowering the weight of each site in computing $Q$ by a factor of 2, and hence 
	\begin{equation}
		|\mathcal{K}_{Q}(L)|=|\mathcal{K}_{(Q-1)/2}(L-1)|, \quad {\rm for} \ Q \ {\rm odd}. 
	\end{equation}
	As explained above, the sizes of the Krylov sectors for the values of $Q$ we focus on do not change if we append more empty sites to the right of the system, we can safely write
	\begin{equation}
		|\mathcal{K}_{Q}(L)|=|\mathcal{K}_{(Q-1)/2}(L-1)| = |\mathcal{K}_{(Q-1)/2}(L)|, \quad {\rm for} \ Q \ {\rm odd}.
	\end{equation}
	
	
	For $Q$ even, the first site has either 0 or 2 particles, and the total size of $\mathcal{K}_Q$ is the sum of these two types of configurations. Using the same line of reasoning as above, we have 
	\begin{eqnarray}
		|\mathcal{K}_{Q}(L)|&&=|\mathcal{K}_{Q/2}(L-1)|+|\mathcal{K}_{Q/2-1}(L-1)|  \nonumber \\
		&&=|\mathcal{K}_{Q/2}(L)|+|\mathcal{K}_{Q/2-1}(L)|, \quad {\rm for} \ Q \ {\rm even}.  \nonumber \\
	\end{eqnarray}
	In the above equations, the first term in the sum represents contributions from configurations with $n_{i=1}=0$, and the second term for $n_{i=1}=2$.
	
	To summarize, we have the following recurrence relation for $|\mathcal{K}_Q(L)|$ with $Q\leq 2^L-1$:
	\begin{equation}
		\label{recurrence_breakdown}
		|\mathcal{K}_{Q}(L)|=
		\left\{
		\begin{aligned}
			&\ |\mathcal{K}_{(Q-1)/2}(L)|,\quad {\rm for} \ Q \ {\rm odd;}\ \\
			&\ |\mathcal{K}_{Q/2}(L)|+|\mathcal{K}_{Q/2-1}(L)|,\quad {\rm for} \ Q \ {\rm even.}
		\end{aligned}
		\right.
	\end{equation}
	
	
	
	\subsection{Size of the largest Krylov sector}
	
	We show that the dimension of the largest Krylov sector grows as $|\mathcal{K}_{\rm max}(L)| \sim \phi^L$, where $\phi=(1+\sqrt{5})/2$ is the golden ratio. The first question we need to address is: which charge sectors have the largest size? We observe from Fig.~\ref{fig:breakdown_Krylov_dimension} that there are four charge sectors (two from each particle-hole sector) having the largest size. Below we will first work out the exact $Q$ of these four largest sectors. Then we derive the asymptotic size of these largest sectors. Again, we restrict ourselves to $Q\leq 2^L-1$ and the other two sectors are related by particle-hole symmetry.
	
	To gain some intuition, we start by listing $|\mathcal{K}_Q(L)|$ for $2 \leq L \leq 4$:
	\begin{itemize}
		\item $L=2$:  
		
		$|\mathcal{K}_0(2)|=1,\ |\mathcal{K}_1(2)|=1,\ {\color{magenta}{|\mathcal{K}_2(2)|=2}}, \ |\mathcal{K}_3(2)|=1;$
		
		\item $L=3$:
		
		${\color{magenta}{|\mathcal{K}_4(3)|=3}},\ |\mathcal{K}_5(3)|=2,\ {\color{magenta}{|\mathcal{K}_6(3)|=3}}, \ |\mathcal{K}_7(3)|=1;$
		
		\item $L=4$:
		
		$|\mathcal{K}_8(4)|=4,\ |\mathcal{K}_9(4)|=3,\ {\color{magenta}{|\mathcal{K}_{10}(4)|=5}}, \\ |\mathcal{K}_{11}(4)|=2, \
		{\color{magenta}{|\mathcal{K}_{12}(4)|=5}}, \ |\mathcal{K}_{13}(4)|=3, \\ |\mathcal{K}_{14}(4)|=4,
		\ |\mathcal{K}_{15}(4)|=1.$        
	\end{itemize}
	These numbers can be easily obtained by using the recurrence relation Eq.~(\ref{recurrence_breakdown}). For each $L$, the largest sectors are highlighted. We make the following simple but important observations from the above list:
	\begin{enumerate}
		\item The largest sectors all have $Q$ even;
		
		\item There are two largest sectors for all $L\geq 3$.
	\end{enumerate}
	The first observation can be easily justified by inspecting the recurrence relation~(\ref{recurrence_breakdown}). For any pair of consecutive charges $(Q, Q+1)$ with $Q$ odd, their Krylov sectors have dimensions $(|\mathcal{K}_{(Q-1)/2}(L)|, |\mathcal{K}_{(Q-1)/2}(L)| + |\mathcal{K}_{(Q+1)/2}(L)|)$. Hence even charge sectors are always bigger than the adjacent odd sectors. 
	
	The second observation requires more care, but is also straightforward to see. Let us write explicitly how we obtain $|\mathcal{K}_Q(L)|$ of the two largest sectors for $L=3$, using data from $L=2$:
	\begin{eqnarray}
		|\mathcal{K}_4(3)| &=& |\mathcal{K}_2(2)| + |\mathcal{K}_1(2)| = 2 + 1, \nonumber \\
		|\mathcal{K}_6(3)| &=& |\mathcal{K}_3(2)| + |\mathcal{K}_2(2)| = 1 + 2. 
	\end{eqnarray}
	Notice that both expressions not only involve the size of the largest even charge sector at system size $L-1$, but also the largest {\it odd} charge sector at $L-1$. This is also a direct consequence of Eq.~(\ref{recurrence_breakdown}), since when $Q$ even, one of the two contributions in the sum $(Q/2, Q/2-1)$ must be odd. Therefore, the appearance of two equally largest charge sectors can be traced back to the fact that there are two odd charge sectors with equal size at the very beginning of the recurrence relation $L=2$: $|\mathcal{K}_1(2)|=|\mathcal{K}_3(2)|=1$, which then propagate to larger $L$'s and generate two sequences following the recurrence relation. 
	
	Now we can put the simple observations made above into a more mathematical form. Denote by $Q^\star(L)$ the charge of the largest Krylov sector for system size $L$. We know from observation 1 that $Q^\star$ must be even. From observation 2 we know that there are two $Q^\star$'s with equal size, which we denote as $Q^\star_1(L)$ and $Q^\star_2(L)$ with $Q^\star_1(L)<Q^\star_2(L)$.
	For convenience, we also denote by $Q^\star_{\rm odd}(L)$ the charge of the largest {\it odd} charge sector for system size $L$. From the first line of Eq.~(\ref{recurrence_breakdown}), we have 
	\begin{equation}
		Q^\star_{\rm odd}(L+1) = 2 Q^\star(L) + 1.
	\end{equation}
	Likewise, $Q^\star_{1,2}$ will also give rise to two sequences of $Q^\star_{\rm odd, 1/2}$. We claim that the following relation holds for all $L$:
	\begin{itemize}
		\item for $L$ even,
		\begin{eqnarray}
			Q^\star_{\rm odd,1}(L) &=& Q^\star_1(L)-1  \\
			Q^\star_{\rm odd,2}(L) &=& Q^\star_2(L)+1
		\end{eqnarray}
		
		\item for $L$ odd,
		\begin{eqnarray}
			Q^\star_{\rm odd,1}(L) &=& Q^\star_1(L)+1  \\
			Q^\star_{\rm odd,2}(L) &=& Q^\star_2(L)-1.
		\end{eqnarray}       
	\end{itemize}
	We can directly check the above relations using the explicit values for $ 2 \leq L \leq 4$ listed above. In general, one can prove this by induction. Suppose $L$ is even, and $Q^\star_{\rm odd,1}(L) = Q^\star_1(L)-1$ holds. For $L+1$, we have
	\begin{equation}
		Q^\star_{\rm odd,1}(L+1) = 2 Q^\star_1(L)+1.
		\label{eq:a12}
	\end{equation}
	Since $|\mathcal{K}_{Q^\star_1}(L+1)| =|\mathcal{K}_{Q^\star_1}(L)| + |\mathcal{K}_{Q^\star_{\rm odd,1}}(L)|$, we must have
	\begin{equation}
		Q^\star_1(L+1) = 2 Q^\star_1(L).
		\label{eq:a13}
	\end{equation}
	By comparing Eqs.~(\ref{eq:a12}) and (\ref{eq:a13}), we see that $Q^\star_{\rm odd,1}(L+1) = Q^\star_1(L+1)+1$. One can similarly prove the rest of the relations by induction. In the process of the proof, we also obtain the following set of recurrence relations for $Q^\star$ alone:
	\begin{equation}
		Q^\star_1(L+1) =
		\left\{
		\begin{aligned}
			&\ 2Q^\star_1(L),\quad {\rm for} \ L \ {\rm even;}\ \\
			&\ 2(Q^\star_1(L)+1),\quad {\rm for} \ L \ {\rm odd,}
		\end{aligned}
		\right.
	\end{equation}
	
	\begin{equation}
		Q^\star_2(L+1) =
		\left\{
		\begin{aligned}
			&\ 2(Q^\star_2(L)+1),\quad {\rm for} \ L \ {\rm even;}\ \\
			&\ 2Q^\star_2(L),\quad {\rm for} \ L \ {\rm odd.}
		\end{aligned}
		\right.
	\end{equation}

	The solution for the sequence $Q^\star_1$ is
	\begin{equation}
		\left\{
		\begin{aligned}
			&Q^\star_1(2l)=\frac{1}{3}2^{2l+1}-\frac{2}{3}\\
			&Q^\star_1(2l+1)=\frac{1}{3}2^{2l+2}-\frac{4}{3}
		\end{aligned}
		\right.
	\end{equation}
	We find that this sequence approaches $\frac{1}{3}Q_{\rm max}(L)$ for both odd and even system sizes as $L$ becomes large. Similarly, the solution for the sequence $Q^\star_2$ is
	\begin{equation}
		\left\{
		\begin{aligned}
			&Q^\star_2(2l)=\frac{5}{12}2^{2l+1}-\frac{4}{3}\\
			&Q^\star_2(2l+1)=\frac{5}{12}2^{2l+2}-\frac{2}{3}
		\end{aligned}
		\right.
	\end{equation}
	We find that this sequence approaches $\frac{5}{12}Q_{\rm max}(L)$ for both odd and even system sizes as $L$ becomes large. Including their particle-hole partners, we conclude that the four largest Krylov sectors have charges
	\begin{equation}
		\left\{ \frac{1}{3}Q_{\rm max}, \quad \frac{5}{12}Q_{\rm max}, \quad \frac{7}{12}Q_{\rm max}, \quad  \frac{2}{3}Q_{\rm max} \right\}. \nonumber
	\end{equation}
	
	
	
	Now it is straightforward to obtain a recurrence relation for the sizes of the largest Krylov sector $|\mathcal{K}_{\rm max}|$. Let us take the sector $Q= Q_{\rm max}/3$ as an example.
	For $L=2l$ even, we have
	\begin{equation}
		\begin{aligned}
			|\mathcal{K}_{\rm max}(2l)|&=|\mathcal{K}_{\frac{1}{3}2^{2l+1}-\frac{2}{3}}(2l)|\\
			&=|\mathcal{K}_{\frac{1}{3}2^{2l}-\frac{1}{3}}(2l)|+|\mathcal{K}_{\frac{1}{3}2^{2l}-\frac{4}{3}}(2l)|\\
			&=|\mathcal{K}_{\frac{1}{3}2^{2l-1}-\frac{2}{3}}(2l)|+|\mathcal{K}_{\frac{1}{3}2^{2l}-\frac{4}{3}}(2l)|\\
			&=|\mathcal{K}_{\frac{1}{3}2^{2l-1}-\frac{2}{3}}(2l-2)|+|\mathcal{K}_{\frac{1}{3}2^{2l}-\frac{4}{3}}(2l-1)|\\
			&=|\mathcal{K}_{\rm max}(2l-2)|+|\mathcal{K}_{\rm max}(2l-1)|.
		\end{aligned}
	\end{equation}
	In going from the first line to the second line, we used Eq. (\ref{recurrence_breakdown})
	
	For $L=2l+1$ odd, we have
	\begin{equation}
		\begin{aligned}
			|\mathcal{K}_{\rm max}(2l+1)|&=|\mathcal{K}_{\frac{1}{3}2^{2l+2}-\frac{4}{3}}(2l+1)|\\
			&=|\mathcal{K}_{\frac{1}{3}2^{2l+1}-\frac{2}{3}}(2l+1)|+|\mathcal{K}_{\frac{1}{3}2^{2l+1}-\frac{5}{3}}(2l+1)|\\
			&=|\mathcal{K}_{\frac{1}{3}2^{2l+1}-\frac{2}{3}}(2l+1)|+|\mathcal{K}_{\frac{1}{3}2^{2l}-\frac{4}{3}}(2l+1)|\\
			&=|\mathcal{K}_{\frac{1}{3}2^{2l-1}-\frac{2}{3}}(2l)|+|\mathcal{K}_{\frac{1}{3}2^{2l}-\frac{4}{3}}(2l-1)|\\
			&=|\mathcal{K}_{\rm max}(2l)|+|\mathcal{K}_{\rm max}(2l-1)|
		\end{aligned}
	\end{equation}
	This gives a Fibonacci sequence for the size of the largest sector, which directly leads to the asymptotic scaling $|\mathcal{K}_{\rm max}(L)| \sim \phi^L$.

	\section{Details of the $tJ_z$ model}\label{app:tJz}
	In this section, we derive rigorous bounds on the thermalization time of the $tJ_z$ model. As stated in the main text, we focus on the stochastic dynamics associated with the circuit-averaged evolution of the RU dynamics $\dyn$. A single timestep of evolution corresponds to the channel
	\be \mcc(\r) = \mcu^\dag \(\Tr_L[\r] \tp \frac{\unit}N\) \mcu,\ee
	where $\Tr_L[\r] \tp \frac{\unit}N$ represents the effect of depolarizing noise coupled to the boundary: the last site of the system is traced out and replaced by a maximally mixed state. $\mcu$ is the depth-2 brickwork circuit consisting of constraint-preserving Haar-random unitary $U_{i,i+1}$:
	\be \mcu = \( \bigotimes_{i=1}^{L/2-1}  U_{2i,2i+1}\) \(\bigotimes_{i=1}^{L/2}  U_{2i-1,2i}\) .\ee 
	For the $tJ_z$ dynamics, we consider unitary gates of the form
	\bea U_{i,i+1}&=\sum_{a\in\{\uparrow,\downarrow\}}C_a\kb{0a}{a0}+h.c.+\sum_{\substack{a,b\neq 0,\\ a=b=0}}e^{i\phi_{ab}}\kb{ab}{ab} \\
	&=U_{tJ_z}^\uparrow\oplus U_{tJ_z}^\downarrow\oplus(\bigoplus_{\substack{a,b\neq 0,\\ a=b=0}}e^{i\phi_{ab}}), 
	\eea
	where the matrices $U_{tJ_z}^a$ are drawn from the Haar random ensemble on $U(2)$, and the rest of the terms in the direct sum act as a diagonal matrix of random phases on the states that are frozen under the dynamics.
	We are interested in the evolution of the state $\rho(t)$ under the circuit-averaged dynamics, i.e.,
	\be \overline{\rho}(t)\equiv \oEE_{\mcc_t}[\mcc_t(\rho)], \ee
	where $\oEE_{\mcc_t}$ denotes averaging over the unitaries constituting $\mcu$. We divide each unit time interval into three steps of length $t=1/3$, where the depolarizing noise is applied at $t\in\nn$, the first layer of $\mcu$ is applied at $t\in\nn+1/3$, and the second layer of $\mcu$ is applied at $t\in\nn+2/3$. Therefore,
	\be \overline{\rho}(t+1/3)=\Tr_L[\overline{\rho}(t)]\otimes\unit. \ee
	Decomposing the state as $\overline{\rho}=\otimes_{i=1}^{L/2-1}\overline{\rho}_{2i,2i+1}$, then averaging over the Haar ensemble gives
	\bea \overline{\rho}_{2i,2i+1}(t+\frac{2}{3})&=\frac{1}{2}\sum_{a\in\{\uparrow,\downarrow\}}\Tr[\overline{\rho}_{2i,2i+1}(t+\frac{1}{3})\Pi_a^{tJ_z}]\Pi_a^{tJ_z} \\ &+ \Tr_{\substack{a,b\neq 0,\\ a=b=0}}[\overline{\rho}_{2i,2i+1}(t+\frac{1}{3})\Pi_{ab}]\Pi_{ab}, \eea
	where the projectors are defined as
	\bea \Pi_a^{tJ_z}&=\kb{0a}{0a}+\kb{a0}{a0}, \\
	\Pi_{ab}&=\kb{ab}{ab}. \eea
	The action of the second layer of the unitary is similar.
	Under time evolution, $\overline{\rho}_{2i,2i+1}$ becomes diagonal in the computational basis, resulting in an effective Markov generator on the probability distribution of the basis states $\{|\psi\rangle\}$,
	\bea \mathbf{p}(\{|\psi\rangle\},t+1)&=\mcm\mathbf{p}(\{|\psi\rangle\},t)\\
	&=\mcm_o\mcm_e\mcm_L \mathbf{p}(\{|\psi\rangle\},t), \eea
	where
	\bea \mcm_L&=\unit_{L-1}\otimes\frac{1}{3}\sum_{a,b}\kb{a}{b} \\
	\mcm_e&=\bigotimes^{L/2-1}_{i=1}M^{tJ_z}_{2i,2i+1}\\
	\mcm_e&=\bigotimes^{L/2-1}_{i=1}M^{tJ_z}_{2i-1,2i} \eea
	and the 2-site stochastic matrix is defined as
	\be M^{tJ_z}\equiv \frac{1}{2}\sum_{a\in\{\uparrow,\downarrow\}}(|0a\rangle+|a0\rangle)(\langle 0a|+\langle a0|)+\sum_{\substack{a,b\neq 0,\\ a=b=0}}\kb{ab}{ab}. \ee
	Therefore, the steady state of this Markov process is the uniform distribution over the Hilbert space, which coincides with the fact that $\overline{\rho}$ will eventually thermalize to a maximally mixed state under $\dyn$.
	
	As is also explained in the main text, it is convenient to consider a non-local Markov process defined as
	\be \mcm_{\text{nonloc}}\equiv (\mcm_o\mcm_e)^{\infty}\mcm_L, \ee
	such that a state thermalizes within a Krylov sector as soon as it reaches that sector. Physically one can think of this as applying a sufficiently deep random unitary circuit between consecutive applications of the boundary noise. By doing so, we essentially focus solely on the inter-sector dynamics due to the boundary depolarizing noise and ignore the intra-sector thermalization process.  The lack of locality in the dynamics leads to a thermalization time that is faster by a factor of $\mco(1/L)$, since the number of unitaries needed to spread the effect of the bath to the bulk of the system is $\mco(L)$. However, since we are after a lower bound on the thermalization time, there is no loss of generality in considering non-local dynamics.
	
	\subsection{Expansion and thermalization time}
	Following the discussion in \ref{sec: graph expansion}, we would like to find the expansion $\Phi(\mcg_\mck)$ of the Markov process $\mcm_{\text{nonloc}}$, defined as 
	\bea \Phi(\mcg_\mck)&=\min_{R\in \mcg_\mck:|R|\leq|\mch|/2}\Phi(R), \\
	\Phi(R)&\equiv\frac{1}{|R|}\sum_{\psi\in R,\psi'\in R^c}\langle\psi'|\mcm_{\text{nonloc}}|\psi\rangle. \eea
	
	For a Krylov sector $\mck_{\mathbf{s}_d}$ labeled by the spin pattern $\mathbf{s}_d=(s_1s_2\dots s_d)$, we define the cone $C_{\mathbf{s}_d}$ to be a subset of the Hilbert space that contains all the states with the same spin pattern up to the $d$-th particle, i.e.,
	\be C_{\mathbf{s}_d}\equiv \bigoplus_{\mathbf{s}'_l=(s_1s_2\dots s_d)\times(s'_{d+1}\dots s'_l), l\geq d}\mck_{\mathbf{s}'_l}. \ee
	When the exact spin pattern is not important, we can represent $\mck_{\mathbf{s}_d}$ and $C_{\mathbf{s}_d}$ as $\mck_d$ and $C_d$ (see Fig.~\ref{fig:tJz} for an illustration of $C_1$). As illustrated in Fig.~\ref{fig:tJz}, $C_d$ can be visualized as a branch of the binary tree $\mcg_\mck$ at depth $d$.
	
	To compute $\Phi(C_d)$, it is easy to see that the probability for the state $|\psi\rangle$ in $C_d$ to move to the complement $C_d^c$ under one application of the depolarizing noise is only nonzero when $|\psi\rangle$ has $d$ particles with the last site occupied, i.e.,
	\bea \sum_{\substack{\psi\in C_d,\\ \psi'\in C_d^c}}\langle\psi'|\mcm_{\text{nonloc}}|\psi\rangle &=\frac{2}{3}|\{\psi\in\mck_d:\sum_\sigma c^\dagger_{L,\sigma} c_{L,\sigma}|\psi\rangle\neq 0\}|\\
	& =\frac{2}{3}{L-1 \choose d-1}. \eea
	The denominator is simply the dimension of $C_d$,
	\be |C_d|=\sum_{l=0}^{L-d} 2^l|\mck_{d+l}|=\sum_{l=0}^{L-d} 2^l {L\choose d+l}. \ee
	Since $|C_d|$ increases as $d$ decreases, and the transition probability decreases as $d$ decreases for $d<L/2$, the expansion of the Krylov graph is lower bounded by the expansion of the cone at $d=1$, 
	\be \Phi(\mcg_\mck)\geq\Phi(C_1)=\frac{2}{3}\frac{1}{\sum_{l=0}^{L-1}2^l{L \choose 1+l}}=\frac{4}{3(3^L-1)}. \ee
	Hence, from Cheeger's inequality in Eq.~(\ref{cheeger}), the thermalization time is lower bounded by
	\be \tth\geq\frac{1}{2\Phi(\mcg_\mck)}\geq\frac{3}{8}(3^L-1), \ee
    which scales exponentially as the system size.
	
	\subsection{Magnetization relaxation time}
	Now we examine the relaxation time of the expectation value of local operators, namely, the average magnetization
	\be m\equiv\frac{1}{L}\sum_{i=1}^L(\kb{\uparrow}{\uparrow}_i-\kb{\downarrow}{\downarrow}_i). \ee
	We define the relaxation time $t_m(\gamma;\psi)$ as the time needed for the expectation value of $m$ in the circuit-averaged state $\overline{\rho}(t,\psi)\equiv \oEE_{\mcc_t}[\mcc_t(\kb{\psi}{\psi})]$ to relax below $\gamma$ of its equilibrium value 0:
	\be t_m(\gamma;\psi)\equiv \min\{t:|\lan m\ran_{\overline{\rho}(t,\psi)}|\leq \gamma\}. \ee
	Then the thermalization time is lower bounded by the magnetization relaxation time $t_m(\gamma, \psi_{\max})$ of the state with the maximal magnetization $\lan m\ran=1$, i.e., $|\psi_{\max}\ran\equiv |\uparrow\ran^{\otimes L}$. The inhomogeneity of the distribution of charges in the Krylov graph allows us to select a subspace $A$ of $\mch$ with nonzero expectation value of $m$. We will show that when $A$ is large enough, it takes exponentially long time for a state initialized in $A$ to move to the complement $A^c$, which sets a lower bound for $t_m(\gamma; \psi_{\max})$. Define $A$ to be the cone $C_{\mathbf{s}_{\eta L}}$ that contains all the states with the same spin pattern $\mathbf{s}_{\eta L}=(\uparrow\uparrow\dots\uparrow)$ for the first $\eta L\in \mathbb{N}$ particles, we are interested in the expectation value of $m$ at time $t$ over initial states in $A$,
	\bea \lan m\ran_A(t)&\equiv \oEE_{\phi\in A}\lan m\ran_{\overline{\rho}(t;\phi)} \\
	&=\oEE_{\phi\in A}\biggl(\sum_{\phi'\in A} P(\psi(t)=\phi'|\psi(0)=\phi)\lan m\ran_{\phi'}\\
	& \quad +\sum_{\phi'\in A^c} P(\psi(t)=\phi'|\psi(0)=\phi)\lan m\ran_{\phi'} \biggr) \\
	&\geq\oEE_{\phi\in A}\left(\sum_{\phi'\in A} P(\psi(t)=\phi'|\psi(0)=\phi)\lan m\ran_{\phi'}\right)\\
	& \quad -P(\psi(t)\in A^c|\psi(0)=\phi), \eea
	where the last inequality holds since $\lan m\ran_\phi\geq -1$ for all $\phi$. In order to compute the first term, we define $\chi_d^A$ to be the set of states in $A$ with the same number of particles $d$ with $d\geq\eta L$. Then it is easy to see that the average transition probability $\oEE_{\phi\in A} P(\psi(t)=\phi'|\psi(0)=\phi)$ is the same for all $\phi'\in\chi_d^A$, and since $\sum_{\phi'\in\chi_d^A}\lan m\ran_{\phi'}=\eta$, we have
	\bea \quad &\oEE_{\phi\in A}\biggl(\sum_{\phi'\in A}P(\psi(t)=\phi'|\psi(0)=\phi)\lan m\ran_{\phi'}\biggr) \\ 
	&=\oEE_{\phi\in A}\biggl(\sum_{d=\eta L}^L \eta P(\psi(t)\in \chi_d^A|\psi(0)=\phi)\biggr) \\
	&=\eta P(\psi(t)\in A|\psi(0)\in A). \eea
	Therefore,
	\bea \lan m\ran_A(t)&\geq\eta-(1+\eta)P(\psi(t)\in A^c|\psi(0)\in A) \\
	&\geq \eta-(1+\eta)t\Phi(A), \eea
	where the transition probability is bounded from above by $t\Phi(A)$, 
    and the expansion of $A$ is given by 
	\be \Phi(A)=\Phi(C_{\eta L})=\frac{2}{3}{L-1 \choose \eta L-1}\Bigg/ \sum_{l=0}^{(1-\eta)L} 2^l{L\choose \eta L+l}. \ee
	To get a tight bound of $\lan m\ran_A$, we need to take $\frac{1}{L}\geq \eta\ll 1$ such that the expansion of the cone is exponentially small. Using Stirling's approximation for $\eta\ll 1$, we have
	\be {L-1 \choose \eta L-1}\approx\frac{(L-1)^{\eta L-1}}{(\eta L-1)!}, \ee
	The denominator becomes,
	\bea \sum_{l=0}^{(1-\eta)L}& 2^l{L\choose \eta L+l}=2^{-\eta L}\Bigg[3^L-\sum_{d=0}^{\eta L-1}2^d{L\choose d}\Bigg] \\
	& > 2^{-\eta L}\Bigg[3^L-(2L)^{\eta L-1}/(\eta L-2)!\Bigg],\\
	\eea
	where the second term is exactly $1$ when $\eta L=1$. Therefore, the inverse of the expansion gives
	\be \Phi(A)^{-1}\gtrapprox\frac{3}{2} 3^{(1-\eta \log_3 2)L}\bigg(\frac{\eta L-1}{e(L-1)}\bigg)^{\eta L-1}+C, \ee
	where the second term becomes a constant $C$ after taking the limit $L\to\infty$ while keeping $\eta L$ finite, and the first term is obtained using Stirling's approximation $x!\approx(x/e)^x$.
	
	The lower bound on $t_m(\gamma;\psi_{\max})$ is given by taking $\eta=2\gamma$, i.e.,
	\bea t_m(\gamma;\psi_{\max})&\gtrapprox 3^{(1-2\gamma\log_3[\frac{e(L-1)}{\gamma L-1}])L}\times\frac{e(L-1)}{2\gamma L-1}\times\frac{3}{2(1+2\gamma)}\\
	& \approx 3^{(1-2\gamma\log_3[e/\gamma])L}\times\frac{3e}{4\gamma(1+2\gamma)}.
	\eea
	When $\gamma\to \frac{1}{L}$, we have $\lan m\ran_A(t)\geq 2\gamma-t\Phi(C_1)$, so that we recover the bound of the thermalization time 
	\be t_m(\gamma\to \frac{1}{L};\psi_{\max})\approx t_{\text{th}}\geq \frac{3}{8}(3^L-1). \ee

	\section{The expansion $\Phi(\mcg_{\mathcal{K}})$ of the particle-conserving East model}
	\label{app:east}

	\begin{figure}
		\centering
		\includegraphics[width=0.45\textwidth]{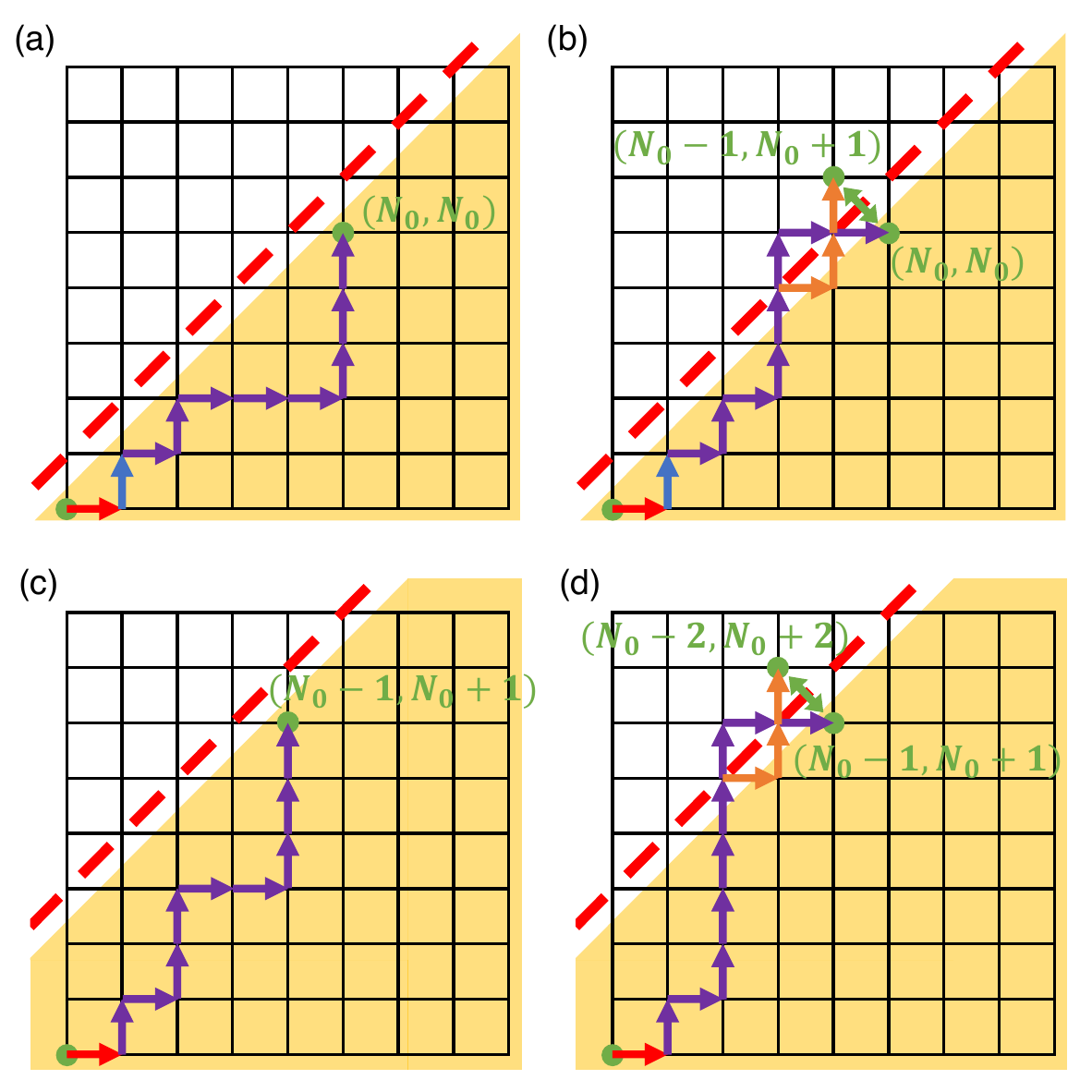}
		\caption{Computing the size of the thermalized Krylov sectors by mapping to the combinatorial problem of counting the number of allowed monotonic paths on a lattice. (a) and (b): Allowed and disallowed monotonic paths in the numerator of Eq. (\ref{eq:east_expansion}). The red arrow represents the first site fixed to be occupied while the blue arrow represents the second site fixed to be empty in the numerator. (c) and (d): Allowed and disallowed monotonic paths in the denominator. The 2 figures are examples of the sector $\mathcal{K}(N_0-1, 2N_0-1)$, where the red line is moved up to $y=x+3$ and the end point of the path is moved to $(N_0-1, N_0+1)$, according to the constraints. The first site is still fixed to be occupied while the second site has no constraints in the denominator.}
		\label{fig:East_calculation}
	\end{figure}
	
	
	In this section, we derive the expansion $\Phi(\mcg_{\mathcal{K}})$ of the particle-conserving East model, corresponding to the cut depicted in Fig.~\ref{fig:East_Krylov}(c). Physically, this corresponds to choosing an initial state where all but the first site are empty [the bottom left vertex in Fig.~\ref{fig:East_Krylov}(c)]. Recall the definition of the expansion for a subregion $R$:
	\bea 
     \label{eq:east_expansion}
	\Phi(R)&\equiv\frac{1}{|R|}\sum_{\psi\in R,\psi'\in R^c} \mcm_{\psi, \psi'}. \eea
	As detailed in Ref.~\cite{PhysRevB.108.144308} and summarized in Fig.~\ref{fig:East_calculation}, the allowed configurations in a given Krylov sector can be mapped to the combinatorial problem of counting the number of allowed monotonic paths on a 2D square lattice. As a simple example, consider the size of the Krylov sector generated by
	\begin{equation}
		\underbrace{\bullet \bullet \bullet \cdots \bullet}_N \underbrace{\circ \circ \cdots \circ}_{L-N},
	\end{equation}
	where $L\leq 2N$. Due to the kinetic constraint, for any subregion $A=[1,k]$, there cannot be more empty sites than occupied sites. The problem is thus equivalent to counting the number of $L$-step walks on a 2D square lattice, where the path is not allowed to touch the line $y=x+1$. This number is easily seen to be given by
	\begin{equation}
		{L \choose N} - {L \choose N+1}.
	\end{equation}
	First consider the numerator of Eq.~(\ref{eq:east_expansion}). It is easy to see that, as long as the site coupled to the bath is empty, the state will be connected to $R^c$. Hence, the number contributing to the numerator is obtained by counting the number of allowed paths \textit{with the first two steps fixed}: 
	\begin{equation}
		\begin{aligned}
			\sum_{\psi\in R,\psi'\in R^c}\mcm_{\psi, \psi'} &={2N_0-2 \choose N_0-1}-{2N_0-2 \choose N_0-2}\\
			&=\frac{(2N_0-2)!}{N_0!(N_0-1)!}.
		\end{aligned}
	\end{equation}
	For the denominator, one needs to sum over all sectors belonging to $R$. Let us first sum over a column of vertices in Fig.~\ref{fig:East_Krylov}(c) with a fixed total charge $N_0-i$ ($0\leq i \leq N_0-1$) and varying $x$ [$2(N_0-i)-1 \leq x \leq 2N_0-1$]. We claim that the sum of the sizes of the Krylov sectors in one column $\sum_{x=2(N_0-i)-1}^{2N_0-1}|\mathcal{K}(N_0-i,x)|$ is given by the number of $2N_0$-step walks from the origin $(0,0)$ to $(N_0-i, N_0+i)$ on the 2D square lattice that do not cross the line $y=x+1+2i$, as shown in Fig.~\ref{fig:East_calculation}(c) and (d). This can be justified as follows. First, for any configuration in this column where the particles can reach a maximal distance of $x$, one can append $2N_0-x$ empty sites to the right, so that all walks have the same length $2N_0$, which is not constrained by $x$. Then, it is clear that each length-$2N_0$ walks from the origin to $(N_0-i, N_0+i)$ is in one-to-one correspondence with a configuration belonging to this column, provided that the following constraint is satisfied:
	\begin{equation}
		({\rm \# \ of \ holes }) - ({\rm \# \ of \ particles }) \leq 2i.
	\end{equation}
	As a result, the allowed paths cannot touch the line $y=x+1+2i$, which can be enforced by the same path reflection trick in Ref.~\cite{PhysRevB.108.144308}. Hence, the summation over a column of vertices in Fig.~\ref{fig:East_Krylov}(c) yields
	\begin{equation}
		\sum_{x=2(N_0-i)-1}^{2N_0-1}|\mathcal{K}(N_0-i,x)|={2N_0 \choose N_0-i}-{2N_0 \choose N_0-i-1}.
	\end{equation}
	Therefore, we have the denominator
	\begin{equation}
		\begin{aligned}
			|R|&=\sum_{i=0}^{N_0-1}\sum_{x=2(N_0-i)-1}^{2N_0-1}|\mathcal{K}(N_0-i,x)|\\
			&= {2N_0 \choose N_0}-1\approx\frac{2N_0!}{N_0!N_0!}.
		\end{aligned}
	\end{equation}
	The expansion is thus given by
	\begin{equation}
		\Phi(R)=\frac{1}{2(2N_0-1)}.
	\end{equation}

	\bibliography{refs}
\end{document}